\definecolor{TSUYUKUSA}{RGB}{24, 58, 230}
\definecolor{KURENAI}{RGB}{203, 27, 69}
\newtheorem{fact}[theorem]{Fact}
\newcommand{\fig}[1]{\hyperref[fig:#1]{Figure~\ref*{fig:#1}}}
\newcommand{\eq}[1]{\hyperref[eq:#1]{(\ref*{eq:#1})}}
\newcommand{\lem}[1]{\hyperref[lem:#1]{Lemma~\ref*{lem:#1}}}
\newcommand{\thm}[1]{\hyperref[thm:#1]{Theorem~\ref*{thm:#1}}}
\newcommand{\defi}[1]{\hyperref[def:#1]{Definition~\ref*{def:#1}}}
\newcommand{\app}[1]{\hyperref[sec:#1]{Appendix~\ref*{sec:#1}}}
\newcommand{\fct}[1]{\hyperref[fact:#1]{Fact~\ref*{fact:#1}}}
\newcommand{\sect}[1]{\hyperref[sec:#1]{Section~\ref*{sec:#1}}}
\newcommand{\subsec}[1]{\hyperref[subsec:#1]{Subsection~\ref*{subsec:#1}}}
\newcommand{\itm}[2]{\hyperref[itm:#1]{#2}}
\newcommand{\clm}[1]{\hyperref[clm:#1]{Claim~\ref*{clm:#1}}}
\newcommand{\rmk}[1]{\hyperref[rmk:#1]{Remark~\ref*{rmk:#1}}}
\newtheorem*{rep@theorem}{\rep@title}
\newcommand{\newreptheorem}[2]{%
\newenvironment{rep#1}[1]{%
 \def\rep@title{#2 \ref{##1}}%
 \begin{rep@theorem}}%
 {\end{rep@theorem}}}
\definecolor{lightcyan}{RGB}{0.88,1,1}
\definecolor{darkgreen}{RGB}{0, 128, 0}
\definecolor{darkblue}{RGB}{0, 0, 128}
\newcommand{\hlight}[1]{\textcolor{darkgreen}{#1}}
\newcommand{\N}{\mathbb{N}}  %
\newcommand{\R}{\mathbb{R}} %
\newcommand{\C}{\mathbb{C}} %
\renewcommand{\d}{\mathrm{d}} %
\DeclareMathOperator*{\E}{\mathbb{E}}  %
\newcommand{\so}{\mathrm{SO}} %
\newcommand{\su}{\mathrm{SU}} %
\renewcommand{\i}{\mathrm{i}} %
\newcommand{\A}{\mathcal{A}}  %
\newcommand{\B}{\mathcal{B}} %
\newcommand{\CC}{\mathcal{C}} %
\newcommand{\D}{\mathcal{D}} %
\newcommand{\F}{\mathcal{F}} %
\renewcommand{\H}{\mathcal{H}} %
\newcommand{\K}{\mathcal{K}} %
\newcommand{\NN}{\mathcal{N}} %
\newcommand{\V}{\mathcal{V}} %
\newcommand{\X}{\mathcal{X}} %
\newcommand{\Y}{\mathcal{Y}} %
\renewcommand{\S}{\mathcal{S}} %
\newcommand{\SR}{\mathcal{S}_{\R}} %
\newcommand{\SC}{\mathcal{S}_{\C}} %
\newcommand{\EE}{\mathcal{E}}  %
\newcommand{\PP}{\mathcal{P}} %
\newcommand{\KK}{\widetilde{K}} %
\newcommand{\LL}{\widetilde{L}} %
\newcommand{\bit}[1]{\{0,1\}^{#1}} %
\newcommand{\set}[1]{\left\{#1\right\}} %
\newcommand{\expec}[1]{\E\!\Br{#1}} %
\newcommand{\expect}[2]{\E_{\substack{#1}}\!\Br{#2}} %
\newcommand{\prob}[2]{\underset{#1}{\mathrm{Pr}}\!\Br{#2}} %
\newcommand{\cf}{\widetilde{f}} %
\newcommand{\cg}{\widetilde{g}} %
\newcommand{\ch}{\widetilde{h}} %
\newcommand{\ck}{\widetilde{K}} %
\newcommand{\rep}[2]{(#1_i)_{i=1}^{#2}} %
\newcommand{\br}[1]{\left(#1\right)} %
\newcommand{\Br}[1]{\left[#1\right]} %
\newcommand{\st}[1]{\left\{#1\right\}} %
\newcommand{\tr}[1]{\mathrm{Tr}\!\Br{#1}} %
\newcommand{\abs}[1]{\left|#1 \right|} %
\newcommand{\norm}[1]{\left\lVert #1 \right\rVert} %
\newcommand{\agl}[2]{\theta^{\br{#1}}_{#2}} %
\newcommand{\aglp}[2]{{\theta'}^{\br{#1}}_{#2}} %
\newcommand{\lint}[1]{\left\lfloor#1\right\rfloor} %
\newcommand{\poly}[1]{\mathrm{poly}\!\br{#1}} %
\newcommand{\negl}[1]{\mathrm{negl}\!\br{#1}} %
\newcommand{\de}[1]{\mathrm{d}#1} %
\newcommand{\val}[1]{\mathrm{val}\!\br{#1}} %
\newcommand{\vall}[1]{\mathrm{val}\br{#1}} %
\newcommand{\nd}[1]{\mathcal{N}\!\br{#1}} %
\newcommand{\ketbratwo}[2]{\ket{#1} \hspace{-0.4em}\bra{#2}} %
\newcommand{\ketbra}[1]{\ketbratwo{#1}{#1}} %
\newcommand{\id}{\ensuremath{\mathds{1}}} %
\newcommand{\ogroup}[1]{\mathrm{O}\!\br{#1}} %
\newcommand{\ugroup}[1]{\mathrm{U}\!\br{#1}} %
\newcommand{\td}{\mathrm{TD}} %
\newcommand{\tv}[1]{\norm{#1}_{\mathrm{TV}}} %
\newcommand{\vdim}{\ensuremath{N}} %
\newcommand{\dimin}{\ensuremath{n}} %
\newcommand{\dimout}{\ensuremath{m}} %
\newcommand{\ncopy}{\ell} %
\newcommand{\hspacein}{\H_\mathrm{in}} %
\newcommand{\hspaceout}{\H_\mathrm{out}} %
\newcommand{\Sin}{\S(\hspacein)} %
\newcommand{\Sout}{\S(\hspaceout)} %
\newcommand{\haar}{\ensuremath{\mu}} %
\newcommand{\tensorsrss}{\ensuremath{\nu}} %
\newcommand{\qadvice}{\ensuremath{\rho}} %
\newcommand{\tp}{\otimes} %
\newcommand{\wone}[2]{W_1\!\br{#1,#2}} %
\newcommand{\secpar}{\lambda} %
\newcommand{\klen}{\kappa} %
\newcommand{\cprim}[1]{\textup{\textsf{#1}}} %
\newcommand{\prf}{\cprim{PRF}} %
\newcommand{\prp}{\cprim{PRP}} %
\newcommand{\pru}{\cprim{PRU}} %
\newcommand{\prs}{\cprim{PRS}} %
\newcommand{\prsg}{\cprim{PRSG}} %
\newcommand{\prss}{\cprim{PRSS}} %
\newcommand{\rss}{\cprim{RSS}} %
\newcommand{\srss}{\cprim{DRSS}} %
\newcommand{\csrss}{\cprim{CRSS}} %
\newcommand{\qbc}{\cprim{QBC}} %
\newcommand{\qsc}{\cprim{QSC}} %
\newcommand{\sgen}{\cprim{SG}^{n,\secpar}} %
\newcommand{\enssgen}{\cprim{SG}^{n}} %
\newcommand{\scram}{\ensuremath{\mathcal{R}}} %
\newcommand{\SG}[2]{\ensuremath{\scram_{#1}^{#2}}} %
\newcommand{\srsscons}{\cprim{RSG}^{n,\secpar}} %
\newcommand{\csrsscons}{\widetilde{\cprim{RSG}}^{n,\secpar}} %
\newcommand{\enssrsscons}{\cprim{RSG}^{n}} %
\newcommand{\enscsrsscons}{\widetilde{\cprim{RSG}}^{n}} %
\newcommand{\sgenc}{\cprim{SGC}^{n,\secpar}} %
\newcommand{\enssgenc}{\cprim{SGC}^n} %
\newcommand{\srssconsc}{\cprim{RSGC}^{n,\secpar}} %
\newcommand{\csrssconsc}{\widetilde{\cprim{RSGC}}^{n,\secpar}} %
\newcommand{\enssrssconsc}{\cprim{RSGC}^{n}} %
\newcommand{\enscsrssconsc}{\widetilde{\cprim{RSGC}}^{n}} %
\newcommand{\qotp}{\cprim{QOTP}} %
\newcommand{\prg}{\cprim{PRG}} %
\newcommand{\qprf}{\cprim{QPRF}} %
\newcommand{\qprp}{\cprim{QPRP}} %
\newcommand{\prfs}{\cprim{PRFS}} %
\newcommand{\prfsg}{\cprim{PRFSG}} %
\newcommand{\owf}{\cprim{OWF}} %
\newcommand{\bqp}{\cprim{BQP}} %
\newcommand{\qma}{\cprim{QMA}} %
\newcommand{\oracle}{\mathcal{O}} %
\newcommand{\Haar}{\mathsf{Haar}} %
\newcommand{\regx}{\mathsf{X}} %
\newcommand{\regy}{\mathsf{Y}} %
\begin{document}

\title{Quantum Pseudorandom Scramblers}
\titlerunning{Quantum Pseudorandom Scramblers}
      
\author{Chuhan Lu\inst{1} \orcidID{0009-0000-3359-320X}\and
Minglong Qin\inst{2} \orcidID{0009-0004-8760-5498}\and\\
Fang Song\inst{1} \orcidID{0000-0002-3098-6451}\and
Penghui Yao\inst{2,3} \orcidID{0000-0002-4104-2069}\and
Mingnan Zhao\inst{2} \orcidID{0009-0006-9623-9703}}
\authorrunning{C. Lu et al.}

\index{Lu, Chuhan}
\index{Qin, Minglong}
\index{Song, Fang}
\index{Yao, Penghui}
\index{Zhao, Mingnan}

\institute{Computer Science Department, Portland State University, USA\\
\email{\{chuhan,fang.song\}@pdx.edu} \and
State Key Laboratory for Novel Software Technology, Nanjing University,\\Nanjing 210023, China\\
\email{mlqin@smail.nju.edu.cn}\quad \email{\{phyao1985,mingnanzh\}@gmail.com} \and
Hefei National Laboratory, Hefei 230088, China}

\maketitle

\begin{abstract}
Quantum pseudorandom state generators (\prsg s) have stimulated exciting
developments in recent years. A \prsg, on a fixed initial (e.g.,
all-zero) state, produces an output state that is computationally
indistinguishable from a Haar random state. However, pseudorandomness
of the output state is not guaranteed on other initial states. In fact,
known \prsg\ constructions \emph{provably} fail on some initial states.

In this work, we propose and construct quantum Pseudorandom State
Scramblers (\prss s), which can produce a pseudorandom state on an
\emph{arbitrary} initial state. In the information-theoretical setting,
we obtain a scrambler which maps an arbitrary initial state to a
distribution of quantum states that is close to Haar random in
\emph{total variation distance}. As a result, our scrambler exhibits a {\em dispersing} property. Loosely, it can span an $\epsilon$-net of the state space. This significantly strengthens what
standard \prsg s can induce, as they may only concentrate on a small region of the state space provided that the average output state approximates a Haar random state.

Our \prss\ construction develops a \emph{parallel} extension of the
famous Kac's walk, and we show that it mixes \emph{exponentially}
faster than the standard Kac's walk. This constitutes the core of our
proof. We also describe a few applications of \prss s. While our \prss\ 
construction assumes a post-quantum one-way function, \prss s are
potentially a weaker primitive and can be separated from one-way
functions in a relativized world similar to standard \prsg s.

\keywords{Quantum pseudorandom states \and Kac's walk \and Pseudorandom unitary operators}
\end{abstract}
\section{Introduction}
\label{sec:intro}
Pseudorandomness is a fundamental concept in complexity theory and
cryptography, offering efficient approximation to true randomness
against computationally bounded adversaries. Recently, Ji, Liu and
Song~\cite{JLS18} introduced quantum pseudorandom state generators
($\prsg$s) as a family of quantum states $\{\ket{\phi_k}\}_{k\in\K}$,
which can be generated in polynomial time, and no
computationally-bounded quantum adversary can distinguish polynomially
many copies of $\ket{\phi_k}$ from polynomially many copies of a Haar
random state. $\prsg$s can be considered as a quantum counterpart to
classical pseudorandom generators, and can be constructed assuming the
existence of one-way functions that are hard for efficient quantum
adversaries~\cite{JLS18,BS19,BS20,AGQY22,ABF+22}. What is surprising,
$\prsg$s are proven weaker than one-way functions in a relativized
world~\cite{Kre21,KQST23}. Since one-way functions are considered the
\emph{minimal} assumption in classical cryptography, this opens up the
possibility of basing quantum cryptography on \emph{weaker}
assumptions. There have been exciting advances in recent years,
realizing a host of cryptographic tasks based on
$\prsg$s~\cite{BS20_qcoin,AQY22,MY22,AGQY22,Col23}. In addition to
cryptographic interest, pseudorandom states have also inspired new
developments for quantum gravity theory and string
theory~\cite{BFV20,KTP20,BCHJ+21,ABF+22,YE23}.

Another fundamental quantum pseudorandom primitive, \emph{pseudorandom
  unitary operators} (\pru s), was also introduced in~\cite{JLS18} as a
quantum analogue of pseudorandom functions. A \pru~is a set of
polynomially-time unitary operators that are computationally
indistinguishable from Haar random unitaries. $\pru$s clearly imply
$\prsg$s and could further enrich the toolkit in cryptography and
physics~\cite{BFV20,KTP20,BCHJ+21,YE23,GJMZ23}. Nonetheless,
constructing a provably-secure $\pru$ remains an open problem, and
progress has been slow (e.g., conjectured constructions
in~\cite{JLS18}, a stateful simulation in~\cite{AMR20}, and on the
negative side some barriers such as impossibility of $\pru$s that are
sparse or of real entries~\cite{HBEK23}). In fact, even basic
properties that are \emph{necessary} for $\pru$s have not been
achieved. It is easy to see that a $\pru$ gives a family of
polynomial-sized quantum circuits which can map an \emph{arbitrary}
pure state to a family of pseudorandom states. However, a $\prsg$ can
be viewed as a family of polynomial-sized quantum circuits which map a
specific initial state, typically $\ket{0^n}$, to a family of
pseudorandom states. Indeed, all existing constructions of $\prsg$s
necessitate a specific initial state, and it can be shown that they
\emph{fail} to produce pseudorandom states for certain initial
states. This limitation has indeed caused a variety of technical
challenges in the cryptographic applications mentioned before that
need to be addressed in ad hoc ways. It hence becomes imperative to
understand the following question and its consequences.
\begin{quote} {\it Can we construct a family of polynomial-sized
    quantum circuits which map an arbitrary input (pure) state to
    pseudorandom states?}
\end{quote}

\subsection{Our Contributions}
\label{sec:contribution}

In this work, we answer the question affirmatively as a steady step
towards bridging the gap between $\prsg$s and $\pru$s. We formally
encapsulate the property of ``scrambling'' an arbitrary input state in
a novel quantum pseudorandom primitive, termed a {\em quantum
  pseudorandom state scrambler} (\prss), which isometrically maps an
\emph{arbitrary} pure state to a pseudorandom state. We then construct
a $\prss$ based on any quantum-secure $\prf$. A central technical novelty
is to design a \emph{parallel} version of Kac's walk, which is a random walk on a unit sphere,
and prove a mixing time \emph{exponentially} faster than the standard Kac's walk~\cite{PS17}. Although Kac's walk was introduced by Kac in~\cite{Kac56} more than half a century ago and has been studied by a large body of works since then, this work, to our knowledge, is the first time to employ Kac's walk to design quantum pseudorandom objects.  

Our construction also exhibits a notable \emph{dispersing}
property. Loosely speaking, the output states of our scrambler
constitute an $\epsilon$-net on the sphere, and the distribution
closely approximates the Haar random distribution under the strong
Wasserstein distance, when sufficient randomness is supplied. Such a
powerful ``randomizing'' capability needs not be present even in
$\pru$s.

\paragraph{Overview on the construction and analysis.} Our
construction is inspired by Kac's walk, originally a model for a
Boltzmann gas~\cite{Kac56}. This approach differs from previous
constructions for $\prsg$s. Let us consider an arbitrary unit-vector
$v\in \R^\vdim$. In one step of Kac's walk, two distinct coordinates
$(i,j)$ and an angle $\theta\in [0,2\pi)$ are chosen uniformly at
random.  Then $R_\theta:= \begin{pmatrix}
                            \cos\theta & -\sin\theta\\
                            \sin\theta & \cos\theta
                          \end{pmatrix}$ is
                          applied to rotate the two-dimensional subvector $(v_i, v_j)^T$.
                          It is
                          proven that it converges to the Haar measure on
                          the unit sphere of
                          $\R^\vdim$ in $O(N\log N)$ steps~\cite{PS17}.
                                      However,  if we view the
                                      input vector as an $\dimin = \log \vdim$-qubit state, then the factor $\vdim$ in
                                      the mixing time is prohibitive for the purpose of an efficient
                                      (polynomial in $\dimin$) scrambler.

Can we \emph{parallelize} Kac's walk in hope of shaking off a factor
of $\vdim$? Notice that in Kac's walk, if any two consecutive steps
overlap on the random choices of coordinates, then they need to be
executed in \emph{sequence}. One might consider conditioning on the
event of ``collision-free'' in the coordinate choices, but this occurs
with negligibly small probability since we intend to compress
$\Omega(\vdim)$ steps into one.

We design a \emph{parallel} Kac's walk that rapidly mixes in $O(\log N)$ time,
an \emph{exponential} improvement over the original walk.
In each step, instead of working with an individual pair of coordinates,
we randomly partition the $N$ coordinates into $N/2$ pairs,
and then each pair is rotated by a random angle chosen \emph{independently}.
Although the mixing time of Kac's walk is not
directly applicable, we show that the specific path-coupling proof
strategy of~\cite{PS17} can be extended here.

We then construct a quantum circuit to implement our parallel Kac's
walk. In each step, we use a random permutation to realize the
coordinate partition, and employ a random function to compute a random
rotation angle, under a careful discretization, for each pair of
coordinates. Finally, we obtain our pseudorandom state scramblers by
replacing the random permutations and functions with quantum-secure
pseudorandom permutations and functions, which exist based on
post-quantum one-way
functions~\cite{Zhandry21_qprf}. 

The discussion so far works with real Hilbert spaces. To construct a
$\prss$ in a complex Hilbert space, we further develop a parallel
Kac's walk on complex Hilbert spaces. The construction starts likewise
by randomly partitioning $\vdim$ coordinates to $\vdim/2$ pairs, and
then applying random $2\times 2$ \emph{unitary} matrices independently
to each pair. As unitary matrices have more degrees of freedom than
real orthogonal rotation matricies, the analysis of the mixing time is
more involved. The extension of Kac's walk to a complex Hilbert space,
as well as the parallelization, has not been studied previously as far
as we are aware. This may be of independent interest.

\paragraph{Applications.} It is easy to see that $\prss$s subsume
standard $\prsg$s as well as scalable $\prsg$s. We also demonstrate
that $\prss$s can be used to achieve a black-box realization of a
variant of $\prsg$s known as pseudorandom function-like state
generators ($\prfsg$s), which in turn enable a host of cryptographic
primitives such as IND-CPA SKE and EUF-CMA MAC~\cite{AQY22,AGQY22}. A $\prfsg$ takes an additional
classical input $x$ (from a poly-size domain) and produces a
pseudorandom state. In the literature, a $\prfsg$ (with logarithmic input length) can be constructed
from $\prsg$s by measuring a part of a pseudorandom state and
then post-selecting on $x$. This inevitably is error-prone and consumes
multiple copies, i.e., multiple invocations of a $\prsg$, to evaluate
on a single $x$. Given our $\prss$ (with a sufficiently long key), we
can simply feed $\ket{x}$ as the initial state to the $\prss$, and
hence only \emph{one}, rather than polynomially-many, run of $\prss$
suffices.

We observe that the argument by Kretschmer~\cite{Kre21} also implies
that $\prss$ is \emph{strictly} weaker than one-way functions relative
to an oracle. Thus $\prss$s may further enhance the new cryptographic
landscape without assuming one-way functions. We demonstrate some use
cases of $\prss$s beyond what are already possible from $\prsg$s. For
starters, a $\prss$ enables efficient encryption of quantum messages
by effectively ``scrambling'' any initial state, and allowing
\emph{multiple copies} of the same state to be encrypted under the
same key. 
The fact that $\prss$ provides a secure encryption also enables
committing quantum states, thanks to a new characterization
of~\cite{GJMZ23}. The commitment scheme can be further made
\emph{succinct}, where the commitment message has smaller size than
the size of the message to be committed. Existing constructions rely
on potentially stronger assumptions than $\prss$s.

\subsection*{Subsequent work}

A follow-up work~\cite{AGKL23} gave a construction that is
indistinguishable from applying the tensor product of a Haar random
isometry when the input state is restricted to one of three special
families: (1) $\ket{\psi}^{\otimes q}$ for a pure state $\ket{\psi}$
and polynomially-bounded $q$; (2) $\bigotimes_{i=1}^q\ket{x_i}$; and
(3) $\otimes_{i=1}^q\ket{\phi_i}$, where every $\phi_i$ is Haar
random. Their construction requires adding an ancilla system
$\ket{0^m}$, and the security loss scales with $1/{2^m}$. As a result,
it \emph{necessarily} cannot preserve the input dimension and $m$ is
chosen to be a polynomial to obtain negligible security loss. This
also incurs poly overheads in the applications such as quantum
encryption.  In other words, it only (unitarily) scrambles states
$\ket{\psi}\ket{0^m}$ for a $\ket{\psi}$ chosen from one of the three
families above and a polynomial $m$. 
More recently, several independent works on constructing pseudorandom unitaries which are secure against non-adaptive queries have been presented.
Metger, Poremba, Sinha and Yuen~\cite{metger2024simple} proposed a construction using a composition of Clifford gates, pseudorandom functions and pseudorandom permutations.
Brakerski and Magrafta~\cite{brakerski2024realvalued} presented a construction for real-valued unitaries that look like Haar random on any polynomial-sized set of orthogonal input states.
Chen, Bouland, Brandao, Docter, Hayden, and Xu~\cite{chen2024efficient} achieved similar results via products of exponentiated sums of random permutations with random phases.
It is not clear whether these constructions are able to generate an $\epsilon$-net and realize
the \emph{dispersing} property (details in \app{srss}), a strong
randomizing property achieved by our construction.

\subsection{Discussions and Open Questions}
\label{sec:discuss}

There is a rich history of studying Kac's walk in probability and
mathematical
physics\cite{Hastings70,DS00,Janvresse03,Oliveira09,Jiang12,HJ17}. Determining
the total variation mixing time of Kac's walk is particularly
challenging, and it is currently only known to be between the order
$O(n^4\log n)$ and
$O(n^2)$~\cite{PS18}. 

There has also been extensive efforts on approximations to Haar
measures in a \emph{statistical} setting, known as state and unitary
$t$-designs~\cite{RBR+04,DCEL09}. For instance, a unitary $t$-design
mimics a Haar random unitary up to the $t$-th moment. It is known that
a unitary $t$-design can be constructed by a quantum circuit of size
polynomially in $t$, composed of Haar random single or two-qubit
gates~\cite{HL09,BHH16,Haferkamp22,HM23,ODSP23}. It is interesting to
note that a path-coupling technique in~\cite{Oliveira09} for analyzing
Kac's walk also plays an essential role in the proofs of these unitary
design results. It is reasonable to anticipate improvements on the
efficiency of the unitary designs with new advances on Kac's
walk. However, it is worth stressing that another critical component
in their proofs involving spectral gaps appears to inevitably incur a
dependency on $t$, which is a serious limitation. For instance, in
order for the output state to approximate a Haar random state when the
number of copies can be an arbitrary polynomial, we would need to pick
a superpolynomial $t$ in the unitary design. As far as we know, our
$\prss$ is the first to employ Kac's walk directly in the construction
of a quantum pseudorandom object, and the exponential improvement on
the mixing time of our parallel walk enables flipping the quantifiers,
i.e., a fixed poly-size construction that is nonetheless pseudorandom
against any polynomial-time distinguisher, a desired feature towards
$\pru$s.

Kac's walk has also found applications in algorithm
design. Recently, a fast and memory-optimal dimension-reduction
algorithm is proposed based on Kac's walk and its discrete
variants~\cite{JPSSS22}. We would like to invite more exploration of
Kac's walk in theoretical computer science broadly.

We describe several interesting open problems emerged from our work.

\begin{enumerate}

\item Is it possible to simplify the quantum circuits for these
  primitives? Can we replace random permutations by a sequence of
  parallel (pseudo) random local permutations? Can we use the same
  random rotation or even a fixed one (e.g., Hadamard transform) in a
  single iteration? Recent advances on repeated averages on
  graphs~\cite{MSW22} and orthogonal repeated
  averaging~\cite{CDSZ22,JPSSS22} allude to an affirmative answer.

\item We believe that $\prss$s, potentially weaker than $\pru$s, are
  an important primitive in its own right.
  Can we discover more applications of $\prss$s and the dispersing
  property, especially in cryptography as well as in quantum gravity
  theory? For example, we envision a form of \emph{uncloneable
    knowledge tokens} from a $\prss$ that may enable novel quantum
  proof systems and \emph{delegated} computation.

\item Is our construction of $\prss$ capable of scrambling polynomial
  quantum states? This appears to require strengthening the coupling
  technique in our current analysis, and it might be useful to analyze
  other variants of Kac's walk.

\item How far are we
  from a $\pru$? Can we get it by strengthening our parallel Kac's
  walk approach or can we show that our construction is already a
  $\pru$? By a simple hybrid argument, it suffices to prove that our parallel Kac's walk
  on $\so(N)$ converges within $\mathrm{polylog}(N)$ time in terms of
  the $L^{\infty}$ Wasserstein distance. Indeed, there has been a
  large body of work devoted to studying the speed of the convergence
  with respect to different
  metrics~\cite{DS00,Kac56,Jiang12,PS18}. One of the most relevant
  works is Oliveira's result~\cite{Oliveira09} showing a tight
  convergence time of order $O(N^2\log N)$ with respect to the
  stronger $L^2$ Wasserstein distance. Our parallelization achieves a
  quadratic speedup, which leads to an $\tilde{O}(2^n)$-time
  construction of $\pru$. Since the $L^{\infty}$ Wasserstein distance
  is a less stringent metric than the $L^2$ Wasserstein distance,
  there is hope of obtaining an improved convergence rate. To our
  knowledge, the speed of convergence of Kac's walk with respect to
  $L^{\infty}$ Wasserstein distance has not been studied, and hence
  developing new techniques to overcome the tightness of Oliveira's
  $L^2$ result would be an exciting research direction.

\end{enumerate}

\paragraph{Acknowledgment.} We thank the anonymous reviewers'
feedback. CL and FS were supported in part by the US National Science
Foundation grants CCF-2042414, CCF-2054758 (CAREER) and
CCF-2224131. MQ, PY and MZ were supported in part by National Natural
Science Foundation of China (Grant No. 62332009, 61972191), and
Innovation Program for Quantum Science and Technology (Grant
No. 2021ZD0302901). FS thanks Zhengfeng Ji and Yi-Kai Liu for
discussions on the topic of quantum pseudorandomness.

\paragraph{Organization.} \sect{prelim} contains preliminary
materials on basic notations and cryptographic
primitives. \sect{def} describes definitions and properties of our
new primitives. \sect{kac} introduces the parallel Kac's walk.
Then \sect{const1} constructs $\prss$s via implementing the
parallel Kac's walk.
\sect{app} describes applications of $\prss$s.
In \app{srss}
we introduce the dispersing \rss.
In \app{connection}, we give details on the connections between $\prss$s
and existing $\prs$ variants.
Some proofs are deferred to
\app{deferred proofs}.

\section{Preliminary}
\label{sec:prelim}
\subsection{Basic Notation}

For $n\in\N$, $\Br{n}$ denotes $\st{1,\dots,n}$.  For $x\in\bit{n}$,
we use $x_i$ to denote the $i$-th bit of $x$ and define
$\val{x} = \sum_{i=1}^n 2^{-i}x_i$. Suppose that $x$ and $y$ are bit
strings of finite length,
we denote $xy$ to be the
concatenation of $x$ and $y$.  For finite sets $\mathcal{X}$ and
$\mathcal{Y}$, we use $\mathcal{X}^\mathcal{Y}$ to denote the set of
all functions $\{f:\mathcal{X}\to \mathcal{Y}\}$.  We use
$S_{\mathcal{X}}$ to denote the \emph{permutation group} over elements
in a finite set $\mathcal{X}$.  We often write $S_{2^n}$ instead of
$S_{\bit{n}}$ to denote the permutation group over elements in
$\st{0,1}^n$.

For any symbol $x$ and $n\in\N$, $(x_i)_{i=1}^n$
represents $(x_1,\ldots, x_n)$.
With a slight abuse of notation, we let $(x_i)_{i=1}^n\subseteq S$ represent $x_i\in S$ for
all $i\in[n]$.
For $n\in\N$, $\SR^{n}$ denotes the set of all unit
vectors in $\R^n$, $\SC^{n}$ denotes the set of all unit vectors in
$\C^n$, $\so(n)$ denotes the special orthogonal group of $n\times n$
real matrices, $\su(n)$ denotes the special unitary group of
$n\times n$ complex matrices, $\ogroup{n}$ denotes the $n\times n$
orthogonal group and $\ugroup{n}$ denotes the $n\times n$ unitary
group.  For a Hilbert space $\H$, we use $\S(\H)$ to denote the set of
pure quantum states in $\H$ and $\D(\H)$ to denote the set of density
operators on $\H$.

For an $n$-dimensional vector $v$ and $i\in[n]$, we use $v[i]$ to
denote the $i$-th coordinate of $v$.  For $S\subseteq [n]$ and
$v\in\C^n$, define
\begin{align*}
  \norm{v}_{1}=\sum_{i\in [n]} \abs{v[i]}\enspace,\enspace
  \norm{v}_{1,S}=\sum_{i\in S} \abs{v[i]}\enspace,\enspace
  \norm{v}_{2}=\sqrt{\sum_{i\in [n]} \abs{v[i]}^2}\enspace.\enspace
\end{align*}

For an $n\times n$ matrix $M$ and $p\in\N$, the $p$-norm of $M$ is
defined to be $\norm{M}_p = \br{\tr{\br{M^\dagger M}^{p/2}}}^{1/p}$, and $\norm{M}_\infty$ is defined to be the largest singular value of
$M$.  The following fact will be used in our paper and is easy to
prove by the triangle inequality.

\begin{fact}\label{fact:uprod}
Given $m,n\in\N$, $U_1,\dots,U_m,V_1,\dots,V_m\in\ogroup{n}$ (or $\ugroup{n}$), then
$$\norm{U_1\dots U_m-V_1\dots V_m}_\infty\leq\sum_{i=1}^m\norm{U_i-V_i}_\infty.$$
\end{fact}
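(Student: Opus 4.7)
The plan is to prove the bound by a standard telescoping-sum argument combined with submultiplicativity of the operator norm and the fact that orthogonal/unitary matrices have operator norm equal to $1$. All identities work identically in the real-orthogonal and complex-unitary settings, so I would phrase the proof once for $\ugroup{n}$ (with $\ogroup{n}$ as a special case).

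First, I would set up the telescoping decomposition
\begin{equation*}
U_1 U_2 \cdots U_m - V_1 V_2 \cdots V_m \;=\; \sum_{i=1}^m V_1 \cdots V_{i-1}\,(U_i - V_i)\,U_{i+1}\cdots U_m,
\end{equation*}
which is verified by a simple induction on $m$ (or by expanding the right-hand side and observing successive cancellations). This is the only structural step; everything else is an estimate.

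Next, I would apply the triangle inequality for $\norm{\cdot}_\infty$ to the decomposition and then use submultiplicativity, $\norm{AB}_\infty \le \norm{A}_\infty \norm{B}_\infty$, to bound each summand by
\begin{equation*}
\norm{V_1 \cdots V_{i-1}}_\infty \cdot \norm{U_i - V_i}_\infty \cdot \norm{U_{i+1} \cdots U_m}_\infty.
\end{equation*}
Since each $V_j$ and each $U_j$ is unitary (or orthogonal), we have $\norm{V_j}_\infty = \norm{U_j}_\infty = 1$, so a second application of submultiplicativity gives $\norm{V_1\cdots V_{i-1}}_\infty \le 1$ and $\norm{U_{i+1}\cdots U_m}_\infty \le 1$. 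Summing over $i$ yields the claimed inequality.

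I do not anticipate a real obstacle: the statement is a classical ``hybrid'' lemma, and the only thing to be careful about is the edge cases $i=1$ and $i=m$ in the telescoping sum, where the product $V_1\cdots V_{i-1}$ or $U_{i+1}\cdots U_m$ becomes the identity factor; the bound $\norm{\id}_\infty = 1$ handles these cleanly. No further ingredients beyond submultiplicativity and unitarity are needed.
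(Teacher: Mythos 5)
Your proof is correct and matches what the paper intends: the paper only remarks that the fact ``is easy to prove by the triangle inequality,'' and your telescoping decomposition combined with submultiplicativity and $\norm{U}_\infty = 1$ for orthogonal/unitary matrices is precisely the standard argument being alluded to.
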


Given two density operators $\rho,\sigma\in\D(\H)$, the trace distance between $\rho$ and $\sigma$ is
$\td\!\br{\rho,\sigma}=\norm{\rho - \sigma}_1.$

Let $\V$ be a real or complex vector space, and $\epsilon>0$ be a
positive real number. For any $\S\subseteq V$, a set of vectors
$\NN \subseteq \S$ is said to be an \emph{$\epsilon$-net} of $\S$ if,
for every vector $u\in\S$, there exists a vector $v\in\NN$ such that
$\norm{u-v}_2\leq \epsilon$.

We adopt the standard quantum circuit model. A quantum circuit with
gates drawn from a finite gate set can be encoded as a binary
string. $\{ Q_\secpar : \secpar \in \N \}$ is said to be a
\emph{polynomial-time} generated family\footnote{More precisely, each
  circuit should be written as $Q_{1^\secpar}$. Note that in a
  polynomial-time generated family, then $Q_\secpar$ must have size
  polynomial in $\secpar$.}  if there exists a deterministic Turing
machine that, on any input $\secpar \in \N $, outputs an encoding of
$Q_\secpar$ in polynomial-time in $\secpar$. A quantum polynomial-time
algorithm is identified with a polynomial-time generated circuit
family. In cryptography it is conventionally to model adversaries as
\emph{non-uniform} algorithms. We model a non-uniform quantum
polynomial-time algorithm as a family
$\{Q_\secpar,\qadvice_\secpar\}_\secpar$, where $\{Q_\secpar\}$ is a
polynomial-time generated circuit family, and $\{\qadvice_{\secpar}\}$
is a collection of \emph{advice states}. $Q_\secpar$ acts on
$\qadvice_\secpar$ besides the actual input state.

\subsection{Probability Theory}
For two probability measures $\nu_1$ and $\nu_2$ defined on measurable
space $\br{\Omega,\F}$, the \emph{total variation distance} of $\nu_1$
and $\nu_2$ is defined as $$\norm{\nu_1-\nu_2}_{\mathrm{TV}} = \sup_{A\in\F}
\abs{\nu_1\!\br{A}-\nu_2\!\br{A}}\enspace.$$ Closeness in total variation distance is a strong promise. For
example, when applied to quantum states, it implies closeness in trace
distance of the average states.

\begin{lemma} \label{lem:h3h4} Let $\mu$ and $\nu$ be two arbitrary
  probability measures over $\SR^{2^n}$ ($\SC^{2^n}$).  Then for all
  $\ncopy \in\N$,

  \[
    \norm{ \E_{\ket{\psi}\sim \mu}\!\Br{ \br{\ketbra{\psi}}^{\otimes
          \ncopy} } - \E_{\ket{\varphi}\sim
        \nu}\!\Br{\br{\ketbra{\varphi}}^{\otimes \ncopy}} }_1
    \leq\norm{\mu-\nu}_{\mathrm{TV}}\enspace. \]
\end{lemma}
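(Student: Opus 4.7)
The plan is to express the difference of the two averaged tensor powers as a single integral of the operator-valued function $\ket{\psi} \mapsto (\ketbra{\psi})^{\otimes \ncopy}$ against the signed measure $\mu - \nu$, and then exploit the fact that each integrand is a density operator of unit Schatten $1$-norm. Writing $\rho_\mu := \E_{\ket{\psi}\sim\mu}\!\Br{(\ketbra{\psi})^{\otimes\ncopy}}$ and $\rho_\nu$ analogously, the triangle inequality for $\norm{\cdot}_1$ will then transfer the scalar total-variation mass bound on $\mu - \nu$ directly to $\rho_\mu - \rho_\nu$.

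Concretely, I would apply the Jordan--Hahn decomposition to the signed Borel measure $\mu - \nu$ on the compact space $\SR^{2^n}$ (resp.\ $\SC^{2^n}$), writing $\mu - \nu = \lambda^+ - \lambda^-$ with $\lambda^{\pm}$ mutually singular nonnegative measures. Since $\mu$ and $\nu$ are both probability measures, the two parts have equal total mass, equal to $\norm{\mu-\nu}_{\mathrm{TV}}$ up to the factor inherent to the chosen convention. Setting
\[
A \;:=\; \int (\ketbra{\psi})^{\otimes \ncopy}\, \mathrm{d}\lambda^+\!\br{\psi}, \qquad B \;:=\; \int (\ketbra{\psi})^{\otimes \ncopy}\, \mathrm{d}\lambda^-\!\br{\psi},
\]
both $A$ and $B$ are positive semidefinite with $\tr{A} = \tr{B} = \norm{\mu-\nu}_{\mathrm{TV}}$, and by linearity $\rho_\mu - \rho_\nu = A - B$. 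The standard inequality $\norm{A-B}_1 \leq \norm{A}_1 + \norm{B}_1 = \tr{A} + \tr{B}$ for positive semidefinite matrices then yields the claim.

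A dual route, which is arguably cleaner, uses the characterization $\norm{X}_1 = \sup\st{\abs{\tr{YX}} : Y^\dagger = Y,\; \norm{Y}_\infty \leq 1}$. For any Hermitian contraction $Y$, the scalar test function $f_Y\!\br{\psi} := \bra{\psi}^{\otimes \ncopy} Y \ket{\psi}^{\otimes \ncopy}$ takes values in $[-1,1]$, and $\tr{Y(\rho_\mu-\rho_\nu)}$ is the $\mu,\nu$-expectation gap of $f_Y$, immediately controlled by the oscillation of $f_Y$ times $\norm{\mu-\nu}_{\mathrm{TV}}$. The main step is thus just the classical fact that bounded test functions cannot distinguish two probability measures beyond their total variation distance; the only minor technicality is the standard setup of Bochner integration of operator-valued functions against signed measures on a compact metric space, and I do not anticipate any real obstacle.
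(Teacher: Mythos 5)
Your primary argument is essentially the paper's own proof: apply the Hahn--Jordan decomposition to the signed measure $\mu-\nu$, integrate the unit-trace-norm operators $\br{\ketbra{\psi}}^{\otimes \ell}$ against each nonnegative part, and conclude by the triangle inequality for $\norm{\cdot}_1$; the dual route via Hermitian contractions is a valid alternative but adds nothing here. One remark for both your write-up and the paper's: this argument bounds the left-hand side by the \emph{total mass} of $\abs{\mu-\nu}$, i.e.\ $\zeta^+(\Omega)+\zeta^-(\Omega)$, which equals twice the sup-over-events quantity defined in the preliminaries, so the inequality as stated implicitly uses the $L^1$ normalization of total variation — a convention mismatch you correctly flag as ``the factor inherent to the chosen convention.''
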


We denote the distribution of a random variable $X$ by
$\mathcal{L}\!\br{X}$. If $\mathcal{L}\!\br{X}=\nu$, we write
$X\sim\nu$. A coupling of two probability measures $\mu$ and $\nu$ is
a joint probability measure whose marginals are $\mu$ and $\nu$.  We
use $\Gamma(\mu,\nu)$ to denote the set of all couplings of $\mu$ and
$\nu$. For $p\geq 1$ The \emph{Wasserstein $p$-distance} between two
probability measures $\mu$ and $\nu$ is
$$
W_p(\mu,\nu) = \br{ \inf_{\gamma\in\Gamma(\mu,\nu)} \E_{(x,y)\sim \gamma}\!\Br{\norm{x-y}_2^p} }^{1/p} \enspace.
$$
The Wasserstein $\infty$-distance is $W_\infty(\mu,\nu)=\lim_{p\rightarrow\infty}W_p(\mu,\nu)$.

The following lemmas about Markov chain in \cite{LPW09}
serve as crucial tools in this work.
\begin{lemma}[Coupling Lemma]\label{lem:coupling_lemma}~\cite[Theorem 5.4]{LPW09}
	Let $K$ be the transition kernel of a Markov chain with unique stationary distribution $\nu$ on state space $\Omega$.
	Let $\st{X_t}_{t\geq 0}, \st{Y_t}_{t\geq 0}$ be two corresponding Markov chains started at $X_0 = x \in \Omega$ and $Y_0\sim\nu$.
	Define the coalescence time of the chains
	$$
	\tau(x)=\min\st{t:X_t=Y_t}\enspace.
	$$
	Assume that a coupling of $\st{X_t}_{t\geq 0}, \st{Y_t}_{t\geq 0}$ satisfies $X_t = Y_t$ for all $t \geq \tau(x)$.
	Then for any $t\geq 0$,
	$$
	\norm{\mathcal{L}(X_t)-\nu}_{\mathrm{TV}}\leq \Pr\!\Br{\tau(x)>t}\enspace.
	$$
\end{lemma}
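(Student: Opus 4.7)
The plan is to combine two standard facts: (i) stationarity of $\nu$ under the kernel $K$ forces $Y_t \sim \nu$ for every $t\geq 0$, and (ii) any coupling of two random variables yields an upper bound on their total variation distance by the probability that they differ. The hypothesis on $\tau(x)$ is then used to translate ``differ at time $t$'' into ``have not yet coalesced by time $t$''.

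First I would observe that since $\nu$ is a stationary measure for $K$ and $Y_0\sim\nu$, induction on $t$ gives $\mathcal{L}(Y_t)=\nu$ for every $t\geq 0$. Hence bounding $\norm{\mathcal{L}(X_t)-\nu}_{\mathrm{TV}}$ is equivalent to bounding $\norm{\mathcal{L}(X_t)-\mathcal{L}(Y_t)}_{\mathrm{TV}}$. Next I would prove the general coupling inequality: for any jointly distributed pair $(U,V)$ on $\Omega$ and any measurable $A\subseteq\Omega$,
\[
\Pr\!\Br{U\in A}-\Pr\!\Br{V\in A} = \Pr\!\Br{U\in A,\, U\neq V}-\Pr\!\Br{V\in A,\, U\neq V},
\]
because the two ``$U=V$'' contributions cancel exactly. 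Taking absolute values, both terms on the right are at most $\Pr[U\neq V]$, so
\[
\sup_{A\in\F}\abs{\Pr\!\Br{U\in A}-\Pr\!\Br{V\in A}}\leq \Pr\!\Br{U\neq V}.
\]
Applying this to $(U,V)=(X_t,Y_t)$ gives $\norm{\mathcal{L}(X_t)-\mathcal{L}(Y_t)}_{\mathrm{TV}}\leq \Pr\!\Br{X_t\neq Y_t}$.

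Finally I would invoke the coalescence assumption: since $X_s=Y_s$ for every $s\geq \tau(x)$, the event $\{X_t\neq Y_t\}$ is contained in $\{\tau(x)>t\}$, so $\Pr[X_t\neq Y_t]\leq \Pr[\tau(x)>t]$. Chaining the three steps yields the claimed inequality. There is no real obstacle here — the lemma is essentially the definition of a coupling proof — but the one spot where care is needed is the cancellation step when establishing the general coupling inequality; writing $\Pr[U\in A]=\Pr[U\in A,U=V]+\Pr[U\in A,U\neq V]$ and the analogous decomposition for $V$, and noting that the event $\{U\in A\}\cap\{U=V\}$ coincides with $\{V\in A\}\cap\{U=V\}$, is what makes the argument go through and should be stated explicitly.
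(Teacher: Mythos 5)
Your proof is correct, and it is the standard textbook argument for this lemma; since the paper simply cites it from \cite[Proposition 4.7]{LPW09} without reproducing a proof, your argument matches the approach of the cited reference. The three ingredients — stationarity giving $\mathcal{L}(Y_t)=\nu$ for all $t$, the coupling inequality $\norm{\mathcal{L}(U)-\mathcal{L}(V)}_{\mathrm{TV}}\leq\Pr[U\neq V]$ via the cancellation on $\{U=V\}$, and the inclusion $\{X_t\neq Y_t\}\subseteq\{\tau(x)>t\}$ from the coalescence assumption — are exactly what is needed, and you have stated the one delicate cancellation step explicitly.
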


\begin{lemma}\label{lem:fcts_tv}~\cite[Lemma 4.10, Lemma 4.11 and Equation 4.29]{LPW09}
  Let $\st{X_t}_{t\geq 0}$ be a Markov chain with unique stationary
  distribution $\nu$ on state space $\Omega$.
	Then for integers $t_1,t_2\in\N$, we have
	\begin{itemize}
		\item if $t_2\geq t_1$, then
			$$\sup_{X_0\in \Omega} \norm{\mathcal{L}\!\br{X_{t_2}}-\nu}_{\mathrm{TV}}\leq 2\sup_{X_0\in \Omega} \norm{\mathcal{L}\!\br{X_{t_1}}-\nu}_{\mathrm{TV}}\enspace.$$
		\item if $t_2=s\cdot t_1$ for some integer $s\in\N$, then
			$$\sup_{X_0\in \Omega} \norm{\mathcal{L}\!\br{X_{t_2}}-\nu}_{\mathrm{TV}}\leq \br{2\cdot \sup_{X_0\in \Omega} \norm{\mathcal{L}\!\br{X_{t_1}}-\nu}_{\mathrm{TV}}}^s\enspace.$$
	\end{itemize}
\end{lemma}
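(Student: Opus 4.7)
The plan is to introduce the auxiliary sup-distance $\bar d(t) \defeq \sup_{x,y \in \Omega} \tv{\mathcal{L}(X_t \mid X_0 = x) - \mathcal{L}(X_t \mid X_0 = y)}$ alongside $d(t) \defeq \sup_{X_0 \in \Omega} \tv{\mathcal{L}(X_t) - \nu}$, and to deduce both parts of the lemma from two intermediate properties of $\bar d$: (i) the sandwich $d(t) \le \bar d(t) \le 2 d(t)$, and (ii) submultiplicativity $\bar d(s+t) \le \bar d(s)\,\bar d(t)$. This decoupling of the two ingredients, rather than a direct argument about $d$, is what makes the factor-of-$2$ slack both necessary and sharp.

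For (i), the upper bound $\bar d(t) \le 2 d(t)$ will follow from the triangle inequality through $\nu$. The lower bound $d(t) \le \bar d(t)$ I would prove by fixing any $X_0 = x$ together with a chain $Y_t$ started from $Y_0 \sim \nu$: since $\nu$ is stationary, $\mathcal{L}(Y_t) = \nu$, and by convexity of total variation, $\tv{\mathcal{L}(X_t \mid X_0 = x) - \nu} \le \E_{y \sim \nu} \tv{\mathcal{L}(X_t \mid X_0 = x) - \mathcal{L}(X_t \mid X_0 = y)} \le \bar d(t)$. For (ii), I would fix $x,y$, let $\gamma_{u,v}$ be a measurable selection of optimal couplings of $\mathcal{L}(X_t \mid X_0 = u)$ and $\mathcal{L}(X_t \mid X_0 = v)$, and build a coupling on $s+t$ steps by first using an optimal coupling of $\mathcal{L}(X_s \mid X_0 = x)$ and $\mathcal{L}(X_s \mid X_0 = y)$ to obtain a pair $(X_s, Y_s) = (u,v)$, then continuing according to $\gamma_{u,v}$; if $X_s = Y_s$ I would just run a single chain forward from that point. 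The Markov property ensures the correct marginals, and combining with \lem{coupling_lemma} gives $\bar d(s+t) \le \Pr[X_{s+t} \ne Y_{s+t}] \le \Pr[X_s \ne Y_s] \cdot \sup_{u,v} \Pr_{\gamma_{u,v}}[X_t \ne Y_t] = \bar d(s)\,\bar d(t)$.

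Given (i) and (ii), Part 2 is then immediate by iterating submultiplicativity: $d(s t_1) \le \bar d(s t_1) \le \bar d(t_1)^s \le (2 d(t_1))^s$. For Part 1, I would write $t_2 = t_1 + (t_2 - t_1)$ and apply submultiplicativity together with the trivial bound $\bar d(\cdot) \le 1$ to conclude $\bar d(t_2) \le \bar d(t_1)$, so $d(t_2) \le \bar d(t_2) \le \bar d(t_1) \le 2 d(t_1)$. The main technical wrinkle I expect is in the submultiplicativity step: choosing the family $\{\gamma_{u,v}\}$ in a jointly measurable way on a general (possibly continuous) state space $\Omega$ requires a standard measurable selection theorem to make the conditioning on $(X_s, Y_s)$ rigorous, but this is orthogonal to the combinatorial content and does not affect the final constants.
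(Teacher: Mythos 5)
Your proposal is correct and follows essentially the standard argument from the cited reference \cite{LPW09} (the paper itself gives no proof, deferring to Lemmas 4.10, 4.11 and Equation 4.29 there): introduce $\bar d(t)$, sandwich it between $d(t)$ and $2d(t)$, establish submultiplicativity via the two-stage coupling, and read off both bullets. The only minor imprecision is that the inequality $\bar d(s+t)\le \sup_{x,y}\Pr[X_{s+t}\neq Y_{s+t}]$ rests on the elementary coupling inequality for total variation rather than on \lem{coupling_lemma} as stated, but this does not affect correctness.
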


We also utilize the following lemmas, which present upper bounds on the probability of a coordinate in a Haar random vector being small: 
\begin{lemma}[Lemma 3.5 in \cite{PS17}]\label{lem:marg_of_haar}
	Let $Y\sim \mu$ where $\mu$ is the Haar measure on $\SR^n$. Then for all $1<c<\infty$ and any $1\leq i\leq n$,
	$$
	\Pr\!\Br{Y[i]^2\leq n^{-3c}}\leq 2n^{1-c}\enspace.
	$$
\end{lemma}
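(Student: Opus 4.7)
The plan is to realize $Y$ through the Gaussian representation $Y = Z/\norm{Z}_2$, where $Z = (Z_1,\dots,Z_n)$ has i.i.d.\ standard normal entries, and then split the target event based on whether $\norm{Z}_2^2$ takes a typical value. By the rotational invariance of the Haar measure on $\SR^n$, the marginal law of $Y[i]$ does not depend on $i$, so I may assume $i = 1$. Under this representation, the event $\st{Y[1]^2 \leq n^{-3c}}$ becomes $\st{Z_1^2 \leq n^{-3c}\,\norm{Z}_2^2}$.

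Next I would perform a two-case union bound controlled by a threshold $T = n^{c+1}$, which is a natural scale since $\E\Br{\norm{Z}_2^2} = n$. Markov's inequality immediately gives $\Pr\Br{\norm{Z}_2^2 > T} \leq n^{-c}$. On the complementary event $\norm{Z}_2^2 \leq T$, the target event forces $|Z_1| \leq n^{(1-2c)/2}$, and since the standard Gaussian density is bounded by $1/\sqrt{2\pi}$, the anti-concentration bound yields $\Pr\Br{|Z_1| \leq n^{(1-2c)/2}} \leq \sqrt{2/\pi}\,n^{(1-2c)/2}$.

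Combining the two contributions and observing that both $(1-2c)/2$ and $-c$ are at most $1-c$ for $c>1$ and $n \geq 1$, each term is bounded by $n^{1-c}$, which gives the claimed $2n^{1-c}$. I do not expect a genuine obstacle: the only ``choice'' is the threshold $T$, and the crude value $n^{c+1}$ already balances the two contributions with room to spare. An alternative route would be to exploit the fact that $Y[1]^2$ has a $\mathrm{Beta}(1/2,(n-1)/2)$ distribution and directly integrate its density with a Gautschi-type estimate on the normalizing Beta function, but that approach requires more care with $\Gamma$ asymptotics without yielding a noticeably cleaner bound than the Gaussian split.
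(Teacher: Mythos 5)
The paper cites this statement from~\cite{PS17} (as its Lemma~3.5) without reproducing a proof, so there is no in-paper argument to compare against; your proposal must be judged on its own. It is correct: writing $Y = Z/\norm{Z}_2$ with $Z\sim\nd{0,I_n}$ and reducing to $i=1$ by rotational invariance, Markov's inequality on $\norm{Z}_2^2$ at threshold $n^{c+1}$ contributes at most $n^{-c}\leq n^{1-c}$, and on the complementary event the inclusion $\{Z_1^2\leq n^{-3c}\norm{Z}_2^2\}\subseteq\{|Z_1|\leq n^{(1-2c)/2}\}$ together with the Gaussian density ceiling $1/\sqrt{2\pi}$ gives $\sqrt{2/\pi}\,n^{(1-2c)/2}\leq n^{1-c}$ since $(1-2c)/2\leq 1-c$; the sum is at most $2n^{1-c}$ as required. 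Incidentally, the Gaussian-quotient device you use is exactly the one the paper itself employs for the complex analogue, \lem{marg_of_haar_c}, which it proves by reducing to the present real lemma via i.i.d.\ Gaussians, so your route is fully in keeping with the paper's methodology.
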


\begin{lemma}\label{lem:marg_of_haar_c}
	Let $Y\sim \mu_{\C}$ where $\mu_{\C}$ is the Haar measure on $\SC^n$. Then for all $1<c<\infty$ and any $1\leq i\leq n$,
	$$
	\Pr\!\Br{\abs{Y[i]}^2\leq \br{2n}^{-3c}}\leq 2\cdot \br{2n}^{1-c} \enspace.
	$$
\end{lemma}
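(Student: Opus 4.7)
The plan is to reduce the complex statement to the real case already proven in \lem{marg_of_haar}, via the canonical identification $\C^n \cong \R^{2n}$. Under the $\R$-linear isometry $\Phi: \C^n \to \R^{2n}$ sending $(a_1 + \i b_1, \ldots, a_n + \i b_n) \mapsto (a_1, b_1, \ldots, a_n, b_n)$, the complex unit sphere $\SC^n$ maps bijectively onto $\SR^{2n}$ (both are, as metric spaces, the $(2n{-}1)$-sphere). The central observation I would invoke is that the pushforward $\Phi_* \mu_{\C}$ coincides with the Haar measure $\mu$ on $\SR^{2n}$: the uniform probability measure on $S^{2n-1}$ is the unique probability measure invariant under any compact group acting transitively, and both $\ugroup{n}$ (embedded into $\ogroup{2n}$ via $\Phi$) and the full $\ogroup{2n}$ act transitively on $S^{2n-1}$, so both $\Phi_*\mu_{\C}$ and $\mu$ equal the uniform surface measure.

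Setting $X = \Phi(Y)$, so that $X\sim\mu$ is Haar random on $\SR^{2n}$, I would then use the identity $\abs{Y[i]}^2 = X[2i-1]^2 + X[2i]^2$ for each $i \in [n]$. The event $\abs{Y[i]}^2 \leq (2n)^{-3c}$ forces each of the two nonnegative summands to be at most $(2n)^{-3c}$, so in particular
\[
\Pr\!\Br{\abs{Y[i]}^2 \leq (2n)^{-3c}} \;\leq\; \Pr\!\Br{X[2i-1]^2 \leq (2n)^{-3c}}.
\]

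To conclude, I would apply \lem{marg_of_haar} with the ambient dimension $n$ replaced by $2n$, which yields $\Pr[X[2i-1]^2 \leq (2n)^{-3c}] \leq 2\cdot(2n)^{1-c}$, matching the claimed bound. I do not foresee any serious obstacle: the argument is a clean reduction, and the only step meriting explicit care is the coincidence of the two Haar measures under $\Phi$, which is a standard consequence of the uniqueness of invariant probability measures on a compact homogeneous space.
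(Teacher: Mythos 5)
Your proof is correct and takes essentially the same route as the paper's: the paper realizes the Haar vector on $\SC^n$ by normalizing $2n$ i.i.d.\ real Gaussians, so that $\abs{Y[i]}^2 = (g_1^2+g_2^2)/\sum_{k=1}^{2n} g_k^2$, drops the $g_2^2$ term exactly as you drop $X[2i]^2$, and then applies \lem{marg_of_haar} in dimension $2n$. Your justification of the identification $\C^n\cong\R^{2n}$ via uniqueness of the invariant measure is just a cosmetic variant of the paper's Gaussian representation of the same fact.
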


\begin{proof}
	Let $g_1,\dots,g_{2n}$ be $2n$ i.i.d. real random variable with $\nd{0,1}$ distribution. We have
	\begin{align*}
		\Pr\!\Br{\abs{Y[i]}^2\leq \br{2n}^{-3c}} &= \Pr\!\Br{\frac{g_1^2+g_2^2}{\sum_{k=1}^{2n} g_k^2}\leq \br{2n}^{-3c}} \\
		&\leq \Pr\!\Br{\frac{g_1^2}{\sum_{k=1}^{2n} g_k^2}\leq \br{2n}^{-3c}}\leq 2\cdot \br{2n}^{1-c} \enspace.
	\end{align*}
	The last inequality follows from \lem{marg_of_haar}.
\end{proof}

\subsection{Cryptography}
In this section, we will review various definitions and results in
cryptography. Throughout this work, $\secpar$ denotes a security
parameter.

\subsubsection{Pseudorandom Functions and Pseudorandom Permutations}

\begin{definition}[Quantum-Secure Pseudorandom Function]
  Let $\K,\X$ and $\Y$ be the key space, the domain and range, all
  implicitly depending on the security parameter $\lambda$. A keyed
  family of functions $\set{\prf_k:\X\to\Y}_{k\in\K}$ is a
  quantum-secure pseudorandom function (\qprf) if the following two
  conditions hold:
\begin{enumerate}
\item \textbf{Efficient generation}. $\prf_k$ is polynomial-time computable on a classical computer.

\item \textbf{Pseudorandomness}. For any polynomial-time quantum oracle algorithm $\A$, $\prf_k$ with a random $k\gets\K$ is indistinguishable from a truly random function $f\gets\Y^\X$ in the sense that:
\[\abs{\prob{k\gets\K}{\A^{\prf_k}\!\br{1^\lambda}=1}-\prob{f\gets\Y^\X}{\A^{f}\!\br{1^\lambda}=1}}=\negl{\lambda} \enspace .\]
\end{enumerate}
\end{definition}

\begin{definition}[Quantum-Secure Pseudorandom Permutation]
  Let $\K$ be the key space, and $\X$ be both the domain and range,
  implicitly depending on the security parameter $\lambda$. A keyed
  family of permutations $\set{\prp_k\in S_\X}_{k\in\K}$ is a
  quantum-secure pseudorandom permutation (\qprp) if the following two
  conditions hold:
\begin{enumerate}
\item (\textbf{Efficient generation}). $\prp_k$ and $\prp_k^{-1}$ are polynomial-time
  computable on a classical computer.
\item (\textbf{Pseudorandomness}). For any polynomial-time quantum
  oracle algorithm $\A$, $\prp_k$ with a random $k\gets\K$ is
  indistinguishable from a truly random permutation $\sigma\gets S_\X$
  in the sense that:
\[\abs{\prob{k\gets\K}{\A^{\prp_k,\prp_k^{-1}}\!\br{1^\lambda}=1}-\prob{\sigma\gets S_\X}{\A^{\sigma,\sigma^{-1}}\!\br{1^\lambda}=1}}=\negl{\lambda} \enspace .\]
\end{enumerate}
\end{definition}

We adopt the definition of a strong quantum-secure \prp~in this paper.
And when referring to a quantum oracle algorithm
having oracle access to a permutation $\sigma$,
we imply that it has oracle access to both $\sigma$ and its inverse $\sigma^{-1}$.

Under the assumption that post-quantum one-way functions exist,
Zhandry proved the existence of $\qprf$s~\cite{Zhandry21_qprf}. $\qprp$s can be
constructed from $\qprf$s efficiently \cite{Zhandry16_qprp}.

Given two $\qprf$s $F$ and $G$, one independently samples $F_{k_1}$
from $F$ and $G_{k_2}$ from $G$. A standard hybrid argument shows that
$F_{k_1}, G_{k_2}$ are computationally indistinguishable from two
independent random functions, as stated in the following lemma.
The proof, which is deferred to \app{deferred proofs},
can be readily extended to the scenario when polynomially many pseudorandom primitives (or random primitives) are given.

\begin{lemma}\label{lem:twokeys}
  Let keyed families of functions $F: \K_1 \times \X_1 \to \Y_1$ and
  $G: \K_2 \times \X_2 \to \Y_2$ be $\qprf$s. Then we have for any
  polynomial-time quantum oracle algorithm $\A$,
  \begin{align*}
    \abs{
    \prob{ k_1\gets\K_1, k_2\gets \K_2 }{\A^{F_{k_1}, G_{k_2}}\!\br{1^\lambda}=1}-\prob{f\gets\Y_1^{\X_1},g\gets\Y_2^{\X_2}}{\A^{f,g}\!\br{1^\lambda}=1}
    }=\negl{\lambda} \enspace .
  \end{align*}

  It also holds if $\X_2=\Y_2$, $G$ is a family of $\qprp$s and
  $g\gets\Y_2^{\X_2}$ is replaced by $g\gets S_{\X_2}$.
\end{lemma}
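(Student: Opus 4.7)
The plan is a standard two-step hybrid argument. Define three experiments: $H_0$ in which $\A$ has oracle access to $\br{F_{k_1}, G_{k_2}}$ with $k_1 \gets \K_1, k_2 \gets \K_2$; $H_1$ in which $\A$ has oracle access to $\br{f, G_{k_2}}$ with $f\gets \Y_1^{\X_1}$ and $k_2 \gets \K_2$; and $H_2$ in which $\A$ has oracle access to $\br{f, g}$ with $f\gets \Y_1^{\X_1}$ and $g\gets \Y_2^{\X_2}$. By the triangle inequality it suffices to show that each adjacent pair of hybrids is computationally indistinguishable.

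For the first hop, I would build a reduction $\B_1$ to the pseudorandomness of $F$. Given oracle access to a challenge oracle $h$ (either $F_{k_1}$ or a uniformly random $f$), $\B_1$ samples $k_2 \gets \K_2$ itself, then invokes $\A$, routing $\A$'s queries to the first oracle through $h$ while answering its queries to the second oracle internally using $G_{k_2}$. This simulation is perfect, so any non-negligible advantage in distinguishing $H_0$ from $H_1$ contradicts the pseudorandomness of $F$.

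For the second hop, the reduction $\B_2$ attacks the pseudorandomness of $G$: given access to a challenge oracle $h$ (either $G_{k_2}$ or $g$), it needs to supply $\A$ with a truly random $F$-oracle while relaying its $G$-queries to $h$. The main obstacle is that a truly random function $f\gets \Y_1^{\X_1}$ cannot be sampled and stored efficiently, and lazy sampling is not directly compatible with quantum queries. This is resolved by invoking Zhandry's result that a $2q$-wise independent family is perfectly indistinguishable from a uniformly random function to any quantum algorithm making at most $q$ queries. Since $\A$ runs in polynomial time, it makes only $q=\poly{\secpar}$ queries; $\B_2$ therefore samples an efficient $2q$-wise independent hash $\cf$, uses it in place of $f$, and routes second-oracle queries to $h$. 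The view of $\A$ is identical to $H_1$ or $H_2$ accordingly, so a non-negligible gap translates into an advantage against $G$.

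Combining the two hops by the triangle inequality yields the $\negl{\secpar}$ bound. For the $\qprp$ extension in which $G$ is a family of $\qprp$s and $g\gets S_{\X_2}$, the argument is unchanged: in the second hop $\B_2$ instead plays the $\qprp$ distinguishing game, relaying forward and inverse queries to its challenge oracle pair $\br{h, h^{-1}}$; since $\A$'s $G$-oracle semantics in the lemma already include inverse access under our convention, the simulation goes through verbatim.
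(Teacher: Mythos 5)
Your proposal is correct and uses the same core technique as the paper: a hybrid argument over the two oracles, with reductions that simulate the missing random oracle via Zhandry's $2q$-wise independence trick. The one substantive difference is the order of the hybrids. You replace $F_{k_1}$ first and $G_{k_2}$ second, so the reduction in the $G$-hop receives the (pseudo)random permutation from its own challenger and only ever has to \emph{simulate} a random \emph{function}; $2q$-wise independent functions suffice for both hops, even in the \qprp\ extension. The paper instead splits as $\br{F_{k_1},G_{k_2}} \to \br{F_{k_1},g} \to \br{f,g}$ and (after a ``without loss of generality'') builds a distinguisher against $F$ that must itself simulate a uniformly random permutation $g$; this forces an appeal to efficiently sampleable $2q$-wise \emph{almost} independent permutation families (the KNR09 result) in the \qprp\ case. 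Your ordering quietly removes that dependency, which is a small but genuine simplification; otherwise the two arguments are interchangeable, and your explicit treatment of both hops is if anything cleaner than the paper's ``WLOG.''
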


\subsubsection{Quantum Pseudorandomness}
The concept of quantum pseudorandom state generators was originally
introduced in~\cite{JLS18}.

\begin{definition}[Quantum Pseudorandom State Generator]
 Let $\K$ be a key space and $\H$ be a Hilbert space.  $\K$ and $\H$
 depend on the security parameter $\lambda$.  A pair of
 polynomial-time quantum algorithms $\br{K,G}$ is a pseudorandom
 state generator (\prsg) if the following holds:
\begin{itemize}
	\item \textbf{Key Generation.}
	$K(1^{\lambda})$ chooses a uniform $k\in\K$ and outputs it as the key.
	\item \textbf{State Generation.}
	For all $k\in\K$, $G(1^{\lambda},k)$
	outputs a quantum state $\ket{\phi_k}\in\S(\H)$.
	\item \textbf{Pseudorandomness.}
	Any polynomially many copies of $\ket{\phi_k}$ with the same random $k$
	is computationally indistinguishable from
	the same number of copies of a Haar random state.
	More precisely, for any $n\in\N$, any efficient quantum algorithm $\A$
	and any $\ncopy \in\poly{\lambda}$,
   \[\abs{\prob{k\gets \K}{\A\!\br{\ket{\phi_k}^{\otimes \ncopy}}=1}-\prob{\ket{\psi}\gets \mu}{\A\!\br{\ket{\psi}^{\otimes \ncopy}}=1}}=\negl{\lambda} \enspace ,\]
   where $\mu$ is the Haar measure on $\S(\H)$.
\end{itemize}
We call the keyed family of quantum states $\st{\phi_k}_{k\in\K}$ a
pseudorandom quantum state (\prs) in $\H$.
\end{definition}

$\prsg$s exist assuming the existence of $\qprf$s. Given any $\qprf$
$\prf:\K\times\bit{n} \to \bit{n}$ (where $\K$ and $N=2^n$ are
implicitly functions of the security parameter $\lambda$),
\cite{JLS18} constructed a $\prs$ $\st{\phi_k}_{k\in\K}$, referred to
\emph{(pseudo)random phase states}, as follows:
\[
  \ket{\phi_k} = \frac{1}{\sqrt{N}} \sum_{x\in\bit{n}}
  \omega_N^{\prf_k(x)}\ket{x}
\]
for $k\in \K$ and $\omega_N = e^{\i\frac{2\pi}{N}}$.  Additionally,
they conjectured the variant with binary phase (i.e., replacing
$\omega_N$ with -1) remains a $\prs$, and this was later confirmed
in~\cite{BS19}.

It is worth noting that both of these constructions rely on state
generation algorithms that require a specific initial state, typically
the all-zero state $\ket{0}^{\otimes n}$.  If we were to use a
different initial state, such as the equally weighted superposition
state $\ket{+}^{\otimes n}$, their state generation algorithms would
fail to produce a pseudorandom state.  Therefore, the specific initial
state is crucial for the success of these constructions.

\section{Pseudorandom State Scramblers}
\label{sec:def}
We describe our new primitive \emph{quantum Pseudorandom State
  Scramblers} (\prss). A $\prss$ is capable of generating a
pseudorandom state on an arbitrary initial state, addressing the
limitation of acting on one specific initial state.

\begin{definition}[Pseudorandom State Scrambler]
  \label{def:prssb}
  Let $\hspacein$ and $\hspaceout$ be Hilbert spaces of dimensions
  $2^{\dimin}$ and $2^{\dimout}$ respectively with
  $\dimin, \dimout \in\N$ and $\dimin \leq \dimout$. Let
  $\K=\bit{\klen}$ be a key space, and $\secpar$ be a security
  parameter. A \emph{pseudorandom state scrambler} (\prss) is an
  ensemble of isometric operators
  \[ \SG{}{\dimin,\dimout} : = \{\{\SG{k}{\dimin, \dimout,\secpar}:
    \hspacein \to \hspaceout \}_{k\in \K}\}_\secpar \, , \]
satisfying:
  \begin{itemize}
  \item \textbf{Pseudorandomness}. For any $\ncopy = \poly{\secpar}$,
    any $\ket{\phi}\in \Sin$, and any non-uniform poly-time quantum
    adversary $\A$,

    \[\abs{\prob{k\gets \K}{\A\!\br{\ket{\phi_k}^{\otimes \ncopy }}=1}
        -\prob{\ket{\psi}\gets \mu}{\A\!\br{\ket{\psi}^{\otimes
              \ncopy}}=1}} = \negl{\secpar}\, ,\]

    where $\ket{\phi_k} := \SG{k}{n,m,\secpar} \ket{\phi}$ and $\haar$
    is the Haar measure on $\Sout$.

  \item \textbf{Uniformity}. $\SG{}{\dimin,\dimout}$ can be uniformly
    computed in polynomial time. That is, there is a deterministic
    Turing machine that, on input
    $(1^{\dimin}, 1^{\dimout},1^\secpar,1^\klen)$, outputs a quantum
    circuit $Q$ in $\poly{\dimin,\dimout,\secpar,\klen}$ time such
    that for all $k\in \K$ and $\ket{\phi}\in\Sin$
    \[ Q\ket{k}\!\ket{\phi} = \ket{k}\!\ket{\phi_k}\, , \] where
    $\ket{\phi_k} := \SG{k}{n,m,\secpar}\! \ket{\phi}$.

  \item \textbf{Polynomially-bounded key length}.
    $\klen = \log |\K| = \poly{\dimout,\secpar}$. As a result,
    $\SG{}{\dimin,\dimout}$ can be computed efficiently in time
    $\poly{\dimin,\dimout,\secpar}$.
  \end{itemize}

\end{definition}

By strengthening the pseudorandomness condition in \prss, we define random state scramblers as follows.

\begin{definition}[Random State Scrambler]
  \label{def:rssb}
  Let $\hspacein$ and $\hspaceout$ be Hilbert spaces of dimensions
  $2^{\dimin}$ and $2^\dimout$ respectively with
  $\dimin, \dimout \in\N$ and $\dimin \leq \dimout$. Let
  $\K=\bit{\klen}$ be a key space, and $\secpar$ be a security
  parameter. A \emph{random state scrambler} (\rss) is an ensemble of
  isometric operators
  $\SG{}{\dimin,\dimout} := \{ \SG{}{\dimin,\dimout,\secpar}
  \}_\secpar$ with
  $\SG{}{\dimin,\dimout,\secpar} : = \{\SG{k}{\dimin,
    \dimout,\secpar}: \hspacein \to \hspaceout \}_{k\in \K}$ satisfying:

  \begin{itemize}

  \item \textbf{Statistical Pseudorandomness}. For any
    $\ncopy = \poly{\secpar}$, and any $\ket{\phi}\in \Sin$,

    \[ \td\br{\expect{k\gets \K}{\ketbra{\phi_k}^{\otimes \ncopy}},
        \expect{\ket{\psi}\in \haar}{ \ketbra{\psi}^{\otimes \ncopy}}}
      = \negl{\secpar} \, , \]

    where $\ket{\phi_k} := \SG{k}{n,m,\secpar} \ket{\phi}$ and $\haar$
    is the Haar measure on $\Sout$.

  \item \textbf{Uniformity}. $\SG{}{\dimin,\dimout}$ can be uniformly
    computed in polynomial time. That is, there is a deterministic
    Turing machine that, on input
    $(1^{\dimin}, 1^{\dimout},1^\secpar,1^\klen)$, outputs a quantum
    circuit $Q$ in $\poly{\dimin,\dimout,\secpar,\klen}$ time such
    that for all $k\in \K$ and $\ket{\phi}\in\Sin$
    \[ Q\ket{k}\!\ket{\phi} = \ket{k}\!\ket{\phi_k}\, , \] where
    $\ket{\phi_k} := \SG{k}{n,m,\secpar}\! \ket{\phi}$.
  \end{itemize}
\end{definition}

\begin{table}[t!]
  \centering
  \renewcommand{\arraystretch}{1.5}
  \newcolumntype{C}[1]{>{\centering\arraybackslash}m{#1}} 
  \begin{tabular}[h!]{| C{0.25\textwidth} |C{.25\textwidth} | C{0.4\textwidth} |}
    \hline
    \textbf{Random} & \multicolumn{1}{c|}{\textbf{Pseudorandom}} & \multicolumn{1}{c|}{\textbf{Main property}}
    \\ [0.5ex]
    \hline
    Haar unitary
           & PRU $\{U_k\}$ & $\{U_k\} \approx_c \text{Haar unitary}$ \\
    \hline
    $\rss$ & $\prss$ $\{\SG{k}{}\}$ & $\forall \ket{\phi}, \{\SG{k}{}\ket{\phi}\} \approx
                                      \text{Haar state}$ \newline
                                      (trace distance or comp. indist.)\\
    \hline
    Haar state & $\prsg$ $\{\SG{k}{}\}$ & for some \emph{fixed}
                                          $\ket{\phi}$ (e.g.,
                                          $\ket{0}$) \newline $\{\SG{k}{}\ket{\phi}\} \approx_c
                                          \text{Haar state}$  \\
    \hline
  \end{tabular}
  \vspace{0.5em}
  \caption{A collection of quantum random and pseudorandom
    objects.}
  \label{tab:prims}
\end{table}

\subsection{Properties of Pseudorandom State Scramblers}
\label{sec:property}

We discuss basic characteristics of the new primitives, as well as
their relationships with pseudorandom state generators and their
siblings.

\paragraph{Unitary to isometry.}
It is sufficient to construct \prss s from $\H$ to $\H$, since we can
construct \prss s from $\H_1$ to $\H_2$ ($n<m$) in the following way.
Let $\SG{}{m,m}:=\{\SG{}{m,m,\secpar}\}_\secpar$ be a \prss~with
$\SG{}{m,m,\secpar}:=\{\SG{k}{m,m,\secpar}: \H_2 \rightarrow \H_2\}$.
For all $\secpar\in \N$ and $k\in \K$, we define
$\SG{k}{n,m,\secpar} = \SG{k}{m,m,\secpar}\br{\id \tp \ket{0}^{\tp (m-n)}}$
where $\id$ is the identity of $\H_1$.
It is not hard to verify that
$\SG{}{n,m}$ is a \prss~from $\H_1$ to
$\H_2$.
We may write $\SG{}{m}$ instead of $\SG{}{m,m}$ when $m=n$.

\paragraph{Connections with Existing \prs~variants.} Several
definitions of quantum pseudorandomness on states with slight
variations have been proposed and constructed since the regular $\prs$
has been introduced. Brakerski and Shmueli~\cite{BS20} introduced
scalable pseudorandom states (scalable $\prs$s) to eliminate the
dependence between the state size and the security parameter. This
modification aids in assuring the security when the state size $n$ is
much smaller than the security parameter $\secpar$. Ananth, Qian and
Yuen \cite{AQY22} introduced pseudorandom function-like states ($\prfs$
s), which extend \prs s by augmenting with classical
inputs alongside the secret key. Although the security is initially
based on pre-selected classical queries to the $\prfs$ generator, the
subsequent work \cite{AGQY22} relaxes this to allow adversaries making
adaptive (classical or quantum) queries resulting in three levels of
security. The following theorem states that $\prss$s subsume the
original $\prs$s and those variants. The proof is deferred to
\app{connection}.
\begin{theorem}
  $\prsg$s, scalable $\prsg$s, and $\prfsg$s can be constructed via
  invoking $\prss$s in a black-box manner.
\end{theorem}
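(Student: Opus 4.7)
My plan is to give a black-box construction of each target primitive by invoking the $\prss$ on an appropriate initial state, together with (when needed) a quantum-secure $\prf$ for key derivation, and to reduce the required pseudorandomness to the pseudorandomness guarantee of the $\prss$.

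For the standard $\prsg$, define $K(1^\secpar)$ to sample a uniform $k\in\K$ and set $G(1^\secpar,k):=\SG{k}{n,n,\secpar}\ket{0^n}$. Polynomial key length and uniform computability are inherited directly from \defi{prssb}, and pseudorandomness follows by instantiating the $\prss$ guarantee with the fixed initial state $\ket{\phi}=\ket{0^n}$. For a scalable $\prsg$, the $\prss$ definition already decouples the state size $n$ from the security parameter $\secpar$, so the same construction $G(1^n,1^\secpar,k):=\SG{k}{n,n,\secpar}\ket{0^n}$ works without modification.

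For the $\prfsg$, the natural construction is $\ket{\phi_{k,x}}:=\SG{k}{n,n,\secpar}\ket{x}$, feeding the computational basis state $\ket{x}$ directly as the initial state; a single invocation per input replaces the measure-and-postselect template used from plain $\prsg$s. Single-input security follows from $\prss$ pseudorandomness applied with $\ket{\phi}=\ket{x}$. To obtain multi-query security against adaptively chosen inputs $x_1,\dots,x_q$ under a shared key $k$, I would run a hybrid argument that replaces each $\ket{\phi_{k,x_i}}^{\otimes \ncopy}$ with an independent Haar state. The obstacle is that, in each hybrid step, a reduction given only the $\prss$ challenge on $x_i$ must still simulate $\SG{k}{}\ket{x_j}$ for every other $x_j$ under the same unknown $k$. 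To sidestep this, I would derive per-input sub-keys via a $\qprf$: take the $\prfsg$ key to be a $\qprf$ key $K$ of sufficient output length and define $\ket{\phi_{K,x}}:=\SG{\prf_K(x)}{n,n,\secpar}\ket{0^n}$. Then \lem{twokeys} (lifted to polynomially many keys, as remarked in its statement) lets us replace $\{\prf_K(x_i)\}_i$ by independent uniform $\prss$ keys, after which a standard hybrid that invokes $\prss$ security independently on each sub-key yields joint indistinguishability from independent Haar states.

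The principal difficulty is precisely this shared-key, multi-query step for the $\prfsg$: the single-input, many-copies guarantee of $\prss$ does not directly lift to a joint guarantee across many distinct inputs under a single key. The $\qprf$-based sub-key derivation sidesteps it cleanly and preserves the ``one $\prss$ invocation per input'' benefit highlighted in the introduction; alternatively, one could try to exploit the stronger dispersing / $\rss$ property to argue joint total-variation closeness to Haar directly for distinct basis inputs under a single key, but that would require a more delicate statistical argument. I expect the formal proof in \app{connection} to follow the cleaner $\qprf$ route.
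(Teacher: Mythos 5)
Your constructions for $\prsg$s and scalable $\prsg$s are exactly the paper's: feed $\ket{0^n}$ into the scrambler and invoke the pseudorandomness guarantee for that fixed initial state (\lem{prss2prs}).

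For $\prfsg$s, however, you take a genuinely different route from the paper, and your prediction that \app{connection} follows the $\qprf$ route is wrong. The paper never derives per-input sub-keys from a $\qprf$. Instead, it handles the shared-key, multi-query difficulty you correctly identify in the most naive way possible: the $\prfsg$ key is simply a tuple $k=(k_1,\dots,k_s)$ of \emph{independent} $\prss$ keys, one per pre-declared input, and $\hat{G}(1^\secpar,k,x_i)=\SG{k_i}{n,m,\secpar}\ket{x_i}$. This immediately gives \emph{selective} security (the pre-declared-inputs notion in \defi{prfs}) by a trivial hybrid over independent keys, with no cross-input simulation problem. For \emph{adaptive} and quantum-accessible adaptive security the paper does not strengthen the argument; it restricts to $n=O(\log\secpar)$ so that $s=O(2^n)=\poly{\secpar}$ independent keys cover the entire domain (\lem{prfs-adapt}). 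The trade-off is explicit: the paper's construction is black-box in the $\prss$ alone and needs no further assumption, which matters because the whole point (cf.\ \thm{them:oraclesep}) is that $\prss$s may exist in a relativized world without one-way functions. Your $\qprf$-based key derivation achieves adaptive security for long inputs, but at the cost of assuming a post-quantum $\prf$, i.e.\ one-way functions -- precisely the assumption the theorem is meant to avoid, and precisely why the paper remarks at the end of \app{connection} that long-input $\prfsg$s ``might not be possible from $\prs$s or $\prss$s in a black-box manner'' and that existing constructions for that regime employ a post-quantum $\prf$. So your argument yields a correct construction of a $\prfsg$ under an extra assumption, but as a proof of the stated theorem it overshoots on security and undershoots on minimality of assumptions; the paper buys the black-box claim by settling for selective security (general inputs) plus adaptive security only on logarithmic-size domains.
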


\paragraph{Oracle Separation from $\owf$s.}

According to~\cite[Theorem 2]{Kre21}, $\pru$s exist relative to a
quantum oracle $\oracle$, even when $\bqp^\oracle = \qma^\oracle$,
indicating the non-existence of one-way functions. Since $\pru$s imply
$\prss$s, we obtain the same oracle separation result for $\prss$s.

\begin{theorem}
  There exists a quantum oracle $\oracle$ relative to which \prss s
  exist, but $\bqp^\oracle = \qma^\oracle$.
\label{them:oraclesep}
\end{theorem}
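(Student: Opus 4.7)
The plan is to invoke Kretschmer's oracle separation result for $\pru$s \cite{Kre21} and observe that any $\pru$ relative to an oracle automatically yields a $\prss$ relative to the same oracle. By \cite[Theorem 2]{Kre21}, there exists a quantum oracle $\oracle$ relative to which a family of $\pru$s $\{U_k\}_{k\in\K}$ exists and simultaneously $\bqp^\oracle = \qma^\oracle$, which precludes the existence of (quantum-secure) one-way functions relative to $\oracle$. It therefore suffices to exhibit, relative to $\oracle$, a $\prss$ built from $\{U_k\}$ and to verify the three conditions in \defi{prssb}.

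The construction is the obvious one: set $\SG{k}{n,n,\secpar} := U_k$ so that $\hspacein = \hspaceout$ and $\ket{\phi_k} = U_k\ket{\phi}$ for any $\ket{\phi}\in\Sin$. Uniformity and the polynomial key-length bound are inherited immediately from the $\pru$, since $U_k$ is implemented by a polynomial-sized uniformly generated circuit with oracle gates to $\oracle$, and the $\pru$ key is of length $\poly{n,\secpar}$. The nontrivial step is to reduce $\prss$ pseudorandomness to $\pru$ pseudorandomness. Given any non-uniform polynomial-time $\prss$ adversary $\A$ with advice $\qadvice_\secpar$, any $\ket{\phi}\in\Sin$, and any $\ncopy=\poly{\secpar}$, I construct a non-uniform polynomial-time $\pru$ adversary $\B$ that is given quantum oracle access to a unitary $V$ (either $V = U_k$ for a random $k$ or $V\sim\haar_{\ugroup{2^n}}$). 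The adversary $\B$ carries $\qadvice_\secpar$ together with a classical/quantum description of $\ket{\phi}$ as its own advice, prepares $\ket{\phi}^{\otimes \ncopy}$, applies its oracle $V$ once to each of the $\ncopy$ registers, and then runs $\A$ on the resulting state $(V\ket{\phi})^{\otimes\ncopy}$.

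Correctness of the reduction rests on two simple observations. When $V = U_k$, the state handed to $\A$ is exactly $\ket{\phi_k}^{\otimes\ncopy}$, matching the first distribution in the $\prss$ game. When $V$ is Haar random on $\ugroup{2^n}$, the left-invariance of the Haar measure implies that $V\ket{\phi}$ is Haar-distributed on $\Sin$ for any fixed $\ket{\phi}$, and the same $V$ is applied to all $\ncopy$ copies, so $\A$ receives $\ket{\psi}^{\otimes \ncopy}$ with $\ket{\psi}\sim\haar$, matching the second distribution. Therefore $\B$ inherits $\A$'s distinguishing advantage, which by the $\pru$ security of $\{U_k\}$ must be $\negl{\secpar}$. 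The main subtlety, and what I expect to be the only point requiring care in a full write-up, is the non-uniformity bookkeeping for encoding the state $\ket{\phi}$ into $\B$'s advice (straightforward since $\A$'s security quantifies over $\ket{\phi}$) and checking that $\B$ uses only a polynomial number of oracle calls to $V$, namely $\ncopy$ parallel calls, so it is a legitimate $\pru$ adversary. With these observations the theorem follows directly from \cite[Theorem 2]{Kre21}.
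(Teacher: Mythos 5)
Your proposal is correct and follows the same route as the paper, which simply cites \cite[Theorem 2]{Kre21} and notes that $\pru$s imply $\prss$s; you have merely filled in the details of that implication (the reduction that applies the oracle unitary to each of the $\ncopy$ copies of $\ket{\phi}$ and uses unitary invariance of the Haar measure). The argument is sound as written.
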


\section{Parallel Kac's Walk}
\label{sec:kac}
In this section, we design a \emph{parallel version} of the standard Kac's
walk on $\SR^{n}$~\cite{PS17} and demonstrate that it mixes exponentially faster with 
respect to the metrics of our interest. 
We assume $n=2m$ for some $m\in\N$ throughout this section.

\subsection{Parallel Kac's Walk on Real Space}
Before introducing our parallel Kac's walk, we first review the
standard one. The standard Kac's walk on vectors within a real Hilbert
space is a Markov process. At each discrete time
$t$, we randomly select two coordinates $(i,j)$ of the vector, and
then apply a two-dimensional rotation to the corresponding subvector
with an angle $\theta$ drawn randomly and uniformly. After a
predetermined number of steps, the Markov chain converges to a Haar
distribution over the unit sphere. It is proved in \cite{PS17} that
the mixing time of Kac's walk on $\SR^{n}$ with respect to the total
variation distance is $\Theta\!\br{n\log n}$. The formal definition of
Kac's walk is given below.

\begin{definition}\label{def:okac}
Kac's walk on $\SR^{n}$ is a discrete-time
Markov chain $\st{X_t\in \SR^n}_{t\geq 0}$.  At each time $t$, two
coordinates $i^{(t)},j^{(t)}\in [n]$ and an angle
$\theta^{(t)}\in[0,2\pi)$ are chosen uniformly at random. $X_{t+1}$ is
obtained by the following update rules:
$$
\br{\begin{matrix}
	X_{t+1}[i^{(t)}]\\
	X_{t+1}[j^{(t)}]
\end{matrix}}
=
\Br{
\begin{matrix}
	\cos(\theta^{(t)}) & -\sin(\theta^{(t)})\\
	\sin(\theta^{(t)}) & \cos(\theta^{(t)})
\end{matrix}
}\!
\br{\begin{matrix}
	X_{t}[i^{(t)}]\\
	X_{t}[j^{(t)}]
\end{matrix}}\enspace,
$$
$$
X_{t+1}[k]=X_t[k]\quad\text{ for } \quad k\notin\st{i^{(t)},j^{(t)}}\enspace.
$$
We denote the Kac's walk as
$G:[n]\times[n]\times [0,2\pi)\times \SR^n\to \SR^n$
such that
\begin{equation}\label{eq:kaconestep}
X_{t+1}=G\!\br{i^{(t)},j^{(t)},\theta^{(t)}, X_t}\enspace.
\end{equation}	
\end{definition}

In our parallel Kac's walk,
instead of randomly rotating
one subvector,
we simultaneously rotate
$m$ subvectors.
Here we give its formal definition.

\begin{definition}
	The parallel Kac's walk
	is a discrete-time Markov chain $\st{X_t\in \SR^n}_{t\geq 0}$.
	At each step $t$,
	the parallel Kac's walk
	first selects a random perfect matching of the set $\st{1,\dots,n}$,
	denoted by
	$$P_t = \st{\br{i^{(t)}_1, j^{(t)}_1},\dots,\br{i^{(t)}_m, j^{(t)}_m}}\enspace,$$
	where
	$\bigcup_{k=1}^{m}\!\st{i^{(t)}_k, j^{(t)}_k}=\st{1,\dots, n}$.
	Then $m$ independent angles
	$\agl{t}{1},\dots,\agl{t}{m}\in [0,2\pi)$
		are chosen uniformly at random.
	For every pair $\br{i^{(t)}_k, j^{(t)}_k}$ in $P_t$, it sets
	$$
	\br{\begin{matrix}
		X_{t+1}[i^{(t)}_k]\\
		X_{t+1}[j^{(t)}_k]
	\end{matrix}}
	=
	\Br{
	\begin{matrix}
		\cos(\agl{t}{k}) & -\sin(\agl{t}{k})\\
		\sin(\agl{t}{k}) & \cos(\agl{t}{k})
	\end{matrix}
	}\!
	\br{\begin{matrix}
		X_{t}[i^{(t)}_k]\\
		X_{t}[j^{(t)}_k]
	\end{matrix}}\enspace.
	$$
	Let
	$F:\br{[n]\times[n]}^m\times [0,2\pi)^{m}\times \SR^n\to \SR^n$
	denote the map associated with the above random walk such that
\begin{equation}\label{eq:F}
  X_{t+1}=F\!\br{P_t,\agl{t}{1},\dots,\agl{t}{m},X_t}\enspace.
\end{equation}
\end{definition}

In one step of the parallel Kac's walk,
we obtain $m$ distinct coordinate pairs
by randomly sampling a perfect matching $P_t$ of set $[n]$.
For each pair, a rotation angle is selected independently and uniformly at random. 
Recall the notation in \defi{okac}.
Let $X_{t,1}=X_t$ and $X_{t,k+1} = G\!\br{i^{(t)}_k, j^{(t)}_k, \agl{t}{k},X_{t,k}}$ for $1\leq k\leq m$.
It is evident that
	$$ X_{t,m+1} = X_{t+1}= F\!\br{P_t,\agl{t}{1},\dots,\agl{t}{m},X_t}\enspace.$$
We can observe that taking one step of the parallel Kac's walk can be
viewed as taking $m = n/2$ steps in the original Kac's walk when there
are \emph{no collisions} in the pairing step. All the subvectors being
rotated in a single step of the parallel Kac's walk are distinct, and
thus not independent.
Consequently, the results for the original Kac's walk cannot be
directly applied. Fortunately, by enhancing the coupling technique for
analyzing the mixing time of the standard Kac's walk, we are able to
prove that the parallel Kac's walk rapidly mixes in time $O(\log n)$
with respect to two different metrics: (1) the Wasserstein
$1$-distance; and (2) the total variation distance.

In the context of the Wasserstein  $1$-distance, after walking $T$
steps, the difference between the output distribution of a parallel
Kac's walk and the normalized Haar measure decays exponentially as $T$
grows, which leads to a $O(\log n)$ mixing time. Formally,

\begin{theorem}\label{thm:mixing_time_w1}
	Let $\st{X_t\in\SR^{n}}_{t\geq 0}$
	be a Markov chain that evolves
	according to the parallel Kac's walk.
	Then, for sufficiently large $n$, $c > 0$, and $T = 10(c+1)\log n$,
	$$
	\sup_{X_0\in \SR^n}\! \wone{\mathcal{L}\!\br{X_{T}}}{\mu} \leq  \frac{1}{2^{c\log n}}\enspace,
	$$
	where $\mu$ is the normalized Haar measure on $\SR^n$.
\end{theorem}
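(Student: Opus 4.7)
The plan is a coupling argument. Let $Y_0\sim\mu$ and construct a coupling $\st{(X_t,Y_t)}_{t\geq 0}$ of two parallel Kac's walks with $X_0$ arbitrary. Since one step of the parallel walk composes independent uniform-angle rotations in disjoint coordinate $2$-planes, it preserves the Haar measure on $\SR^n$, and hence $Y_t\sim\mu$ for every $t$. Therefore, for any coupling, $\wone{\mathcal{L}(X_T)}{\mu}\leq \E\Br{\norm{X_T-Y_T}_2}$, and it suffices to design a coupling driving this expectation below $1/2^{c\log n}$ within $T=10(c+1)\log n$ steps.

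The coupling I would use runs both chains with the same random matching $P_t$ at every step and, for each pair $\br{i^{(t)}_k, j^{(t)}_k}\in P_t$, applies the proportional coupling of~\cite{PS17}: an angle $\agl{t}{k}$ uniform in $[0,2\pi)$ for $X_t$, together with a dependent angle for $Y_t$ chosen so that the two post-rotation sub-vectors are collinear and point in the same direction. Two properties matter. First, proportional coupling on a pair sign-aligns the two post-rotation coordinates, and this propagates across steps (coordinates outside the current matching are unchanged, while those inside are re-aligned), so $X_t[i]\cdot Y_t[i]\geq 0$ for every $i$ once $t\geq 1$. Second, the potential $\Phi(X,Y):=\sum_i\br{X[i]^2-Y[i]^2}^2$ contracts: adapting the per-pair computation underlying \lem{contraction_lemma_orig} to the parallel setting, a direct conditional-expectation calculation gives $\E\Br{\Phi(X_{t+1},Y_{t+1})\mid X_t,Y_t}\leq \alpha\cdot \Phi(X_t,Y_t)$ for an explicit constant $\alpha\in(0,1)$ bounded away from $1$, so that after $T$ steps $\E\Br{\Phi(X_T,Y_T)}\leq 2\alpha^T$.

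To convert this into a bound on $\E\Br{\norm{X_T-Y_T}_2}$ I would combine sign alignment with the elementary inequality $\br{\abs{X_T[i]}-\abs{Y_T[i]}}^2\leq \abs{X_T[i]^2-Y_T[i]^2}$, which holds unconditionally since $\abs{|X_T[i]|-|Y_T[i]|}\leq |X_T[i]|+|Y_T[i]|$. This gives
\[
\norm{X_T-Y_T}_2^2 \;=\; \sum_i\br{\abs{X_T[i]}-\abs{Y_T[i]}}^2 \;\leq\; \sum_i \abs{X_T[i]^2-Y_T[i]^2} \;\leq\; \sqrt{n\cdot \Phi(X_T,Y_T)}
\]
by Cauchy-Schwarz. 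Taking expectations and applying Jensen yields $\E\Br{\norm{X_T-Y_T}_2}\leq \br{n\cdot\E[\Phi(X_T,Y_T)]}^{1/4}\leq (2n)^{1/4}\alpha^{T/4}$, which for $T=10(c+1)\log n$ and the constant $\alpha$ produced in the previous step is at most $1/2^{c\log n}$ once $n$ is large enough. If the constants prove too tight, one can instead use \lem{marg_of_haar}: a union bound shows $\min_i Y_T[i]^2\geq n^{-O(c)}$ with probability $1-n^{-\Omega(c)}$, on which event the elementary inequality refines to $\br{\abs{X_T[i]}-\abs{Y_T[i]}}^2\leq \br{X_T[i]^2-Y_T[i]^2}^2/Y_T[i]^2$, giving a sharper $\ell^2$-bound at the cost of a polynomial loss.

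The main obstacle is establishing the per-parallel-step contraction of $\Phi$ under the random-matching update. \lem{contraction_lemma_orig} is stated for the standard Kac's walk with a uniformly random pair, whereas the parallel walk uses a uniformly random perfect matching whose $n/2$ pairs are correlated and jointly determined. One must therefore redo the conditional-expectation computation: the key structural observation is that a pairwise proportional-coupling update on $(i,j)$ only affects the two terms of $\Phi$ indexed by $i$ and $j$, so the per-step decrement of $\Phi$ decomposes over the pairs of the matching; summing across the matching and averaging over its distribution (in which each pair is hit with probability $1/(n-1)$, a factor of $n/2$ larger than in the uniform-pair walk) should produce the constant-factor contraction -- an exponential speedup over the $1-1/(2n)$ per-step factor of \lem{contraction_lemma_orig}. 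Pinning down the exact value of $\alpha$ (and thereby the constant $10$ in $T=10(c+1)\log n$) is the core technical step.
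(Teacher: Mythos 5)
Your proposal follows essentially the same route as the paper's proof: the same proportional coupling run with a shared random matching, the same potential $\Phi=\sum_i\br{X_t[i]^2-Y_t[i]^2}^2$, and the same sign-alignment plus Cauchy--Schwarz/Jensen conversion to a bound of order $\br{n\,\E[\Phi]}^{1/4}$ on $\wone{\mathcal{L}(X_T)}{\mu}$. The contraction step you defer is exactly the paper's \lem{contraction_lemma}: conditioning on the matching, $\E\!\Br{\cos^4\varphi}=\E\!\Br{\sin^4\varphi}=3/8$ gives a factor $3/4$ on the diagonal terms, and the cross terms average over the random matching to $-\frac{3}{4(n-1)}\sum_i\br{A_t[i]-B_t[i]}^2\le 0$ because $\sum_i\br{A_t[i]-B_t[i]}=0$, so $\alpha=3/4$ and your heuristic goes through.
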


Furthermore, we get a stronger result regarding the total variation distance:

\begin{theorem}\label{thm:mixing_time}
	Let $\st{X_t\in\SR^{n}}_{t\geq 0}$
	be a Markov chain that evolves
	according to the parallel Kac's walk.
	Then, for sufficiently large $n$, $c>515$ and $T = c\log n$,
	$$
	\sup_{X_0\in \SR^n}\! \norm{\mathcal{L}\!\br{X_{T}}-\mu}_{\mathrm{TV}} \leq  \frac{1}{2^{\br{c/515-1}\log n-1}}\enspace,
	$$
	where $\mu$ is the normalized Haar measure on $\SR^n$.
\end{theorem}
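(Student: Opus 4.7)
The plan is to adapt the two-phase coupling strategy of Pillai–Smith to the parallel setting, invoke the coupling lemma (\lem{coupling_lemma}) to bound $\norm{\mathcal{L}(X_T)-\mu}_{\mathrm{TV}}$ by the non-coalescence probability $\Pr[X_T\neq Y_T]$, and then argue that because each step of the parallel walk compresses $n/2$ Kac-type rotations, each phase needs only $O(\log n)$ steps rather than $\Theta(n\log n)$. Fix $X_0 \in \SR^n$ arbitrary and $Y_0 \sim \mu$, and set $T = T_0 + T_1$ with $T_0,T_1 = \Theta(\log n)$ to be chosen; the constant $515$ in the statement will come from balancing the constants in the two phases.

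First I would run the \emph{contracting phase} for steps $t = 0,\dots,T_0-1$ under the proportional coupling of \defi{prop_cpl}. By \lem{contraction_lemma},
\[
  \E\!\Br{\sum_{i=1}^n (A_{T_0}[i]-B_{T_0}[i])^2} \le 2 \br{\tfrac{3}{4}}^{T_0},
\]
so choosing $T_0 = \Theta(\log n)$ makes this expectation polynomially small in $n$. A Markov-inequality argument then gives, with probability $1-O(n^{-\Omega(c)})$, a \emph{uniform} coordinate bound $|X_{T_0}[i]^2-Y_{T_0}[i]^2| \le n^{-\Omega(c)}$ for all $i\in[n]$; combined with \rmk{sign} we further have $X_{T_0}[i]Y_{T_0}[i] \ge 0$. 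This is the entry-wise closeness needed to feed into phase two. I would also combine this with \lem{marg_of_haar} to ensure that with high probability no coordinate of $Y_{T_0}$ is smaller in absolute value than $n^{-3c}$, so that the signed square roots $X_{T_0}[i]$ and $Y_{T_0}[i]$ are themselves close.

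Next I would run the \emph{coalescing phase} for $t=T_0,\dots,T_0+T_1-1$ under a non-Markovian coupling. Sample the $T_1$ matchings $P_{T_0},\dots,P_{T_0+T_1-1}$ independently and uniformly up front, and then build partitions $\PP_{T_0+T_1}, \PP_{T_0+T_1-1}, \dots, \PP_{T_0}$ \emph{backwards}: start from $\PP_{T_0+T_1} = \{\{1\},\ldots,\{n\}\}$, and obtain $\PP_t$ from $\PP_{t+1}$ by merging, for every pair $(i^{(t)}_k,j^{(t)}_k) \in P_t$, the two blocks of $\PP_{t+1}$ containing $i^{(t)}_k$ and $j^{(t)}_k$. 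Define the good events
\[
  \A_t:\quad \sum_{i\in S} X_t[i]^2 = \sum_{i\in S} Y_t[i]^2\;\; \forall S\in\PP_t, \quad \text{and} \quad X_t[k]Y_t[k] \ge 0 \;\; \forall k \in \textstyle\bigcup_{r=0}^{m-1} \set{i^{(t-1)}_r,j^{(t-1)}_r}.
\]
Observe that if $\PP_{T_0} = \set{[n]}$, then $\A_{T_0}$ reduces to $\norm{X_{T_0}}_2^2 = \norm{Y_{T_0}}_2^2 = 1$ together with the sign condition, both secured by phase one; and $\bigcap_{t=T_0}^{T_0+T_1} \A_t$ forces $X_{T_0+T_1} = Y_{T_0+T_1}$ because $\PP_{T_0+T_1}$ is the finest partition. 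The non-Markovian coupling then chooses, step by step, the angles $\aglp{t}{k}$ for $Y$ (given $\theta^{(t)}_k$ for $X$) from a joint distribution whose $\aglp{t}{k}$-marginal remains uniform on $[0,2\pi)$ but whose support is restricted so that $\A_{t+1}$ is enforced whenever possible. By the entry-wise closeness from phase one and the small-coordinate lower bound, the Radon–Nikodym derivative between the desired and uniform conditional laws is pointwise close to one, and a union bound over all $T_1$ steps and all $m$ pairs per step bounds the failure probability of $\A_{t+1}$ by a polynomially small quantity.

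The main obstacle, and the source of the unusual constant $515$, will be the \emph{partition coalescence} estimate: show that $T_1 = \Theta(\log n)$ independent uniformly random perfect matchings on $[n]$, when composed in reverse, produce the trivial partition $\PP_{T_0} = \{[n]\}$ with probability $1 - O(n^{-\Omega(c)})$. Intuitively each parallel step roughly halves the number of blocks (a matching on $b$ blocks merges a constant fraction of pairs), so $O(\log n)$ steps should suffice — but a careful coupon-collector-style tail bound is required to track the last few blocks and to guarantee the right polynomial tail. Combining these three ingredients,
\[
  \norm{\mathcal{L}(X_T) - \mu}_{\mathrm{TV}} \;\le\; \Pr[X_T \neq Y_T] \;\le\; \Pr[\text{phase-1 bound fails}] + \Pr[\PP_{T_0}\neq\set{[n]}] + \sum_{t=T_0}^{T_0+T_1-1} \Pr[\A_{t+1}^c \mid \A_t, \dots],
\]
and tuning $T_0,T_1$ so that each term is at most $2^{-((c/515-1)\log n)}$ yields the stated bound. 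Finally I would invoke \lem{fcts_tv} to upgrade the exponent once this holds for a single multiple of $\log n$.
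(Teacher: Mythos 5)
Your proposal is correct and follows essentially the same two-phase coupling strategy as the paper: proportional coupling plus \lem{contraction_lemma} and Markov's inequality for the contracting phase, the backward-constructed partitions with good events $\A(t,k)$, the small-coordinate control via \lem{marg_of_haar}, the existence of a ``good'' joint angle distribution (the paper's \lem{good_dist}), and the final amplification via \lem{fcts_tv}. The only substantive divergence is in the partition-coalescence estimate, where the paper's \lem{good_start} uses a union bound over cuts in the spirit of Erd\H{o}s--R\'enyi connectivity (bounding the probability that $O(\log n)$ random perfect matchings leave some subset $S$ with no crossing edge) rather than the coupon-collector-style block-counting argument you gesture at; both routes are viable, but the cut union bound is cleaner and directly yields the polynomial tail.
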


Notably, while the Wasserstein $1$-distance is a weaker metric
compared to the total variation distance,  \thm{mixing_time_w1}
provides an adequate foundation for constructing a $\prss$.
Additionally, the analysis of \thm{mixing_time} further reveals a
\emph{dispersing property} of our construction of $\rss$.
The remainder of this section is devoted to proving \thm{mixing_time_w1}. The proof for \thm{mixing_time}, along with an explanation of the dispersing property, is deferred to \app{srss}.

\begin{figure}[!t]
	\centering
	\begin{tikzpicture}
		\coordinate (o) at (0,0);
		\coordinate (i)  at (1,0);
		\coordinate (j) at (0,1);
		\coordinate (xt) at (1,1.5);
		\coordinate (yt) at (0.5,-1);
		\coordinate (xt1) at (-1.5,1);
		\coordinate (yt1) at (-0.930257, 0.620171);
		\draw[->] (o) -- (xt) node [above] {$X_t$};
		\draw[->] (o) -- (xt1) node [above] {$X_{t+1}$};
		\draw[->] (o) -- (yt) node [below] {$Y_t$};
		\draw[->] (o) -- (yt1) node [below] {$Y_{t+1}$};
		\pic[draw, "$\textcolor{orange}{\varphi}$", draw = orange, ->, angle eccentricity=1.5, angle radius=8] {angle = i--o--xt1};
		\pic[draw, "\textcolor{red}{$\theta$}", draw = red, ->, angle eccentricity=1.1, angle radius=45] {angle = xt--o--xt1};
		\pic[draw, "\textcolor{blue}{$\theta'$}", draw = blue, ->, angle eccentricity=1.3, angle radius=25] {angle = yt--o--yt1};
		\draw[->] (-2,0) -- (2,0) node[right] {$i$};
		\draw[->] (0,-2) -- (0,2) node[above] {$j$};
	\end{tikzpicture}
    \caption{Transformation of subcoordinates $X_t[i,j]$ and $Y_t[i,j]$}
    \label{fig:phase1}
\end{figure}
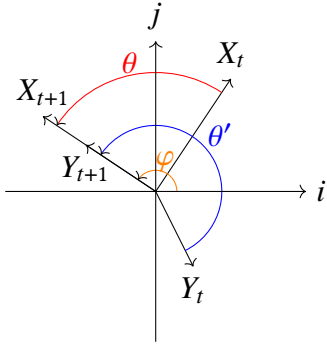

\subsubsection{The Proportional Coupling.}
\label{sec:propc}
Our technique for proving the mixing time in  \thm{mixing_time_w1} accommodates
the  \emph{proportional coupling}~\cite{PS17} that sufficiently reduces 
the distance between two copies of Kac's walk.
At each time $t$ in the proportional coupling (illustrated in \fig{phase1}),
an angle $\theta$ is chosen uniformly at random from $[0,2\pi)$ for rotating the
subvector $\br{X_t[i], X_t[j]}$,
where indices $i$ and $j$ are picked as in \defi{okac}.
The angle ${\theta'}$ is specifically selected
for $\br{Y_t[i], Y_t[j]}$ to
make it collinear with
$\br{X_t[i], X_t[j]}$,
i.e., they share the same argument $\varphi$.
Taking into account the marginal distribution, both $\theta$ and
$\theta'$ are drawn from the uniform distribution over the interval
$[0,2\pi)$, validating the proportional coupling for two Kac's walks.

Following a similar idea, we define the proportional coupling
of two copies of the parallel Kac's walk, which couples each pair of indices from the randomly sampled perfect matching using the proportional coupling.
\begin{definition}[Proportional Coupling for the Parallel Kac's Walk]\label{def:prop_cpl}
We define a coupling of two copies $\st{X_t}_{t\geq 0}, \st{Y_t}_{t\geq 0}$ of the parallel Kac's walk in the following way:
Fix $X_t$, $Y_t\in \SR^n$.
\begin{enumerate}
	\item Choose a perfect matching $P_t = \st{\br{i^{(t)}_1, j^{(t)}_1},\dots,\br{i^{(t)}_m, j^{(t)}_m}}$ and $m$ angles $\agl{t}{1},\dots,\agl{t}{m}\in[0,2\pi)$  uniformly at random, and set $$X_{t+1}=F\!\br{P_t,\agl{t}{1},\dots,\agl{t}{m},X_t}\enspace.$$
	\item Sample $m$ angles ${\aglp{t}{1}},\dots, {\aglp{t}{m}}$ in the following manner: for every $1\leq k\leq m$,
		\begin{enumerate}
		\item choose $\varphi_k\in [0,2\pi)$ uniformly at random among all angles that satisfy
			\begin{align*}
			X_{t+1}[i^{(t)}_k] = \sqrt{X_{t}[i^{(t)}_k]^2+X_{t}[j^{(t)}_k]^2}\cos(\varphi_k)\enspace,\\
			X_{t+1}[j^{(t)}_k] = \sqrt{X_{t}[i^{(t)}_k]^2+X_{t}[j^{(t)}_k]^2}\sin(\varphi_k)\enspace,
			\end{align*}
		\item and then choose ${\aglp{t}{k}}\in [0,2\pi)$ uniformly among the angles that satisfy  \begin{align*}
		\cos({\aglp{t}{k}}) \cdot Y_{t}[i^{(t)}_k] - \sin({\aglp{t}{k}}) \cdot Y_{t}[j^{(t)}_k] &= \sqrt{Y_{t}[i^{(t)}_k]^2+Y_{t}[j^{(t)}_k]^2}\cos(\varphi_k)\enspace,\\
		\sin({\aglp{t}{k}}) \cdot Y_{t}[i^{(t)}_k] + \cos({\aglp{t}{k}}) \cdot Y_{t}[j^{(t)}_k] &= \sqrt{Y_{t}[i^{(t)}_k]^2+Y_{t}[j^{(t)}_k]^2}\sin(\varphi_k)\enspace.
	\end{align*}
	\end{enumerate}
	And set $Y_{t+1}=F\br{P_t,{\aglp{t}{1}},\dots, {\aglp{t}{m}},Y_t}$.
\end{enumerate}
\end{definition}

In this coupling scheme, we enforce $X_t$ and $Y_t$
to employ an identical random matching ($P_t$ in step 1)
to generate all the $m$ pairs of coordinates.
And then we sample $m$ rotation angles for $X_t$ and
obtain $X_{t+1}$ by rotating the $m$ coordinate pairs
by their corresponding angles.
Next, in step 2, we determine the rotation angle
for each coordinate pair of $Y_t$.
For the $k$-th pair,
our objective is to select a suitable angle ${\aglp{t}{k}}$
such that the two-dimensional sub-vector 
$(Y_{t+1}[i^{(t)}_k], Y_{t+1}[j^{(t)}_k])$
aligns collinearly with
$(X_{t+1}[i^{(t)}_k], X_{t+1}[j^{(t)}_k])$.
To achieve this, we ensure that
$(Y_{t+1}[i^{(t)}_k], Y_{t+1}[j^{(t)}_k])$
shares the same argument $\varphi_k$ as
$(X_{t+1}[i^{(t)}_k], X_{t+1}[j^{(t)}_k])$.
Typically, the values of angles $\varphi_k$ and $\aglp{t}{k}$
are uniquely determined.
However, in the scenario where either
$(X_{t}[i^{(t)}_k], X_{t}[j^{(t)}_k])$
or $(Y_{t}[i^{(t)}_k], Y_{t}[j^{(t)}_k])$ equals the zero vector, 
all angles satisfy the required conditions.
In such cases, we resort to uniform random selection
for determining the angles.

\begin{remark}\label{rmk:sign}
	This coupling forces $X_{t+1}[i]Y_{t+1}[i]\geq 0$ for all  $i\in[n]$ since the signs are determined by the same arguments.
\end{remark}
In each step of our coupling scheme,
a quarter of the distance between vectors $X_t$ and $Y_t$ is reduced, 
which is formally shown in 
\begin{lemma} \label{lem:contraction_lemma}
	Let $X_0,Y_0\in \SR^n$. For $t\geq 0$, we couple $(X_{t+1},Y_{t+1})$ conditioned on $(X_{t},Y_{t})$ according to the proportional coupling defined in \defi{prop_cpl}.
	We define
	$$
	A_{t}[i] = X_t[i]^2\enspace,\quad B_{t}[i] = Y_t[i]^2\enspace.
	$$
	Then for any $l\in\N$, we have
	$$
	\E\!\Br{\sum_{i=1}^n \br{A_l[i]-B_l[i]}^2}\leq 2\cdot\br{1-\frac{1}{4}}^l\enspace.
	$$
\end{lemma}
\begin{proof}
	Fix $X_t,Y_t\in \SR^n$. Let $(X_{t+1},Y_{t+1})$ obtained from $(X_{t},Y_{t})$ by applying the coupling defined in \defi{prop_cpl}. Recall that $n=2m$. Let $N=\frac{n!}{2^{m}m!}$ be the number of perfect matchings for $[n]$. To keep the notations short, the perfect matching $\st{\br{i^{(t)}_1, j^{(t)}_1},\dots,\br{i^{(t)}_m, j^{(t)}_m}}$ at step $t$ is denoted by $\br{\overrightarrow{i^{(t)}},\overrightarrow{j^{(t)}}}$.

We have
	\begin{align}
          &\E\!\Br{\sum_{i=1}^n \br{A_{t+1}[i]-B_{t+1}[i]}^2}
            \nonumber \\
          =& \frac{1}{N}\sum_{\br{\overrightarrow{i^{(t)}},\overrightarrow{j^{(t)}}}}\underbrace{\E\!\Br{\left.\sum_{i=1}^n \br{A_{t+1}[i]-B_{t+1}[i]}^2\right|P_t=\br{\overrightarrow{i^{(t)}},\overrightarrow{j^{(t)}}}}}_{(\star)} \enspace .
	\label{eqn:t1}
	\end{align}

By the definition of the parallel Kac's walk, we have

\begin{align}
		(\star)=&\sum_{k=1}^m \E\!
		\Br{\br{\br{A_{t}[i^{(t)}_k]+A_{t}[j^{(t)}_k]}\cos(\varphi_k)^2-\br{B_{t}[i^{(t)}_k]+B_{t}[j^{(t)}_k]}\cos(\varphi_k)^2}^2}\nonumber\\
		&\quad+\sum_{k=1}^m \E\!
		\Br{\br{\br{A_{t}[i^{(t)}_k]+A_{t}[j^{(t)}_k]}\sin(\varphi_k)^2-\br{B_{t}[i^{(t)}_k]+B_{t}[j^{(t)}_k]}\sin(\varphi_k)^2}^2}\nonumber\\
		=&\frac{3}{4}\sum_{k=1}^m
		\br{\br{A_{t}[i^{(t)}_k]+A_{t}[j^{(t)}_k]}-\br{B_{t}[i^{(t)}_k]+B_{t}[j^{(t)}_k]}}^2\nonumber\\
		=&\underbrace{\frac{3}{4}\sum_{k=1}^m
		\br{\br{A_{t}[i^{(t)}_k]-B_{t}[i^{(t)}_k]}^2+\br{A_{t}[j^{(t)}_k]-B_{t}[j^{(t)}_k]}^2}}_{(\star\star)} \nonumber\\	&\quad\quad+\underbrace{\frac{3}{4}\sum_{k=1}^m2\br{A_{t}[i^{(t)}_k]-B_{t}[i^{(t)}_k]}\br{A_{t}[j^{(t)}_k]-B_{t}[j^{(t)}_k]}}_{(\star\star\star)}\enspace,\label{eqn:t2}
\end{align}
where the second equality is by $\E\!\Br{\cos(\varphi_k)^4}=\E\!\Br{\sin(\varphi_k)^4}=3/8$.

As $\st{\br{i^{(t)}_1, j^{(t)}_1},\dots,\br{i^{(t)}_m, j^{(t)}_m}}$ is a perfect matching, we have
\begin{equation}\label{eqn:t3}
  (\star\star)=\frac{3}{4}\sum_{i=1}^{n}\br{A_t[i]-B_t[i]}^2 \enspace .
\end{equation}
Combining Eqs.~\eqref{eqn:t1}\eqref{eqn:t2}\eqref{eqn:t3}, we obtain
\begin{align}\label{eqn:t4}
  &\E\!\Br{\sum_{i=1}^n \br{A_{t+1}[i]-B_{t+1}[i]}^2}= \frac{3}{4}\sum_{i=1}^{n}\br{A_t[i]-B_t[i]}^2+\underbrace{\frac{1}{N}\sum_{\br{\overrightarrow{i^{(t)}},\overrightarrow{j^{(t)}}}}(\star\star\star)}_{(4\star)}\enspace .
\end{align}

For the last term,
\begin{align}
(4\star)=&\frac{3}{2N}\sum_{\br{\overrightarrow{i^{(t)}},\overrightarrow{j^{(t)}}}}\sum_{k=1}^m\br{A_{t}[i^{(t)}_k]-B_{t}[i^{(t)}_k]}\br{A_{t}[j^{(t)}_k]-B_{t}[j^{(t)}_k]}\nonumber\\
		=&\frac{3}{2N}\cdot \frac{(n-2)!}{2^{m-1}(m-1)!}\sum_{i < j} \br{A_t[i]-B_t[i]}\br{A_t[j]-B_t[j]}\nonumber\\
		=&\frac{3\cdot m}{2n(n-1)}\br{\br{\sum_{i=1}^n\br{A_t[i]-B_t[i]}}^2 - \sum_{i=1}^{n}\br{A_t[i]-B_t[i]}^2}\nonumber\\
=&-\frac{3}{4(n-1)}\sum_{i=1}^{n}\br{A_t[i]-B_t[i]}^2\enspace.\label{eqn:t5}
	\end{align}

	Combining Eqs.~\eqref{eqn:t4}\eqref{eqn:t5}, we have
	\begin{align*} 
		\E\!\Br{\sum_{i=1}^n \br{A_{l}[i]-B_{l}[i]}^2} &= \E\!\Br{\E\!\Br{\left.\sum_{i=1}^n \br{A_{l}[i]-B_{l}[i]}^2 \right|X_{l-1},Y_{l-1} } }\\
		&\leq \frac{3}{4}\E\!\Br{\sum_{i=1}^{n}\br{A_{l-1}[i]-B_{l-1}[i]}^2}\\
		&\leq \br{\frac{3}{4}}^l \sum_{i=1}^{n}\br{A_{0}[i]-B_{0}[i]}^2\leq 2\cdot\br{\frac{3}{4}}^l\enspace.
	\end{align*}
\end{proof}

\begin{proof}[Proof of \thm{mixing_time_w1}]
	Let $T=10(c+1)\log n$ for $c>0$. We couple two copies $\st{X_t}_{t\geq 0}$ and $\st{Y_t}_{t\geq 0}$ of the parallel Kac's walk with starting points $X_0=x\in \SR^n$ and $Y_0\sim \mu$, by applying the proportional coupling. We have
	\begin{equation*}
		\wone{\mathcal{L}\!\br{X_{T}}}{\mu} = \wone{\mathcal{L}\!\br{X_{T}}}{\mathcal{L}\!\br{Y_{T}}} \leq \expec{\norm{X_T-Y_T}_2} \leq \br{\expec{\norm{X_T-Y_T}_2^4}}^{1/4}\enspace.
	\end{equation*}
	Then by Cauchy-Schwarz inequality, we have 
	\begin{equation}\label{eq:w11}
		\wone{\mathcal{L}\!\br{X_{T}}}{\mu} \leq \br{n\expec{\norm{X_T-Y_T}_4^4}}^{1/4}\enspace.
	\end{equation}
	Note that the proportional coupling forces $X_T[i]Y_T[i]\geq0$ for all $i\in[n]$. Therefore, for all $i\in[n]$
	$$
	\abs{X_T[i]-Y_T[i]} \leq \abs{X_T[i]+Y_T[i]} \enspace .
	$$
	This gives us
	\begin{equation}\label{eq:w12}
		\norm{X_T-Y_T}_4^4 = \sum_{i=1}^n\br{X_T[i]-Y_T[i]}^4 \leq \sum_{i=1}^n\br{X_T[i]^2-Y_T[i]^2}^2\enspace .
	\end{equation}
	Combining Eqs. \eq{w11} and \eq{w12}, we have
	\begin{align*}
		\wone{\mathcal{L}\!\br{X_{T}}}{\mu} 
		\leq &\br{n\expec{\sum_{i=1}^n\br{X_T[i]^2-Y_T[i]^2}^2}}^{1/4}\\
		(\text{\lem{contraction_lemma}})
		\leq &\br{2n\br{\frac{3}{4}}^{T}}^{1/4}
		\leq \frac{1}{2^{c\log n}} \enspace.
	\end{align*}

\end{proof}

\subsection{Parallel Kac's Walk on Complex Space}
In this section, we extend the parallel Kac's walk to complex vectors.
In the real case,
for each pair of coordinates, we uniformly select a matrix according
to Haar measure on $\so\!\br{2}$.  In the complex case, we will
naturally choose a matrix from $\su\!\br{2}$ according to Haar measure
on it.  The Haar random unitary in $\su\!\br{2}$ can be obtained by
sampling three random angles \cite{ZK94}.  Let
\begin{align}\label{eq:parhaar}
U\!\br{\alpha,\beta,\theta} =
\br{
\begin{matrix}
	e^{\i\alpha }\cos(\theta ) & -e^{\i\beta }\sin(\theta)\\
	e^{-\i\beta }\sin(\theta ) & e^{-\i\alpha }\cos(\theta)
\end{matrix}
}\enspace.	
\end{align}
If we pick $\alpha,\beta\in[0,2\pi)$ and $\zeta\in[0,1)$ uniformly at random and set $\theta = \arcsin \sqrt{\zeta}$, then $U\br{\alpha,\beta,\theta} $ is a Haar random unitary on $\su(2)$.

\paragraph{Kac's walk on complex vectors}
We define Kac's walk on $\SC^n$ as a discrete-time Markov chain $\st{X_t\in \SC^{n}}_{t\geq 0}$.
At each time $t$,
two coordinates $i^{(t)},j^{(t)}\in\st{1,\dots,n}$ and two angles $ \alpha^{(t)},\beta^{(t)}\in[0,2\pi)$ are chosen uniformly at random. Additionally, a real number $\zeta^{(t)}\in[0,1)$ is selected uniformly at random and compute

\[
	\theta^{(t)}=\arcsin{\sqrt{\zeta^{(t)}}}\enspace.
\]
$X_{t+1}$ is obtained by the following update rules:
$$
\br{\begin{matrix}
	X_{t+1}[i^{(t)}]\\
	X_{t+1}[j^{(t)}]
\end{matrix}}
=
U\!\br{\alpha^{(t)},\beta^{(t)},\theta^{(t)}}
\br{\begin{matrix}
	X_{t}[i^{(t)}]\\
	X_{t}[j^{(t)}]
\end{matrix}}\enspace,
$$
$$
X_{t+1}[k]=X_t[k]\quad\text{ for } \quad k\notin\st{i^{(t)},j^{(t)}}\enspace.
$$
We denote the Kac's walk on complex vectors as
$G_{\C}:[n]\times[n]\times [0,2\pi)^3\times \SC^{n}\to \SC^{n}$
such that
$$X_{t+1}=G_{\C}\!\br{i^{(t)},j^{(t)}, \alpha^{(t)},\beta^{(t)},\theta^{(t)}, X_t} \enspace .$$

\paragraph{Parallel Kac's walk on complex vectors}
In a parallel Kac's walk, we choose a perfect matching at each step and apply the one-step Kac's walk $G_{\C}$ on each pair. More specifically, the parallel Kac's walk on complex vectors is
a discrete-time Markov chain $\st{X_t\in \SC^{n}}_{t\geq 0}$.
At each step $t$,
it first selects a random perfect matching of the set $[n]$,
denoted by
$$P_t = \st{\br{i^{(t)}_1, j^{(t)}_1},\dots,\br{i^{(t)}_m, j^{(t)}_m}}$$
where
$\bigcup_{k=1}^{m}\st{i^{(t)}_k, j^{(t)}_k}=[n]$.
And then $2m$ independent angles
$$\alpha_1^{(t)},\dots,\alpha_m^{(t)},\beta_1^{(t)},\dots,\beta_m^{(t)}\in[0,2\pi)$$
are chosen
uniformly at random.
Additionally, $m$ independent real numbers $\zeta_1^{(t)}\dots,\zeta_m^{(t)}\in[0,1)$ are selected uniformly at random and compute
$$
	\theta_k^{(t)}=\arcsin\br{\sqrt{\zeta_k^{(t)}}}
$$
for all $k\in\st{1,\dots,m}$.
Then for every pair $\br{i^{(t)}_k, j^{(t)}_k}$ in $P_t$, it sets
$$
\br{\begin{matrix}
	X_{t+1}[i^{(t)}_k]\\
	X_{t+1}[j^{(t)}_k]
\end{matrix}}
=
U\!\br{\alpha^{(t)}_k,\beta^{(t)}_k,\theta^{(t)}_k}
\br{\begin{matrix}
	X_{t}[i^{(t)}_k]\\
	X_{t}[j^{(t)}_k]
\end{matrix}}\enspace.
$$
Let
$F_{\C}:\br{[n]\times[n]}^m\times [0,2\pi)^{3m}\times \SC^{n}\to \SC^{n}$
denote the map associated with the above random walk such that
\[X_{t+1}=F_{\C}\!\br{P_t,\st{\alpha_k^{(t)}}_{k=1}^m,\st{\beta_k^{(t)}}_{k=1}^m,\st{\theta_k^{(t)}}_{k=1}^m,X_t}\enspace .\]

As the number of steps increases, the output distribution of the parallel Kac's walk on complex vectors converges exponentially fast to the Haar measure in terms of Wasserstein-$1$ distance and total variation distance.
Formally,

\begin{theorem}\label{thm:mixing_time_w1_c}
	Let $\st{X_t\in\SC^{n}}_{t\geq 0}$
	be a Markov chain that evolves
	according to the parallel Kac's walk
	on complex vectors.
	Then, for sufficiently large $n$, $c>0$ and $T = 10(c+1)\log n$,
	$$
	\sup_{X_0\in \SC^n}\! \wone{\mathcal{L}\!\br{X_{T}}}{\mu} \leq  \frac{1}{2^{c\log n}}\enspace,
	$$
	where $\mu_{\C}$ is the Haar measure on $\SC^{n}$.
\end{theorem}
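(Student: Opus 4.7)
The plan is to mirror the real-case proof of \thm{mixing_time_w1}, replacing the proportional coupling on $\SR^n$ with an analogous coupling on $\SC^n$ in which the ``shared argument'' of the two chains is upgraded to a shared direction on $\SC^2$. Concretely, I would define the proportional coupling for $(X_{t+1},Y_{t+1})$ as follows: sample a common perfect matching $P_t$, then for each pair $(i_k^{(t)},j_k^{(t)})$ sample a Haar-random unit vector $\vec{v}_k\in\SC^2$ and set
\[
\binom{X_{t+1}[i_k^{(t)}]}{X_{t+1}[j_k^{(t)}]} = r_k^X\,\vec{v}_k, \qquad \binom{Y_{t+1}[i_k^{(t)}]}{Y_{t+1}[j_k^{(t)}]} = r_k^Y\,\vec{v}_k,
\]
where $r_k^X=\sqrt{|X_t[i_k^{(t)}]|^2+|X_t[j_k^{(t)}]|^2}$ and analogously $r_k^Y$. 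By Haar-invariance, $U\!\binom{a}{b}/\|(a,b)\|$ is uniformly distributed on $\SC^2$ when $U$ is sampled as in \eqref{eq:parhaar}, so one can realize this by picking the $X$-unitary $U_k$ conditional on producing $\vec{v}_k$, and the $Y$-unitary $U'_k$ likewise; both marginals are uniform on $\su(2)$, so the coupling is valid.

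The main structural consequence is a complex analog of \rmk{sign}: for every $t$ and $i$, the entries $X_t[i]$ and $Y_t[i]$ are non-negative real multiples of a common complex scalar. This propagates by induction, since at each step the $k$-th pair is renormalized to the \emph{same} $\vec{v}_k$ for both chains. In particular
\[
|X_t[i]-Y_t[i]| \;=\; \big||X_t[i]|-|Y_t[i]|\big| \;\leq\; \frac{\big||X_t[i]|^2-|Y_t[i]|^2\big|}{|X_t[i]|+|Y_t[i]|},
\]
which gives the replacement inequality $\bigl(X_T[i]-Y_T[i]\bigr)^4\leq \bigl(|X_T[i]|^2-|Y_T[i]|^2\bigr)^2$ used at the end of the real proof, with $X_T[i]^2$ replaced by $A_T[i]:=|X_T[i]|^2$ and similarly for $B_T[i]$.

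Next I would prove the complex analog of \lem{contraction_lemma}. With $A_t[i]=|X_t[i]|^2$ and $B_t[i]=|Y_t[i]|^2$, the crucial ingredient is the first-coordinate squared amplitude $Z$ of a Haar-random unit vector in $\SC^2$, which is uniformly distributed on $[0,1]$ and therefore satisfies $\E[Z^2]=\E[(1-Z)^2]=1/3$. Repeating the real-case expansion, for a fixed matching,
\[
\E\!\Bigl[\sum_{i}\bigl(A_{t+1}[i]-B_{t+1}[i]\bigr)^2 \,\Big|\, P_t\Bigr] \;=\; \tfrac{2}{3}\sum_{k=1}^{m}\!\bigl((A_t[i_k^{(t)}]+A_t[j_k^{(t)}])-(B_t[i_k^{(t)}]+B_t[j_k^{(t)}])\bigr)^2,
\]
replacing the factor $3/4$ (coming from $\E[\cos^4\varphi]+\E[\sin^4\varphi]$) by $2/3$. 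Averaging the cross-term $2(A_t[i_k]-B_t[i_k])(A_t[j_k]-B_t[j_k])$ over a uniformly random matching and using $\sum_i(A_t[i]-B_t[i])=0$ yields, exactly as in \eqref{eqn:t5}, a negative contribution of size $-\frac{2}{3(n-1)}\sum_i(A_t[i]-B_t[i])^2$. Iterating produces
\[
\E\!\Bigl[\sum_{i=1}^n (A_l[i]-B_l[i])^2\Bigr] \;\leq\; 2\cdot\Bigl(\tfrac{2}{3}\Bigr)^{l}.
\]

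Finally, combining all pieces exactly as in the proof of \thm{mixing_time_w1} and initializing $Y_0\sim\mu_{\C}$ gives
\[
\wone{\mathcal{L}(X_T)}{\mu_{\C}} \;\leq\; \Bigl(n\,\E\!\Bigl[\sum_i(A_T[i]-B_T[i])^2\Bigr]\Bigr)^{1/4} \;\leq\; \bigl(2n\cdot (2/3)^{T}\bigr)^{1/4}.
\]
For $T=10(c+1)\log n$, since $\log_2(3/2)>0.58$, the right-hand side is bounded by $n^{-c}$ for large $n$, yielding the theorem. The one step that requires a little care is verifying the inductive collinearity property and confirming that sampling $\vec{v}_k$ independently for each pair in the matching indeed produces the correct marginal Haar law on each $\su(2)$; once this is laid out, the rest of the argument is a direct transcription of \sect{mixing_time_w1}.
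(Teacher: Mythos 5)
Your proposal is correct and takes essentially the same approach as the paper. Sampling a common Haar-random direction $\vec{v}_k\in\SC^2$ for each matched pair is precisely the proportional coupling of \defi{prop_cpl_c} (the paper describes the same coupling via the explicit parametrization $(e^{\i\alpha}\cos\theta, e^{-\i\beta}\sin\theta)$ with $\theta=\arcsin\sqrt{\zeta}$, $\zeta\sim\mathrm{Unif}[0,1)$), and the contraction factor $2/3$ coming from $\E\!\Br{Z^2}=\E\!\Br{(1-Z)^2}=1/3$ for $Z$ uniform on $[0,1]$, the collinearity observation, and the concluding $W_1$ estimate all coincide with \lem{contraction_lemma_c} and the paper's proof in \app{deferred proofs}.
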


\begin{theorem}\label{thm:mixing_time_c}
	Let $\st{X_t\in\SC^{n}}_{t\geq 0}$
	be a Markov chain that evolves
	according to the parallel Kac's walk
	on complex vectors.
	Then, for sufficiently large $n$, $c>515$ and $T = c\log n$,
	$$
	\sup_{X_0\in \SC^n}\! \norm{\mathcal{L}\!\br{X_{T}}-\mu}_{\mathrm{TV}} \leq  \frac{1}{2^{\br{c/515-1}\log n-1}}\enspace,
	$$
	where $\mu_{\C}$ is the Haar measure on $\SC^{n}$.
\end{theorem}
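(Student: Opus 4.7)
The plan is to mirror the two-phase coupling proof of \thm{mixing_time} with the same parameter choices ($a=66$, $b=24$, $T_0=500\log n$, $T_1=15\log n$, $T=515\log n$), making three adjustments for the complex setting: extend the proportional coupling to the three-parameter $\su(2)$ rotations from \eq{parhaar}; extend the good distribution lemma (\lem{good_dist}) to match squared magnitudes and complex phases simultaneously at merge-points; and replace the Haar marginal bound \lem{marg_of_haar} by its complex counterpart \lem{marg_of_haar_c}, which incurs only a constant factor $2$ in the effective dimension and is absorbed into the exponent $b$.

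For the contracting phase ($0\le t\le T_0$), I would couple $\{X_t\}$ and $\{Y_t\}$ so that they share the same random matching $P_t$ and the same $\zeta$-parameters, while sampling the phase parameters $(\alpha,\beta)$ for $X$ uniformly and choosing $(\alpha',\beta')$ for $Y$ so that each updated pair $(Y_{t+1}[i^{(t)}_k],Y_{t+1}[j^{(t)}_k])$ is a nonnegative-real multiple, entry-wise, of $(X_{t+1}[i^{(t)}_k],X_{t+1}[j^{(t)}_k])$. Using the decomposition in \eq{decomp_u}, the squared-magnitude dynamics on each pair reduce, after this phase alignment, to the two-dimensional real rotation formula used in the proof of \lem{contraction_lemma}. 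Letting $A_t[i]=|X_t[i]|^2$ and $B_t[i]=|Y_t[i]|^2$, the derivation of \lem{contraction_lemma} then carries over verbatim, giving $\E[\sum_i(A_t[i]-B_t[i])^2]\le 2(3/4)^t$, and hence $\Pr[\|A_{T_0}-B_{T_0}\|_1\ge n^{-a}]\le n^{-2}$ by Markov's inequality, exactly as in \eq{event1}.

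For the coalescing phase ($T_0\le t\le T$), the backward-inductive construction of partitions $\{\mathcal{P}_{t,k}\}$ in \defi{non_mark_cpl} depends only on the shared matchings, so \lem{good_start} still yields $\Pr[\mathcal{P}_{T_0,1}\neq\{[n]\}]\le 2n^{-2}$. At each merge-point $(t,k)\in H$, I would couple the six-tuple of angles in two stages: first deterministically shift $(\alpha',\beta')$ by the relative complex phases of the subvectors of $Y_{t,k}$ and $X_{t,k}$ at coordinates $i^{(t)}_k,j^{(t)}_k$ so that the updated entries become nonnegative-real aligned (this preserves the uniform marginals on $[0,2\pi)^2$ by translation-invariance); then apply \lem{good_dist} to the scalar $\zeta$-parameters, which govern $\cos^2\theta$ exactly as in the real case, to match the squared magnitudes on the merged component with probability at least $1-6\cdot10^3\, n^{-(a-1)/5}$. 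The analogue of \lem{never_drift} transfers because its proof is entry-wise on squared magnitudes. Summing the three event-probability bounds exactly as in \eq{tv}--\eq{event3} gives $\sup_{X_0}\|\mathcal{L}(X_T)-\mu_{\C}\|_{\mathrm{TV}}\le 1/(2n)$ at $T=515\log n$, and \lem{fcts_tv} extends this to arbitrary $T=c\log n$ with $c>515$. The main obstacle is the first-stage phase alignment: one must verify that the deterministic phase offsets depending on $X_{t,k},Y_{t,k}$ still produce the correct uniform marginals for $(\alpha_k^{(t)},\beta_k^{(t)})$ and do not break the coupling's consistency with the true complex parallel Kac's walk; this reduces to the translation-invariance of the uniform measure on $[0,2\pi)^2$ under additive shifts measurable \wrt the history.
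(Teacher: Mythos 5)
Your high-level architecture (two-phase coupling, \lem{good_start} unchanged, \lem{marg_of_haar_c} in place of \lem{marg_of_haar}, \lem{fcts_tv} to boost) matches the paper, but the core mechanism you propose for the complex proportional coupling has a genuine gap. You keep $\zeta$ (hence $\theta$) shared between the two chains and try to achieve alignment by adjusting only the phases $(\alpha',\beta')$. This cannot work: with $\theta$ fixed, the orbit $\st{U(\alpha',\beta',\theta)\,(Y_t[i],Y_t[j])^T}$ is only a two-real-parameter subset of the $3$-sphere of radius $l'$ in $\C^2$, while proportional alignment with $(X_{t+1}[i],X_{t+1}[j])$ is three real constraints, so generically no choice of $(\alpha',\beta')$ achieves it. More concretely, the squared-magnitude update is $|X_{t+1}[i]|^2=\cos^2\theta\,|X_t[i]|^2+\sin^2\theta\,|X_t[j]|^2-2\cos\theta\sin\theta\,|X_t[i]||X_t[j]|\cos(\cdot)$; the phase shift only controls the argument of the cross term, and since the normalized magnitude profiles of the $X$- and $Y$-subvectors differ, the dynamics do \emph{not} reduce to the clean formula $A_{t+1}[i]=(A_t[i]+A_t[j])\cos^2\varphi$ used in \lem{contraction_lemma}. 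The missing idea is the deterministic $\su(2)$ pre-rotations $U_0,U_0'$ (full three-parameter elements determined by the current subvectors) that map $(X_t[i],X_t[j])^T\mapsto(l,0)^T$ and $(Y_t[i],Y_t[j])^T\mapsto(l',0)^T$ before applying one shared Haar-random $U(\alpha,\beta,\theta)$; this kills the cross term, forces equal arguments entrywise, and is a valid coupling by left-invariance of the Haar measure on $\su(2)$ (not merely translation-invariance of $(\alpha,\beta)$). This is exactly what \defi{prop_cpl_c} and \defi{non_mark_cpl_c} do.

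Two further corrections. First, with $\theta=\arcsin\sqrt{\zeta}$ and $\zeta$ uniform, $\E[\cos^4\theta]=\E[(1-\zeta)^2]=1/3$, so the complex contraction rate is $2/3$ per step (\lem{contraction_lemma_c}), not the $3/4$ you quote from the real case; your final bounds survive since $2/3<3/4$, but the derivation does not carry over ``verbatim.'' Second, at merge points the matched quantity is $A+B\cos^2\theta=A+B(1-\zeta)$ with $\zeta$ uniform on $[0,1)$, so you cannot invoke \lem{good_dist} (which assumes $\theta\sim\mathrm{Unif}[0,2\pi)$); you need its far simpler analogue for coupling two uniform variables on nearby intervals (the paper's \lem{good_dist_c}, giving error $3n^{-(q-p)}$), after which the paper takes $a=30$ rather than your $a=66$ (either value closes the union bound).
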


The proofs of above theorems follow a similar line of reasoning as the proof in the real case.
Therefore, to avoid redundancy, we defer the complete proof to \app{deferred proofs}.

\section{Constructions of \rss s and \prss s}
\label{sec:const1}
In this section, we present a family of circuits that implements
$\rss$s, specifically realizing the parallel Kac's walk on the real
(complex) unit sphere. To obtain circuits for $\prss$s, one can simply
replace the random primitives with their post-quantum secured
pseudorandom counterparts.

\subsection{Constructing (P)RSS over the Real Space}
We begin by constructing a unitary gate that
simulates a single step of the parallel Kac's walk on $\SR^{2^n}$.
In every step, we denote the corresponding permutation by $\sigma \in S_{2^n}$.
And we use the function $f:\bit{n-1}\to \bit{d}$ to manage the precision of the rotation angle
that was originally chosen from the interval $[0, 2\pi)$, where $d$ is the parameter controlling the precision of the rotation angle.
Specifically, for every $\sigma$ and $f$, we define a unitary gate $K_{\sigma,f}=U_{\sigma^{-1}} W_f U_{\sigma}$, where
\begin{align} \label{eq:k}
	U_\sigma = \sum_{x\in\bit{n}} \ketbratwo{\sigma(x)}{x}, \quad
	W_f = \sum_{y\in\bit{n-1}}
			\begin{pmatrix}
				\cos\br{\theta_{y}}	 &	-\sin\br{\theta_{y}} \\
				\sin\br{\theta_{y}}	 &	\cos\br{\theta_{y}}
			\end{pmatrix}
			\otimes \ketbra{y} \enspace,
\end{align}
and $
\theta_y=2\pi\cdot\val{f(y)}$
is the rotation angle for every subvector $\br{\sigma^{-1}( 0y),\sigma^{-1}(1y) }$, $
y\in\bit{n-1}$.
In \fig{kac_circ}, we show a quantum circuit that realizes $K_{\sigma,f}$.

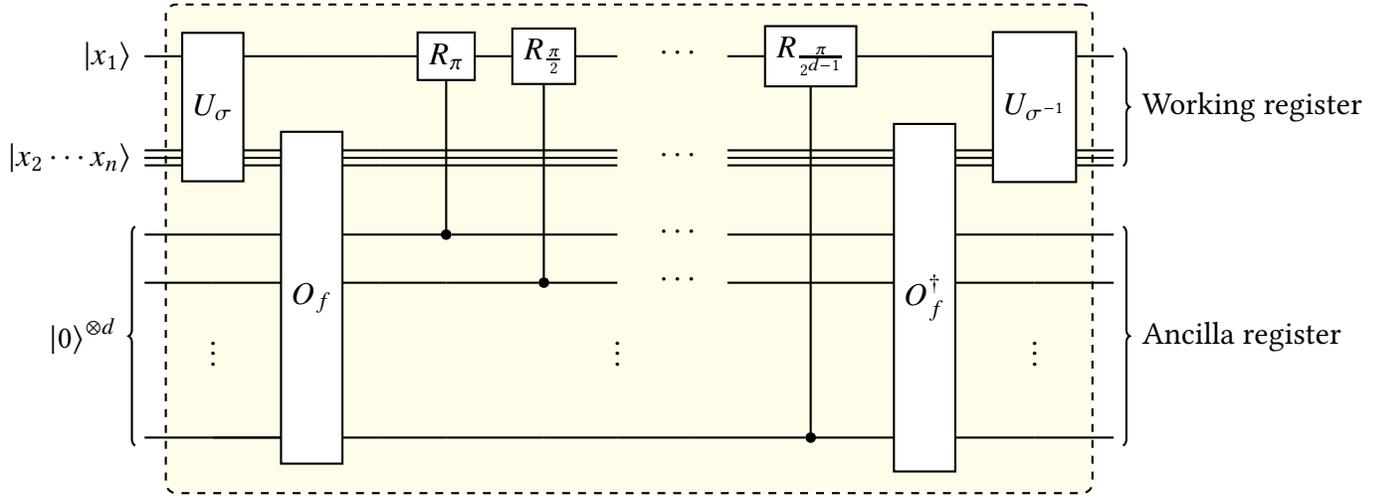
\begin{figure}[bt]
  \centering
  \resizebox{\textwidth}{!}{
  \begin{quantikz}
  \lstick{\ket{x_1}} & \gate[wires=2]{U_{\sigma}} \gategroup[6,steps=10,style={dashed, rounded corners,fill=yellow!10, inner xsep=2pt}, background, label style={yshift=0.2cm}]{ } &  \qw & \qw & \gate{R_{\pi}} & \gate{R_{\frac{\pi}{2}}} &\qw & \cdots\quad & \gate{R_{\frac{ \pi}{2^{d-1}}}}&\qw &\gate[wires=2]{U_{\sigma^{-1}}} &\qw \rstick[2]{Working register} \\
  \lstick{\ket{x_2\cdots x_n}}  & \qwbundle[alternate]{} & \gate[5, nwires={4}, bundle={1}]{O_f} &    \qwbundle[alternate]{}  &  \qwbundle[alternate]{} & \qwbundle[alternate]{} & \qwbundle[alternate]{}& \cdots\quad & \qwbundle[alternate]{}&  \gate[5, nwires={4}, bundle={1}]{O_f^\dagger} & \qwbundle[alternate]{}&\qwbundle[alternate]{}\\
  \lstick[wires=4]{$\ket{0}^{\otimes d}$} & \qw & &\qw &\ctrl{-2}&\qw &\qw &\cdots\quad &\qw &  &\qw &\qw \rstick[4]{Ancilla register} \\
   & \qw & &\qw &\qw & \ctrl{-3}&\qw &\cdots\quad &\qw &  & \qw &\qw\\
   & \vdots  & &  &  & &   \vdots &  & & & \vdots\\
   & \qw & \qw &\qw & \qw & \qw & \qw &\qw & \ctrl{-5} & &\qw & \qw
  \end{quantikz}}
  \caption{Circuit diagram for the construction of the $K_{\sigma,f}$}
  \label{fig:kac_circ}
\end{figure}

The circuit consists of:
\begin{enumerate}
	\item Permutation: a unitary $U_\sigma$
		  which transforms $\ket{x}$
		  to $\ket{\sigma{(x)}}$
		  for any $x\in\bit{n}$.
		  This unitary can be implemented via making quires to oracles $O_{\sigma}$ and $O_{\sigma^{-1}}$, and using $n$ ancilla qubits: for any $x\in\bit{n}$,
		  $$
		  \ket{x}\ket{0}
		  \overset{O_\sigma}{\longrightarrow} \ket{x}\ket{\sigma(x)}
		  \overset{SWAP}{\longrightarrow} \ket{\sigma(x)}\ket{x}
		  \overset{O_{\sigma^{-1}}}{\longrightarrow} \ket{\sigma(x)}\ket{0}\enspace .
		  $$
		  We omit this detail in the above figure for the sake of conciseness.
	\item Implementing rotation operator $W_f$: 
	\begin{enumerate}
		\item an oracle $O\!_f$
	      which queries $f\!\br{x_2,\dots, x_n}$
	      and stores the $d$-bit result in the ancilla qubits.
		\item $d$ controlled-rotation gates.
	      The $i$-th ancilla qubit controls
	      $R_{\frac{\pi}{2^{i-1}}}$ gate acting on the first qubit, where the gate $R_\theta$ denotes the rotation transformation
	      $\begin{pmatrix}
\cos\theta&-\sin\theta\\\sin\theta&\cos\theta
\end{pmatrix}$.
		\item an oracle $O\!_f$ again for uncomputing the ancilla qubits.
	\end{enumerate} 
	\item Inverse permutation: a unitary $U_{\sigma^{-1}}$.
\end{enumerate}
\paragraph{Remark.} The gate $K_{\sigma,f}$ approximates one step of the parallel Kac's walk.
It starts by partitioning the computational basis (indices) 
into $2^{n-1}$ pairs based on a selected permutation $\sigma$.
For each pair $\br{\sigma^{-1}( {0y}), \sigma^{-1}( {1y})}$
labeled by $y\in\bit{n-1}$,
the gate applies a rotation with an approximated angle $\theta_{y}$ indicated by $f$
to the corresponding two dimensional subvector.

\subsubsection{Stepwise State Evolution}
\noindent To gain insight into the functionality of $K_{\sigma,f}$,
we assume the initial state to be a pure state
$$\ket{\varphi}= \sum_{x\in\bit{n}} p_x\ket{x}.$$
First, to pair up the indices by applying $U_\sigma$, 
the initial state is transformed into
\begin{align*}
\sum_{x\in\bit{n}} p_x\ket{\sigma(x)}\otimes \ket{0^d}
&=\sum_{x'\in\bit{n}}p_{\sigma^{-1}(x')}\ket{x'}\otimes \ket{0^d}\\
&=\sum_{y\in\bit{n-1}}\br{p_{\sigma^{-1}( {0y})}\ket{0} + p_{\sigma^{-1}( {1y})}\ket{1}}
\otimes\ket{y}\otimes \ket{0^d}\enspace .
\end{align*}
\noindent Then, to rotate each subvector, the oracle $O_f$ stores $f(y)$ in the ancilla register as control qubits, resulting in the state
$$\sum_{y\in\bit{n-1}} \br{p_{\sigma^{-1}( {0y})}\ket{0} + p_{\sigma^{-1}( {1y})}\ket{1}}
\otimes\ket{y}\otimes \ket{f(y)} \enspace.$$
Next, a series of controlled-rotation gates are applied to 
the first qubit, rotating it by an angle of 
$\theta_{y}= 2\pi\cdot \val{f(y)}$.
Therefore, we have the following state:
$$\sum_{y\in\bit{n-1}} \br{p'_{\sigma^{-1}( {0y})}\ket{0} + p'_{\sigma^{-1}( {1y})}\ket{1}}
\otimes\ket{y}\otimes \ket{f(y)}$$
where
\begin{align*}
	p'_{\sigma^{-1}( {0y})} = \cos\br{\theta_{y}} \cdot p_{\sigma^{-1}( {0y})}-\sin\br{\theta_{y}} \cdot p_{\sigma^{-1}( {1y})}\enspace,\\
	p'_{\sigma^{-1}( {1y})} = \sin\br{\theta_{y}} \cdot p_{\sigma^{-1}( {1y})}+\cos\br{\theta_{y}} \cdot p_{\sigma^{-1}( {1y})}\enspace.
\end{align*}
After reverting the ancilla qubits and applying the inverse permutation, we obtain the output state
{\small
\begin{eqnarray*}
	\sum_{y\in\bit{n-1}} \br{p'_{\sigma^{-1}( {0y})}\ket{\sigma^{-1}( {0y})} + p'_{\sigma^{-1}( {1y})}\ket{\sigma^{-1}( {1y})}}\otimes \ket{0^d} = \sum_{x\in\bit{n}} p'_x\ket{x} \otimes \ket{0^d} \enspace.
\end{eqnarray*}
}

\subsubsection{Constructing \rss s}\label{sec:rss_in_real}
We first define an ensemble $\enssrsscons$ of unitary operators that represents applying
$K_{\sigma,f}$ for $T$-step with i.i.d. random selections of permutations 
and functions. Then, we prove that such an ensemble forms an \rss.

\begin{definition} \label{def:kactorss}
	Let $n,T,d\in\N$,
	and $\H$ be  a real Hilbert space with dimension $2^n$.
	An ensemble of unitary operators
	$\enssrsscons \coloneq \st{\srsscons }_{\secpar}$ with
	$$
	\srsscons \coloneq \st{\srsscons_{\rep{\sigma}{T},\rep{f}{T}} : \H\to\H}
	_{\rep{\sigma}{T} \subseteq S_{2^n},\rep{f}{T}\subseteq \{f:\bit{n-1}\to {\bit{d}}\}}$$
	is define as
	$$
	\srsscons_{\rep{\sigma}{T},\rep{f}{T}}  = K_{\sigma_{T},f_{T}}\cdots K_{\sigma_{2},f_{2}}K_{\sigma_{1},f_{1}}
	$$
	where $K_{\sigma,f}=U_{\sigma^{-1}} W_f U_{\sigma}$ is defined in \eq{k}.
\end{definition}

\begin{theorem}\label{thm:kactorss}
	Let $n\in\N$, $d = \log^2\!\secpar+\log^2\!n$ and $T = 10 (\secpar + 1) n$.
	The ensemble of unitary operators
	$\enssrsscons$ defined in \defi{kactorss}
	is an \rss.
\end{theorem}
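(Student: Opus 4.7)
The plan is to verify the two conditions of \defi{rssb}: uniformity and statistical pseudorandomness. Uniformity is immediate — each block $K_{\sigma_t, f_t}$ is realized by the polynomial-size circuit of \fig{kac_circ} given oracle access to $\sigma_t$ and $f_t$ (available from the key), and composing $T = 10(\secpar+1)n$ such blocks yields a $\poly{n, \secpar, \klen}$ circuit that a deterministic Turing machine can emit in polynomial time on input $(1^n, 1^n, 1^\secpar, 1^\klen)$.

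For statistical pseudorandomness, the strategy is to argue that the random ensemble $\srsscons$ closely mirrors $T$ steps of the parallel Kac's walk on $\SR^{2^n}$, and then invoke \thm{mixing_time_w1}. First I would observe that a uniform $\sigma \in S_{2^n}$ induces a uniform perfect matching $\{(\sigma^{-1}(\overline{0y}), \sigma^{-1}(\overline{1y}))\}_{y \in \bit{n-1}}$ of $[2^n]$, and that a uniform $f : \bit{n-1} \to \bit{d}$ produces independent angles $\theta_y = 2\pi \val{f(y)}$, each supported on a $2\pi \cdot 2^{-d}$-net of $[0, 2\pi)$. Coupling each $\theta_y$ with the nearest truly continuous uniform angle $\theta^\star_y$ and exploiting the Lipschitz continuity of $R_\theta$ in $\theta$, the gate $K_{\sigma, f}$ differs from the corresponding ideal one-step parallel-Kac unitary $F(P, \theta^\star_1, \ldots, \theta^\star_m, \cdot)$ by at most $O(2^{-d})$ in operator norm. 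Applying \fct{uprod} to the $T$-fold composition bounds the total circuit-vs-Kac discrepancy by $O(T \cdot 2^{-d})$, which is $\negl{\secpar}$ for $d = \log^2 \secpar + \log^2 n$ and $T = 10(\secpar+1)n$ uniformly in $n$ (since $n \cdot 2^{-\log^2 n}$ is bounded by a constant and $\secpar \cdot 2^{-\log^2 \secpar}$ is negligible in $\secpar$).

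Under this coupling, for any fixed initial state $X_0 \in \SR^{2^n}$, the circuit output $X_T^{\mathrm{circ}}$ and the ideal parallel-Kac output $X_T^{\mathrm{Kac}}$ satisfy $\norm{X_T^{\mathrm{circ}} - X_T^{\mathrm{Kac}}}_2 = \negl{\secpar}$ with probability one, so $W_1(\mathcal{L}(X_T^{\mathrm{circ}}), \mathcal{L}(X_T^{\mathrm{Kac}})) = \negl{\secpar}$. Applying \thm{mixing_time_w1} to the sphere $\SR^{2^n}$ with $c = \secpar$ (so $\log(2^n) = n$) yields $W_1(\mathcal{L}(X_T^{\mathrm{Kac}}), \mu) \leq 2^{-\secpar n}$, and the triangle inequality gives $W_1(\mathcal{L}(X_T^{\mathrm{circ}}), \mu) = \negl{\secpar}$.

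To finish I would convert this to the trace-distance bound required by statistical pseudorandomness. The key estimate is the Lipschitz bound $\norm{\ketbra{\alpha}^{\otimes \ncopy} - \ketbra{\beta}^{\otimes \ncopy}}_1 \leq 2\ncopy \norm{\alpha - \beta}_2$ for unit vectors $\ket{\alpha}, \ket{\beta}$, which follows from a standard hybrid argument on the $\ncopy$ tensor factors. Taking the expectation of this inequality under an optimal $W_1$-coupling between $\mathcal{L}(X_T^{\mathrm{circ}})$ and $\mu$ yields
\[
  \td\br{\expect{(\sigma)_T, (f)_T}{\ketbra{\phi_k}^{\otimes \ncopy}}, \expect{\ket{\psi}\sim \mu}{\ketbra{\psi}^{\otimes \ncopy}}} \leq 2\ncopy \cdot W_1\br{\mathcal{L}(X_T^{\mathrm{circ}}), \mu} = \negl{\secpar},
\]
since $\ncopy = \poly{\secpar}$. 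The main obstacles I anticipate are (i) executing the per-step discretization estimate cleanly and compounding it over $T$ steps via \fct{uprod} so that all error terms remain negligible in $\secpar$ uniformly in $n$, and (ii) justifying the $W_1$-to-trace-distance conversion on tensor-power mixtures through the Lipschitz-plus-coupling argument; both are essentially quantitative but demand some care with the choice of coupling.
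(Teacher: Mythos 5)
Your proposal is correct and mirrors the paper's argument: both introduce an intermediate continuous-angle ensemble (the paper's $\csrsscons$), bound the per-gate discretization error in operator norm and compound it via \fct{uprod}, invoke \thm{mixing_time_w1} for the idealized walk, and convert $W_1$ to trace distance on $\ncopy$-fold tensor powers via the $2\ncopy$-Lipschitz bound under a near-optimal coupling. The only differences are cosmetic — the paper splits the trace-distance bound into two legs ($\nu$ vs.~$\widetilde{\nu}$, then $\widetilde{\nu}$ vs.~$\mu$) while you triangle-inequality in $W_1$ first and convert once — but the key lemmas and constants are identical.
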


To prove \thm{kactorss}, we define a new ensemble of (infinitely many) unitary operators $\enscsrsscons \coloneq \st{\csrsscons }_{\secpar}$ with
	$$\csrsscons \coloneq
	\st{\csrsscons_{\rep{\sigma}{T},\rep{\cf}{T}}:\H\to\H}_{{\rep{\sigma}{T}\subseteq S_{2^n},\rep{\cf}{T}\subseteq \{ f:\bit{n-1}\to [0,1) } \}}
	$$ and
	$$
	\csrsscons_{\rep{\sigma}{T},\rep{\cf}{T}} =
	\widetilde{K}_{\sigma_{T},\widetilde{f}_{T}} \cdots
	\widetilde{K}_{\sigma_{2},\widetilde{f}_{2}}
	\widetilde{K}_{\sigma_{1},\widetilde{f}_{1}}
	$$
	where $\widetilde{K}_{\sigma,\widetilde{f}}=U_{\sigma^{-1}} \widetilde{W}_{\widetilde{f}} U_{\sigma}$
	and $\widetilde{W}_{\widetilde{f}}$ is defined to be
	\begin{align} \label{eq:hatk}
			\widetilde{W}_{\widetilde{f}} = \sum_{y\in\bit{n-1}}
				\begin{pmatrix}
					\cos\br{\widetilde{\theta}_{y}}	 &	-\sin\br{\widetilde{\theta}_{y}} \\
					\sin\br{\widetilde{\theta}_{y}}	 &	\cos\br{\widetilde{\theta}_{y}}
				\end{pmatrix}
				\otimes \ketbra{y} \enspace,
		\end{align}
	in which
	$ \widetilde{\theta}_{y} = 2\pi \cdot \widetilde{f} \! \br{y} $
	for $y \in \bit{n-1}$.
	
	$\enssrsscons$ and $\enscsrsscons$ differ in the way the angles are chosen.
	In $\enssrsscons$, the angles are selected
	from the discrete set $\st{2\pi \cdot \frac{i}{2^d} : i \in \st{0,1,\ldots,2^d-1}}$,
	while in $\enscsrsscons$, the angles are chosen
	from the interval $[0,2\pi)$.
	For uniformly random $\sigma$ and $\cf$, applying gate $\ck_{\sigma,\cf}$
	results in the selection of
	a random matching on the computational basis,
	with each pair in the matching
	being rotated by a random angle in $[0,2\pi)$
	determined by the corresponding value of $\cf$.
	This is exactly one step of parallel Kac's walk
	described in \sect{kac}.
	$\enscsrsscons$ serves as an intermediate scrambler in the proof of \thm{kactorss}.
	To analyse the difference between $\enssrsscons$ and $\enscsrsscons$, we need the following lemma.
	
	\begin{lemma} \label{lem:k_hk_close}
	Let $\sigma\in S_{2^n}$
	and $\widetilde{f}:\bit{n-1}\to[0,1)$.
	Let $f$ be the function satisfying for any $y\in\bit{n-1}$,
	$f(y)$ is the $d$ digits after the binary point in $\widetilde{f}(y)$.
	Then
	$$
	{
		\norm{K_{\sigma,f} - \widetilde{K}_{\sigma,\widetilde{f}}
	}_\infty }
	\leq 2^{1-d} \pi \enspace ,
	$$
	where $K_{\sigma,f}=U_{\sigma^{-1}} W_f U_{\sigma}$
	is defined in \eq{k} and
	$\widetilde{K}_{\sigma,\widetilde{f}}=U_{\sigma^{-1}} \widetilde{W}_{\widetilde{f}} U_{\sigma}$
	is defined in \eq{hatk}.
	\end{lemma}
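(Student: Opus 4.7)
The plan is to reduce the inequality to a statement about the $2\times 2$ rotation blocks that make up $W_f$ and $\widetilde{W}_{\widetilde{f}}$, and then bound the operator norm of the difference of two rotations by the difference of their angles.

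First, I would strip off the unitaries $U_\sigma$ and $U_{\sigma^{-1}}$: because these are unitaries and the operator norm is unitarily invariant,
\begin{equation*}
  \norm{K_{\sigma,f}-\widetilde K_{\sigma,\widetilde f}}_\infty
  = \norm{U_{\sigma^{-1}}(W_f-\widetilde W_{\widetilde f})U_\sigma}_\infty
  = \norm{W_f-\widetilde W_{\widetilde f}}_\infty.
\end{equation*}
Next I would observe that, by the definitions in \eqref{eq:k} and \eqref{eq:hatk}, both $W_f$ and $\widetilde W_{\widetilde f}$ are block-diagonal in the basis $\{\ket{b}\ket{y}: b\in\{0,1\},\, y\in\bit{n-1}\}$, with $2\times2$ blocks $R_{\theta_y}$ and $R_{\widetilde\theta_y}$ on the first qubit indexed by $y$. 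Hence
\begin{equation*}
  \norm{W_f-\widetilde W_{\widetilde f}}_\infty
  = \max_{y\in\bit{n-1}} \norm{R_{\theta_y}-R_{\widetilde\theta_y}}_\infty.
\end{equation*}

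The main (and essentially only) computation is a bound on the operator norm of a difference of two planar rotations. Using $R_{\alpha}R_{\beta}^{T}=R_{\alpha-\beta}$ and unitary invariance, one gets $\norm{R_\alpha-R_\beta}_\infty=\norm{R_{\alpha-\beta}-I}_\infty$; the eigenvalues of $R_{\delta}-I$ are $e^{\pm \i\delta}-1$, whose magnitude is $2\lvert\sin(\delta/2)\rvert\le |\delta|$. Applying this with $\alpha=\theta_y$ and $\beta=\widetilde\theta_y$ gives
\begin{equation*}
  \norm{R_{\theta_y}-R_{\widetilde\theta_y}}_\infty \le |\theta_y-\widetilde\theta_y|.
\end{equation*}

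Finally I would translate the hypothesis on $f$ into an angle bound: since $f(y)$ encodes the first $d$ bits after the binary point of $\widetilde f(y)$, we have $\lvert \val{f(y)}-\widetilde f(y)\rvert\le 2^{-d}$, so
\begin{equation*}
  |\theta_y-\widetilde\theta_y| = 2\pi\lvert\val{f(y)}-\widetilde f(y)\rvert \le 2^{1-d}\pi,
\end{equation*}
which together with the two reductions above yields the claimed bound. I do not expect any real obstacle here; the only subtlety is noting that the reduction to a blockwise max is exact (block-diagonal structure) rather than a loose triangle inequality over $2^{n-1}$ blocks, which would otherwise blow up by a factor of $2^{n-1}$ and ruin the bound.
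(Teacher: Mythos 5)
Your proposal is correct and follows essentially the same route as the paper: strip off $U_\sigma$, $U_{\sigma^{-1}}$ by unitary invariance, reduce to a max over the $2^{n-1}$ blocks by the block-diagonal structure, and bound $\norm{R_{\theta_y}-R_{\widetilde\theta_y}}_\infty$ by $2\lvert\sin((\theta_y-\widetilde\theta_y)/2)\rvert \le \lvert\theta_y-\widetilde\theta_y\rvert \le 2^{1-d}\pi$. The only cosmetic difference is that the paper extracts the factor $2\lvert\sin((\theta_y-\widetilde\theta_y)/2)\rvert$ via sum-to-product trigonometric identities while you obtain it from the eigenvalues of $R_{\theta_y-\widetilde\theta_y}-I$; both give the same exact value for the block norm.
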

	
	\begin{proof}
	\begin{align*}
	{\norm{K_{\sigma,f} - \widetilde{K}_{\sigma,\widetilde{f}}}_\infty}
	=~&{\norm{U_{\sigma^{-1}}\br{W_f-\widetilde{W}_{\widetilde{f}}}U_{\sigma}}_\infty}\\
	=~&{\norm{\sum_{y\in\bit{n-1}}\begin{pmatrix}
	\cos\theta_y-\cos\widetilde{\theta}_y&-\br{\sin\theta_y-\sin\widetilde{\theta}_y}\\
	\sin\theta_y-\sin\widetilde{\theta}_y&\cos\theta_y-\cos\widetilde{\theta}_y
	\end{pmatrix}\otimes\ketbra{y}}_\infty}\\
	=~&{\max_{y\in\bit{n-1}}\set{2\abs{\sin\frac{\theta_y-\widetilde{\theta}_y}{2}}\norm{\begin{pmatrix}
	-\sin\frac{\theta_y+\widetilde{\theta}_y}{2}&-\cos\frac{\theta_y+\widetilde{\theta}_y}{2}\\
	\cos\frac{\theta_y+\widetilde{\theta}_y}{2}&-\sin\frac{\theta_y+\widetilde{\theta}_y}{2}\\
	\end{pmatrix}}_\infty}}\\
	\leq~&{\max_{y\in\bit{n-1}}\set{\abs{\theta_y-\widetilde{\theta}_y}}}
	\leq 2^{1-d}\pi\enspace.
	\end{align*}
	\end{proof}

\begin{proof}[Proof of \thm{kactorss}]	
It is easy to see that the uniformity condition is satisfied.
	Let $\kappa$ denote the key length.
	Quantum circuit $\srsscons$ applies $\srsscons_{\rep{\sigma}{T},\rep{f}{T}}$ after reading $\rep{\sigma}{T}$ and $\rep{f}{T}$.
	To implement $\srsscons_{\rep{\sigma}{T},\rep{f}{T}}$, we need to realize each of
	the $T = 10(\lambda+1)n$ unitary gates $K$.
	Since each gate $K$ can be implemented in $\poly{n,\secpar,\klen}$ time,
	the total construction time for $\srsscons_{\rep{\sigma}{T},\rep{f}{T}}$
	is also $\poly{n,\secpar,\klen}$.
	
Thus, it suffices to prove the requirement of \emph{Statistical Pseudorandomness} is satisfied.
	Fix $\ket{\eta}\in\S(\H)$. Define three distributions:
	\begin{itemize}
		\item $\nu$ be the distribution of
			$\srsscons_{\rep{\sigma}{T}, \rep{f}{T}}\!\!\ket{\eta}$
			with independent and uniformly random permutations $\rep{\sigma}{T}\subseteq S_{2^n}$
			and random functions $\rep{f}{T}$ $\subseteq \{f: \bit{n-1}\to\bit{d}\}$.
		\item $\widetilde{\nu}$ be the distribution of
			$\csrsscons_{\rep{\sigma}{T},\rep{\cf}{T}}\!\!\ket{\eta}$
			with independent and uniformly random permutations $\rep{\sigma}{T}\subseteq S_{2^n}$,
			and random functions $\rep{\cf}{T} $ $\subseteq \{f: \bit{n-1}\to[0,1)\}$.
		\item $\mu$ be the Haar measure on $\SR^{2^n}$.
	\end{itemize}
	
	We first prove that the trace distance between $\nu$ and $\widetilde{\nu}$ is negligible.
	To this end, we construct a coupling $\gamma_0$ of $\nu$ and $\widetilde{\nu}$ by using the same permutation $\sigma_t$ and letting $f_t$ be the function satisfying $f_t(y)$ is the $d$ digits after the binary point in $\widetilde{f}_t(y)$ for all $y\in\bit{n-1}$. Therefore, for any $\br{\ket{\phi}, \ket{\varphi}}\sim \gamma_0$,
	we have
	\begin{align}
		\norm{\ket{\phi} - \ket{\varphi}}_2 ~
	 =	~& \norm{
			\csrsscons_{\rep{\sigma}{T},\rep{\cf}{T}}\!\!\ket{\eta} -
			\srsscons_{\rep{\sigma}{T},\rep{f}{T}}\!\!\ket{\eta}
		}_2 \nonumber\\
	\leq  ~& \norm{
			\csrsscons_{\rep{\sigma}{T},\rep{\cf}{T}} -
			\srsscons_{\rep{\sigma}{T},\rep{f}{T}}
		}_\infty \nonumber\\
		\leq ~& 2^{1-d}\pi T = \frac{20\pi (\lambda+1)n}{\lambda^{\log \lambda}\cdot n^{\log n}} \enspace, \nonumber
	\end{align}
	where the last inequality is from \fct{uprod} and \lem{k_hk_close}.
	Thus, for any $l\in\poly{\secpar,n}$
	\begin{align} \label{eq:h1_h2_rss}
		~&\norm{
			\E_{\ket{\phi}\sim \nu}\!\Br{\br{\ketbra{\phi}}^{\otimes l}}
			-
			\E_{\ket{\varphi}\sim \widetilde{\nu}}\!\Br{\br{\ketbra{\varphi}}^{\otimes l}}
		}_1 \nonumber\\
		\leq ~& \E_{\br{\ket{\phi}, \ket{\varphi}}\sim \gamma_0}\!\Br{
		\norm{
			\br{\ketbra{\phi}}^{\otimes l}
			-
			\br{\ketbra{\varphi}}^{\otimes l}
		}_1} \nonumber\\
		\leq ~& l \E_{\br{\ket{\phi}, \ket{\varphi}}\sim \gamma_0}\!\Br{
		\norm{
			\ketbra{\phi}
			-
			\ketbra{\varphi}
		}_1} \nonumber\\
		\leq ~& l
		\br{\E_{\br{\ket{\phi}, \ket{\varphi}}\sim \gamma_0}\!\Br{
		\norm{
			\ket{\phi}\br{ \bra{\phi} - \bra{\varphi}}
		}_1} + \E_{\br{\ket{\phi}, \ket{\varphi}}\sim \gamma_0}\!\Br{
		\norm{
			\br{ \ket{\phi} - \ket{\varphi}}\bra{\varphi}
		}_1}}\nonumber\\
		\leq ~& 2l
		\E_{\br{\ket{\phi}, \ket{\varphi}}\sim \gamma_0}\!\Br{
		\norm{ \ket{\phi} - \ket{\varphi} }_2 } \leq \frac{40\pi (\lambda+1)nl}{\lambda^{\log \lambda}\cdot n^{\log n}} \enspace.
	\end{align}
	As for the trace distance between $\widetilde{\nu}$ and $\mu$, note that $\widetilde{\nu}$ is the output distribution of $T$-step parallel Kac's walk. Thus by \thm{mixing_time_w1}, we have
	\begin{align*}
		\wone{\widetilde{\nu}}{\mu} \leq \frac{1}{2^{\lambda n}} \enspace.
	\end{align*}
	So there exists a coupling of $\widetilde{v}$ and $\mu$, denoted by $\gamma_1$, that achieves
	\begin{align*}
		\E_{\br{\ket{\varphi}, \ket{\psi}}\sim \gamma_1}
		\!\Br{ \norm{\ket{\varphi}-\ket{\psi}}_2 }\leq \frac{3}{2^{\lambda n}} \enspace .
	\end{align*}
	Therefore, similar to Eq. \eq{h1_h2_rss}, we have for any $l\in\poly{\secpar,n}$
	\begin{align} \label{eq:h2_h3_rss}
		\norm{
			\E_{\ket{\varphi}\sim \widetilde{\nu}}\!\Br{\br{\ketbra{\varphi}}^{\otimes l}}
			-
			\E_{\ket{\psi}\sim \mu}\!\Br{\br{\ketbra{\psi}}^{\otimes l}}
		}_1 
		\leq 2l
		\E_{\br{\ket{\varphi}, \ket{\psi}}\sim \gamma_1}\!\Br{
		\norm{ \ket{\varphi} - \ket{\psi} }_2 } \leq \frac{6l}{2^{\lambda n}}\enspace .
	\end{align}
	Finally, by the triangle inequality, Eqs.~\eq{h1_h2_rss} and \eq{h2_h3_rss}, we have
	\begin{align*}
		\norm{
			\E_{\ket{\phi}\sim \nu}\!\Br{\br{\ketbra{\phi}}^{\otimes l}} -
			\E_{\ket{\psi}\sim \mu}\!\Br{\br{\ketbra{\psi}}^{\otimes l}} }_1 \leq
			\frac{40\pi (\lambda+1)nl}{\lambda^{\log \lambda}\cdot n^{\log n}} +
			\frac{6l}{2^{\lambda n}} = \negl{\secpar} \enspace.
	\end{align*}
	This establishes the \emph{Statistical Pseudorandomness} property.
\end{proof}

\subsubsection{Constructing \prss~}
\label{sec:prss_in_real}
We construct a \prss~by replacing the random functions and permutations used in \rss~with \qprf s and \qprp s.
\begin{definition} \label{def:kactoprss}
	Let $n, T\in\N$,
	$\H$ be a real Hilbert space with dimension $2^n$,
	$\tau:\K_1\times\bit{n}\to\bit{n}$
	be a \qprp~with key space $\K_1$ and
	$F :\K_2\times\bit{n-1}\to\bit{d}$
	be a \qprf~with key space $\K_2$.
	An ensemble of unitary operators
	$\enssgen \coloneq \st{\sgen}_{\secpar}$ with
	$
		\sgen \coloneq \st{\sgen_k :\H\to\H}_{k\in \br{\K_1\times\K_2}^T}
	$
	is defined as
	$$
		\sgen_k = K_{\tau_{r_T},F_{s_T}}\cdots K_{\tau_{r_2},F_{s_2}}K_{\tau_{r_1},F_{s_1}}
	$$
	for $k = \br{r_1,s_1,r_2,s_2,\dots,r_T,s_T}\in \br{\K_1\times\K_2}^T$,
	where $K_{\sigma,f}=U_{\sigma^{-1}} W_f U_{\sigma}$ is defined in \eq{k}.

\end{definition}

\begin{theorem}\label{thm:realprss}
	Let $n\in\N$, $d = \log^2\!\secpar+\log^2\!n$ and $T = 10 (\secpar + 1) n$.
	The ensemble of unitary operators
	$\enssgen$ defined in \defi{kactoprss} is a \prss.
\end{theorem}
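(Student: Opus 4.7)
The plan is to reduce pseudorandomness of $\enssgen$ to the statistical pseudorandomness of $\enssrsscons$ established in \thm{kactorss} via a standard hybrid argument over the $2T$ independent keys used in one invocation. Uniformity and polynomial running time are immediate from \defi{kactoprss}: we apply $T=10(\secpar+1)n$ copies of the gate $K_{\sigma,f}$, and each $K$ can be implemented in $\poly{n,\secpar}$ time given oracle access to $\tau_{r}$, $\tau_r^{-1}$, and $F_s$, so the only nontrivial condition is pseudorandomness.

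Fix an arbitrary $\ket{\phi}\in\Sin$, a polynomial $\ncopy=\poly{\secpar}$, and a non-uniform polynomial-time adversary $\A$. The key idea is to design a sequence of $2T+1$ hybrid experiments that interpolate between the real experiment, in which $\A$ receives $\ncopy$ copies of $\sgen_k\ket{\phi}$ for a random $k$, and an intermediate experiment, in which $\A$ receives $\ncopy$ copies of $\srsscons_{(\sigma)_T,(f)_T}\ket{\phi}$ for uniformly random $(\sigma)_T\in S_{2^n}^T$ and $(f)_T:\bit{n-1}\to\bit{d}$. In the $i$-th hybrid, the first $i$ pseudorandom primitives (PRPs $\tau_{r_j}$ and PRFs $F_{s_j}$) are replaced with truly random permutations and truly random functions, while the remaining $2T-i$ are still pseudorandom. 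By the polynomially-many-keys extension of \lem{twokeys}, any two consecutive hybrids are computationally indistinguishable by any efficient oracle adversary; here the reduction uses its oracle to implement the corresponding $K$-gate inside the $\ncopy$ copies of the circuit, then feeds the resulting $\ncopy$ output states to $\A$. Summing the $2T=\poly{\secpar,n}$ negligible gaps still yields a negligible bound.

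Once we arrive at the fully idealized experiment, \thm{kactorss} guarantees that
\[
\td\!\br{\E_{(\sigma)_T,(f)_T}\br{\ketbra{\phi_{(\sigma)_T,(f)_T}}}^{\otimes \ncopy},\, \E_{\ket{\psi}\sim\haar}\br{\ketbra{\psi}}^{\otimes \ncopy}}=\negl{\secpar},
\]
so $\A$'s acceptance probability in that hybrid differs from its acceptance probability on $\ncopy$ copies of a Haar random state by a negligible amount. Chaining this statistical bound with the computational hybrid chain gives the desired $\negl{\secpar}$ advantage.

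The main obstacle I anticipate is bookkeeping in the hybrid reduction: the reduction must, given oracle access to a single pseudorandom (or random) primitive, coherently implement the corresponding step $K_{\sigma,f}$ inside $\ncopy$ parallel circuits on the adversary's behalf while consistently uncomputing ancillas. This is standard but requires care, since $U_\sigma$ is realized by two queries (one to $\sigma$ and one to $\sigma^{-1}$) together with a \textsf{SWAP} on an $n$-qubit ancilla, and $W_f$ requires query, rotation, uncompute on a $d$-qubit ancilla. Once these reductions are in place, the existence of post-quantum one-way functions (via~\cite{Zhandry21_qprf,Zhandry16_qprp}) supplies the needed $\qprf$ and $\qprp$, and the proof is complete.
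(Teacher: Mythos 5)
Your proposal is correct and follows essentially the same route as the paper: the paper also argues uniformity and key length directly, then uses a three-hybrid argument (real keys $\to$ truly random permutations/functions via the polynomially-many-primitives extension of \lem{twokeys} $\to$ Haar via \thm{kactorss}) and concludes by the triangle inequality. Your explicit $2T+1$-step hybrid chain is just an unrolled version of the cited extension of \lem{twokeys}, so there is no substantive difference.
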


\begin{proof}
Due to the efficiency of $\tau$ and $F$, the key length is bounded by $2T\cdot\poly{n,d} = \poly{n, \secpar}$. Thus the condition of polynomial-bounded key length is satisfied.
	To implement $\sgen_k$,
	we need to realize each of the $T = 10 (\secpar + 1) n$ unitary gates $K$
	that make up $\sgen_k$.
	Since each $K$ can be realized in $\poly{n,\secpar}$ time (efficiency of $\tau$ and $F$),
	the overall construction time for $\sgen_k$ will be $\poly{n,\secpar}$. Thus the uniformity is also satisfied.
	
	We now prove the pseudorandomness property.
	To this end, we consider three hybrids for an arbitrary $\ket{\phi}\in\S(\H)$ and $l\in\poly{\secpar,n}$:
	\begin{itemize}
		\item[H1:] \label{itm:h1}
		$\ket{\phi_k}^{\otimes l}$ for
		$\ket{\phi_k} = \sgen_k\!\ket{\phi}$
		where
		$k \leftarrow \br{\K_1\times\K_2}^T$ is chosen uniformly at random.
		
		\item[H2:] \label{itm:h2}
		$\ket{\varphi_{\rep{\sigma}{T},\rep{f}{T}}}^{\otimes l}$ for
		$\ket{\varphi_{\rep{\sigma}{T},\rep{f}{T}}} = \srsscons_{\rep{\sigma}{T},\rep{f}{T}}\!\!\ket{\phi}$
			with independently and uniformly random permutations $\rep{\sigma}{T}\subseteq S_{2^n}$
			and random functions $\rep{f}{T}\subseteq\{f: \bit{n-1}\to\bit{d}\}$.
			$\srsscons_{\rep{\sigma}{T},\rep{f}{T}}$
			is defined in \defi{kactorss}.
		
		\item[H3:] \label{itm:h3}
		$\ket{\psi}^{\otimes l}$ for $\ket{\psi}$ chosen according to the Haar measure $\mu$ on $\SR^{2^n}$.
	\end{itemize}
	
	We first prove that \itm{h1}{H1} and \itm{h2}{H2}
	are computationally indistinguishable.
	By the quantum-secure property of $\tau$ and $F$,
	we know the following two situations are computationally indistinguishable for any polynomial-time quantum oracle algorithm $\A$ (see \lem{twokeys}):
	\begin{itemize}
		\item given oracle access to
		$\tau_{r_1},\cdots,\tau_{r_T}$ and $F_{s_1},\cdots,F_{s_T}$
		where
		$\rep{r}{T}\subseteq\K_1$ and $\rep{s}{T}\subseteq\K_2$
		are independently and uniformly random keys.
		
		\item given oracle access to
		independent and uniformly random permutations $\rep{\sigma}{T}$ $\subseteq S_{2^n}$
		and random functions $\rep{f}{T}\subseteq \{f: \bit{n-1}\to\bit{d}\}$.
	\end{itemize}
	Thus, we have for any
	polynomial-time quantum algorithm $\A$,
	$$
	\abs{
		\Pr\Br{\A\br{\ket{\phi_k}^{\otimes l}} = 1} -
		\Pr\Br{\A\br{\ket{\varphi_{\rep{\sigma}{T},\rep{f}{T}}}^{\otimes l}} = 1}
	} = \negl{\secpar} \enspace .
	$$
	
	For \itm{h2}{H2} and \itm{h3}{H3},
	they are statistically indistinguishable since $\enssrsscons$ defined in \defi{kactorss}
	is an \rss~by \thm{kactorss}.
	Finally, by the triangle inequality we establish \itm{h1c}{H1} and \itm{h3c}{H3}
	are computationally indistinguishable. This accomplishes the proof.
\end{proof}

\subsection{Constructing (P)RSS over the Complex Space}
\label{sec:rss_in_complex}
This section provides constructions of a \rss~and a \prss~over $\C$.
Similar to the case over $\R$,
our initial step is to create a unitary gate
that can be utilized to simulate a single iteration
of parallel Kac's walk within a complex Hilbert space.
Fix $f,g,h:\bit{n-1}\to\bit{d}$ and $\sigma\in S_{2^n}$.
Let $\widehat{L}_{\sigma,f,g,h}=U_{\sigma^{-1}} \widehat{Q}_{f,g,h} U_{\sigma}$, where $U_{\sigma}$ is defined as before and $\widehat{Q}_{f,g,h}$ is 
{\small
\begin{align}\label{eq:widehatq}
	\sum_{y\in\bit{n-1}}\!\!
	\left(
		\begin{matrix}
			e^{\i\br{\frac{\alpha_y+\beta_y}{2}} } & 0\\
			0 & e^{-\i\br{\frac{\alpha_y+\beta_y}{2}}}
		\end{matrix}
	\right)\!\!\!
	\begin{pmatrix}
		\cos\theta_{y}&-\sin\theta_{y}\\
		\sin\theta_{y}&\cos\theta_{y}
	\end{pmatrix}\!\!\!
	\left(
		\begin{matrix}
			e^{\i\br{\frac{\alpha_y-\beta_y}{2}} } & 0\\
			0 & e^{-\i\br{\frac{\alpha_y-\beta_y}{2}}}
		\end{matrix}
	\right)
	\otimes\ketbra{y} ,
\end{align}
}
in which
\[
\theta_y=\arcsin\br{\sqrt{\val{f(y)}}}\enspace, \enspace
\alpha_y=2\pi\cdot\val{g(y)}\enspace , \enspace
\beta_y=2\pi\cdot\val{h(y)}\enspace .
\]
Here we decompose $U\!\br{\alpha_y,\beta_y,\theta_y}$ into a product of three matrices according to the formula
\begin{align}\label{eq:decomp_u}
	\left(
		\begin{matrix}
			e^{\i\alpha }\cos\theta & -e^{\i\beta }\sin\theta\\
			e^{-\i\beta }\sin\theta & e^{-\i\alpha }\cos\theta
		\end{matrix}
	\right) =
	\left(
		\begin{matrix}
			e^{\i(\frac{\alpha+\beta}{2}) } & 0\\
			0 & e^{-\i(\frac{\alpha+\beta}{2})}
		\end{matrix}
	\right)\!\!
	\left(
		\begin{matrix}
			\cos\theta & -\sin\theta\\
			\sin\theta & \cos\theta
		\end{matrix}
	\right)\!\!
	\left(
		\begin{matrix}
			e^{\i(\frac{\alpha-\beta}{2}) } & 0\\
			0 & e^{-\i(\frac{\alpha-\beta}{2})}
		\end{matrix}
	\right).
\end{align}
We approximate $\widehat{Q}_{f,g,h}$ by another unitary $Q_{f,g,h}$ which can be constructed as follows applying the similar technique in \sect{const1}:
\begin{itemize}
	\item apply $O\!_f, O\!_g, O\!_h$ and store the results $f(y), g(y), h(y)$ in ancilla qubits,
	\item calculate three parameters $\gamma_y^{+}\approx\frac{\vall{g(y)}+\vall{h(y)}}{2}$, $\gamma_y^{-}\approx\frac{\vall{g(y)}-\vall{h(y)}}{2}$ and $\xi_y\approx \frac{2}{\pi}\arcsin\br{\sqrt{\val{f(y)}}}$ with a precision up to $d$ bits after the binary point,
	\item use $\gamma_y^{+}$, $\xi_y$ and $\gamma_y^{-}$ in the above step to construct a series of controlled gates on the first qubit which approximates the three matrices in the RHS of \eq{decomp_u},
	\item uncompute all the ancilla qubits.
\end{itemize}
As a result, we construct a unitary gate $L_{\sigma,f,g,h}=U_{\sigma^{-1}} Q_{f,g,h} U_{\sigma}$ where $Q_{f,g,h}$ is
\begin{align}\label{eq:normalq}
	\sum_{y\in\bit{n-1}}
	\left(
		\begin{matrix}
			e^{\i2\pi\gamma_y^{+}} & 0\\
			0 & e^{-\i2\pi\gamma_y^{+}}
		\end{matrix}
	\right)
	\begin{pmatrix}
		\cos\br{\frac{\pi}{2}\xi_{y}}&-\sin\br{\frac{\pi}{2}\xi_{y}}\\
		\sin\br{\frac{\pi}{2}\xi_{y}}&\cos\br{\frac{\pi}{2}\xi_{y}}
	\end{pmatrix}
	\left(
		\begin{matrix}
			e^{\i2\pi\gamma_y^{-} } & 0\\
			0 & e^{-\i2\pi\gamma_y^{-}}
		\end{matrix}
	\right)
	\otimes\ketbra{y} ,
\end{align}
and for any $y\in\bit{n-1}$,
	{\small\[
		\abs{\frac{\pi}{2}\xi_y-\theta_y}\leq 2^{-d-1} \pi \enspace , \enspace
		\abs{2\pi\gamma_y^{+}-\frac{\alpha_y+\beta_y}{2}}\leq 2^{1-d}\pi \enspace , \enspace
		\abs{2\pi\gamma_y^{-}-\frac{\alpha_y-\beta_y}{2}}\leq 2^{1-d}\pi \enspace .
	\]}

By utilizing the gate $L_{\sigma,f,g,h}$
together with random permutations and random functions,
we can implement the following scheme to produce an \rss:

\begin{definition} \label{def:kactosrssc}
	Let $n,T,d\in\N$,
	and $\H$ be a complex Hilbert space with dimension $2^n$.
	An ensemble of unitary operators
	$ \enssrssconsc \coloneq \st{\srssconsc }_{\secpar} $ with $\srssconsc\coloneq$
	{\small
	$$\st{\srssconsc_{\rep{\sigma}{T},\rep{f}{T},\rep{g}{T},\rep{h}{T}}}
	_{\rep{\sigma}{T} \subseteq S_{2^n},\rep{f}{T},\rep{g}{T},\rep{h}{T} \subseteq \{f: \bit{n-1}\to {\bit{d}}\}}$$}
	is define as
	$$
		\srssconsc_{\rep{\sigma}{T},\rep{f}{T},\rep{g}{T},\rep{h}{T}}
		= L_{\sigma_{T},f_{T},g_{T},h_{T}}\cdots
		  L_{\sigma_{2},f_{2},g_{2},h_{2}}
		  L_{\sigma_{1},f_{1},g_{1},h_{1}}
	$$
	where $L_{\sigma,f,g,h}=U_{\sigma^{-1}} Q_{f,g,h} U_{\sigma}$ is defined in \eq{normalq}.
\end{definition}

\begin{theorem}\label{thm:kactorssc}
	Let $n\in\N$, $d = 2\br{\log^2\!\secpar + \log^2\!n}$ and $T = 10 (\secpar + 1) n$.
	The ensemble of unitary operators
	$ \enssrssconsc$
	defined in \defi{kactosrssc}
	is an \rss.
\end{theorem}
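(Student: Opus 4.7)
The plan is to mirror the three-hybrid proof of \thm{kactorss}. Uniformity is immediate: each gate $L_{\sigma,f,g,h}$ comprises two permutation oracles, three function oracles, classical arithmetic at $d$-bit precision to obtain $\xi_y$ and $\gamma_y^{\pm}$, and a constant-depth block of controlled rotations implementing the three factors in \eq{decomp_u}, so $\srssconsc$ runs in $\poly{n,\secpar,\klen}$ time. To establish statistical pseudorandomness I would interpose the continuous ensemble $\enscsrssconsc=\{\csrssconsc\}_\secpar$ with
\[
\csrssconsc_{\rep{\sigma}{T},\rep{\widetilde f}{T},\rep{\widetilde g}{T},\rep{\widetilde h}{T}}=\widehat L_{\sigma_T,\widetilde f_T,\widetilde g_T,\widetilde h_T}\cdots\widehat L_{\sigma_1,\widetilde f_1,\widetilde g_1,\widetilde h_1},
\]
where $\widetilde f_t,\widetilde g_t,\widetilde h_t:\bit{n-1}\to[0,1)$ and the exact angles $\widetilde\theta_y=\arcsin\sqrt{\widetilde f_t(y)}$, $\widetilde\alpha_y=2\pi\widetilde g_t(y)$, $\widetilde\beta_y=2\pi\widetilde h_t(y)$ are substituted into \eq{widehatq}. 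By the parametrization in \eq{parhaar}, when $\widetilde f_t,\widetilde g_t,\widetilde h_t$ are i.i.d.\ uniform each $2\times 2$ block becomes a Haar-random element of $\su(2)$, so $\csrssconsc\ket{\eta}$ realizes exactly $T$ steps of the complex parallel Kac's walk started at $\ket{\eta}$.

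The core technical step is a complex analog of \lem{k_hk_close}: couple $f_t(y)$ to be the top $d$ binary digits of $\widetilde f_t(y)$ (and similarly for $g_t,h_t$), so that $\abs{\val{f_t(y)}-\widetilde f_t(y)}\le 2^{-d}$. Applying \fct{uprod} to the three matrix factors in \eq{decomp_u}, the two outer phase factors contribute $O(2^{-d})$ each, via the stated precision bounds $\abs{2\pi\gamma_y^{\pm}-\frac{\alpha_y\pm\beta_y}{2}}\le 2^{1-d}\pi$ combined with the $2\pi\cdot 2^{-d}$ discretization error between $\val{g_t(y)},\val{h_t(y)}$ and $\widetilde g_t(y),\widetilde h_t(y)$. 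The middle real-rotation factor contributes $O(2^{-d/2})$: here $\abs{\frac\pi 2\xi_y-\theta_y}\le 2^{-d-1}\pi$ by the stated precision, while $\abs{\theta_y-\widetilde\theta_y}$ is controlled through the estimates $\abs{\sin\theta_y-\sin\widetilde\theta_y}=\abs{\sqrt{\val{f_t(y)}}-\sqrt{\widetilde f_t(y)}}\le 2^{-d/2}$ and analogously for cosine, using $\abs{\sqrt a-\sqrt b}\le\sqrt{\abs{a-b}}$. Chaining over $T=10(\secpar+1)n$ steps by \fct{uprod} and invoking the $\ell$-copy amplification of \eq{h1_h2_rss} yields a trace distance $O(\ell T\cdot 2^{-d/2})$ between the distributions of $\srssconsc\ket{\eta}$ and $\csrssconsc\ket{\eta}$, which is negligible in $\secpar$ for $d=2(\log^2\secpar+\log^2 n)$ and any $\ell=\poly{\secpar}$.

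The final leg is immediate from \thm{mixing_time_w1_c}: with $T=10(\secpar+1)n$ the complex parallel Kac's walk satisfies $\wone{\widetilde\nu}{\mu_\C}\le 2^{-\secpar n}$, where $\widetilde\nu$ is the distribution of $\csrssconsc\ket{\eta}$ and $\mu_\C$ the Haar measure on $\SC^{2^n}$. The same $W_1$-to-$\ell$-copy coupling argument as in \eq{h2_h3_rss} converts this into a negligible trace distance between $\ell$-copy averages, and combining the two estimates by the triangle inequality closes the proof.

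The main obstacle is the square-root loss incurred when transferring the uniform discretization error on $\widetilde f$ into an angular error on $\widetilde\theta=\arcsin\sqrt{\widetilde f}$, caused by the non-Lipschitz behaviour of $\arcsin\sqrt{\cdot}$ near the endpoints of $[0,1)$. Absorbing this factor of two in the exponent is precisely why the required precision $d=2(\log^2\secpar+\log^2 n)$ doubles the real-case precision $d=\log^2\secpar+\log^2 n$ of \thm{kactorss}. Aside from this Hölder issue and the need to invoke \fct{uprod} three times to accommodate the three matrix factors of the $\su(2)$ parametrization, every other step of the argument is a direct transcription of its real counterpart.
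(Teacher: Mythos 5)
Your proposal matches the paper's proof essentially step for step: introduce the continuous ensemble $\enscsrssconsc$ realizing the complex parallel Kac's walk, prove a per-gate operator-norm bound (the paper's \lem{ltildelclose}, $\norm{L_{\sigma,f,g,h}-\widetilde L_{\sigma,\cf,\cg,\ch}}_\infty\le 2^{6-d/2}$) where the outer phase factors lose only $O(2^{-d})$ but the middle rotation loses $O(2^{-d/2})$ from the H\"older behaviour of $\arcsin\sqrt{\cdot}$ near the endpoints, chain via \fct{uprod} and the $\ell$-copy amplification, and close with \thm{mixing_time_w1_c} and the triangle inequality. The only differences are cosmetic — the paper splits the per-gate bound through an intermediate $\widehat L$ (truncated functions, exact arithmetic) while you lump both error sources together, and the paper's square-root estimate is phrased as a slightly different elementary fact — but you correctly identified the key lemma and the reason the precision $d$ must double compared to the real case.
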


Based on this, we obtain a \prss~by substituting the random functions and permutations utilized in \rss~with their quantum-secure pseudorandom counterparts.

\begin{definition} \label{def:kactoprssc}
	Let $n, T\in\N$,
	$\H$ be a complex Hilbert space with dimension $2^n$,
	$\tau:\K_1\times\bit{n}\to\bit{n}$
	be a \qprp~with key space $\K_1$ and
	$F :\K_2\times\bit{n-1}\to\bit{d}$
	be a \qprf~with key space $\K_2$.
	An ensemble of unitary opertors
	$\enssgenc \coloneq \st{ \sgenc }_{\secpar}$ with
	$ \sgenc = \st{\sgenc_k : \H\to\H}_{k\in \br{\K_1\times\K_2\times\K_2\times\K_2}^T} $
	is defined as
	$$\sgenc_k=
			L_{\tau_{r_T},F_{u_T},F_{s_T},F_{t_T}}\cdots
			L_{\tau_{r_2},F_{u_2},F_{s_2},F_{t_2}}
			L_{\tau_{r_1},F_{u_1},F_{s_1},F_{t_1}}$$
	for $k=\br{r_1,u_1,s_1,t_1,r_2,u_2,s_2,t_2,\dots,r_T,u_T,s_T,t_T}\in \br{\K_1\times\K_2\times\K_2\times\K_2}^T$,
	where $L_{\sigma,f,g,h}=U_{\sigma^{-1}} Q_{f,g,h} U_{\sigma}$ is defined in \eq{normalq}.
\end{definition}

\begin{theorem}\label{thm:complexprss}
	Let $n\in\N$, $d = 2\br{\log^2\!\secpar + \log^2\!n}$ and $T = 10 (\secpar + 1) n$.
	The ensemble of unitary operators
	$ \enssgenc$
	defined in \defi{kactoprssc}
	is a \prss.
\end{theorem}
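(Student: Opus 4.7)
The plan is to mirror the proof of \thm{realprss} almost verbatim, using \thm{kactorssc} in place of \thm{kactorss} and accounting for the fact that each iteration of $\sgenc$ now consumes one $\qprp$ key and three $\qprf$ keys rather than one of each. First I would verify the two easy structural conditions. Since $T = 10(\secpar+1)n$ and each iteration uses four keys of length $\poly{n,\secpar}$, the total key length is $4T\cdot \poly{n,\secpar} = \poly{n,\secpar}$, giving polynomially-bounded key length. Each gate $L_{\tau_r,F_u,F_s,F_t} = U_{\tau_r^{-1}}\,Q_{F_u,F_s,F_t}\,U_{\tau_r}$ is implementable in $\poly{n,\secpar}$ time using the efficient quantum algorithms for $\tau$ and $F$ together with the controlled-rotation circuitry sketched in \cref{sec:rss_in_complex}, so $\sgenc_k$ is computable by a uniform circuit of size $\poly{n,\secpar}$.

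For the pseudorandomness condition, fix any $\ket{\phi}\in\S(\H)$ and any $\ncopy \in \poly{\secpar,n}$, and introduce three hybrids:
\begin{itemize}
\item H1: $\ket{\phi_k}^{\otimes \ncopy}$ where $\ket{\phi_k} = \sgenc_k\ket{\phi}$ for $k\leftarrow (\K_1\times\K_2^3)^T$ uniform.
\item H2: $\ket{\varphi}^{\otimes \ncopy}$ where $\ket{\varphi} = \srssconsc_{\rep{\sigma}{T},\rep{f}{T},\rep{g}{T},\rep{h}{T}}\ket{\phi}$ with uniformly random permutations $\rep{\sigma}{T}\in S_{2^n}$ and independent uniformly random functions $\rep{f}{T},\rep{g}{T},\rep{h}{T} : \bit{n-1}\to\bit{d}$.
\item H3: $\ket{\psi}^{\otimes \ncopy}$ for $\ket{\psi}$ drawn from the Haar measure on $\SC^{2^n}$.
\end{itemize}

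The H1 versus H2 step uses \lem{twokeys} in the natural extension to polynomially many pseudorandom primitives (explicitly noted as valid in the paper). Any poly-time adversary that distinguishes H1 from H2 could be converted into an oracle distinguisher for the $T$ instances of $\qprp$ and $3T$ instances of $\qprf$ versus the corresponding truly random objects, contradicting the extended lemma; therefore $|\Pr[\A(H1)=1] - \Pr[\A(H2)=1]| = \negl{\secpar}$. The H2 versus H3 step is immediate from \thm{kactorssc}, which establishes that $\enssrssconsc$ is an \rss~with parameters $d = 2(\log^2\secpar + \log^2 n)$ and $T = 10(\secpar+1)n$; hence H2 and H3 have negligible trace distance and in particular are statistically (and therefore computationally) indistinguishable. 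A final triangle inequality combines the two bounds and yields the required computational indistinguishability between H1 and H3.

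I expect no substantive obstacle: once \thm{kactorssc} is in hand, the argument is just the pseudorandom-for-random replacement already used in the real case. The only point to check carefully is that the generalized hybrid application of \lem{twokeys} tolerates the $O(T) = O(\secpar n)$ independent pseudorandom primitives present in the construction; this is a standard hybrid over the $4T$ keys, losing at most a factor $4T = \poly{\secpar,n}$ in the distinguishing advantage, which remains negligible.
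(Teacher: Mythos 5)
Your proposal is correct and follows essentially the same route as the paper's proof: verifying key length and uniformity directly, then a three-hybrid argument where H1 vs.\ H2 is handled by the extension of \lem{twokeys} to the $4T$ pseudorandom primitives and H2 vs.\ H3 follows from \thm{kactorssc}, concluded by the triangle inequality. No substantive differences.
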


The detailed proofs of two theorems above share similarities with the real case.
For this reason, we attach them in \app{deferred proofs}.

\section{Applications}
\label{sec:app}
Since pseudorandom state scramblers subsume pseudorandom state
generators and its siblings in the literature, all applications enabled
by $\prsg$s can also be obtained from $\prss$s. This includes for instance
symmetric-key encryption and commitment of classical messages as well
as secure computation. In this section, we showcase a few novel
applications beyond what $\prsg$s are capable of.

\subsection{Compact Quantum Encryption}
\label{sec:cqenc}

Because $\prss$s map any initial state to a pseudorandom output state,
we can readily employ them to encrypt quantum messages. Furthermore,
it turns out that $\prss$-based quantum encryption schemes offer
improvements in terms of \emph{compactness}, a point we discuss below.

We start by recalling the well-known Quantum One-Time Pad, which is the
quantum analogue of one-time pad and achieves perfect secrecy. Given
an $n$-qubit state $\ket{\psi}$, we sample a uniform $2n$-bit key
$k = k_1 \| k_2$ with $k_1,k_2 \in\bit{n}$ and encrypt $\ket{\psi}$ by
\[ \ket{\psi_k} = \qotp_{k}\!\ket{\psi} = X^{k_1}Z^{k_2}\! \ket{\psi} \,
  , \]
where $X$ and $Z$ are Pauli operators applied on each qubit of
$\ket{\psi}$.

We can reduce the key length by using pseudorandom keys. For instance,
given a pseudorandom generator $\prg: \bit{n} \to \bit{2n}$, we can
expand a uniform $n$-bit key under $\prg$ and use $\prg(k)$ as the key
to \qotp. Namely we encrypt by
\[ \ket{\psi_k} = \qotp_{\prg(k)}\!\ket{\psi} \, . \]
We refer to this scheme as prg-\qotp.

These two schemes are secure if the same key is never used more than
\emph{once}. One can extend it to multi-time security with
\emph{hybrid} encryption, using in addition a post-quantum secure
encryption for \emph{classical} bits. For concreteness, we use a
post-quantum $\prf_k: \bit{n} \to \bit{2n}$. To encrypt $\ket{\psi}$,
we sample a uniformly random string $r$, and use $\prf_k(r)$ as the
key to \qotp, i.e., we output cipherstate $\br{r, \ket{\psi_{k,r}}}$
where
\[ \ket{\psi_{k,r}} = \qotp_{\prf_k(r)}\!\ket{\psi} \, . \] We call this
scheme prf-\qotp.

Now suppose we have a $\prss$ ($\{\SG{k}{n,m} \}$) with key space
$\K = \bit{\klen}$, and for simplicity we assume that $n=m$ and we
ignore them in the notation. We can construct three encryption
schemes, analogous to each of the schemes above.

\begin{itemize} 
\item \prss-enc: on random key $k$ and state $\ket{\psi}$, output
  $\ket{\psi_k} : = \SG{k}{}\!\ket{\psi}$.
\item prg-\prss-enc: given a $\prg: \bit{n} \to \K$, on random key $k$
  and state $\ket{\psi}$, output
  $\ket{\psi_k} : = \SG{\prg(k)}{}\!\ket{\psi}$.
\item prf-\prss-enc: given a $\prf: \bit{n}\to \K $, the key is a
  random key $k$ for the $\prf$. On state $\ket{\psi}$, output
  $\br{r, \ket{\psi_{k,r}}}$, where $r \gets \bit{n}$ and

  \[ \ket{\psi_{k,r}} = \SG{\prf_k(r)}{}\!\ket{\psi} \, . \]
\end{itemize}

\begin{table}[h!]
  \centering
  \renewcommand{\arraystretch}{1.5}
  \newcolumntype{C}[1]{>{\centering\arraybackslash}m{#1}} 
  \begin{tabular}[h!]{| C{0.18\textwidth} |C{.38\textwidth} | C{0.38\textwidth} |}
    \hline
    $\ncopy$ copies of $\ket{\psi}$ & \multicolumn{1}{c|}{(prg-)\qotp} & \multicolumn{1}{c|}{\hlight{(prg-)\prss-enc}}
    \\ [0.5ex]
    \hline
    $\ncopy = 1$ & $\ket{\psi_k} = \qotp_{k\text{ or } \prg(k)}
                   \!\ket{\psi}$ & $\ket{\psi_k} =
                                 \prss_{k \text{ or } \prg(k)}\!\ket{\psi}$ \\
    \hline
     $\ncopy > 1$ & $\br{\ket{\psi_{k_1}},\ldots,
                   \ket{\psi_{k_\ncopy}}}$ & $\br{\ket{\psi_{k}},\ldots,
                                             \ket{\psi_k}}$ \\
    \hline 
    Comparison & Need to exchange $\ncopy$ indep. keys $k_1,
                 \ldots,
                 k_\ncopy$ &  \hlight{Single} key for any polynomial
                             $\ncopy$ \\
    \hline 
  \end{tabular}

  \smallskip

  \begin{tabular}[h!]{| C{0.18\textwidth} | C{.38\textwidth} | C{0.38\textwidth} |}
    \hline
    $\ncopy$ copies of $\ket{\psi}$ & \multicolumn{1}{c|}{prf-\qotp} & \multicolumn{1}{c|}{\hlight{prf-\prss-enc}}
    \\ [0.5ex]
    \hline
    $\ncopy = 1$ & $\br{r, \ket{\psi_{k,r}} = \qotp_{\prf_k(r)}\!\ket{\psi}}$
                                                                       & $\br{r,
                                                                         \ket{\psi_{k,r}}} = \prss_{\prf_k(r)}
                                                                         \!\ket{\psi}$  \\
    \hline
    $\ncopy > 1$ & $\br{\ldots, \br{r_j,
                   \ket{\psi_{k,r_j}}}, \ldots}$ & $\br{\hlight{r},
                                                   \ket{\psi_{k,r}},\ldots,
                                                   \ket{\psi_{k,r}}}$
    \\
    \hline 
    Comparison & Cipher size grows by $\ncopy$
                 factor
                                                                       &  Cipher size
                                                                         grows by
                                                                         $\hlight{\frac{1}{2}}(\ncopy+1)$\\
    \hline 
  \end{tabular}
  \vspace{1em}
  \caption{Advantages of PRSS-based encryptions: maintaining single
    key instead of linear number of keys or reducing the cipher size
    growth factor by half.}
  \label{tab:prssenc}
\end{table}

\paragraph{Advantages of \prss-based quantum encryption.} One distinct
benefit of \prss-enc over $\qotp$ is that we can encrypt
\emph{multiple copies} of a state $\ket{\psi}$ using \prss-enc under
the \emph{same} key $k$. This follows from the multi-copy
indistinguishability in our $\prss$ definition. In contrast, $\qotp$
needs independent keys to encrypt each copy of $\ket{\psi}$. This
considerably improves \emph{compactness}, and it holds similarly in
the other two types of schemes.

A related concept called \emph{quantum private broadcasting} has been investigated by
Broadbent, Gonz\`alez-Guill\'en and Schuknecht \cite{BGS22}. They employ (symmetric)
$t$-designs to encrypt $t$ copies of an $n$-qubit quantum message. While the key length in their construction scales logarithmically with $t$, it grows exponentially with $n$.
Our PRSS-based scheme maintains a key size of $\poly{n}$.

We stress that this applies only to encrypting multiple copies of the
\emph{same} input state. If we want to encrypt different states, then
fresh keys in (prg-)\prss-enc or randomness in prf-\prss-enc should be
used.

\subsection{Succinct Quantum State Commitment}
\label{sec:sqsc}

Next we show how $\prss$ enables quantum commitment. Bit commitment is
a fundamental primitive in cryptography. A sender Alice commits to an
input bit $b$ to a receiver Bob in the \emph{commit} phase, which can
be revealed later in the \emph{open} phase. This naturally extends to
committing bit strings. Two properties are essential.
\begin{itemize}
\item \emph{Hiding.} Bob is not able to learn the message $b$ before the open
  phase. 
\item \emph{Binding.} Alice cannot fool Bob to accept a different message
  $b' \ne b$ in the open phase.
\end{itemize}

We will focus on \emph{non-interactive} commitment schemes where both
commit and open phases consist of a single message from the sender to
the receiver. If the protocol involves exchanging and processing
quantum information, we call it a quantum bit commitment (\qbc)
scheme. $\qbc$ has been extensively studied, and it is shown that
$\qbc$ can be constructed based on standard $\prsg$s~\cite{AQY22,MY22}.

In a similar vein, one can also consider committing to a
\emph{quantum} input state, and this is called quantum state
commitment (\qsc). $\qsc$ has proven useful such as in zero-knowledge
proof systems for $\qma$~\cite{BJSW20,BG22}.

Recently, Gunn, Ju, Ma and Zhandry give a systematic treatment on
$\qsc$~\cite{GJMZ23}. They propose a new characterization of binding
termed \emph{swap-binding}. They show a striking hiding-binding
\emph{duality} theorem for (non-interactive) quantum commitment:
binding holds if the \emph{opening} register held by the sender hides
the input state. This significantly simplifies proving binding. They
then construct binding commitment schemes which in addition are
\emph{succinct}, where the register containing the commitment has a
\emph{smaller} size than the message state.\footnote{Note that hiding
  is not required in these succinct schemes.}

\paragraph{Succinct $\qsc$ from $\prss$.} The succinct $\qsc$ schemes
by~\cite{GJMZ23} are based on post-quantum one-way functions or the
potentially weaker primitive of pseudorandom unitary operators
(\pru). We show below the viability of building succinct commitment on
\prss. 
Specifically, we observed that a \emph{succinct} $\prss$ implies a succinct one-time
quantum encryption, and it was shown in~\cite{GJMZ23} that a succinct one-time quantum encryption gives a succinct $\qsc$ scheme.
Hence (succinct one-time) \prss s offer an alternate approach of realizing succinct one-time quantum encryptions based on potentially weaker assumptions than one-way functions, and could be weaker than the instantiation via \pru s in~\cite{GJMZ23}.
Meanwhile, one-time quantum encryption does not seem to follow immediately from $\prs$
or other primitives implied by $\prs$.

\begin{theorem}
  Assuming a succinct $\prss$, i.e., $|\K| < 2^\dimin$, there exists a
  succinct $\qsc$.
\label{thm:sqsc-prss}
\end{theorem}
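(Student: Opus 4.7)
The plan is to build the commit isometry directly from the succinct $\prss$ and derive binding via the hiding--binding duality of~\cite{GJMZ23}. By the reduction in \sect{property} we may assume our $\prss$ $\{\SG{k}{}\}_{k\in\K}$ is a unitary family on $\dimin$ qubits with $\klen \coloneqq \log |\K| < \dimin$. Define the commit isometry $V : M \to C \otimes D$ by
\[
V\ket{\phi}_M \;=\; \frac{1}{\sqrt{|\K|}} \sum_{k \in \K} \ket{k}_C \otimes \br{\SG{k}{}\ket{\phi}}_D,
\]
where $M$ and $D$ are $\dimin$-qubit registers and $C$ is a $\klen$-qubit register. The operator $V$ is an isometry because $\SG{k}{\dagger} \SG{k}{} = I$ for each $k$, and it can be efficiently realized by placing $C$ in uniform superposition via Hadamards and then invoking the $\poly{\dimin,\secpar}$-time circuit guaranteed by the uniformity of $\prss$, using $C$ as the key register and $M$ as the state register. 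The commit phase transmits $C$; the opening is $D$.

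Succinctness is immediate from the hypothesis: $\dim(C) = 2^{\klen} < 2^{\dimin} = \dim(M)$. For binding I will invoke the duality theorem of~\cite{GJMZ23}, which reduces computational swap-binding to the computational hiding of the message by the opening register. A direct calculation yields the opening marginal
\[
\rho^\phi_D \;\coloneqq\; \mathrm{Tr}_C \Br{V\ketbra{\phi}V^\dagger} \;=\; \frac{1}{|\K|}\sum_{k \in \K} \SG{k}{}\ketbra{\phi}\SG{k}{\dagger},
\]
which by the single-copy pseudorandomness of $\prss$ is computationally indistinguishable from $\E_{\ket{\psi}\sim\haar}\!\Br{\ketbra{\psi}} = I/2^\dimin$, a state that does not depend on $\ket{\phi}$. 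Consequently, for any two messages $\ket{\phi_1}, \ket{\phi_2}$ the marginals $\rho^{\phi_1}_D$ and $\rho^{\phi_2}_D$ are computationally indistinguishable, and the duality theorem delivers computational (swap-)binding.

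The main obstacle will be aligning our scheme with the exact form required by~\cite{GJMZ23}'s duality: one must check that the commit isometry and the commitment/opening split fit the shape their theorem consumes, and that the theorem truly upgrades computational hiding of the opening into computational binding (as opposed to some statistical variant). A secondary fine point worth flagging is that only \emph{single-copy} pseudorandomness of the $\prss$ is used, since the duality distinguisher receives a single copy of the $D$-register marginal; thus the argument imposes no multi-copy demand beyond what \defi{prssb} already provides. Once these points are settled, the rest is a one-line reduction applying $\prss$ pseudorandomness to bound $\rho^\phi_D \approx_c I/2^\dimin$, and the succinctness of the $\prss$ enters exactly at the dimension comparison $2^\klen < 2^\dimin$.
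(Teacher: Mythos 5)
Your proposal is correct, and it reaches the same destination as the paper by unfolding a step the paper leaves as a black box. The paper's proof is a two-line reduction: it cites a generic claim of~\cite{GJMZ23} that any one-time secure quantum encryption with keys shorter than the message yields a succinct $\qsc$, and then observes (\cref{lemma:otqenc-prss}) that a succinct $\prss$ is exactly such an encryption scheme; all commitment syntax, the commit isometry, and the binding argument are deferred to~\cite{GJMZ23}. You instead instantiate that generic conversion explicitly: the purification $V\ket{\phi} = \tfrac{1}{\sqrt{|\K|}}\sum_k \ket{k}_C \otimes \SG{k}{}\ket{\phi}_D$ with $C$ sent as the commitment is precisely the construction hiding inside the cited claim, and your verification that $\mathrm{Tr}_C[V\ketbra{\phi}V^\dagger] = \tfrac{1}{|\K|}\sum_k \SG{k}{}\ketbra{\phi}\SG{k}{\dagger} \approx_c \id/2^{\dimin}$ is the hiding condition that the duality theorem converts into swap-binding. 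What your route buys is self-containedness and two observations the paper's phrasing obscures: that only \emph{single-copy} pseudorandomness of the $\prss$ is consumed, and that succinctness enters solely through the dimension count $2^{\klen} < 2^{\dimin}$ on the key register. What the paper's route buys is brevity and insulation from the precise formalization of swap-binding — the one point you rightly flag as needing care, namely that~\cite{GJMZ23}'s duality is stated for computational (not merely statistical) hiding of the opening register and for exactly this commitment/opening split, which it is. No gap remains once that citation is pinned down.
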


\begin{proof} This follows from a generic claim in~\cite{GJMZ23}. They
  show that any one-time secure quantum encryption scheme with
  \emph{succinct} keys, where the key is shorter than the state to be
  encrypted, readily gives a succinct \qsc. A $\prss$ is a secure
  quantum encryption as discussed above. Succinctness translates if
  $\prss$'s key length is shorter than the size of the input
  state. This is stated below. We choose not to fully spell out the
  syntax and definitions of the involved primitives for the sake of
  clarity, and refer the readers to~\cite{GJMZ23}.
\end{proof}

\begin{lemma} 
  Assuming a succinct $\prss$, i.e., $|\K| < 2^\dimin$, there exists a
  succinct one-time quantum encryption scheme.
  \label{lemma:otqenc-prss}
\end{lemma}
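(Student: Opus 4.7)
The plan is to use the \prss-enc scheme described in \sect{cqenc} directly, and to verify that correctness, one-time security, and succinctness all follow in a straightforward manner from the \prss~properties.

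First, I would formalize the scheme. Key generation samples $k\gets \K$. Encryption of a state $\rho\in \D(\hspacein)$ produces the cipherstate $\SG{k}{}\rho(\SG{k}{})^\dagger\in\D(\hspaceout)$, and decryption applies $(\SG{k}{})^\dagger$. Correctness is immediate since $\SG{k}{}$ is an isometry, so $(\SG{k}{})^\dagger \SG{k}{} = \id_{\hspacein}$ and honest ciphertexts decrypt perfectly. Succinctness is immediate from the hypothesis $|\K|<2^{\dimin}$: the classical key length is $\klen = \log|\K| < \dimin$, strictly smaller than the qubit count of the message.

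For one-time security, I would carry out a short hybrid argument invoking the pseudorandomness clause of \defi{prssb}. Fix any two adversarial messages $\ket{\psi_0}, \ket{\psi_1}\in\Sin$. Taking $\ncopy=1$ in the \prss~definition, for each $b\in\{0,1\}$ the state $\SG{k}{}\ket{\psi_b}$ with random $k\gets\K$ is computationally indistinguishable from a Haar random $\ket{\phi}\gets\haar$ on $\hspaceout$. Since the Haar state does not depend on $b$, the triangle inequality yields that the two ciphertexts are computationally indistinguishable, establishing one-time security for pure-state messages. To extend to mixed-state messages $\rho_0, \rho_1$, I would invoke linearity: any distinguisher between the encryptions of $\rho_0$ and $\rho_1$ yields, by convex decomposition, a distinguisher between encryptions of some pair of pure states in their spectral decompositions, contradicting the pure-state case.

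I do not anticipate a serious obstacle here; the argument is essentially a one-step reduction to \prss~pseudorandomness. The only mild bookkeeping is matching the succinct-encryption interface of \cite{GJMZ23}, namely confirming their precise formulation (classical key length versus qubit count of the message, and one-query security), but this is syntactic and causes no difficulty.
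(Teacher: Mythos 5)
Your construction and the first three steps (correctness from isometry, succinctness from $\klen<\dimin$, and the $\ncopy=1$ triangle-inequality argument through the Haar state for pure plaintexts) are exactly what the paper intends; in fact the paper gives no proof of this lemma at all --- it points to the \prss-enc scheme of \sect{cqenc} and defers the encryption syntax to \cite{GJMZ23} --- so your write-up supplies detail the paper omits, and for fixed pure product plaintexts it is correct.

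The one place where you are too quick is the final claim that matching the interface of \cite{GJMZ23} is ``syntactic.'' The one-time encryption security used in their generic encryption-to-succinct-commitment transformation must hold for plaintexts that are \emph{entangled} with a reference register held by the adversary (this is the regime in which their hiding--binding duality is applied). Your convexity step only covers separable mixtures: writing an entangled input as $\ket{\Psi}_{MR}=\sum_i\sqrt{p_i}\ket{\psi_i}_M\ket{i}_R$, the encrypted joint state contains cross terms $\E_{k}\!\Br{\SG{k}{}\ketbratwo{\psi_i}{\psi_j}(\SG{k}{})^\dagger}$ with $i\neq j$, and the pseudorandomness clause of \defi{prssb} controls only the diagonal terms $\E_{k}\!\Br{\SG{k}{}\ketbra{\psi}(\SG{k}{})^\dagger}$ for pure $\ket{\psi}\in\Sin$. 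One can in principle recover the off-diagonal terms by polarization (each $\ketbratwo{\psi_i}{\psi_j}$ is an $O(1)$ combination of pure-state projectors), but summing over exponentially many cross terms blows up the error, and in the computational setting the decomposition argument does not go through at all since the distinguisher acts jointly on $M$ and $R$. So either the security notion must be pinned down as unentangled one-time indistinguishability (and one must check that this suffices for the \cite{GJMZ23} claim), or an additional argument is needed for the entangled case. This is a substantive point, not bookkeeping --- though, to be fair, the paper itself is equally silent on it.
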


How to instantiate a succinct $\prss$? Our construction
is not immediately succinct, because the key length
$\Omega(\secpar\cdot\dimin)$. We can remedy this by using a
pseudorandom generator to expand a key shorter than $\dimin$ into
pseudorandom keys for each iteration (\qprf~and \qprp).

\bibliographystyle{splncs04}
\bibliography{ref}

\begin{thebibliography}{10}
\providecommand{\url}[1]{\texttt{#1}}
\providecommand{\urlprefix}{URL }
\providecommand{\doi}[1]{https://doi.org/#1}

\bibitem{ABF+22}
Aaronson, S., Bouland, A., Fefferman, B., Ghosh, S., Vazirani, U., Zhang, C., Zhou, Z.: Quantum pseudoentanglement. In: 15th Innovations in Theoretical Computer Science Conference (ITCS 2024). Leibniz International Proceedings in Informatics (LIPIcs), vol.~287, pp. 2:1--2:21. Schloss Dagstuhl -- Leibniz-Zentrum f{\"u}r Informatik (2024). \doi{10.4230/LIPIcs.ITCS.2024.2}

\bibitem{AMR20}
Alagic, G., Majenz, C., Russell, A.: Efficient simulation of random states and random unitaries. In: Advances in Cryptology -- EUROCRYPT 2020. pp. 759--787. Springer (2020). \doi{10.1007/978-3-030-45727-3_26}

\bibitem{AGKL23}
Ananth, P., Gulati, A., Kaleoglu, F., Lin, Y.T.: Pseudorandom isometries. In: Advances in Cryptology -- EUROCRYPT 2024. pp. 226--254. Springer Nature Switzerland, Cham (2024)

\bibitem{AGQY22}
Ananth, P., Gulati, A., Qian, L., Yuen, H.: Pseudorandom (function-like) quantum state generators: New definitions and applications. In: 20th Theory of Cryptography Conference -- TCC 2022. pp. 237--265. Springer (2022). \doi{10.1007/978-3-031-22318-1_9}

\bibitem{AQY22}
Ananth, P., Qian, L., Yuen, H.: Cryptography from pseudorandom quantum states. In: Advances in Cryptology -- CRYPTO 2022. pp. 208--236. Springer (2022). \doi{10.1007/978-3-031-15802-5_8}

\bibitem{BS20_qcoin}
Behera, A., Sattath, O.: Almost public quantum coins. arXiv preprint arXiv:2002.12438  (2020)

\bibitem{Bollobas01}
Bollob{\'a}s, B.: Random Graphs. Cambridge University Press, 2nd edn. (2001). \doi{10.1017/CBO9780511814068}

\bibitem{BFV20}
Bouland, A., Fefferman, B., Vazirani, U.: {Computational Pseudorandomness, the Wormhole Growth Paradox, and Constraints on the AdS/CFT Duality}. In: 11th Innovations in Theoretical Computer Science Conference (ITCS 2020). Leibniz International Proceedings in Informatics (LIPIcs), vol.~151, pp. 63:1--63:2. Schloss Dagstuhl -- Leibniz-Zentrum f{\"u}r Informatik (2020). \doi{10.4230/LIPIcs.ITCS.2020.63}

\bibitem{brakerski2024realvalued}
Brakerski, Z., Magrafta, N.: Real-valued somewhat-pseudorandom unitaries (2024)

\bibitem{BS19}
Brakerski, Z., Shmueli, O.: ({P}seudo) random quantum states with binary phase. In: 17th Theory of Cryptography Conference -- TCC 2019. pp. 229--250. Springer (2019). \doi{10.1007/978-3-030-36030-6_10}

\bibitem{BS20}
Brakerski, Z., Shmueli, O.: Scalable pseudorandom quantum states. In: Advances in Cryptology -- CRYPTO 2020. pp. 417--440. Springer (2020). \doi{10.1007/978-3-030-56880-1_15}

\bibitem{BCHJ+21}
Brand{\~a}o, F.G., Chemissany, W., Hunter-Jones, N., Kueng, R., Preskill, J.: Models of quantum complexity growth. PRX Quantum  \textbf{2},  030316 (Jul 2021). \doi{10.1103/PRXQuantum.2.030316}

\bibitem{BHH16}
Brand{\~a}o, F.G., Harrow, A.W., Horodecki, M.: Local random quantum circuits are approximate polynomial-designs. Communications in Mathematical Physics  \textbf{346},  397--434 (2016). \doi{10.1007/s00220-016-2706-8}

\bibitem{BGS22}
Broadbent, A., Gonz\'alez-Guill\'en, C.E., Schuknecht, C.: Quantum private broadcasting. Phys. Rev. A  \textbf{105},  022606 (Feb 2022). \doi{10.1103/PhysRevA.105.022606}

\bibitem{BG22}
Broadbent, A., Grilo, A.B.: {QMA}-hardness of consistency of local density matrices with applications to quantum zero-knowledge. SIAM Journal on Computing  \textbf{51}(4),  1400--1450 (2022). \doi{10.1137/21M140729X}

\bibitem{BJSW20}
Broadbent, A., Ji, Z., Song, F., Watrous, J.: Zero-knowledge proof systems for {QMA}. SIAM Journal on Computing  \textbf{49}(2),  245--283 (2020). \doi{10.1137/18M1193530}

\bibitem{CDSZ22}
Chatterjee, S., Diaconis, P., Sly, A., Zhang, L.: {A phase transition for repeated averages}. The Annals of Probability  \textbf{50}(1),  1 -- 17 (2022). \doi{10.1214/21-AOP1526}

\bibitem{chen2024efficient}
Chen, C.F., Bouland, A., Brandão, F.G.S.L., Docter, J., Hayden, P., Xu, M.: Efficient unitary designs and pseudorandom unitaries from permutations (2024)

\bibitem{Col23}
Coladangelo, A.: Quantum trapdoor functions from classical one-way functions. arXiv preprint arXiv:2302.12821  (2023)

\bibitem{DCEL09}
Dankert, C., Cleve, R., Emerson, J., Livine, E.: Exact and approximate unitary $2$-designs and their application to fidelity estimation. Physical Review A  \textbf{80}(1),  012304 (2009). \doi{10.1103/PhysRevA.80.012304}

\bibitem{DS00}
Diaconis, P., Saloff-Coste, L.: Bounds for {Kac}'s master equation. Communications in Mathematical Physics  \textbf{209},  729--755 (2000). \doi{10.1007/s002200050036}

\bibitem{GJMZ23}
Gunn, S., Ju, N., Ma, F., Zhandry, M.: Commitments to quantum states. In: Proceedings of the 55th Annual ACM Symposium on Theory of Computing. p. 1579–1588. Association for Computing Machinery (2023). \doi{10.1145/3564246.3585198}

\bibitem{Haferkamp22}
Haferkamp, J.: Random quantum circuits are approximate unitary {$t$}-designs in depth {$O\left(nt^{5+o(1)}\right)$}. {Quantum}  \textbf{6}, ~795 (2022). \doi{10.22331/q-2022-09-08-795}

\bibitem{HL09}
Harrow, A.W., Low, R.A.: Efficient quantum tensor product expanders and $k$-designs. In: International Workshop on Approximation Algorithms for Combinatorial Optimization -- {APPROX--RANDOM} 2009. pp. 548--561. Springer (2009). \doi{10.1007/978-3-642-03685-9_41}

\bibitem{HM23}
Harrow, A.W., Mehraban, S.: Approximate unitary $t$-designs by short random quantum circuits using nearest-neighbor and long-range gates. Communications in Mathematical Physics pp. 1--96 (2023). \doi{10.1007/s00220-023-04675-z}

\bibitem{Hastings70}
Hastings, W.K.: {Monte Carlo} sampling methods using {Markov} chains and their applications. Biometrika  \textbf{57}(1),  97--109 (1970). \doi{10.1093/biomet/57.1.97}

\bibitem{HBEK23}
Haug, T., Bharti, K., Koh, D.E.: Pseudorandom unitaries are neither real nor sparse nor noise-robust. arXiv preprint arXiv:2306.11677  (2023)

\bibitem{HJ17}
Hough, B., Jiang, Y.: Cut-off phenomenon in the uniform plane {Kac} walk. The Annals of Probability  \textbf{45}(4),  2248--2308 (2017), \url{http://www.jstor.org/stable/26362254}

\bibitem{JPSSS22}
Jain, V., Pillai, N.S., Sah, A., Sawhney, M., Smith, A.: {Fast and memory-optimal dimension reduction using {Kac}’s walk}. The Annals of Applied Probability  \textbf{32}(5),  4038 -- 4064 (2022). \doi{10.1214/22-AAP1784}

\bibitem{Janvresse03}
Janvresse, {\'E}.: Bounds on semigroups of random rotations on $\so(n)$. Theory of Probability \& Its Applications  \textbf{47}(3),  526--532 (2003). \doi{10.1137/S0040585X97979950}

\bibitem{JLS18}
Ji, Z., Liu, Y.K., Song, F.: Pseudorandom quantum states. In: Advances in Cryptology -- CRYPTO 2018. pp. 126--152. Springer (2018). \doi{10.1007/978-3-319-96878-0_5}

\bibitem{Jiang12}
Jiang, Y.: Total variation bound for {Kac}'s random walk. The Annals of Applied Probability pp. 1712--1727 (2012), \url{https://www.jstor.org/stable/41713374}

\bibitem{Kac56}
Kac, M.: Foundations of kinetic theory. In: Third Berkeley symposium on mathematical statistics and probability. vol.~3, pp. 171--197 (1956)

\bibitem{KNR09}
Kaplan, E., Naor, M., Reingold, O.: Derandomized constructions of $k$-wise (almost) independent permutations. Algorithmica  \textbf{55}(1),  113--133 (2009). \doi{10.1007/11538462_30}

\bibitem{KTP20}
Kim, I., Tang, E., Preskill, J.: The ghost in the radiation: Robust encodings of the black hole interior. Journal of High Energy Physics  \textbf{2020}(6),  1--65 (2020). \doi{10.1007/JHEP06(2020)031}

\bibitem{Kre21}
Kretschmer, W.: Quantum pseudorandomness and classical complexity. In: 16th Conference on the Theory of Quantum Computation, Communication and Cryptography -- TQC 2021. pp. 2:1--2:20. Schloss Dagstuhl -- Leibniz-Zentrum f{\"u}r Informatik (2021). \doi{10.4230/LIPIcs.TQC.2021.2}

\bibitem{KQST23}
Kretschmer, W., Qian, L., Sinha, M., Tal, A.: Quantum cryptography in algorithmica. In: 55th Annual ACM Symposium on Theory of Computing -- STOC 2023. pp. 1589--1602 (2023). \doi{10.1145/3564246.3585225}

\bibitem{LPW09}
Levin, D.A., Peres, Y., Wilmer, E.: Markov chains and mixing times. American Mathematical Society (2009)

\bibitem{metger2024simple}
Metger, T., Poremba, A., Sinha, M., Yuen, H.: Simple constructions of linear-depth t-designs and pseudorandom unitaries (2024)

\bibitem{MY22}
Morimae, T., Yamakawa, T.: Quantum commitments and signatures without one-way functions. In: Advances in Cryptology -- CRYPTO 2022. pp. 269--295. Springer (2022). \doi{10.1007/978-3-031-15802-5_10}

\bibitem{MSW22}
Movassagh, R., Szegedy, M., Wang, G.: Repeated averages on graphs. arXiv:2205.04535  (2022)

\bibitem{ODSP23}
O'Donnell, R., Servedio, R.A., Paredes, P.: Explicit orthogonal and unitary designs. In: 2023 IEEE 64th Annual Symposium on Foundations of Computer Science (FOCS). pp. 1240--1260. IEEE Computer Society (2023). \doi{10.1109/FOCS57990.2023.00073}

\bibitem{Oliveira09}
Oliveira, R.I.: On the convergence to equilibrium of {K}ac's random walk on matrices. The Annals of Applied Probability  \textbf{19}(3),  1200--1231 (2009), \url{https://www.jstor.org/stable/30243617}

\bibitem{PS17}
Pillai, N.S., Smith, A.: Kac's walk on $ n $-sphere mixes in $ n\log n $ steps. The Annals of Applied Probability  \textbf{27}(1),  631--650 (2017). \doi{10.1214/16-AAP1214}

\bibitem{PS18}
Pillai, N.S., Smith, A.: On the mixing time of {Kac’s} walk and other high-dimensional {Gibbs} samplers with constraints. The Annals of Probability  \textbf{46}(4),  2345--2399 (2018), \url{https://www.jstor.org/stable/26506605}

\bibitem{RBR+04}
Renes, J.M., Blume-Kohout, R., Scott, A.J., Caves, C.M.: Symmetric informationally complete quantum measurements. Journal of Mathematical Physics  \textbf{45}(6),  2171--2180 (2004). \doi{10.1063/1.1737053}

\bibitem{YE23}
Yang, L., Engelhardt, N.: The complexity of learning (pseudo)random dynamics of black holes and other chaotic systems. arXiv:2302.11013 (2023), \urls{https://arxiv.org/abs/2302.11013}

\bibitem{Zhandry15_ibe}
Zhandry, M.: Secure identity-based encryption in the quantum random oracle model. International Journal of Quantum Information  \textbf{13}(04),  1550014 (2015). \doi{10.1142/S0219749915500148}

\bibitem{Zhandry16_qprp}
Zhandry, M.: A note on quantum-secure {P}{R}{P}s. Cryptology ePrint Archive, Paper 2016/1076 (2016), \url{https://eprint.iacr.org/2016/1076}

\bibitem{Zhandry21_qprf}
Zhandry, M.: How to construct quantum random functions. J. ACM  \textbf{68}(5) (2021). \doi{10.1145/3450745}

\bibitem{ZK94}
Zyczkowski, K., Kus, M.: Random unitary matrices. Journal of Physics A: Mathematical and General  \textbf{27}(12), ~4235 (1994). \doi{10.1088/0305-4470/27/12/028}

\end{thebibliography}

\newpage
\appendix
\section{Dispersing \rss}
\label{sec:srss}
As mentioned before, our parallel Kac's walk also mixes rapidly
in terms of total variation distance, which endows
our scramblers with a unique dispersing property.
We first give the formal definitions
of a random scrambler with a dispersing property in \sect{drss}.
Then, \sect{tvmixing} analyses the total variation mixing time of the parallel Kac's walk on both real and complex Hilbert spaces.
Lastly in \sect{provedrss}, we utilize this rapid mixing property to demonstrate that the ensembles of unitary operators we construct in \sect{const1} do exhibit a dispersing property.

\subsection{Definitions}
\label{sec:drss}

We introduce the concept of \emph{dispersing random state scramblers} (\srss s),
which ensure the approximation of Haar randomness with respect to Wasserstein distance.

\begin{definition}[Dispersing Random State Scrambler]
  \label{def:trssb}
  Let $\hspacein$ and $\hspaceout$ be Hilbert spaces of dimensions
  $2^{\dimin}$ and $2^\dimout$ respectively with
  $\dimin, \dimout \in\N$ and $\dimin \leq \dimout$. Let
  $\K=\bit{\klen}$ be a key space, and $\secpar$ be a security
  parameter. A \emph{dispersing random state scrambler} (\srss) is an
  ensemble of isometric operators
  $\SG{}{\dimin,\dimout} := \{ \SG{}{\dimin,\dimout,\secpar}
  \}_\secpar$ with
  $\SG{}{\dimin,\dimout,\secpar} : = \{\SG{k}{\dimin,
    \dimout,\secpar}: \hspacein \to \hspaceout \}_{k\in \K}$ satisfying:

  \begin{itemize}
	
  \item \textbf{Sphere Coverage.}
  There exist $\epsilon= \negl{\secpar}$ such that for any $\ket{\phi}\in\Sin$, the family of states $\{\SG{k}{n,m}\!\ket{\phi}\}_{k\in\K}$ forms an $\epsilon$-net of $\Sout$.
  
  \item \textbf{Wasserstein Approximation of Haar randomness.}  There
    exist $\delta, \delta' = \negl{\secpar}$ such that for any $\ket{\phi}\in\Sin$,

    \begin{itemize}

    \item Let $\nu$ be the distribution of $\SG{k}{n,m}\!\ket{\phi}$
      with uniformly random $k \gets \K$, and $\mu$ be the Haar
      measure on $\Sout$.  Then, there exists a distribution
      $\widetilde{\nu}$ such that

      \[
        \tv{\mu-\tilde{\nu}}\leq \delta, \quad \text{and} \quad
        W_{\infty}(\tensorsrss, \widetilde{\nu}) \le \delta' .\]

    \end{itemize}

  \item \textbf{Uniformity}. $\scram^{n,m}$ can be uniformly computed in
    polynomial time. That is, there is a deterministic Turing machine
    that, on input $(1^{\dimin}, 1^{\dimout},1^\secpar,1^\klen)$,
    outputs a quantum circuit $Q$ in
    $\poly{\dimin,\dimout,\secpar,\klen}$ time such that for all
    $k\in \K$ and $\ket{\phi}\in\Sin$
    \[ Q\ket{k}\!\ket{\phi} = \ket{k}\!\ket{\phi_k}\, , \] where
    $\ket{\phi_k} := \SG{k}{n,m,\secpar}\! \ket{\phi}$.
  \end{itemize}
\end{definition}

In particular, small Wasserstein distance
implies small trace distance between the average states drawn from the
two distributions.

\begin{proposition}
A \srss~is an \rss~with the same parameters.
\end{proposition}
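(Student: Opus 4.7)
The plan is to derive the \emph{Statistical Pseudorandomness} condition of \defi{rssb} from the two Wasserstein-type guarantees in \defi{trssb}, while the \emph{Uniformity} condition transfers verbatim. Fix an arbitrary $\ket{\phi}\in\Sin$ and $\ncopy = \poly{\secpar}$, and let $\nu$, $\tilde{\nu}$, and $\mu$ be as in \defi{trssb}. I would bound the trace distance via a two-step hybrid:
\[
\td\!\br{\expect{k\gets \K}{\ketbra{\phi_k}^{\otimes \ncopy}}, \expect{\ket{\psi}\sim\mu}{\ketbra{\psi}^{\otimes \ncopy}}}
\le \underbrace{\norm{\E_{\nu}[\ketbra{\phi_k}^{\otimes \ncopy}] - \E_{\widetilde{\nu}}[\ketbra{\varphi}^{\otimes \ncopy}]}_1}_{(\mathrm{A})}
+ \underbrace{\norm{\E_{\widetilde{\nu}}[\ketbra{\varphi}^{\otimes \ncopy}] - \E_{\mu}[\ketbra{\psi}^{\otimes \ncopy}]}_1}_{(\mathrm{B})}.
\]

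For term (B), I would invoke \lem{h3h4} directly, obtaining $(\mathrm{B}) \le \tv{\widetilde{\nu} - \mu} \le \delta = \negl{\secpar}$. For term (A), I would use the $W_\infty$ bound: there exists a coupling $\gamma$ of $\nu$ and $\widetilde{\nu}$ such that, with probability one, $\norm{\ket{\phi_k} - \ket{\varphi}}_2 \le \delta' = \negl{\secpar}$. Then telescoping through the $\ncopy$ tensor factors and using the standard estimate $\norm{\ketbra{u} - \ketbra{v}}_1 \le 2\norm{\ket{u}-\ket{v}}_2$ (as done in the proof of \thm{kactorss}, leading to Eq.~\eq{h1_h2_rss}), I would conclude
\[
(\mathrm{A}) \le 2\ncopy \cdot \E_{(\ket{\phi_k},\ket{\varphi})\sim\gamma}\!\Br{\norm{\ket{\phi_k} - \ket{\varphi}}_2} \le 2\ncopy \cdot \delta' = \negl{\secpar}.
\]
Summing the two bounds establishes the required statistical closeness.

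The uniformity clause in \defi{rssb} is literally the uniformity clause in \defi{trssb}, so nothing further is needed there; the key length and description size carry over. There is no substantial obstacle here: the dispersing property is a strict strengthening, and the main content is simply recording that the $W_\infty$ bound, combined with the $\ncopy$-copy Lipschitzness of $\ket{u}\mapsto \ketbra{u}^{\otimes\ncopy}$ in trace norm, translates Wasserstein closeness into trace closeness of the mixture states. The only minor care point is making sure $\ncopy\cdot\delta'$ and $\delta$ remain negligible for every polynomial $\ncopy$, which is immediate since both $\delta$ and $\delta'$ are negligible by hypothesis and $\ncopy$ is polynomial in $\secpar$. The $\epsilon$-net part of the dispersing definition is not used in this direction; it is the additional bonus that makes \srss\ genuinely stronger than \rss.
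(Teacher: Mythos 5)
Your proposal is correct and follows essentially the same route as the paper's proof: a triangle inequality through the intermediate distribution $\widetilde{\nu}$, using the coupling guaranteed by the $W_\infty$ bound together with the $\ncopy$-copy telescoping estimate for the first leg, and \lem{h3h4} for the second. Nothing is missing; the uniformity transfer and the observation that the $\epsilon$-net clause is unused are also consistent with the paper.
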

\begin{proof}
  It suffices to prove that Wasserstein approximation of Haar
  randomness implies statistical pseudorandomness. Let $\nu$ and $\mu$
  be the distribution of the output states of a \srss~and the Haar
  measure, respectively. By the assumption, there exists a
  distribution $\widetilde{\nu}$ on $\Sout$ and a coupling $\gamma$ of
  $\nu$ and $\widetilde{\nu}$ such that
  \[\prob{(\ket{\psi},\ket{\psi'})\sim\gamma}{\norm{\ket{\psi}-\ket{\psi'}}_2}\le\negl{\secpar}.\]
  Notice that $\ell$ is polynomial in $\lambda$. By the triangle inequality,
  \[\td\br{\expect{\ket{\psi}\sim \nu}{ \ketbra{\psi}^{\otimes \ncopy}},
        \expect{\ket{\psi}\sim \widetilde{\nu}}{ \ketbra{\psi}^{\otimes \ncopy}}}\le\negl{\secpar}.\]
  By \lem{h3h4}, the condition that $\tv{\mu-\widetilde{\nu}}\leq \negl{\secpar}$ implies
  \[\td\br{\expect{\ket{\psi}\sim \widetilde{\nu}}{ \ketbra{\psi}^{\otimes \ncopy}},
        \expect{\ket{\psi}\sim \haar}{ \ketbra{\psi}^{\otimes \ncopy}}}
      \le \negl{\secpar}.\]
  The result follows from the triangle inequality.
\end{proof}

Moreover, we introduce a continuous version of random state scrambler, where continuous randomness is allowed.

\begin{definition}[Continuously Random State Scrambler]
  \label{def:ctrss}
  Let $\hspacein$ and $\hspaceout$ be Hilbert spaces of dimensions
  $2^{\dimin}$ and $2^\dimout$ respectively with
  $\dimin, \dimout \in\N$ and $\dimin \leq \dimout$. Let $\K$ be a
  (continuous) key space, and $\secpar$ be a security parameter. A
  \emph{continuously random state scrambler} (\csrss) is an ensemble
  of isometric operators
  $\SG{}{\dimin,\dimout} := \{ \SG{}{\dimin,\dimout,\secpar}
  \}_\secpar$ with
  $\SG{}{\dimin,\dimout,\secpar} : = \{\SG{k}{\dimin,
    \dimout,\secpar}: \hspacein \to \hspaceout \}_{k\in \K}$ satisfying:

  \begin{itemize}

  \item \textbf{Total-Variation Approximation of Haar randomness.} Let $\ket{\phi}\in\Sin$ be an arbitrary pure state. Let
    $\tensorsrss$ be the distribution of $ \SG{k}{n,m,\secpar}\ket{\phi}$ with uniformly
    random $k\gets \K$, and $\mu$ be the Haar measure on $\Sout$. Then
    there exists $\delta = \negl{\secpar}$ such that the total
    variation distance between $\nu$ and $\mu$ is at most $\delta$,
    i.e., $\norm{\nu-\mu}_{\mathrm{TV}}\leq \delta$.
  \end{itemize}

\end{definition}

\subsection{Total Variation Mixing Time of the Parallel Kac's Walk}
\label{sec:tvmixing}

We assume $n=2m$ for some $m\in\N$ throughout this section.

\subsubsection{Background: Two-Phase Proof Strategy in \cite{PS17}}
It is proved in \cite{PS17} that the total variation mixing time
of Kac's walk on $\SR^{n}$ is $\Theta\!\br{n\log n}$.

\begin{theorem}[Theorem 1, \cite{PS17}]
	Let $\st{X_t\in\SR^{n}}_{t\geq 0}$
	be a Markov chain that evolves
	according to Kac's walk.
	Then, for sufficiently large $n$, and $T > 200 n \log n$,
	$$
	\sup_{X_0\in \SR^n}\! \norm{\mathcal{L}\!\br{X_{T}}-\mu}_{\mathrm{TV}} = O\!\br{\frac{1}{\poly{n}}}\enspace,
	$$
	where $\mu$ is the normalized Haar measure on $\SR^n$.
\end{theorem}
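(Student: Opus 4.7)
The plan is to invoke the Coupling Lemma (\lem{coupling_lemma}), so it suffices to exhibit a coupling of two copies $\{X_t\}$ (started from an arbitrary $x\in\SR^n$) and $\{Y_t\}$ (started from $Y_0\sim\mu$) for which the coalescence time $\tau$ satisfies $\Pr[\tau > T]=O(1/\poly{n})$ when $T > 200 n\log n$. Following the strategy of \cite{PS17}, I would split the $T$ steps into two phases.

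Phase 1 (contraction) uses the proportional coupling of two copies of Kac's walk for $T_1 \approx 100 n\log n$ steps. By \lem{contraction_lemma_orig},
\[
  \E\Br{\sum_{i=1}^n \br{X_{T_1}[i]^2 - Y_{T_1}[i]^2}^2} \le 2\br{1-\tfrac{1}{2n}}^{T_1} \le n^{-\Omega(1)},
\]
so Markov's inequality gives that with probability $1-n^{-\Omega(1)}$ the squared-coordinate vectors $(X_{T_1}[i]^2)_i$ and $(Y_{T_1}[i]^2)_i$ are within $n^{-\Omega(1)}$ in $\ell^2$. Since the proportional coupling also forces $X_{T_1}[i]Y_{T_1}[i]\ge 0$ for all $i$ (cf.\ \rmk{sign}), the actual vectors are in fact close in $\ell^2$ as well. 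I would additionally condition on the event, guaranteed by \lem{marg_of_haar} together with a union bound, that every coordinate of $Y_{T_1}$ is not too small, say $|Y_{T_1}[i]| \ge n^{-c}$; this event holds with probability $1-n^{-\Omega(1)}$ and will provide the ``nondegeneracy slack'' that phase 2 needs.

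Phase 2 (coalescence) runs for $T_2 = T-T_1 \approx 100 n\log n$ steps using a different, non-proportional coupling designed to actively merge the two chains. In each step, given the randomly selected pair $(i,j)$ and angle $\theta$ driving $X$, I would choose $\theta'$ for $Y$ so as to eliminate the discrepancy on one coordinate: the system $Y_{t+1}[i]=X_{t+1}[i]$ is one equation in one unknown $\theta'$, and a solution exists provided $|X_{t+1}[i]|\le \sqrt{Y_t[i]^2+Y_t[j]^2}$, which the phase 1 bounds (small $\ell^2$ distance, lower-bounded coordinates) make likely. Tracking the set $M_t = \{i : X_t[i]=Y_t[i]\}$ of already-merged coordinates as a potential function, I would show that whenever $(i,j)$ contains one merged and one unmerged coordinate, an appropriate $\theta'$ keeps the merged coordinate stable while merging the new one, and that the probability of such a productive pair being selected is $\Omega(|M_t|\cdot(n-|M_t|)/n^2)$. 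A standard coupon-collector/birthday-style accounting then yields $|M_{T_1+T_2}|=n$ with probability $1-n^{-\Omega(1)}$.

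The main obstacle is the phase 2 coupling design: once a coordinate is merged it sits on \emph{two} different pairs as the walk proceeds, so a Kac step that operates on a pair $(i,j)$ with $i\in M_t$ but $j\notin M_t$ risks \emph{un}-merging $i$. Handling this requires either (i) a careful non-Markovian coupling in which the coupling decision is deferred until the random pair is revealed (as in \cite{PS17}), or (ii) augmenting the potential with a quantitative term $\sum_{i\in M_t}(X_t[i]-Y_t[i])^2$ and arguing that the perturbations introduced at each step remain small enough—leveraging the $\ell^2$ closeness carried over from phase 1—that a slight tightening of the definition of ``merged'' still grows $|M_t|$ in expectation. Once the two-phase coupling is assembled, a union bound over the failure events of phase 1 and phase 2 together with \lem{coupling_lemma} yields $\norm{\mathcal{L}(X_T)-\mu}_{\mathrm{TV}}=O(1/\poly{n})$.
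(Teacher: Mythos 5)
Your overall architecture (coupling lemma, then a contracting phase via the proportional coupling followed by a coalescing phase) matches the strategy of \cite{PS17} that the paper summarizes, and your Phase 1 is essentially correct: \lem{contraction_lemma_orig} plus Markov's inequality and \lem{marg_of_haar} give exactly the entrywise closeness and nondegeneracy you describe. The gap is in Phase 2. Your primary mechanism --- tracking a set $M_t$ of exactly-merged coordinates and growing it coupon-collector style --- cannot be realized as a valid coupling. If the pair $(i,j)$ is selected and you want $Y_{t+1}[i]=X_{t+1}[i]$ exactly, the induced map $\theta\mapsto\theta'$ is determined by $\sqrt{Y_t[i]^2+Y_t[j]^2}\cos(\varphi')=\sqrt{X_t[i]^2+X_t[j]^2}\cos(\varphi)$, and when the two radii differ (which they do throughout, since Phase 1 only makes them close, not equal) this map is not measure-preserving on $[0,2\pi)$; the marginal of $\theta'$ would not be uniform, so the process $\{Y_t\}$ would no longer be a copy of Kac's walk. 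The same obstruction is why a merged coordinate genuinely gets un-merged whenever it is paired with an unmerged one: no choice of $\theta'$ with the correct marginal can simultaneously preserve $X_{t+1}[i]=Y_{t+1}[i]$ and create $X_{t+1}[j]=Y_{t+1}[j]$ with probability $1$. Your option (ii), softening ``merged'' and adding a quantitative potential, does not obviously repair this, because total variation requires \emph{exact} coalescence at time $T$, not approximate agreement.

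The missing idea is the one the paper sketches for \cite{PS17}: sample all coordinate pairs of the coalescing phase in advance, build a sequence of partitions of $[n]$ \emph{backwards} from the all-singletons partition at time $T$ (merging the two blocks containing $i_t$ and $j_t$ as $t$ decreases), and couple the angles so that at each time the \emph{sums of squares over each block} agree, together with a sign condition. Equality of these block sums is achieved only with probability $1-O(n^{-c})$ per merge point, using a maximal coupling of $S=A+B\cos^2\theta$ and $S'=C+D\cos^2\theta'$ (\lem{good_dist}), which is where the Phase 1 closeness and the coordinate lower bounds enter. Exact coordinate-wise equality then emerges only at the final time, when the partition has refined to singletons. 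So your proposal identifies the right difficulty and the right reference, but the construction you actually describe would fail, and the working non-Markovian coupling is left as a citation rather than supplied.
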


We informally revisit their proof approach that
utilizes the famous coupling lemma
(see \lem{coupling_lemma}).
The coupling lemma offers a practical approach to estimate
the mixing time of a Markov chain by comparing the behavior
of two coupled random walks.
The total variation distance at time $T$ is bounded by
the probability that two coupled random walks are distinct at time $T$.
Through a two-phase coupling of
$\st{X_t}_{t\geq 0}$ and $\st{Y_t}_{t\geq 0}$,
they show that the probability that two copies are not equal at time $T$,
i.e., $\Pr\!\Br{X_T\neq Y_T}$, approaches zero when $n$ is sufficiently large and $T > 200 n \log n$.
Specifically, the two-step coupling consists of
an initial \emph{contracting phase}
followed by a subsequent \emph{coalescing phase}.
The contracting phase aims to sufficiently reduce the distance
between two copies of Kac's walk
so that during the coalescing phase, they can be
further fine-tuned to coalesce.
These two phases are described below, with a focus on
how the random angles are coupled.

\textbf{Contracting Phase (from $t=0$ to $T_0$).}
The contracting phase starts from time $0$
and continues until time $T_0$.
In this phase,
$\st{X_t}$ and $\st{Y_t}$ undergo the \emph{proportional coupling}.
which aims at reducing the distance
between two copies of Kac's walk.
The proportional coupling is introduced in \sect{kac}.

\textbf{Coalescing Phase (from $t = T_0$ to $T_0+T_1$).}
This phase employs a non-Markovian coupling\footnote{
The non-Markovian coupling refers to a situation
where the transition between states depend
not only on the current state but also on the future states,
violating the memoryless property of a standard Markov process.}
starting from a close-by pair $X_{T_0}$ and $Y_{T_0}$.
Let $T_1$ be determined later.
 Initially,
 the coupling independently and identically samples $T_1$ pairs of coordinates
 $\st{\br{i_t,j_t}}_{t=T_0}^{T_0+T_1-1}$ all at once,
 which are subsequently used to generate $T_1+1$ partitions of $[n]$,
 denoted by $\set{\PP_t}_{t=T_0}^{T_0+T_1}$.
 The construction of these partitions is done inductively in reverse order.
 The last partition is enforced to be
 $\PP_{T_0+T_1} = \{\{1\}, \{2\}, \ldots, \{n\}\}$.
 Starting from $t=T_1+T_0-1$ and decrementing down to $T_0$,
 the construction of partition $\PP_{t}$ uses the
 chosen coordinate pair $(i_t,j_t)$ as a guide.
 Specifically, it is generated by merging two sets in $\PP_{t+1}$:
 one set includes $i_t$, and the other includes $j_t$;
 while leaving other sets untouched.
 Then, the value of $T_1$ is determined such that $\PP_{T_0}=\set{[n]}$ with high probability.

The aim of this phase is to ultimately coalesce $X$ and $Y$.
To see how to achieve this, we introduct the event $\A_t$ in which, at time $t$,
$$
\sum_{i\in S} X_t [i]^2 = \sum_{i\in S} Y_t [i]^2,\quad \forall S\in \PP_{t}\enspace
$$
and
\[X_t[k]Y_t[k]\geq0,\quad k\in\set{i_{t-1},j_{t-1}}.\]
Intuitively, event $\A_t$ states that if we partition $X_t$ and $Y_t$ based on $\PP_{t}$,
then both $X_t$ and $Y_t$ carry equal significance
within each segment $S$ at time $t$;
and meanwhile the corresponding updated subvectors share the same sign at time $t-1$.
Conditioning on $\PP_{T_0} = \st{\st{1,\cdots,n}}$, which holds with high probability,
it is not hard to verify that
$\A_{T_0}$ occurs
and
$\cap_{t=T_0}^{T_0+T_1}\A_{t}$ implies that the corresponding entries
in vectors $X_{T_0+T_1}$ and $Y_{T_0+T_1}$ are equal.
Thus, to prove that $X$ and $Y$ are identical by the end of this phase, it suffices to prove that all events occur with a high probability during the process.

So, 
the non-Markovian coupling aims to ensure that
$\A_{t+1}$ takes place with a high probability,
conditioned on all previous events occur.
This is achieved by sampling $\theta$ and $\theta'$ from a ``good'' joint distribution,
which makes sure that both marginal distributions are uniformly distributed on $[0,2\pi)$.
Such a desirable distribution is made possible by the entry-wise closeness achieved during the first phase.

Intuitively, the parallel Kac's walk is expected
to have a mixing time that is only on the order of $O(\log n)$,
saving factor of $n$ in the
mixing time of the original Kac's walk.
However, the mixing time for the parallel Kac's walk cannot be derived from the mixing time for the original Kac's walk, directly.
Fortunately, through careful modifications to the two-phase coupling approach above, we have discovered a logarithmic mixing time for the parallel Kac's walk, resulting in exponential speedup compared to the original random walk.

\subsubsection{Real Case}
As we introduced in \sect{kac}, the total variation distance
between the output distribution of
a parallel Kac's walk after $T$ steps
and the normalized Haar measure on $\SR^n$
decays exponentially as $T$ grows. We restate the theorem here.

\begin{reptheorem}{thm:mixing_time}
	Let $\st{X_t\in\SR^{n}}_{t\geq 0}$
	be a Markov chain that evolves
	according to the parallel Kac's walk.
	Then, for sufficiently large $n$, $c>515$ and $T = c\log n$,
	$$
	\sup_{X_0\in \SR^n}\! \norm{\mathcal{L}\!\br{X_{T}}-\mu}_{\mathrm{TV}} \leq  \frac{1}{2^{\br{c/515-1}\log n-1}}\enspace,
	$$
	where $\mu$ is the normalized Haar measure on $\SR^n$.
\end{reptheorem}

To prove \thm{mixing_time}, we will use the coupling lemma
(see \lem{coupling_lemma})
and extend the two-phase coupling method described in \cite{PS17}
to accommodate parallel Kac's walks.
We have already extended the proportional coupling
used in the contracting phase in \sect{propc}.
We now introduce the non-Markovian coupling
employed in the coalescing phase to ensure that 
$X$ and $Y$ converge to an identical state,
and integrate these two couplings into
a comprehensive two-phase coupling
to establish the mixing time of the parallel Kac's walk.

\paragraph{Coalescing Phase: the Non-Markovian Coupling}
The non-markovian coupling is defined in \defi{non_mark_cpl}.
As we will see later,
if the initial vectors of two parallel Kac's walks,
namely $X_{T_0}$ and $Y_{T_0}$,
are close,
this coupling guarantees
a high probability of
collision between $X_T$ and $Y_T$,
when $T$ is sufficiently large.

\begin{definition}[Non-Markovian Coupling]\label{def:non_mark_cpl}
Fix $T_0\leq T\in\N$. We couple $\st{X_t}_{T_0\leq t\leq T}, \st{Y_t}_{T_0\leq t\leq T}$ in the following way:
\begin{enumerate}
	\item For each $T_0\leq t < T$, choose a perfect matching $$P_t = \st{\br{i^{(t)}_1, j^{(t)}_1},\dots,\br{i^{(t)}_m, j^{(t)}_m}}$$ uniformly at random.

	\item Set $\PP_{T,1} = \st{\st{1},\dots,\st{n}}$, and define a sequence of partitions $$\st{\PP_{t,k}}_{T_0\leq t<T,~ 1\leq k\leq m+1}$$ of $[n]$ inductively by the process:
		\begin{enumerate}
			\item If $k=m+1$, let $\PP_{t,k}=\PP_{t+1,1}$.
			\item If $1\leq k\leq m$, write $\PP_{t,k+1}=\st{S_1(t,k+1),\dots, S_{l_{t,k+1}}(t,k+1)}$ with $S_r(t,k+1)\subseteq [n]$ for $1\leq r\leq l_{t,k+1}$. Let $u_{t,k},~ v_{t,k}$ be the indices such that
			$$i^{(t)}_{k}\in S_{u_{t,k}}(t,k+1)\quad \text{and}\quad j^{(t)}_{k}\in S_{v_{t,k}}(t,k+1)\enspace.$$
			\begin{enumerate}
				\item If $u_{t,k}= v_{t,k}$, set $\PP_{t,k}=\PP_{t,k+1}$.
				\item If $u_{t,k}\neq v_{t,k}$, construct $\PP_{t,k}$ by merging $S_{u_{t,k}}(t,k+1)$ and $S_{v_{t,k}}(t,k+1)$ in $\PP_{t,k+1}$.
			\end{enumerate}
		\end{enumerate}
		
	\item If $\PP_{T_0,1}=\st{[n]}$, we couple $\st{X_t}_{T_0\leq t\leq T}, \st{Y_t}_{T_0\leq t\leq T}$ in the following way:
	\begin{itemize}
			\item Define the set
			\begin{equation}\label{eq:H}
H=\st{(t,k):T_0\leq t<T,~1\leq k\leq m,~\PP_{t,k}\neq\PP_{t,k+1}} \enspace.
\end{equation}
			
			\item Fix $T_0\leq t<T$, $X_{t}$ and $Y_{t}$, and we couple $X_{t+1}$ and $Y_{t+1}$ in the following way:
				\begin{enumerate}
					\item Set $X_{t,1}=X_t$ and $Y_{t,1}=Y_t$.
					\item For $1\leq k\leq m$,
						\begin{enumerate}
							\item If $(t,k)\notin H$, uniformly choose $\agl{t}{k}\in[0,2\pi)$. Let $$X_{t,k+1} = G(i_{k}^{(t)},j_{k}^{(t)},\agl{t}{k},X_{t,k})\ \text{and}\  Y_{t,k+1} = G(i_{k}^{(t)},j_{k}^{(t)},{\aglp{t}{k}},Y_{t,k})$$ where $G(\cdot)$ is defined in \cref{eq:kaconestep} and ${\aglp{t}{k}}$ is obtained in the same way as the proportional coupling defined in \defi{prop_cpl}.
							
							\item If $(t,k)\in H$, let $\theta_0$ be the angle satisfies
							$$
							X_{t,k}[i_{k}^{(t)}]=\sqrt{X_{t,k}[i_{k}^{(t)}]^2+X_{t,k}[j_{k}^{(t)}]^2}\cos(\theta_0)\enspace,$$
							$$X_{t,k}[j_{k}^{(t)}]=\sqrt{X_{t,k}[i_{k}^{(t)}]^2+X_{t,k}[j_{k}^{(t)}]^2}\sin(\theta_0)\enspace,
							$$
							$\theta_0'$ be the angle satisfies
							$$
							Y_{t,k}[i_{k}^{(t)}]=\sqrt{Y_{t,k}[i_{k}^{(t)}]^2+Y_{t,k}[j_{k}^{(t)}]^2}\cos(\theta_0')\enspace,$$
							$$Y_{t,k}[j_{k}^{(t)}]=\sqrt{Y_{t,k}[i_{k}^{(t)}]^2+Y_{t,k}[j_{k}^{(t)}]^2}\sin(\theta_0')\enspace,
							$$
							and then choose the best distribution $\nu$ among all joint distributions on $[0,2\pi)\times [0,2\pi)$ with both marginal distributions uniformly distributed on $[0,2\pi)$ which maximizes the probability of the following events when $(\theta,\theta')\sim\nu$:
								\begin{align*}
									\sum_{i\in S_{r}(t,k+1)} X_{t,k+1} [i]^2 &= \sum_{i\in S_{r}(t,k+1)} Y_{t,k+1} [i]^2\enspace, \enspace1\leq r\leq l_{t,k+1}\\
									X_{t,k+1} [i]\cdot Y_{t,k+1}[i]&\geq 0\enspace, \enspace i\in\st{i_{k}^{(t)},j_{k}^{(t)}}\enspace,
								\end{align*}
								where
								$$X_{t,k+1} = G(i_{k}^{(t)},j_{k}^{(t)},\theta-\theta_0,X_{t,k})\ \text{and}\  Y_{t,k+1} = G(i_{k}^{(t)},j_{k}^{(t)},\theta'-\theta_0',Y_{t,k})\enspace.$$
								Then choose $(\agl{t}{k},{\aglp{t}{k}})\sim \nu$, and set
								$$X_{t,k+1} = G(i_{k}^{(t)},j_{k}^{(t)},\agl{t}{k}-\theta_0,X_{t,k})\ \text{and}\  Y_{t,k+1} = G(i_{k}^{(t)},j_{k}^{(t)},{\aglp{t}{k}}-\theta_0',Y_{t,k})\enspace.$$
						\end{enumerate}
					\item Set $X_{t+1} = X_{t,m+1}$ and $Y_{t+1} = Y_{t,m+1}$.
				\end{enumerate}
		\end{itemize}
	
	\item If $\PP_{T_0,1}\neq \st{[n]}$, for $T_0\leq t\leq T$, we couple $X_{t+1}$ and $Y_{t+1}$ in the following way: choose $m$ independent angles $\agl{t}{1},\dots,\agl{t}{m}\in [0,2\pi)$ uniformly at random and set
	$$X_{t+1}=F\br{P_t,\agl{t}{1},\dots,\agl{t}{m},X_t}\quad \text{and}\quad Y_{t+1}=F\br{P_t,\agl{t}{1},\dots,\agl{t}{m},Y_t}\enspace,$$
where $F(\cdot)$ is given in \cref{eq:F}.
\end{enumerate}
\end{definition}

Step 1 samples $T-T_0$ matchings,
generating all coordinate pairs
that will be updated in the succeeding process.
Step 2 utilizes this matchings to construct a series of partitions of $[n]$ in a back propagation manner.
Starting from $\PP_{T,1} = \st{ \st{1},\dots,\st{n} }$,
it sequentially construct 
$$\PP_{T-1,m+1} \enspace
\enspace\PP_{T-1,m}\enspace \enspace\cdots\enspace
\enspace\PP_{T-1,1}\enspace 
\enspace\PP_{T-2,m+1}\enspace \enspace\cdots\enspace
\enspace\PP_{T-2,1}\enspace \enspace\cdots\enspace
\enspace\PP_{T_0,1}\enspace.$$
$\PP_{t,m+1}$ is set equal to $\PP_{t+1,1}$ directly.
For $1\leq k\leq m$, $\PP_{t,k}$ is obtained based on $\PP_{t,k+1}$
and the $k$-th pair of coordinates in matching $P_t$.
If two coordinates of the $k$-th pair
belong to different components in partition $\PP_{t,k+1}$,
we merge these two components.
Otherwise, $\PP_{t,k}$ is set equal to $\PP_{t,k+1}$.
This series of partitions thus consists of random partitions of set $[n]$
and with high probability the first partition $\PP_{T_0,1}$ 
is $\st{[n]}$ (see \lem{good_start}). This follows from the argument for bounding the probability of the
connectivity of Erd\"os-R\'enyi graphs~\cite[Theorem
7.3]{Bollobas01}. The proof is deferred to \app{deferred proofs}.

\begin{lemma}\label{lem:good_start}
	Fix $c>0$ and $T_0\in\N$. Let $l=5(1+c)\log n$ and $T=T_0+l$.	Then we have for $n$ sufficiently large,
	$$
		\Pr\!\Br{\PP_{T_0,1}\neq\st{[n]}} \leq 2n^{-c}\enspace,
	$$
	where $\PP_{T_0,1}$ is defined in \defi{non_mark_cpl}.
\end{lemma}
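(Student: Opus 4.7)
My plan is to translate the event $\PP_{T_0,1}\neq\{[n]\}$ into a question about connectivity of a random multigraph and then apply the classical cut-based union bound. First, I will observe that $\PP_{T_0,1}$ coincides with the partition of $[n]$ into connected components of the random multigraph $G$ on vertex set $[n]$ whose edge set is the union $\bigcup_{t=T_0}^{T-1} P_t$ of the $l=T-T_0$ independent uniformly random perfect matchings sampled in Step~1 of \defi{non_mark_cpl}. Indeed, Step~2 of that definition is a reverse-time union--find: starting from the finest partition $\PP_{T,1}$ and processing each pair $(i^{(t)}_k, j^{(t)}_k)$ in reverse order of $t$ (and then of $k$), it merges the two components containing the endpoints precisely when they are currently distinct. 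Therefore $\PP_{T_0,1}=\{[n]\}$ if and only if $G$ is connected.

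Second, I will apply a union bound over non-trivial cuts. For a subset $S\subset[n]$ of size $k$, let $p_k$ denote the probability that a uniformly random perfect matching of $[n]$ contains no edge between $S$ and $[n]\setminus S$. Because the $l$ matchings are independent,
\[
\Pr\bigl[\PP_{T_0,1}\neq\{[n]\}\bigr] \le \sum_{k=1}^{n-1}\binom{n}{k}\,p_k^{\,l}.
\]
A crucial simplification is that $p_k=0$ for every odd $k$ (a set of odd cardinality cannot be matched internally); for even $k$, counting matchings internal to $S$ and to $S^c$ yields $p_k = (k-1)!!\,(n-k-1)!!/(n-1)!!$, and in particular $p_2 = 1/(n-1)$.

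Third, I will verify that for $l=5(1+c)\log n$ and $n$ sufficiently large, the surviving sum is at most $2n^{-c}$. Stirling/entropy estimates give $\binom{n}{k}\le 2^{nH(k/n)+O(\log n)}$ and $p_k \le 2^{-(n/2)H(k/n)+O(\log n)}$, so each term satisfies $\binom{n}{k}p_k^{\,l}\le 2^{nH(k/n)(1-l/2)+O(l\log n)}$. Since $H(k/n)\ge H(2/n)=\Theta((\log n)/n)$ for all $k\in[2,n-2]$ and $l/2-1=\Theta(\log n)$, each term decays super-polynomially in $n$; summing the at most $n$ surviving terms yields a bound much stronger than $2n^{-c}$. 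This is the Erd\H os--R\'enyi connectivity argument of~\cite[Theorem~7.3]{Bollobas01} adapted to the union-of-random-matchings model.

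The main obstacle is the balanced-cut regime $k\approx n/2$, where both $\binom{n}{k}$ and $1/p_k$ are exponentially large in $n$; here one must be careful not to lose too much in Stirling's approximation. Since the target bound $2n^{-c}$ is much weaker than the actual super-polynomial decay available, loose estimates suffice and no fine optimization is needed. The small-$k$ regime is controlled directly from the explicit formula, and the large-$k$ regime follows by the symmetry $p_k = p_{n-k}$ and $\binom{n}{k} = \binom{n}{n-k}$.
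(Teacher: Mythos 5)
Your proposal follows the same overall route as the paper: identify $\PP_{T_0,1}$ with the components of the multigraph $G$ obtained as the union of the $l$ independent uniform perfect matchings $\{P_t\}_{t=T_0}^{T-1}$ (the reverse union--find reading of Step~2), then union-bound $\Pr[G \text{ disconnected}]$ over cuts, noting $p_k=0$ for odd $k$ and $p_k = (k-1)!!\,(n-k-1)!!/(n-1)!! = \binom{n/2}{k/2}/\binom{n}{k}$ for even $k$. The paper does exactly this, also citing \cite[Theorem~7.3]{Bollobas01}.

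Where you diverge is in Step~3, and there is a genuine gap there. You bound $\binom{n}{k}p_k^l \leq 2^{nH(k/n)(1-l/2)+O(l\log n)}$ and claim super-polynomial decay because $nH(k/n)(l/2-1) \geq nH(2/n)(l/2-1) = \Theta\br{(\log n)^2}$. But the slack term $O(l\log n)$ is also $\Theta\br{(\log n)^2}$ for the chosen $l$, so whether the exponent is negative depends on the undetermined constants. Concretely, the generic estimate $p_k \leq (n+1)\,2^{-(n/2)H(k/n)}$ (used to get the $O(\log n)$ in your exponent) costs a factor of order $n$, which raised to the $l$-th power gives $2^{+l\log_2 n}$; for $k=2$, $nH(2/n)\cdot(l/2-1)\approx (\log_2 n)\cdot l$, so the slack exactly cancels the main term and the bound no longer decays. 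The small-$k$ regime (not the balanced-cut regime, which you flag as the main obstacle) is precisely where your loose Stirling estimate breaks: at $k=\Theta(n)$ the entropy term $nH(k/n)=\Theta(n)$ dwarfs $O(l\log n)$ and any polynomial slack is harmless, whereas at $k=O(1)$ it does not. You acknowledge that ``the small-$k$ regime is controlled directly from the explicit formula,'' but that case analysis is not carried out, and the entropy route as written does not close the bound there.

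The paper sidesteps the issue by keeping everything exact: it rewrites $p_i$ as a product of $(n-i)/2$ factors each $\leq 1$ times $i/2$ factors each $\leq \frac{2i}{n+i}\leq \frac{2}{3}$ (for $i\le n/2$), giving the uniform bound $p_i \leq (2/3)^{i/2}$, hence $\sum_{i}\binom{n}{i}(2/3)^{il/2} \leq \br{1+(2/3)^{l/2}}^n - 1 \leq n(2/3)^{l/2}\br{1+(2/3)^{l/2}}^{n-1}$, and then plugs in $l = 5(1+c)\log n$. To repair your argument you would either need to replace the lossy $O(\log n)$ correction with a tight $O(1)$ Stirling correction (the $\sqrt{\cdot}$ factors in $\binom{n/2}{k/2}$ and $\binom{n}{k}$ largely cancel, so $p_k = \Theta(1)\cdot 2^{-(n/2)H(k/n)}$, after which $C^l = n^{O(1)}$ no longer competes with $n^{-(l-2)}$), or split off $k = O(1)$ and handle those terms directly from $p_k$.
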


If $\PP_{T_0,1} = \st{[n]}$,
step 3 serves as the crucial step of
this coupling.
To provide a clearer explanation of how this coupling technique works,
\defi{AT1} introduces a series of events $\st{\A(t,k)}$.
Intuitively, $\A(t,k)$ indicates that
the $(k-1)$-th updated coordinates in both vectors have the same signs at time $t$,
and
both vectors have the same weight within each component of the partition $\PP_{t,k}$.

\begin{definition}\label{def:AT1}
Let $\A(T_0,1)$ denote the event
\begin{align*}
	\sum_{i\in S_{r}(T_0,1)} X_{T_0,1} [i]^2 &= \sum_{i\in S_{r}(T_0,1)} Y_{T_0,1} [i]^2, 1\leq r\leq l_{T_0,1}\enspace.
\end{align*}
For other $T_0\leq t \leq T$ and
$1\leq k \leq m+1$,
we define the event $\A(t,k)$ as
\footnote{In Eq.~\eq{cond2}, if $k=1$, then $i\in\st{ i_{m}^{(t-1)},j_{m}^{(t-1)} }$.}
	\begin{align}
		\sum_{i\in S_{r}(t,k)} X_{t,k} [i]^2 &= \sum_{i\in S_{r}(t,k)} Y_{t,k} [i]^2\enspace, \enspace1\leq r\leq l_{t,k}\label{eq:cond1}\\
		X_{t,k} [i]\cdot Y_{t,k}[i]&\geq 0\enspace, \enspace i\in\st{i_{k-1}^{(t)},j_{k-1}^{(t)}}
		\enspace. \label{eq:cond2}
	\end{align}
\end{definition}

It is worthy noting that under the assumption $\PP_{T_0,1} = \st{[n]}$,
$\A(T_0,1)$ occurs since the sum of squares of the components of a unit vector is equal to $1$.
And if $\{\A(t,k)\}$ take place in the entire process,
we can conclude that $X_T = Y_T$ because 
$\A(T,1)$ guarantees the corresponding coordinates have the same absolute values
and \rmk{sign} together with \eq{cond2} guarantees that they have the same signs as well.
So we want to couple the rotation angles to ensure that
$\A(t,k+1)$ occurs with a high probability,
given that all previous events have already taken place.
The coupling of the rotation angles are divided into two cases:
whether the coordinate pair $\br{i^{(t)}_k, j^{(t)}_k}$ is a “merge point” in the construction process of partitions.
If not, the rotation angles are coupled using the proportional coupling (see step 3.b.i).
For non-“merge point”,
the proportional coupling does not affect the partition
and also maintain the weight within each component,
and thus $\A(t,k+1)$ must occur conditioned on $\A(t,k)$.
In the other case, we sample the rotation angles from a “good” joint distribution which maximizes the probability that $\A(t,k+1)$ occurs (see step 3.b.ii).

We next show why such a “good” joint distribution exists.
Before that, we set some notations.
For $T_0\leq t \leq T$ and $1\leq k \leq m+1$, define
\begin{align}\label{eqn:Atk}
A_{t,k}[i] = X_{t,k}[i]^2\enspace,\quad B_{t,k}[i] = Y_{t,k}[i]^2\enspace.
\end{align}
For $T_0\leq t_1,t_2\leq T$ and
$1\leq k_1,k_2\leq m+1$,
we define a partial order
$\sqsubseteq$ as
$$
(t_1,k_1)\sqsubseteq (t_2,k_2)\quad \text{iff} \quad
\br{t_1<t_2} \vee \br{t_1=t_2 \wedge k_1\leq k_2 }\enspace.
$$
Note that if $\br{i^{(t)}_k, j^{(t)}_k}$ is a “merge point”,
$\PP_{t,k+1}$ differs from $\PP_{t,k}$ only in
the components in which $i^{(t)}_k$ and $ j^{(t)}_k$ are,
namely $S_{u_{t,k}}(t,k+1)$ and $S_{v_{t,k}}(t,k+1)$.
To see whether $\A(t,k+1)$ occurs given $\A(t,k)$, we only need to check
whether \eq{cond1} holds for $r = u_{t,k}$, that is,
$$
\sum_{i\in S_{u_{t,k}}(t,k+1)} X_{t,k} [i]^2 
= \sum_{i\in S_{u_{t,k}}(t,k+1)} Y_{t,k} [i]^2\enspace.
$$
Rewrite 
$$
A=\sum_{i\in S_{u_{t,k}}(t,k+1)\backslash i^{(t)}_k} X_{t,k}[i]^2\enspace,
\quad \quad
B=X_{t,k} [i^{(t)}_k]^2+X_{t,k} [j^{(t)}_k]^2\enspace,
$$
$$
C=\sum_{i\in S_{u_{t,k}}(t,k+1)\backslash i^{(t)}_k} Y_{t,k}[i]^2
\enspace,\quad \quad
D=Y_{t,k}[i^{(t)}_k]^2+Y_{t,k}[j^{(t)}_k]^2\enspace.
$$
We will have
$$
\sum_{i\in S_{u_{t,k}}(t,k+1)} X_{t,k} [i]^2  = A + B\cos({\agl{t}{k}})^2
\enspace,
\sum_{i\in S_{u_{t,k}}(t,k+1)} Y_{t,k} [i]^2 = C + D\cos({\aglp{t}{k}})^2\enspace.
$$
The following lemma states that $\abs{A-C}$ and $\abs{B-D}$ are bounded by the initial distance of two vectors. Its proof is the same as Lemma 4.4 in \cite{PS17}.
\begin{lemma}
\label{lem:never_drift}
	Fix $T_0<T$, and couple two chains $\st{X_t}_{T_0\leq t\leq T}, \st{Y_t}_{T_0\leq t\leq T}$ using the non-Markovian coupling defined in \defi{non_mark_cpl}.
	Fix $T_0\leq t_0 \leq T$ and $1\leq k_0 \leq m+1$.
	Then, on the event $\bigcap _{(t,k)\sqsubseteq (t_0,k_0)}\A(t,k)\cap\st{\PP_{T_0,1}=\st{[n]}}$, we have
	$$
	\norm{A_{t,k}-B_{t,k}}_{1,S}\leq \norm{A_{T_0,1}-B_{T_0,1}}_{1}
	$$
	for all $(t,k)\sqsubseteq(t_0,k_0)$
	and $S\in\PP_{t,k}$.
	Moreover, for all $(t,k)\sqsubseteq(t_0,k_0)$
	,
	$$
	\norm{A_{t,k}-B_{t,k}}_{1}\leq n\norm{A_{T_0,1}-B_{T_0,1}}_{1}\enspace.
	$$
\end{lemma}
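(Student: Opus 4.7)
The plan is to establish the first inequality by induction on $(t,k)$ under the partial order $\sqsubseteq$; the second then follows immediately because $\PP_{t,k}$ has at most $n$ blocks and $\|A_{t,k}-B_{t,k}\|_1=\sum_{S\in\PP_{t,k}}\|A_{t,k}-B_{t,k}\|_{1,S}$. For the base case $(T_0,1)$, the conditioning event $\{\PP_{T_0,1}=\{[n]\}\}$ forces the only block to be $[n]$, so $\|A_{T_0,1}-B_{T_0,1}\|_{1,[n]}=\|A_{T_0,1}-B_{T_0,1}\|_1$ trivially.

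For the inductive step I would fix an intermediate sub-step, write $(i,j)=(i^{(t)}_k,j^{(t)}_k)$, and abbreviate $A,B,A',B'$ for $A_{t,k},B_{t,k},A_{t,k+1},B_{t,k+1}$. On every block of $\PP_{t,k+1}$ disjoint from $\{i,j\}$ the coordinates are unchanged, so the bound persists directly; only the block(s) containing the rotated pair need work. In Case A ($(t,k)\notin H$), the partition is unchanged and the pair lies in a single block $S^\star$. The non-Markovian coupling reduces to the proportional coupling there, forcing $X_{t,k+1}[\ell]\,Y_{t,k+1}[\ell]\ge 0$ for $\ell\in\{i,j\}$, and a one-line computation (exactly the one used inside the proof of \cref{lem:contraction_lemma}) gives $|A'[i]-B'[i]|+|A'[j]-B'[j]|=|(A[i]+A[j])-(B[i]+B[j])|\le|A[i]-B[i]|+|A[j]-B[j]|$, so $\|A'-B'\|_{1,S^\star}\le\|A-B\|_{1,S^\star}$ and the inductive bound survives.

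The heart of the argument is Case B ($(t,k)\in H$), where a block $S\in\PP_{t,k}$ splits into $S_u\ni i$ and $S_v\ni j$ in $\PP_{t,k+1}$. Combining $\sum_{i'\in S_u}(A'[i']-B'[i'])=0$ (from $\A(t,k+1)$) with $\sum_{i'\in S}(A[i']-B[i'])=0$ (from $\A(t,k)$), and using that only coordinates $i,j$ have moved, yields the key identity $A'[i]-B'[i]=(A[i]-B[i])+(A[j]-B[j])+\sum_{i'\in S_v\setminus\{j\}}(A[i']-B[i'])$, which rewrites the ``missing mass'' on $S_u\setminus\{i\}$ in terms of the cross-block pieces on $\{i,j\}$ and $S_v\setminus\{j\}$. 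The triangle inequality, together with the unchanged $\|A-B\|_{1,S_u\setminus\{i\}}$, then gives $\|A'-B'\|_{1,S_u}\le\|A-B\|_{1,S}$; the argument for $S_v$ is symmetric, and the inductive hypothesis on $S$ delivers the bound $\|A_{T_0,1}-B_{T_0,1}\|_1$ for both new blocks.

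The main obstacle I anticipate is exactly Case B: a naive estimate $|A'[i]-B'[i]|\le\|A-B\|_{1,S_u\setminus\{i\}}$ would double-count and blow up by a factor of two per merge, which across the $\Theta(\log n)$ merges of the coalescing phase would destroy the per-block bound. The identity above, which is the same one used in the proof of Lemma 4.4 of \cite{PS17}, pays each discrepancy across the split exactly once; the argument transfers without modification to the parallel setting because inside any single outer step $t$ the $m$ sub-steps act sequentially, identically to $m$ consecutive steps of the original Kac's walk with a prescribed matching.
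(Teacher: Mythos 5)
Your proposal is correct and reconstructs exactly the argument behind Lemma~4.4 of~\cite{PS17}, which the paper cites as its proof without spelling it out; the adaptation to the parallel walk, as you note, is immediate because within a fixed outer step $t$ the $m$ sub-steps act sequentially on disjoint pairs, just like $m$ consecutive steps of the ordinary walk. The one point worth being explicit about is that the ``naive'' bound you flag, namely $|A'[i]-B'[i]|\le\|A-B\|_{1,S_u\setminus\{i\}}$, is also valid (it follows from $\A(t,k+1)$ on $S_u$ alone) but gives only $\|A'-B'\|_{1,S_u}\le 2\|A-B\|_{1,S_u\setminus\{i\}}$, with the fatal factor of two per merge; rewriting via $\A(t,k)$ on $S$ to get $A'[i]-B'[i]=(A[i]-B[i])+\sum_{i'\in S_v}(A[i']-B[i'])$ is the step that pays each discrepancy once and yields $\|A'-B'\|_{1,S_u}\le\|A-B\|_{1,S_u}+\|A-B\|_{1,S_v}=\|A-B\|_{1,S}$, and symmetrically for $S_v$. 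Together with the proportional-coupling identity $|A'[i]-B'[i]|+|A'[j]-B'[j]|=\bigl|(A[i]+A[j])-(B[i]+B[j])\bigr|$ for non-merge sub-steps, and the trivial block count $|\PP_{t,k}|\le n$ for the second displayed inequality, this closes the induction.
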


Knowing that $A,B$ and $C,D$ are close,
the following lemma states the existence of a good distribution $\nu$ for ${\agl{t}{k}}$ and ${\aglp{t}{k}}$ such that
$\sum_{i\in S_{u_{t,k}}(t,k+1)} X_{t,k} [i]^2 
$ agrees with $\sum_{i\in S_{u_{t,k}}(t,k+1)} Y_{t,k} [i]^2$ with high probability.

\begin{lemma}[Lemma 4.6 in \cite{PS17}] \label{lem:good_dist}
	Fix positive reals $1<p<q'<q/2$. Let $\theta,\theta'\sim\mathrm{Unif}[0,2\pi)$ and let
	$$
	S=A+B\cos(\theta)^2 \quad\text{ and }
	\quad S'=C+D\cos(\theta')^2
	$$
	for some $0\leq A,B,C,D\leq 1$ that satisfy
	$$
	\abs{A-C},\abs{B-D}\leq n^{-q}\quad \text {and}\quad B,D\geq n^{-p}\enspace.
	$$
	Then for sufficiently large $n$, there exists a coupling of $\theta,\theta'$ so that
	$$
	\Pr\!\Br{S=S'}\geq 1-6\times 10^3n^{-c}
	$$
	and
	$$
	\cos(\theta)\cos(\theta')\geq 0 \quad\text{ and } \quad
	\sin(\theta)\sin(\theta')\geq 0
	$$
	where $c=\min\br{\frac{q'}{2},q-2q'}>0$.
\end{lemma}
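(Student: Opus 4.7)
The plan is to couple $\theta$ and $\theta'$ in two stages. Since $S$ and $S'$ depend on their angles only through $\cos^2$, I first couple $U:=\cos^2\theta$ and $V:=\cos^2\theta'$, each of which has the arcsine density $f(u)=\tfrac{1}{\pi\sqrt{u(1-u)}}$ on $[0,1]$. A uniform $\theta\in[0,2\pi)$ is equivalent in law to an arcsine $U$ together with an independent, uniformly-random quadrant $(\mathrm{sgn}\cos\theta,\mathrm{sgn}\sin\theta)\in\{\pm1\}^2$, so after coupling $U$ and $V$ I assign $\theta'$ the \emph{same} quadrant as $\theta$. This preserves the uniform marginal of $\theta'$ and automatically enforces $\cos\theta\cos\theta'\ge 0$ and $\sin\theta\sin\theta'\ge 0$.

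The identity $S=S'$ is exactly $V=\phi(U)$ with $\phi(u):=(A-C+Bu)/D$. Pushing $U$ forward through $\phi$ yields a random variable on $\phi([0,1])$ with density $g(v)=(D/B)\,f(\phi^{-1}(v))$. Using the maximal coupling between the two laws $f$ and $g$ on $[0,1]$ (set $V=\phi(U)$ with probability $(f\wedge g)(\phi(U))/f(\phi(U))$ and otherwise sample $V$ from the residual $(f-f\wedge g)/\!\int\!(f-f\wedge g)$), one has the identity
\[
\Pr[S=S']=\Pr[V=\phi(U)]=1-\tfrac12\,\|f-g\|_{L^1([0,1])},
\]
so the lemma reduces to the $L^1$ bound $\|f-g\|_{L^1}\le 1.2\cdot 10^4\,n^{-c}$.

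To bound the $L^1$ distance I split $[0,1]$ into a bulk $I_n=[n^{-q'},\,1-n^{-q'}]$ and two endpoint strips, and decompose $f-g=(f-f\circ\phi^{-1})+(1-D/B)\,f\circ\phi^{-1}$. The hypotheses $|A-C|,|B-D|\le n^{-q}$ and $B,D\ge n^{-p}$ give $|1-D/B|\le 2n^{p-q}$ and $|v-\phi^{-1}(v)|\le 2n^{p-q}$ uniformly in $v$. The second summand contributes at most $|1-D/B|\cdot\|f\circ\phi^{-1}\|_{L^1}=O(n^{p-q})$. For the first summand, Fubini yields
\[
\int_{I_n}|f(v)-f(\phi^{-1}(v))|\,dv\;\le\;2n^{p-q}\!\!\int_{[n^{-q'}/2,\,1-n^{-q'}/2]}\!\!|f'(s)|\,ds\;=\;O(n^{p-q+q'/2}),
\]
using the closed form $\int|f'|=2\bigl(f(n^{-q'}/2)-f(1/2)\bigr)=O(n^{q'/2})$; the enlarged interval contains $\phi^{-1}(I_n)$ because $q-p>q'$. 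On the endpoint strips, the arcsine CDF $F(u)=\tfrac{2}{\pi}\arcsin\sqrt u$ together with the $O(n^{p-q})$ shift induced by $\phi^{-1}$ gives $\int_{[0,n^{-q'}]\cup[1-n^{-q'},1]}(f+g)\,dv=O(n^{-q'/2})$. Summing the bulk and boundary contributions, and using $p+q'<q$ (a consequence of $1<p<q'<q/2$) to absorb $n^{p-q+q'/2}\le n^{-q'/2}$, produces $\|f-g\|_{L^1}=O(n^{-q'/2})\subseteq O(n^{-c})$, with explicit constants easily kept below $1.2\cdot 10^4$ for $n$ large enough.

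The principal obstacle is the arcsine blowup at $\{0,1\}$: any naive pointwise estimate $|f(v)-f(\phi^{-1}(v))|\le\sup|f'|\cdot|\phi^{-1}(v)-v|$ is vacuous since $\sup|f'|=\infty$. Trimming an $n^{-q'}$-neighborhood of the endpoints costs only $O(n^{-q'/2})$ arcsine mass, and on the complement the decisive move is to bound the \emph{integrated} quantity $\int_{I_n}|f-f\circ\phi^{-1}|$ by $\max_v|\phi^{-1}(v)-v|\cdot\int|f'|$ rather than by $|I_n|\cdot\sup_{I_n}|f'|$, trading a catastrophic $n^{3q'/2}$ loss for the mild $n^{q'/2}$ coming from the telescoping identity for $\int|f'|$. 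The ordering $1<p<q'<q/2$ is precisely what balances the bulk and boundary error rates so that the combined bound lands at $n^{-c}$ with $c=\min(q'/2,\,q-2q')$.
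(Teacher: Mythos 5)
The paper does not actually prove this lemma; it is imported verbatim from Pillai and Smith~\cite{PS17} (their Lemma~4.6), so there is no in-paper proof to compare against. Your reconstruction is correct and mirrors the source's strategy: pass to the arcsine marginals of $U=\cos^2\theta$ and $V=\cos^2\theta'$, couple the quadrant signs identically to obtain the sign conditions for free, take a maximal coupling of $V$ against $\phi(U)$, and control $\|f-g\|_{L^1}$ by trimming the endpoint singularities and combining a Fubini swap with the telescoping identity for $\int|f'|$ on the bulk $I_n$. Two small repairs are needed. First, in your explicit recipe for the maximal coupling the acceptance probability conditioned on $\phi(U)=w$ should be $(f\wedge g)(w)/g(w)$, normalizing by the density $g$ of $\phi(U)$; dividing by $f(w)$ as written does not give $V$ the arcsine marginal (the accepted branch would then carry density $g\cdot(f\wedge g)/f$ rather than $f\wedge g$). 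Second, the $L^1$ norm in $\Pr[V=\phi(U)]=1-\tfrac{1}{2}\|f-g\|_{L^1}$ must be taken over $\mathbb{R}$, equivalently over $[0,1]\cup\phi([0,1])$, since $g$ carries a little mass outside $[0,1]$; that mass is $O(n^{(p-q)/2})$ and is absorbed because $q-p>q'$. With these fixes the estimate goes through, and in fact you land on $\|f-g\|_{L^1}=O(n^{-q'/2})$, which is strictly stronger than the stated $O(n^{-c})$ because $c=\min(q'/2,\,q-2q')\le q'/2$; the lemma follows with room to spare.
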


\paragraph{Proof of the Total Variation Mixing Time}
\begin{proof}[Proof of \thm{mixing_time}]
Let $a=66,~b=24,~ T_0=500\log n, ~ T_1=15\log n,~ T=T_0+T_1 = 515\log n$. We construct a coupling of two copies $\st{X_t}_{t\geq 0}$ and $\st{Y_t}_{t\geq 0}$ of the parallel Kac's walk with starting points $X_0=x\in \SR^n$ and $Y_0\sim \mu$. The coupling is as follows:
\begin{enumerate}
	\item couple $\st{X_t}_{0\leq t\leq T_0}, \st{Y_t}_{0\leq t\leq T_0}$ by using the proportional coupling defined in \defi{prop_cpl},
	\item couple $\st{X_t}_{T_0\leq t\leq T}, \st{Y_t}_{T_0\leq t\leq T}$ by using the non-Markovian coupling defined in \defi{non_mark_cpl}.
\end{enumerate}
Define the events
\begin{align*}
	\EE_1 &= \st{\norm{A_{T_0}-B_{T_0}}_1\geq n^{-a}}\enspace,\\
	\EE_2 &= \st{\PP_{T_0,1}\neq \st{\st{1,\dots,n}}}\enspace,\\
	\EE_3 &= \st{X_{T}\neq Y_{T}}\enspace.
\end{align*}
By \lem{coupling_lemma},
{\small\begin{align}
	\sup_{X_0\in \SR^n} \norm{\mathcal{L}\!\br{X_{T}}-\mu}_{\mathrm{TV}}
	\leq \sup_{X_0\in \SR^n}\Pr\!\Br{\EE_3}\leq \sup_{X_0\in \SR^n}
	\br{\Pr\!\Br{\EE_1}+\Pr\!\Br{\EE_2}+\Pr\!\Br{\EE_3\cap \EE_1^c \cap \EE_2^c}}\enspace.\label{eq:tv}
\end{align}}
By Markov's inequality, we have
\begin{align}
	\Pr\!\Br{\EE_1} &= \Pr\!\Br{\norm{A_{T_0}-B_{T_0}}_1\geq n^{-a}}\nonumber\\
	&\leq \Pr\!\Br{\norm{A_{T_0}-B_{T_0}}_2\geq n^{-a-1/2}}\nonumber\\
	\text{(\lem{contraction_lemma})}&\leq n^{2a+1}\cdot2\cdot\br{\frac{3}{4}}^{T_0}\leq \frac{1}{n^2}\enspace. \label{eq:event1}
\end{align}
Moreover, by \lem{good_start}, we have
\begin{align}
	\Pr\!\Br{\EE_2}\leq 2n^{-2}\enspace.\label{eq:event2}
\end{align}
In order to bound $\Pr\!\Br{\EE_3\cap \EE_1^c \cap \EE_2^c}$,
recall the definition of $\A(t,k)$ in \defi{AT1}.
It is evident that $\bigcap_{(t,k)\sqsubseteq (T,1)}\A(t,k)$ implies $\EE_3^c$. Thus, $\EE_3\cap \EE_1^c \cap \EE_2^c$ implies $\bigcup_{(t,k)\sqsubseteq (T,1)}\A(t,k)^c \cap \EE_1^c \cap \EE_2^c$. So, we have
\begin{align}
\Pr\!\Br{\EE_3\cap \EE_1^c \cap \EE_2^c}
&\leq
\Pr\!\Br{\bigcup_{(t,k)\sqsubseteq (T,1)}\A(t,k)^c \cap \EE_1^c \cap \EE_2^c} \nonumber\\
&\leq
\sum_{t=T_0}^{T-1} \sum_{k=1}^{m} \Pr\!\Br{\A(t,k+1)^c \cap \br{\bigcap_{(t',k')\sqsubseteq (t,k)}\A(t',k')} \cap \EE_1^c \cap \EE_2^c}\nonumber\\
&\quad \quad + \Pr\!\Br{\A(T_0,1)^c \cap \EE_1^c \cap \EE_2^c}\enspace.\label{eq:e3}
\end{align}
Notice that if $\EE_2^c$ happens, we have
$$
\sum_{i\in [n]} X_{T_0,1} [i]^2 = \sum_{i\in [n]} Y_{T_0,1} [i]^2 = 1\enspace,\\
$$
and the proportional coupling forces $X_{T_0,1} [i]\cdot Y_{T_0,1} [i] = X_{T_0} [i]\cdot Y_{T_0} [i]\geq 0$ for all $1\leq i\leq n$.
Therefore, $A(T_0,1)$ must occur, i.e.,
\begin{align}
	\Pr\!\Br{\A(T_0,1)^c \cap \EE_1^c \cap \EE_2^c}=0\enspace.\label{eq:imps}
\end{align}
Combining \eq{e3} and \eq{imps}, we have
\begin{equation}
	\Pr\!\Br{\EE_3\cap \EE_1^c \cap \EE_2^c}\leq
	\sum_{t=T_0}^{T-1} \sum_{k=1}^{m} \Pr\!\Br{\A(t,k+1)^c \cap \br{\bigcap_{(t',k')\sqsubseteq (t,k)}\A(t',k')} \cap \EE_1^c \cap \EE_2^c}	\enspace.\label{eq:e3final}
\end{equation}

We are now left to find a upper bound for $$\Pr\!\Br{\A(t,k+1)^c \cap \br{\bigcap_{(t',k')\sqsubseteq (t,k)}\A(t',k')} \cap \EE_1^c \cap \EE_2^c}$$ when $T_0\leq t\leq T-1$ and $1\leq k\leq m$. To this end, we define
$$
\B(t,k) = \st{\min_{(t',k')\sqsubseteq (t,k):t'\leq t}\min_{1\leq i\leq n} Y_{t',k'}[i]^2\geq n^{-b}}\enspace.
$$
Note that
\begin{align}\label{eq:e3single}
&\Pr\!\Br{\A(t,k+1)^c \cap \br{\bigcap_{(t',k')\sqsubseteq (t,k)}\A(t',k')} \cap \EE_1^c \cap \EE_2^c}\nonumber\\
\leq &\Pr\!\Br{\A(t,k+1)^c \cap \br{\bigcap_{(t',k')\sqsubseteq (t,k)}\A(t',k')} \cap \B(t,k) \cap \EE_1^c \cap \EE_2^c} + \Pr\!\Br{\B(t,k)^c}\enspace.	
\end{align}
By \lem{marg_of_haar} and a union bound over all $(t',k')$ such that $(t',k')\sqsubseteq (t,k)$ and $t'\leq t$, we have for sufficiently large $n$,
\begin{equation}\label{eq:e3single2}
	\Pr\!\Br{\B(t,k)^c} \leq 15n^{3-\frac{b}{3}}\log(n)\enspace.
\end{equation}

Next, we consider two cases of the term
$$\Pr\!\Br{\A(t,k+1)^c \cap \br{\bigcap_{(t',k')\sqsubseteq (t,k)}\A(t',k')} \cap \B(t,k) \cap \EE_1^c \cap \EE_2^c}$$
in \eq{e3single}: $(t,k)\notin H$ and $(t,k)\in H$,
where $H$ is defined in \defi{non_mark_cpl}.
In the case that $(t,k)\notin H$, we have $\PP_{t,k} = \PP_{t,k+1}$ and we apply the proportional coupling. Thus
$\A(t,k)$ implies $\A(t,k+1)$ which means
\begin{align}\label{eq:e3single11}
	\Pr\!\Br{\A(t,k+1)^c \cap \br{\bigcap_{(t',k')\sqsubseteq (t,k)}\A(t',k')} \cap \B(t,k) \cap \EE_1^c \cap \EE_2^c}=0\enspace.
\end{align}

In the other case that $(t,k)\in H$,
let
$$
A=\sum_{i\in S_{u_{t,k}}(t,k+1)\backslash i^{(t)}_k} X_{t,k}[i]^2\enspace,
\quad \quad
B=X_{t,k} [i^{(t)}_k]^2+X_{t,k} [j^{(t)}_k]^2\enspace,
$$
$$
C=\sum_{i\in S_{u_{t,k}}(t,k+1)\backslash i^{(t)}_k} Y_{t,k}[i]^2
\enspace,\quad \quad
D=Y_{t,k}[i^{(t)}_k]^2+Y_{t,k}[j^{(t)}_k]^2\enspace,
$$
$$
S = A + B\cos({\agl{t}{k}})^2
\enspace,\quad \quad
S'= C + D\cos({\aglp{t}{k}})^2\enspace.
$$
On the event $\bigcap_{(t',k')\sqsubseteq (t,k)}\A(t',k') \cap \B(t,k) \cap \EE_1^c \cap \EE_2^c$, we have by \lem{never_drift}
\begin{align*}
	\abs{A-C} \leq \norm{A_{t,k}-B_{t,k}}_1\leq n\norm{A_{T_0}-B_{T_0}}_1 \leq n^{1-a}\enspace.
\end{align*}
Similarly,
\begin{align*}
	\abs{B-D} \leq \norm{A_{t,k}-B_{t,k}}_1\leq n\norm{A_{T_0}-B_{T_0}}_1 \leq n^{1-a}\enspace.
\end{align*}
Moreover, $D\geq n^{-b}$ and $B\geq D-\abs{B-D}\geq n^{-b}$ for sufficiently large $n$.
Then apply \lem{good_dist} with $p=b,~ q=a-1$ and $q'=\frac{2(a-1)}{5}$, we know there exists a distribution $\nu_0$ such that when $({\agl{t}{k}},{\aglp{t}{k}})\sim \nu_0$, we have
$$
	\cos({\agl{t}{k}})\cos({\aglp{t}{k}})\geq 0 \quad\text{ and }\quad
	\sin({\agl{t}{k}})\sin({\aglp{t}{k}})\geq 0
$$
$$
\Pr_{({\agl{t}{k}},{\aglp{t}{k}})\sim \nu_0}
\Br{ S\neq S' | \bigcap_{(t',k')\sqsubseteq (t,k)}\A(t',k') \cap \B(t,k) \cap \EE_1^c \cap \EE_2^c} \leq 6\times 10^3n^{-\frac{a-1}{5}}\enspace.
$$
We choose the best distribution which maximizes the probability of event described in \eq{cond1} and \eq{cond2}, so
\begin{align} \label{eq:e3single12}
	&\Pr\!\Br{\A(t,k+1)^c \cap \br{\bigcap_{(t',k')\sqsubseteq (t,k)}\A(t',k')} \cap \B(t,k) \cap \EE_1^c \cap \EE_2^c} \nonumber\\
	\leq  &\Pr\!\Br{\left.\A(t,k+1)^c \right|\bigcap_{(t',k')\sqsubseteq (t,k)}\A(t',k') \cap \B(t,k) \cap \EE_1^c \cap \EE_2^c} \nonumber\\
	\leq &\Pr_{({\agl{t}{k}},{\aglp{t}{k}})\sim \nu_0}\Br{\left.\A(t,k+1)^c \right|\bigcap_{(t',k')\sqsubseteq (t,k)}\A(t',k') \cap \B(t,k) \cap \EE_1^c \cap \EE_2^c} \nonumber \\
	= &\Pr_{({\agl{t}{k}},{\aglp{t}{k}})\sim \nu_0}\Br{\left.S\neq S' \right|\bigcap_{(t',k')\sqsubseteq (t,k)}\A(t',k') \cap \B(t,k) \cap \EE_1^c \cap \EE_2^c} \leq 6\times 10^3n^{-\frac{a-1}{5}}\enspace.
\end{align}

Combining \eq{e3final}, \eq{e3single}, \eq{e3single2}, \eq{e3single11} and \eq{e3single12}, we have for sufficiently large $n$
\begin{align} \label{eq:event3}
	\Pr\!\Br{\EE_3\cap \EE_1^c \cap \EE_2^c}\leq \sum_{t=T_0}^{T-1} \sum_{k=1}^{m} 6\times 10^3n^{-\frac{a-1}{5}} + 15n^{3-\frac{b}{3}}\log(n) \leq   \frac{1}{n^2}\enspace.
\end{align}

By \eq{tv}, \eq{event1}, \eq{event2} and \eq{event3}, we have for sufficiently large $n$ and $T=515\log n$
\begin{align*}
	\sup_{X_0\in \SR^n} \norm{\mathcal{L}\br{X_{T}}-\mu}_{\mathrm{TV}} \leq \sup_{X_0\in \SR^n}\Pr\!\Br{\EE_3} \leq  \frac{1}{2n} \enspace.
\end{align*}

As for $T= c \log n$ where $c > 515$,
by \lem{fcts_tv}
we have
\begin{align*}
	\sup_{X_0\in \SR^n} \norm{\mathcal{L}\br{X_{T}}-\mu}_{\mathrm{TV}}\leq 2\br{\frac{1}{n}}^{\lint{\frac{T}{515\log n}}} \leq  \frac{1}{2^{\br{c/515-1}\log n - 1}}\enspace.	
\end{align*}

\end{proof}

\subsubsection{Complex Case}
The output distribution of
the parallel Kac's walk on complex vectors
after $T$ steps
approaches the Haar measure
on the complex unit sphere of $\C^n$
exponentially fast as $T$ grows.
We restate the theorem here.
\begin{reptheorem}{thm:mixing_time_c}
	Let $\st{X_t\in\SC^{n}}_{t\geq 0}$
	be a Markov chain that evolves
	according to the parallel Kac's walk
	on complex vectors.
	Then, for sufficiently large $n$, $c>515$ and $T = c\log n$,
	$$
	\sup_{X_0\in \SC^n}\! \norm{\mathcal{L}\!\br{X_{T}}-\mu}_{\mathrm{TV}} \leq  \frac{1}{2^{\br{c/515-1}\log n-1}}\enspace,
	$$
	where $\mu_{\C}$ is the Haar measure on $\SC^{n}$.
\end{reptheorem}

We defer the complete proof to \app{deferred proofs},
as it is similar to the proof of \thm{mixing_time}.

\subsection{Construction of \srss}
\label{sec:provedrss}

\subsubsection{Real Case}
The following theorem proves that the \rss~we construct over real space is also a \srss.

\begin{theorem}\label{thm:kactosrss}
	Let $n\in\N$, $d = \log^2\!\secpar+\log^2\!n$ and $T = 515 (\secpar + 1) n$.
	The ensemble of unitary operators
	$\enssrsscons$ defined in \defi{kactorss}
	is a \srss.
\end{theorem}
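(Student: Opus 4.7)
The uniformity is inherited from \thm{kactorss}, since only the step count $T$ has been enlarged. It remains to verify the Wasserstein Approximation of Haar randomness. Fix any $\ket{\phi}\in\S(\H)$ and write $\nu$ for the distribution of $\srsscons_{(\sigma)_T,(f)_T}\!\ket{\phi}$ under uniformly random discrete keys. The plan is to take the witness distribution $\widetilde{\nu}$ to be the distribution of $\csrsscons_{(\sigma)_T,(\widetilde{f})_T}\!\ket{\phi}$ under the continuous parallel Kac's walk, with $\widetilde{f}_t:\bit{n-1}\to[0,1)$ uniformly random.

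The $W_\infty$ bound between $\nu$ and $\widetilde{\nu}$ reuses the discretization coupling from the proof of \thm{kactorss}: share all permutations across both chains and let each continuous angle extend its discrete counterpart in its first $d$ binary digits. \fct{uprod} together with \lem{k_hk_close} then yield, for every realization in the coupling, $\| \srsscons_{(\sigma)_T,(f)_T} - \csrsscons_{(\sigma)_T,(\widetilde{f})_T} \|_\infty \leq T\cdot 2^{1-d}\pi$. Plugging in $T=515(\secpar+1)n$ and $d=\log^2\!\secpar+\log^2\! n$ makes the right-hand side a negligible function of $\secpar$, which I take as $\delta'$.

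For the TV bound, a draw from $\widetilde{\nu}$ is exactly the state after $T$ steps of the parallel Kac's walk on $\SR^{2^n}$ starting from $\ket{\phi}$. Invoking \thm{mixing_time} with the underlying sphere dimension $N=2^n$ and the choice $c=515(\secpar+1)$ (so that $T=c\log N$) immediately gives $\tv{\widetilde{\nu}-\mu}\leq 2^{-(\secpar n-1)}$, which I take as $\delta$. Both $\delta$ and $\delta'$ are negligible in $\secpar$.

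For the $\epsilon$-net property, my plan is to combine the $W_\infty$ coupling with a density statement for $\widetilde{\nu}$. Because the coupling forces the support of $\nu$ to be $\delta'$-dense in the support of $\widetilde{\nu}$, it suffices to show the latter is $\epsilon_0$-dense in $\Sout$ for some $\epsilon_0=\negl{\secpar}$ and then set $\epsilon=\epsilon_0+\delta'$. This density should follow from a structural argument about the continuous walk: since each layer injects $2^{n-1}$ independent continuous angles, and the number of layers $T=515(\secpar+1)n$ is much larger than the $2^n-1$ real dimensions of $\Sout$, the image of any fixed $\ket{\phi}$ under the walk is expected to be (the closure of) all of $\Sout$. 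I expect this surjectivity/density step to be the principal technical obstacle; the cleanest route is a layerwise inductive argument showing that the image already grows to fill a dense subset after a small number of layers, while controlling degenerate configurations (coordinates near zero) via \lem{marg_of_haar}. Once density is established, the triangle inequality combines $\epsilon$, $\delta$, $\delta'$ into the required negligible parameters.
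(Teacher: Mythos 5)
Your overall decomposition matches the paper's: same witness distribution $\widetilde{\nu}$ (the continuous parallel Kac's walk output), same discretization coupling via \fct{uprod} and \lem{k_hk_close} for the $W_\infty$ bound, and the same invocation of \thm{mixing_time} for the TV bound. Those parts are correct and essentially identical to the paper's argument.

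The genuine gap is the $\epsilon$-net step, which you explicitly leave as "the principal technical obstacle" with only a dimension-counting heuristic. Parameter counting (many layers $\times$ many angles exceeds $2^n-1$ real dimensions) does not establish density of the image, and the route you sketch — a layerwise inductive growth argument with degeneracies controlled by \lem{marg_of_haar} — does not work as stated: \lem{marg_of_haar} concerns coordinates of a \emph{Haar-random} vector, whereas the question here is deterministic reachability from a fixed $\ket{\phi}$ under adversarially chosen keys. The paper closes this gap with \lem{realNNfullmap}, which proves the much stronger statement $\widetilde{\NN}=\SR^{2^n}$ (exact surjectivity, not mere density), and does so in only $n$ steps rather than by accumulating $T$ layers. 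The construction is explicit: at step $t$ take $\sigma_t(x)=\overline{x_t x_1\dots x_{t-1}x_{t+1}\dots x_n}$, so that the walk pairs indices differing in bit $t$, and choose the continuous angle function $\widetilde{f}_t$ so that after step $t$ every block of coordinates sharing a length-$t$ prefix has the same $2$-norm in the current state as in the target $\ket{\xi}$; a final step fixes the signs/values, giving $\widetilde{K}_{\sigma_n,\widetilde{f}_n}\cdots\widetilde{K}_{\sigma_1,\widetilde{f}_1}\ket{\phi}=\ket{\xi}$. Once this is in hand, your $\delta'$-coupling does give that $\NN$ is an $\epsilon$-net with $\epsilon=\delta'$, exactly as in \lem{realNNepsnet}. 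Without some version of this reachability lemma your proof of the dispersing property is incomplete.
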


To prove \thm{kactosrss}, recall the ensemble of unitary operators $\enscsrsscons \coloneq \st{\csrsscons }_{\secpar}$ we define in \sect{rss_in_real}.
We have the following proposition for $\enscsrsscons$.

\begin{proposition}
	For $T = 515 (\secpar + 1) n $,
	the ensemble of unitary operator $\enscsrsscons$
	is a \csrss.
\end{proposition}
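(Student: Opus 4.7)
The plan is to observe that a single application of $\widetilde{K}_{\sigma,\widetilde{f}}$ with uniformly random $\sigma$ and $\widetilde{f}$ realizes exactly one step of the parallel Kac's walk on $\SR^{2^n}$ from \sect{kac}, and then to invoke \thm{mixing_time} to bound the total variation distance to the Haar measure.

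First I would fix the computational basis and identify $\ket{\phi}=\sum_x p_x\ket{x}\in\S(\H)$ with its coefficient vector $v\in\SR^{2^n}$. The stepwise state evolution worked out in \sect{rss_in_real} (the continuous-angle version being completely analogous) shows that $\widetilde{K}_{\sigma,\widetilde{f}}=U_{\sigma^{-1}}\widetilde{W}_{\widetilde{f}}U_\sigma$ acts on $v$ by rotating, for each $y\in\bit{n-1}$, the two-dimensional subvector indexed by the pair $\{\sigma^{-1}(\overline{0y}),\sigma^{-1}(\overline{1y})\}$ by angle $\widetilde{\theta}_y=2\pi\widetilde{f}(y)$, and leaving all other coordinates untouched.

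Next I would verify two distributional identifications. First, a uniformly random $\sigma\in S_{2^n}$ induces the uniform distribution over perfect matchings of $[2^n]$, by the standard counting observation that every matching is realized by exactly $2^{2^{n-1}}(2^{n-1})!$ permutations. Second, when $\widetilde{f}:\bit{n-1}\to[0,1)$ is uniform, the $2^{n-1}$ angles $\{\widetilde{\theta}_y\}$ are i.i.d.\ uniform on $[0,2\pi)$. Combining these, one application of $\widetilde{K}_{\sigma,\widetilde{f}}$ under uniform $(\sigma,\widetilde{f})$ coincides in distribution with a single step of the parallel Kac's walk, so by independence of the $T$ pairs $(\sigma_t,\widetilde{f}_t)$, the output vector of $\csrsscons_{\rep{\sigma}{T},\rep{\cf}{T}}\!\ket{\phi}$ has distribution $\mathcal{L}(X_T)$, where $\{X_t\}$ is the parallel Kac's walk started at $X_0=v$.

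The conclusion is then immediate from \thm{mixing_time} applied with walk dimension $N=2^n$: since $T=515(\secpar+1)n=c\log N$ with $c=515(\secpar+1)>515$, the theorem gives a total variation distance of order $2^{-\secpar n}$ between $\mathcal{L}(X_T)$ and the Haar measure on $\SR^{N}$, which is $\negl{\secpar}$. This is precisely the Total-Variation Approximation of Haar randomness required by \defi{ctrss}. I do not anticipate any serious obstacle here: the walk-to-unitary correspondence and \thm{mixing_time} are both already in place, and the only mild subtlety is the matching-uniformity claim, which reduces to a straightforward counting argument.
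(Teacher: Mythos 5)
Your proposal is correct and follows essentially the same route as the paper: the paper's proof simply observes that a uniformly random $\csrsscons_{\rep{\sigma}{T},\rep{\cf}{T}}$ realizes a $T$-step parallel Kac's walk on $\SR^{2^n}$ and then invokes \thm{mixing_time}. You merely spell out the two distributional identifications (uniform $\sigma$ induces a uniform perfect matching, and a uniform $\cf$ gives i.i.d.\ uniform angles) that the paper leaves implicit.
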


\begin{proof}
	Note that a uniformly random
	$\csrsscons_{\rep{\sigma}{T},\rep{\cf}{T}}$
	corresponds to a $T$-step parallel Kac's walk on $\SR^{2^n}$.
	The proposition then follows from \thm{mixing_time} and the definition of the \csrss.
\end{proof}

Let $\ket{\eta}\in\S(\H)$ be an arbitrary real state.
Set
\begin{equation}\label{eq:NN}
	\NN = \st{\srsscons_{\rep{\sigma}{T},\rep{f}{T}}\!\!\ket{\eta}}
	\enspace\enspace \text{and}\enspace\enspace
	\widetilde{\NN} = \st{\csrsscons_{\rep{\sigma}{T},\rep{\cf}{T}}\!\!\ket{\eta}}\enspace.
	\end{equation}
We need the following two lemmas. Both proofs are deferred to \app{deferred proofs}.

\begin{lemma}\label{lem:realNNfullmap}
$\widetilde{\NN}= \SR^{2^n}$.
\end{lemma}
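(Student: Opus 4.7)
The plan is to show, for any target $\ket{\psi} \in \SR^{2^n}$, that there exist permutations $\sigma_1, \ldots, \sigma_T$ and continuous angle functions $\widetilde{f}_1, \ldots, \widetilde{f}_T$ with $\csrsscons_{\rep{\sigma}{T}, \rep{\cf}{T}}\ket{\eta} = \ket{\psi}$; the reverse inclusion $\widetilde{\NN} \subseteq \SR^{2^n}$ is immediate because each $\ck_{\sigma, \widetilde{f}}$ is real orthogonal. The freedom to exploit is that a single step $\ck_{\sigma, \widetilde{f}}$ is parametrized by an \emph{arbitrary} perfect matching of $\{0,1\}^n$ (realized by appropriately specifying $\sigma^{-1}(\overline{0y})$ and $\sigma^{-1}(\overline{1y})$ for each $y \in \{0,1\}^{n-1}$) together with an \emph{arbitrary} angle in $[0, 2\pi)$ for each pair.

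The core subroutine will be a \emph{compression} step: starting from a state supported on $2^k$ coordinates, a single $\ck_{\sigma, \widetilde{f}}$ produces a state supported on $2^{k-1}$ coordinates. To implement this, choose $\sigma$ so that the induced matching partitions the $2^k$ support indices into $2^{k-1}$ pairs (pairing zero-support indices arbitrarily), and for each support pair $(i,j)$ pick the angle sending $(X[i], X[j])^T \mapsto (\sqrt{X[i]^2 + X[j]^2},\, 0)^T$, where the surviving coordinate is the one playing the role of $\overline{0y}$ in the walk's convention. Iterating $n$ times transports $\ket{\eta}$ to $\ket{0^n}$: the support halves at each step, every surviving amplitude is nonnegative by construction, and with consistent choice of the surviving index in each pair the final support is exactly $\{0^n\}$, forcing the nonnegative unit vector to equal $+\ket{0^n}$.

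For the expansion $\ket{0^n}\mapsto\ket{\psi}$, run the compression routine on $\ket{\psi}$ instead of $\ket{\eta}$ to obtain an $n$-step sequence mapping $\ket{\psi}$ to $\ket{0^n}$, then invert. Each single-step inverse $\ck_{\sigma, \widetilde{f}}^{-1}$ remains a valid step of the same form: it equals $\ck_{\sigma, \widetilde{f}'}$ with $\widetilde{f}'(y) = (1 - \widetilde{f}(y)) \bmod 1 \in [0,1)$, since inverting a planar rotation by $\theta$ is rotation by $-\theta$. Concatenating compression of $\ket{\eta}$ with expansion to $\ket{\psi}$ thus yields a $2n$-step product realizing $\ket{\eta}\mapsto\ket{\psi}$, and the remaining $T - 2n \geq 0$ steps are padded by the identity, which is achieved by taking $\sigma_i = \id$ and $\widetilde{f}_i \equiv 0$, so that $\widetilde{W}_{\widetilde{f}_i} = \id$ and hence $\ck_{\sigma_i, \widetilde{f}_i} = \id$.

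No substantial technical obstacle is expected, since the argument is entirely constructive. The only mildly delicate piece is preserving the ``which coordinate survives'' invariant throughout compression so that the final state is precisely $\ket{0^n}$ and not some other basis vector; this is handled by choosing $\sigma$ at each step so that the surviving index of every pair lands in the standard $\overline{0y}$ slot, followed by an easy induction on $n$ placing the single survivor at $0^n$. Since $T = 515(\secpar+1)n$ comfortably exceeds $2n$, the identity padding poses no issue.
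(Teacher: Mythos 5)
Your proposal is correct, but it takes a genuinely different route from the paper's proof. The paper constructs the map $\ket{\eta}\mapsto\ket{\xi}$ \emph{directly} in $n$ steps: at step $t$ it uses the permutation $\sigma_t(x)=\overline{x_tx_1\dots x_{t-1}x_{t+1}\dots x_n}$ to pair indices according to their $t$-th bit, and chooses each rotation angle so that, after step $t$, the $2$-norm of the sub-vector of $\ket{\eta_t}$ over every prefix class $S_y$ ($y\in\bit{t}$) equals the corresponding norm $L_y(\ket{\xi})$ of the target; the final step then fixes the actual entries (and signs). You instead route through the canonical state $\ket{0^n}$: a compression subroutine that halves the support per step sends $\ket{\eta}\to\ket{0^n}$ in $n$ steps with all surviving amplitudes nonnegative, the same subroutine applied to $\ket{\psi}$ is inverted using the closure of the family under inversion ($\ck_{\sigma,\cf}^{-1}=\ck_{\sigma,\cf'}$ with $\cf'=(1-\cf)\bmod 1$), and the concatenation is padded to length $T$ with identities exactly as the paper does. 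All the pieces check out: any perfect matching with designated survivors is realizable by some $\sigma$, the angle zeroing one coordinate of each pair always exists in $[0,2\pi)$, nonnegativity of survivors forces the sign $+\ket{0^n}$ at the end, and $T=515(\secpar+1)n\ge 2n$. Your approach buys modularity (one subroutine, verified once, plus invertibility) at the cost of $2n$ rather than $n$ steps, which is immaterial here; the paper's forward bisection is tighter and generalizes more directly to the complex case in \lem{complexNNfullmap}, where tracking the target's mass distribution level by level is reused.
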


\begin{lemma}\label{lem:realNNepsnet}
There exists an $\epsilon  = \negl{\secpar}$  such that $\NN$
	is an $\epsilon$-net for real vectors in $\S(\H)$, where $\NN$ is defined in \cref{eq:NN}.
\end{lemma}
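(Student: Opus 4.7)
The plan is to reduce to \lem{realNNfullmap} by a discretization argument. By \lem{realNNfullmap}, the continuous scrambler ensemble $\csrsscons$ applied to $\ket{\eta}$ already sweeps out the entire real sphere $\SR^{2^n}$, so for any target real unit vector $\ket{\psi} \in \S(\H)$ there exist permutations $\rep{\sigma}{T}\in S_{2^n}$ and continuous functions $\rep{\cf}{T}$ with $\cf_t:\bit{n-1}\to[0,1)$ such that $\csrsscons_{\rep{\sigma}{T},\rep{\cf}{T}}\!\ket{\eta}=\ket{\psi}$. The idea is then to round each $\cf_t$ to the $d$-bit discretized function $f_t:\bit{n-1}\to\bit{d}$ by taking $f_t(y)$ to be the first $d$ binary digits after the point of $\cf_t(y)$, and show that the resulting element $\srsscons_{\rep{\sigma}{T},\rep{f}{T}}\!\ket{\eta}\in\NN$ is close to $\ket{\psi}$.

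The main tool is \lem{k_hk_close}, which gives the single-step bound $\norm{K_{\sigma_t,f_t}-\widetilde{K}_{\sigma_t,\cf_t}}_\infty \le 2^{1-d}\pi$. Combining this with the telescoping bound \fct{uprod} applied to the two products of $T$ unitaries yields
\[
\norm{\srsscons_{\rep{\sigma}{T},\rep{f}{T}} - \csrsscons_{\rep{\sigma}{T},\rep{\cf}{T}}}_\infty \;\le\; T\cdot 2^{1-d}\pi.
\]
Acting on $\ket{\eta}$ and using the triangle inequality, I obtain
\[
\norm{\srsscons_{\rep{\sigma}{T},\rep{f}{T}}\!\ket{\eta} - \ket{\psi}}_2 \;\le\; 2\pi\, T\cdot 2^{-d}.
\]

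Finally, I plug in the parameters from the statement: $T = 515(\secpar+1)n$ and $d=\log^2\secpar+\log^2 n$. Then
\[
\epsilon \;:=\; 2\pi\cdot 515(\secpar+1)n\cdot 2^{-\log^2\secpar-\log^2 n} \;=\; \frac{1030\pi(\secpar+1)n}{\secpar^{\log\secpar}\cdot n^{\log n}} \;=\; \negl{\secpar},
\]
which gives the required $\epsilon$-net property. No serious obstacle arises: the surjectivity of the continuous ensemble is handed to us by \lem{realNNfullmap}, and the discretization error is controlled uniformly by the single-step perturbation bound, so the work is essentially checking that the bookkeeping of $T$ and $d$ makes $T\cdot 2^{-d}$ negligible. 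The only care needed is to make sure the rounded function $f_t$ lies in the same domain of $\srsscons$ used in the definition of $\NN$, which is immediate from the construction of $f_t$ as the first $d$ bits of the binary expansion of $\cf_t$.
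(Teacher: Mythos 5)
Your proposal is correct and follows essentially the same route as the paper: the paper likewise rounds each $\cf_t$ to its first $d$ bits, applies \lem{k_hk_close} together with \fct{uprod} to get the bound $2^{1-d}\pi T = \frac{1030\pi(\secpar+1)n}{\secpar^{\log\secpar} n^{\log n}}$, and then invokes \lem{realNNfullmap} to cover all of $\SR^{2^n}$. The only cosmetic difference is the order of quantification (the paper shows $\NN$ is an $\epsilon$-net for $\widetilde{\NN}$ and then cites surjectivity, while you start from the target vector), which is logically equivalent.
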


\begin{proof}[Proof of \thm{kactosrss}]	
It is easy to see that the uniformity condition is satisfied.
	Let $\kappa$ denote the key length.
	Quantum circuit $\srsscons$ applies $\srsscons_{\rep{\sigma}{T},\rep{f}{T}}$ after reading $\rep{\sigma}{T}$ and $\rep{f}{T}$.
	To implement $\srsscons_{\rep{\sigma}{T},\rep{f}{T}}$, we need to realize each of
	the $T = 515 (\secpar + 1) n$ unitary gates $K$.
	Since each gate $K$ can be implemented in $\poly{n,\secpar,\klen}$ time,
	the total construction time for $\srsscons_{\rep{\sigma}{T},\rep{f}{T}}$
	is also $\poly{n,\secpar,\klen}$.

Combining with \lem{realNNepsnet}, it suffices to prove that there exists a good distribution $\widetilde{\nu}$ satisfying the requirement in \defi{trssb}.
	Fix $\ket{\eta}\in\S(\H)$. Define three distributions:
	\begin{itemize}
		\item $\nu$ be the distribution of
			$\srsscons_{\rep{\sigma}{T}, \rep{f}{T}}\!\!\ket{\eta}$
			with independent and uniformly random permutations $\rep{\sigma}{T}\subseteq S_{2^n}$
			and random functions $\rep{f}{T}$$\subseteq \{f: \bit{n-1}\to\bit{d}\}$.
		\item $\widetilde{\nu}$ be the distribution of
			$\csrsscons_{\rep{\sigma}{T},\rep{\cf}{T}}\!\!\ket{\eta}$
			with independent and uniformly random permutations $\rep{\sigma}{T}\subseteq S_{2^n}$,
			and random functions $\rep{\cf}{T} \subseteq\{f: \bit{n-1}\to[0,1)\}$.
		\item $\mu$ be the Haar measure on $\SR^{2^n}$.
	\end{itemize}
	Note that $\widetilde{\nu}$ is the output distribution of $T$-step parallel Kac's walk. Thus by \thm{mixing_time}, we have
	\begin{align}\label{eq:closetv}
		\tv{\widetilde{\nu} - \mu} \leq \frac{1}{2^{\secpar n - 1}} = \negl{\secpar}\enspace.
	\end{align}
	We are left to show the Wasserstein $\infty$-distance between $\nu$ and $\widetilde{\nu}$ is negligible.
	To this end, we construct a coupling $\gamma_0$ of $\nu$ and $\widetilde{\nu}$ by using the same permutation $\sigma_t$ and letting $f_t$ be the function satisfying $f_t(y)$ is the $d$ digits after the binary point in $\widetilde{f}_t(y)$ for all $y\in\bit{n-1}$. Therefore
	\begin{align}\label{eq:closewinfty}
		&W_{\infty} ( \nu, \widetilde{\nu})\nonumber \\
		=
		~& \lim_{p\to\infty}\br{ \inf_{\gamma\in\Gamma(\nu,\widetilde{\nu})} \E_{(\ket{v},\ket{u})\sim \gamma}\!\Br{\norm{\ket{v}-\ket{u}}_2^p} }^{1/p} \nonumber\\
		\leq ~& \lim_{p\to\infty}\br{ \E_{(\ket{v},\ket{u})\sim \gamma_0}\!\Br{\norm{\ket{v}-\ket{u}}_2^p} }^{1/p}
		~\overset{\text{(Eq.~\eq{uvclose})}}{\leq} ~
		\frac{1030\pi (\secpar+1)n}{\secpar^{\log \secpar}n^{\log n}}= \negl{\secpar} \enspace .
	\end{align}
	This completes the proof.
\end{proof}

\subsubsection{Complex Case}
Likewise, the ensemble of unitary operators built over complex space in \sect{rss_in_complex} turns out to be a \srss. The proof is provided in \app{deferred proofs}.

\begin{theorem}\label{thm:kactosrssc}
	Let $n\in\N$, $d = 2\br{\log^2\!\secpar + \log^2\!n}$ and $T = 515 (\secpar + 1) n$.
	The ensemble of unitary operators
	$ \enssrssconsc$
	defined in \defi{kactosrssc}
	is a \srss.
\end{theorem}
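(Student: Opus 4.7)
The plan is to mirror the structure of the real-case proof of \thm{kactosrss}, but work with the complex parallel Kac's walk and the three-angle decomposition of $U(\alpha,\beta,\theta)$. First I would introduce an intermediate ensemble $\enscsrssconsc \coloneq \st{\csrssconsc }_{\secpar}$, defined exactly like $\enssrssconsc$ but replacing each discretized gate $L_{\sigma,f,g,h}$ with its continuous counterpart $\widehat{L}_{\sigma,f,g,h}=U_{\sigma^{-1}}\widehat{Q}_{f,g,h}U_\sigma$ given in \eq{widehatq}, where the three angles are drawn continuously: $\theta_y=\arcsin\sqrt{\widetilde{f}(y)}$ with $\widetilde{f}(y)\in[0,1)$ uniform, and $\alpha_y=2\pi\widetilde{g}(y)$, $\beta_y=2\pi\widetilde{h}(y)$ with $\widetilde{g},\widetilde{h}\in[0,2\pi)$ uniform. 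A uniformly random $\csrssconsc_{(\sigma)_T,(\widetilde{f})_T,(\widetilde{g})_T,(\widetilde{h})_T}$ applied to $\ket{\eta}$ then implements exactly $T$ steps of the parallel Kac's walk on complex vectors defined in \sect{kac}. By \thm{mixing_time_c}, with $T=515(\secpar+1)n$, the output distribution $\widetilde{\nu}$ converges to the Haar measure $\mu$ on $\SC^{2^n}$ in total variation with $\tv{\widetilde{\nu}-\mu}\le 2^{-\secpar n+1}=\negl{\secpar}$, which establishes that $\enscsrssconsc$ is a \csrss.

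Next I would prove the discretization lemma analogous to \lem{k_hk_close}. Given $\widetilde f,\widetilde g,\widetilde h$ and their $d$-bit truncations $f,g,h$, I would bound $\norm{\widehat{L}_{\sigma,\widetilde{f},\widetilde{g},\widetilde{h}}-L_{\sigma,f,g,h}}_\infty$ by writing $\widehat{Q}$ and $Q$ as products of three block-diagonal matrices (the diagonal phase, the real rotation, and the other diagonal phase) and applying \fct{uprod} to the factorization. Each factor's operator-norm error is controlled by $\max_y \abs{2\pi\gamma^{\pm}_y - \tfrac{\alpha_y\pm\beta_y}{2}}\le 2^{1-d}\pi$ and $\max_y\abs{\tfrac{\pi}{2}\xi_y-\theta_y}\le 2^{-d-1}\pi$, using the elementary estimate $\norm{\mathrm{diag}(e^{i\phi_1},e^{-i\phi_1})-\mathrm{diag}(e^{i\phi_2},e^{-i\phi_2})}_\infty\le |\phi_1-\phi_2|$ and the sine half-angle bound in \lem{k_hk_close}. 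The total error per step is therefore $O(2^{-d})$, and summing over $T$ steps via \fct{uprod} yields a cumulative error of at most $O(T\cdot 2^{-d})=O\!\br{\tfrac{(\secpar+1)n}{\secpar^{\log\secpar}n^{\log n}}}=\negl{\secpar}$ after substituting $d=2(\log^2\secpar+\log^2 n)$.

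With the discretization bound in hand, I would construct a coupling $\gamma_0$ between $\nu$ (the output distribution of $\enssrssconsc$) and $\widetilde{\nu}$ (the output distribution of $\enscsrssconsc$) that shares the same permutation sequence $(\sigma)_T$ and identifies each $f_t,g_t,h_t$ with the $d$-bit truncation of $\widetilde{f}_t,\widetilde{g}_t,\widetilde{h}_t$. For any sample $(\ket{\phi},\ket{\varphi})\sim\gamma_0$, $\norm{\ket{\phi}-\ket{\varphi}}_2\le\norm{\srsscons_{(\sigma)_T,(f)_T,(g)_T,(h)_T}-\csrsscons_{(\sigma)_T,(\widetilde{f})_T,(\widetilde{g})_T,(\widetilde{h})_T}}_\infty=\negl{\secpar}$, which immediately gives $W_\infty(\nu,\widetilde{\nu})=\negl{\secpar}$. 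Combined with $\tv{\widetilde{\nu}-\mu}=\negl{\secpar}$ from the mixing-time step, this supplies the distribution $\widetilde{\nu}$ required by \defi{trssb}.

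The remaining piece is the $\epsilon$-net property: the set $\NN=\set{\srssconsc_{(\sigma)_T,(f)_T,(g)_T,(h)_T}\ket{\eta}}$ must be an $\epsilon$-net of $\SC^{2^n}$ for some $\epsilon=\negl{\secpar}$. I would follow the complex analogues of \lem{realNNfullmap} and \lem{realNNepsnet}: first argue that the continuous support $\widetilde{\NN}=\set{\csrssconsc_{(\sigma)_T,(\widetilde f)_T,(\widetilde g)_T,(\widetilde h)_T}\ket{\eta}}$ equals all of $\SC^{2^n}$ (since after enough parallel Kac steps with arbitrary unitary $2\times 2$ blocks, one can reach any target unit vector; this mirrors the real argument using transitivity of the group action), and then observe that every element of $\widetilde{\NN}$ lies within distance $O(T\cdot 2^{-d})=\negl{\secpar}$ of a point of $\NN$ by the same cumulative discretization bound derived above. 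The main obstacle I anticipate is precisely this surjectivity claim $\widetilde{\NN}=\SC^{2^n}$: the proof in the real case for \lem{realNNfullmap} exploits that $\so(2)$-blocks with free choice of matching can rotate any real unit vector to any other in a bounded number of parallel steps, and the extension to $\C$ requires tracking both the magnitude reallocation and the independent phase adjustments afforded by the $U(\alpha,\beta,\theta)$ parameterization, which have more degrees of freedom but are correspondingly richer; I expect a constructive argument (choose matchings and angles step-by-step to first match magnitudes, then fix phases) to go through. Assembling the pieces via the triangle inequality and invoking \thm{mixing_time_c} then yields the claim.
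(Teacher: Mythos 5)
Your proposal follows the same high-level structure as the paper's proof: introduce the continuous ensemble $\enscsrssconsc$, invoke \thm{mixing_time_c} to establish TV-closeness to Haar, establish a discretization bound between $\enssrssconsc$ and $\enscsrssconsc$ to get $W_\infty$-closeness, and prove the $\epsilon$-net property via surjectivity of the continuous scrambler plus discretization. The $\epsilon$-net part (which you flag as the anticipated obstacle) is indeed handled in the paper by a constructive matching-and-phase argument, with the one wrinkle that the complex case needs $n+1$ steps rather than $n$ because phases must be fixed after magnitudes are matched; your plan anticipates this correctly.

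There is, however, a genuine imprecision in your discretization analysis that you should be aware of. You claim the per-step operator-norm error is $O(2^{-d})$, citing the bounds $\abs{2\pi\gamma^\pm_y - \tfrac{\alpha_y\pm\beta_y}{2}}\le 2^{1-d}\pi$ and $\abs{\tfrac{\pi}{2}\xi_y - \theta_y}\le 2^{-d-1}\pi$. But these are the bounds for comparing the \emph{implementable} gate $L_{\sigma,f,g,h}$ to the \emph{exact} gate $\widehat L_{\sigma,f,g,h}$ at the \emph{same discrete} $f,g,h$ — they do not cover the comparison between the exact discrete gate $\widehat L_{\sigma,f,g,h}$ and the continuous gate $\widetilde L_{\sigma,\cf,\cg,\ch}$. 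That second comparison is where the dominant error arises: since $\theta_y = \arcsin\sqrt{\val{f(y)}}$ and $\widetilde\theta_y = \arcsin\sqrt{\cf(y)}$, a precision $\abs{\val{f(y)}-\cf(y)}\le 2^{-d}$ only yields $\abs{\sqrt{\val{f(y)}}-\sqrt{\cf(y)}}\le 2^{-d/2+1/2}$ (and $\arcsin$ is not Lipschitz near $1$, so the paper bounds the matrix entries $\cos\theta_y = \sqrt{1-\val{f(y)}}$, $\sin\theta_y = \sqrt{\val{f(y)}}$ directly rather than passing through the angle difference). The resulting per-step error is $2^{4-d/2}$, not $O(2^{-d})$. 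This is precisely why the theorem statement uses $d = 2(\log^2\secpar+\log^2 n)$ in the complex case — twice the exponent used in the real case — so that $2^{-d/2}$, not $2^{-d}$, comes out to $\secpar^{-\log\secpar}n^{-\log n}$. Your final cumulative bound $\tfrac{(\secpar+1)n}{\secpar^{\log\secpar}n^{\log n}}$ is in fact $T\cdot 2^{-d/2}$ (up to constants), inconsistent with your claimed per-step $2^{-d}$; had you taken the real-case $d=\log^2\secpar+\log^2 n$ on the strength of your per-step claim, the argument would have failed. So the idea is correct but the justification would need the split through the intermediate $\widehat L$ and the square-root estimate to be sound.

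A minor point: you write $\widehat{L}_{\sigma,f,g,h}$ for the continuous gate, but in the paper's notation $\widehat L$ still takes discrete inputs $f,g,h:\bit{n-1}\to\bit{d}$ (it is "exact" only in the sense of not approximating $\arcsin$); the continuous gate is $\widetilde L_{\sigma,\cf,\cg,\ch}$ with $\cf,\cg,\ch:\bit{n-1}\to[0,1)$. Keeping these distinct matters precisely because the two-step comparison $L\leftrightarrow\widehat L\leftrightarrow\widetilde L$ is what isolates the $\sqrt{\cdot}$ issue above.
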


\newpage
\section{Connections with Existing PRS variants}
\label{sec:connection}
In this section, we formally demonstrate that the existing $\prs$ and
its variants can all be constructed from $\prss$ in a black-box
manner.
\paragraph{(Scalable) Pseudorandom States.}
Originally, the definition of $\prs$ in~\cite{JLS18} identifies the
number of qubits $\dimin$ as the security parameter. Consequently,
the security of $\prs$ is not guaranteed when $n$ is small. This issue
was addressed by~\cite{BS20} through defining 
\emph{scalable $\prsg$s}, in which $n$ and $\secpar$ are treated
separately. This allows for the tuning of security when $\dimin <
\secpar$. We rephrase the scalable definition in a style that is
congruent with $\prss$.

\begin{definition}[(Scalable) \prsg]
  Let $\K=\{0,1\}^{\klen}$ be a key space,
  $\H$ be a Hilbert space of dimension $2^{\dimin}$ with
  $\dimin\in\N$,
  $\secpar$ be a security parameter.  
  A \emph{(scalable) pseudorandom state generator (\prsg)}~is an ensemble of unitaries
   \[\mathcal{G}^n:=\{\{\mathcal{G}^{n,\secpar}_{k}:\H\rightarrow \H\}_{k\in\K}\}_{\lambda}\]
   satisfying

   \begin{itemize}
   \item \textbf{Pseudorandomness.}  Any polynomially many $\ncopy$ copies of
     $\ket{\phi_k}$ with the same random $k$ is computationally
     indistinguishable from the same number of copies of a Haar random
     state.  More precisely, for any $n\in\N$, any efficient quantum
     algorithm $\A$ and any $\ncopy \in\poly{\lambda}$,
    \[\abs{\prob{k\gets \K}{\A\!\br{\ket{\phi_k}^{\otimes \ncopy}}=1}-\prob{\ket{\psi}\gets \mu}{\A\!\br{\ket{\psi}^{\otimes \ncopy}}=1}}=\negl{\lambda} \enspace,\]
    where $\ket{\phi_k}:=\mathcal{G}^{n,\lambda}_{\kappa}\!\ket{0^n}$ and $\mu$ is the Haar measure on $\S(\H)$.

  \item \textbf{Uniformity.}  $\mathcal{G}^n$ can be uniformly
    computed in polynomial time. That is, there is a deterministic
    Turing machine that, on input $(1^{\dimin},1^\secpar,1^\klen)$,
    outputs a quantum circuit $Q$ in $\poly{\dimin,\secpar,\klen}$
    time such that for all $k\in \K$ and $\ket{\phi}\in\Sin$
    \[ Q\ket{k}\!\ket{\phi} = \ket{k}\!\ket{\phi_k}\, , \] where
    $\ket{\phi_k} := \mathcal{G}^{n,\lambda}_{k}\! \ket{\phi}$.

  \item \textbf{Polynomially-bounded key length.}
    $\klen = \log |\K| = \poly{\dimout,\secpar}$. As a result,
    $\mathcal{G}^{\dimin}$ can be computed efficiently in time
    $\poly{\dimin,\secpar}$.
\end{itemize}

We informally call the keyed family of quantum states
$\st{\ket{\phi_k}}_{k\in\K}$ a \emph{(scalable) pseudorandom quantum state} in $\H$.
\label{def:scalableprsg}
\end{definition}

The existence of \prss s implies
the existence of (scalable) \prsg s~\cite{BS20,JLS18} straightforwardly by definition.

\begin{lemma}
  If $\SG{}{n,m}$ is a \prss~from $\hspacein$ to $\hspaceout$
  over a key space $\K$ with security parameter $\secpar$,
  $\st{ \SG{k}{n,m,\secpar}\ket{\phi}}_{k\in\K}$ is a (scalable) \prs~in $\hspaceout$ for any
  $\ket{\phi}\in\Sin$.
  \label{lem:prss2prs}
\end{lemma}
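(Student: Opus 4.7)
The plan is to verify, line by line, that for any fixed $\ket{\phi} \in \Sin$ the keyed family $\{\SG{k}{n,m,\secpar}\ket{\phi}\}_{k \in \K}$ satisfies the pseudorandomness, uniformity, and polynomially-bounded key length conditions of a scalable \prsg{} (\defi{scalableprsg}). Each condition will be shown to follow immediately from the corresponding property of the underlying \prss{} in \defi{prssb}, with only cosmetic adjustments needed to reconcile the two syntactic conventions.

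To define the induced \prsg{} $\mathcal{G}^{m,\secpar}_k$ acting on $\hspaceout$, I would take any fixed unitary extension of the isometry $\SG{k}{n,m,\secpar}$ to $\hspaceout$, composed with a small subcircuit that prepares $\ket{\phi} \otimes \ket{0^{m-n}}$ from the canonical initial state $\ket{0^m}$. In the typical case $\ket{\phi} = \ket{0^n}$ no preparation is required; otherwise, the preparation circuit can be supplied as non-uniform advice without affecting pseudorandomness. Uniformity then transfers directly from the \prss{} uniformity clause, since the circuit $Q$ satisfying $Q\ket{k}\ket{\phi} = \ket{k}\ket{\phi_k}$ is itself the generator controlled on the key register. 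Polynomially-bounded key length carries over verbatim because $\klen = \poly{\dimout,\secpar}$ holds by hypothesis on the \prss{}.

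For pseudorandomness, I would invoke the \prss{} pseudorandomness condition directly on the same $\ket{\phi}$: for any non-uniform polynomial-time quantum adversary $\A$ and any $\ncopy = \poly{\secpar}$,
\[
\abs{\prob{k \gets \K}{\A\br{\ket{\phi_k}^{\otimes \ncopy}} = 1} - \prob{\ket{\psi} \gets \haar}{\A\br{\ket{\psi}^{\otimes \ncopy}} = 1}} = \negl{\secpar},
\]
where $\ket{\phi_k} := \SG{k}{n,m,\secpar}\ket{\phi}$ and $\haar$ is the Haar measure on $\Sout$. This is exactly the pseudorandomness requirement of \defi{scalableprsg}.

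There is no substantive obstacle in this argument. The lemma is essentially a restatement of the observation that the \prss{} pseudorandomness property universally quantifies over all input pure states, whereas the scalable \prsg{} fixes only a single canonical initial state; hence the induced family for every choice of $\ket{\phi}$ satisfies the (informal) \prs{} condition at the end of \defi{scalableprsg}. The only bookkeeping point is matching input and output dimensions, which is handled by the padding described above.
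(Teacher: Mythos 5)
Your proposal is correct and matches the paper, which gives no separate argument for this lemma beyond observing that it follows "straightforwardly by definition": the PRSS pseudorandomness clause is universally quantified over the input state $\ket{\phi}$, so fixing any one $\ket{\phi}$ immediately yields the (scalable) PRSG pseudorandomness condition, with uniformity and key length carrying over verbatim. Your extra remark about preparing $\ket{\phi}$ (trivial for $\ket{0^n}$, otherwise via advice) is a reasonable bookkeeping point the paper leaves implicit.
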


\paragraph{Pseudorandom Function-like States.}
We recall the definition of the pseudorandom function-like states generator 
with three levels of security:
\begin{definition}[\prfsg~\cite{AGQY22,AQY22}]
  \label{def:prfs}
  Let $\K=\bit{\secpar}$ be a key space.  Let $\CC=\bit{n}$ be a
  classical input space and $\H$ be a Hilbert space of dimension
  $2^m$.  A pair of $\poly{\secpar,m}$-time quantum algorithm $(K,G)$
  is a \emph{pseudorandom function-like state} generator if the
  following holds:
  \begin{itemize}
  \item \textbf{Key Generation.}  For all $\secpar\in\N$,
    $K(1^\secpar)$ outputs a uniform key $k\in \K$.

  \item \textbf{State Generation.} 
    For all $k\in \K$ and $x\in \CC$, $G(1^\secpar, k, x)$ computes
    $\ket{\phi_{x,k}} \in\S(\H)$.

  \item \textbf{Pseudorandomness.}  The pseudorandomness can be
    defined in three different levels (from weaker to stronger):

    \begin{itemize}
    \item \textbf{Selective security.}  For any $\secpar\in \N$,
      $s\in \poly{\secpar}, \ncopy\in \poly{\secpar }$, any efficient
      quantum algorithm (non-uniform) $\A$ and a set of pre-declared
      input $\{ x_1, \dots, x_{s}\}\subseteq \CC$,
			\begin{align*}
				\bigg|\Pr_{k\gets\K}
				& \Br{\A(x_1, \dots, x_s, \ket{\phi_1}^{\tp\ncopy}, \dots, \ket{\phi_s}^{\tp\ncopy})=1} \\
        		& -\Pr_{\ket{\psi_1}, \dots, \ket{\psi_s}\gets\haar}
        		  \Br{\A(x_1, \dots, x_s, \ket{\psi_1}^{\tp\ncopy}, \dots, \ket{\psi_s}^{\tp\ncopy})=1}
        		\bigg|\leq \negl{\secpar},
			\end{align*}
		where, for each $i$, $\ket{\phi_i}$ denotes $\ket{\phi_{x_i,k}}$ generated by $G$; and $\ket{\psi_i}$ is a Haar random state.
			
		\item \textbf{Adaptive security.}
			Given \emph{classical-access} to either the \prfs~oracle $\oracle_{\prfs}$
			or the Haar-random oracle $\oracle_{\Haar}$,
			for any $\secpar\in\N$,
			any efficient quantum algorithm (non-uniform) $\A$
			with polynomial length quantum advice $\rho_\secpar$,
			\[
				\abs{
					\prob{k\gets\K}{\A^{{\oracle_{\prfs}(k, \cdot)}}(\rho_\secpar)=1}
					-
					\prob{\oracle_{\Haar}}{\A^{{\oracle_{\Haar}(\cdot)}}(\rho_\secpar)=1}
				}\leq\negl{\secpar},
			\]
			where on input $x\in\CC$,
			$\oracle_{\prfs}(k, \cdot)$ outputs $G(1^\secpar, k, x)$;
			and $\oracle_{\Haar}(\cdot)$ outputs a Haar random $\ket{\psi_x}$.
		\item \textbf{Quantum-accessible adaptive security.}
			Given \emph{quantum-access} to a \prfs~or Haar-random oracle,
			for any $\lambda\in \N$,
			for any efficient quantum algorithm (non-uniform) $\A$
			with polynomial length quantum advice $\rho_\secpar$,
			\[
				\abs{
					\prob{k\leftarrow \K}{\A^{\ket{\oracle_{\prfs}(k, \cdot)}}(\rho_\secpar)=1}
					-
					\prob{\oracle_{\Haar}}{\A^{\ket{\oracle_{\Haar}(\cdot)}}(\rho_\secpar)=1}
				}\leq\negl{\secpar},
			\]
			where on input a $n$-qubit register $\regx$,
			$\oracle_{\prfs}(k, \cdot)$ applies a channel that controlled on the
			register $\regx$ containing $x$,
			and stores $G(1^\secpar, k, x)$ in the register $\regy$,
			then output $(\regx, \regy)$;
			instead,  $\oracle_{\Haar}(\cdot)$ stores a Haar random
			$\ketbra{\psi_x}$ on the register $\regy$,
			then output $(\regx, \regy)$.
		\end{itemize}
	\end{itemize}
	\label{def:prfs}
\end{definition}

As previously mentioned, the security of our \prss~is maintained
when applied to any arbitrary initial pure state,
making it straightforward to derive a \prfsg~with selective security. 

\begin{lemma}
	Assume $\SG{}{n,m}$ is a \prss~from $\hspacein$ to $\hspaceout$
	over a key space $\K$
	with security parameter $\secpar$ s.t. $n=O(\log\secpar)$.
	Construct $(\hat{K},\hat{G})$ in the following way:
	\begin{itemize}
		\item (Key generation.)
			For all $\lambda\in\N$, $\hat{K}(1^\secpar)$  sets a large enough $s\in \poly{\secpar}$ 
			and generates a key 
			$k=\{k_1, \dots, k_s\}$ such that for $i\in[s]$, 
			$k_i$ is chosen uniformly and independently from $\K$;
		\item (State generation.)
			For all $k$ and classical queries $\{x_i\in \CC\}_{i=1}^s$,
			$\hat{G}(1^\lambda,k,x_i)$
			computes $\ket{\phi_{i}}= \SG{k_i}{n,m,\lambda}\ket{x_i}$.
		\end{itemize}
	Then, $(\hat{K},\hat{G})$ is a \prfsg~with selective security.
\label{lem:prss2prfs-select}
\end{lemma}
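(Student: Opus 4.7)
The plan is to establish selective security via a standard hybrid argument interpolating between the PRFSG distribution and the fully Haar distribution by changing one of the $s$ pre-declared slots at a time, and to bound each single-slot gap by a reduction to the PRSS pseudorandomness guarantee. Uniformity and polynomial-bounded key length of $(\hat K,\hat G)$ are immediate from the corresponding properties of the underlying $\SG{}{n,m}$, since $\hat K$ independently samples $s$ PRSS keys and $\hat G$ invokes $\SG{k_i}{n,m,\secpar}$ once on $\ket{x_i}$.

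I would first define hybrids $H_0,H_1,\dots,H_s$, where in $H_i$ the first $i$ slots carry $\ncopy$ copies each of independent Haar random states $\ket{\psi_1},\dots,\ket{\psi_i}$, and the remaining $s-i$ slots carry $\ncopy$ copies each of $\ket{\phi_j}=\SG{k_j}{n,m,\secpar}\ket{x_j}$ for independent fresh $k_j\gets\K$. Then $H_0$ is the PRFSG distribution in \defi{prfs} and $H_s$ is the all-Haar distribution, so it suffices to show $\abs{\Pr[\A(H_{i-1})=1]-\Pr[\A(H_i)=1]}=\negl{\secpar}$ for every $1\le i\le s$ and apply the triangle inequality, recalling $s=\poly{\secpar}$.

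For the single-slot step I would build a non-uniform polynomial-time reduction $\B_i$ to the PRSS pseudorandomness challenge on the initial state $\ket{x_i}$. Given its challenge $\ket{\zeta}^{\otimes\ncopy}$, which is either $\bigl(\SG{k_i}{n,m,\secpar}\ket{x_i}\bigr)^{\otimes\ncopy}$ for random $k_i$ or $\ket{\psi_i}^{\otimes\ncopy}$ for Haar random $\ket{\psi_i}$, the reduction places $\ket{\zeta}^{\otimes\ncopy}$ in slot $i$, samples its own $k_j\gets\K$ and efficiently computes $\bigl(\SG{k_j}{n,m,\secpar}\ket{x_j}\bigr)^{\otimes\ncopy}$ for $j>i$, and fills each earlier slot $j<i$ with $\ket{\psi_j}^{\otimes\ncopy}$. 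It then runs $\A$ on the assembled tuple together with the pre-declared $(x_1,\dots,x_s)$ and forwards $\A$'s output. The two cases of the challenge precisely realize $H_{i-1}$ and $H_i$, so the PRSS guarantee bounds the gap by $\negl{\secpar}$.

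The main subtlety, rather than a true obstacle, is that the earlier Haar slots cannot be sampled efficiently by $\B_i$. I would resolve this by exploiting the fact that PRSS pseudorandomness in \defi{prssb} is stated against \emph{non-uniform} polynomial-time quantum adversaries, which by the preliminaries are modeled with polynomial-size quantum advice states. For each $i$, I would bake the collection $\bigl\{\ket{\psi_j}^{\otimes\ncopy}\bigr\}_{j<i}$ into the advice of $\B_i$; this advice has size polynomial in $\secpar$ since $i\le s$ and $\ncopy$ are polynomial, so $\B_i$ is a legitimate non-uniform poly-time adversary against the PRSS. Summing the negligible gaps over the $s$ hybrids via the triangle inequality completes the argument and yields selective \prfsg~security.
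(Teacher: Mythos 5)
Your proposal is correct and takes essentially the same approach as the paper: the paper's proof is a one-sentence sketch ("the indistinguishability holds because all output states are obtained via independent keys"), which implicitly refers to exactly the slot-by-slot hybrid argument you spelled out. You also correctly identify and resolve the one real subtlety the paper glosses over — that the earlier Haar slots cannot be sampled in polynomial time — by an averaging argument to fix a good non-uniform advice state, which is the standard way to make this rigorous under the paper's non-uniform adversary model.
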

\begin{proof}
	The algorithm $(\hat{K},\hat{G})$ is efficient as 
	it essentially performs \prss~a polynomial number of times.
	Meanwhile, the indistinguishability holds because all output states are obtained via independent keys.
\end{proof}

When we consider log-inputs by restricting $n=O(\log \secpar)$ and setting $s=O(2^n)\in\poly{\secpar}$,
the construction in \lem{prss2prfs-select} produces sufficient key segments
to ensure that every $x\in\CC$ has its own independent key, without assuming the number of queries from the adversary.
As a consequence, log-input \prfsg s with adaptive security can be achieved
through the use of \prss s.
Furthermore, as demonstrated by the results in~\cite{Zhandry21_qprf},
quantum superposition queries over the input domain do not provide
any additional advantages when the output state is known for every input string,
which results in quantum-accessible adaptive security.

\begin{lemma}\label{lem:prfs-adapt}
	If $\SG{}{n,m}$ is a \prss~from $\hspacein$ to $\hspaceout$
	over a key space $\K$
	with security parameter $\secpar$ s.t. $n=O(\log\secpar)$,
	then $(\hat{K},\hat{G})$ is a \prfsg~satisfying
	adaptive security and quantum-accessible adaptive security.
\end{lemma}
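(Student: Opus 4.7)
Since $n = O(\log\secpar)$, the input domain $\CC = \bit{n}$ has polynomial size $|\CC| = 2^n = \poly{\secpar}$. My plan is to instantiate the construction of \lem{prss2prfs-select} with $s = 2^n$, yielding independent \prss~keys $\{k_x\}_{x\in\CC}$ — one per input — so that $\hat{G}(1^\secpar,k,x)$ returns a fresh copy of $\ket{\phi_x}:=\SG{k_x}{n,m,\secpar}\!\ket{x}$. Because each input is associated with a fresh \prss~invocation, security reduces, via a hybrid over $\CC$, to multi-copy pseudorandomness of the \prss.

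For adaptive (classical) security, fix any enumeration $x_1,\ldots,x_{2^n}$ of $\CC$ and let $H_i$ be the experiment in which queries to $x_j$ with $j\le i$ return copies of a fixed Haar-random $\ket{\psi_{x_j}}$ (sampled at the start of the game), while queries to $x_j$ with $j>i$ return copies of $\SG{k_{x_j}}{}\!\ket{x_j}$. Distinguishing $H_{i-1}$ from $H_i$ reduces to distinguishing $q=\poly{\secpar}$ copies of $\SG{k_{x_i}}{}\!\ket{x_i}$ from the same number of copies of a Haar-random state on input $\ket{x_i}$ — negligible by \prss~pseudorandomness. Summing over $2^n=\poly{\secpar}$ hybrids yields a negligible total advantage.

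For quantum-accessible adaptive security, the same hybrid applies, but the reduction must simulate the oracle on superposition queries. I would have the reduction store $q$ independent ``copy slots'' per input $x$, each holding $\ket{\phi_x}$ (or $\ket{\psi_x}$), and implement the $j$-th oracle query via a controlled swap that, for each $x$, swaps $\regy$ with the $j$-th copy slot of $x$ conditioned on $\regx=x$. This requires $q\cdot|\CC|=\poly{\secpar}$ state copies, which the reduction can prepare efficiently: at hybrid step $i$, the $q$ copies of the challenge state replace those for $x_i$, while the rest are generated from simulated \prss~keys (for $x_j$ with $j>i$) or from direct Haar sampling (for $x_j$ with $j<i$).

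The main obstacle is to rigorously verify that this controlled-swap simulation produces exactly the same channel as the real quantum oracle, even when queries are in superposition over many $x$'s and subsequent queries touch the same $x$'s. A careful invariant tracking the joint state of query register, output register, and used/unused copy slots across all $q$ queries — plus factoring out the unused slots, which remain in the product state $\bigotimes_x\ket{\phi_x}$ independent of the superposition — should suffice to show the simulated view is identical to the real view. This matches the philosophy cited in \cite{Zhandry21_qprf}: quantum queries to an oracle whose outputs are pre-committed over a polynomial-size domain offer no advantage over classical queries, so the classical hybrid above carries over essentially verbatim.
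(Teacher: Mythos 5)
Your construction and the classical-adaptive hybrid mirror the paper's (informal) argument: the paper sets $s=O(2^n)=\poly{\secpar}$ so every input gets its own independent key and then cites~\cite{Zhandry21_qprf} for the quantum-accessible case. The hybrid over the polynomial-size domain $\CC$, swapping in the $q$ challenge copies for $x_i$ and self-generating everything else, is a correct fleshing-out of the adaptive-classical case.

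The gap is in the quantum-accessible case, and it is not merely a verification detail: the controlled-swap simulation is wrong because it writes which-branch information into the slot registers. After the controlled swap on query $j$, slot $S_{x,j}$ holds $\ket{0}$ on the branch $\regx=x$ and $\ket{\phi_x}$ on every other branch, so the vector of slots $(S_{x,j})_x$ ends up perfectly correlated with the query register. Tracing out these slots (they are the reduction's private workspace, not part of the adversary's view) multiplies the off-diagonal component $\ketbratwo{x}{x'}$ of the adversary's state by $\langle\phi_x|0\rangle\langle 0|\phi_{x'}\rangle\approx 2^{-m}$, whereas the real oracle isometry $V=\sum_x\ket{x}\ket{\phi_x}\bra{x}$ leaves it at $1$. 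For a uniform superposition query $|\CC|^{-1/2}\sum_x\ket{x}$ the fidelity between the simulated and the real post-query state is about $\sum_x|\alpha_x|^4 = 1/|\CC|$, so the two are nearly orthogonal, not negligibly close. Your invariant --- that unused slots stay in the fixed product state $\bigotimes_x\ket{\phi_x}$ independent of the superposition --- is true only for slots never touched; the slots touched on each query become entangled with the query register, and it is exactly those slots that destroy coherence. A coherent oracle implementing $V$ cannot be synthesized from any polynomial number of copies of the (unknown) states $\ket{\phi_x}$; the paper sidesteps this by appealing to~\cite{Zhandry21_qprf} rather than by building a swap-based simulator, and your argument does not reproduce that route.
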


Prior works~\cite{AGQY22,AQY22} have demonstrated that $\log$-input
$\prfsg$s can be constructed from a $\prs$. However, this approach
requires a \emph{test procedure} involving a post-selection among the
classical input domain, which introduces errors and incurs computation
overhead. Our $\prss$, with its flexibility in choosing initial states,
provides a cleaner method. It is worth noting that $\prfsg$s on long inputs
(i.e., exponentially large domain) may appear stronger and might not be
achievable from either $\prs$s or $\prss$s in a black-box manner.

\newpage
\section{Deferred Proofs}
\label{sec:deferred proofs}
\subsection{Proof of \lem{h3h4}}
\begin{proof}[Proof of \lem{h3h4}]
We demonstrate that the real case and the complex case can be established using the same approach.

Let $\zeta^+-\zeta^-$ be the Hahn decomposition of the signed measure $\mu-\nu$. Then
\begin{align*}
&\norm{
		\E_{\ket{\psi}\sim \mu}\!\Br{ \br{\ketbra{\psi}}^{\otimes l} }
		- \E_{\ket{\varphi}\sim \nu}\!\Br{\br{\ketbra{\varphi}}^{\otimes l}}
		}_1\\
		=~&\norm{\int_{\SR^{2^n}}\br{\ketbra{u}}^{\otimes l}\br{\mu-\nu}\br{\d\ket{u}}}_1\\
		=~&\norm{\int_{\SR^{2^n}}\br{\ketbra{u}}^{\otimes l}\br{\zeta^+-\zeta^-}\br{\d\ket{u}}}_1\\
		\leq~&\norm{\int_{\SR^{2^n}}\br{\ketbra{u}}^{\otimes l}\zeta^+\br{\d\ket{u}}}_1+\norm{\int_{\SR^{2^n}}\br{\ketbra{u}}^{\otimes l}\zeta^-\br{\d\ket{u}}}_1\\
		\leq~&\int_{\SR^{2^n}}\norm{\br{\ketbra{u}}^{\otimes l}}_1\zeta^+\br{\d\ket{u}}+\int_{\SR^{2^n}}\norm{\br{\ketbra{u}}^{\otimes l}}_1\zeta^-\br{\d\ket{u}}\\
		=~&\zeta^-\br{\SR^{2^n}}+\zeta^-\br{\SR^{2^n}}\\
		=~&\norm{\mu-\nu}_{\mathrm{TV}}\enspace.
\end{align*}
\end{proof}

\subsection{Proof of \lem{twokeys}}
\begin{proof}[Proof of \lem{twokeys}] We first prove the case that $F: \K_1 \times \X_1 \to \Y_1$ and $G: \K_2 \times \X_2 \to \Y_2$ are both \qprf s. We prove the lemma by contradiction.
	Suppose there exists a polynomial-time quantum oracle algorithm $\A$ who queries $q$ times such that
	\begin{align*}
		\abs{
			\prob{ k_1\gets\K_1, k_2\gets \K_1 }{\A^{F_{k_1}, G_{k_2}}\!\br{1^\lambda}=1}-\prob{f\gets\Y_1^{\X_1},g\gets\Y_2^{\X_2}}{\A^{f,g}\!\br{1^\lambda}=1}
		} = \dfrac{1}{\poly{\lambda}} \enspace .
	\end{align*}
	By the triangle inequality,
	{\small
	\begin{align*}
		&\abs{
			\prob{ k_1\gets\K_1, k_2\gets \K_1 }{\A^{F_{k_1}, G_{k_2}}\!\br{1^\lambda}=1}-\prob{k_1\gets\K_1, g\gets\Y_2^{\X_2}}{\A^{F_{k_1},g}\!\br{1^\lambda}=1}
		} \\
		&~~~~~~+~~~~~ \abs{
		\prob{k_1\gets\K_1, g\gets\Y_2^{\X_2}}{\A^{F_{k_1},g}\!\br{1^\lambda}=1}-\prob{f\gets\Y_1^{\X_1},g\gets\Y_2^{\X_2}}{\A^{f,g}\!\br{1^\lambda}=1}
		} \geq  \dfrac{1}{\poly{\lambda}} \enspace .
	\end{align*}}
	Without loss of generality, we assume
	\begin{align*}
		\abs{
		\prob{k_1\gets\K_1, g\gets\Y_2^{\X_2}}{\A^{F_{k_1},g}\!\br{1^\lambda}=1}-\prob{f\gets\Y_1^{\X_1},g\gets\Y_2^{\X_2}}{\A^{f,g}\!\br{1^\lambda}=1}
		} \geq  \dfrac{1}{2\cdot \poly{\lambda}}\enspace.
	\end{align*}
	
	Thus we can define a polynomial-time quantum oracle algorithm
        $\A'$ that is able to distinguish $F_{k_1}$ from a random
        function $f$.  Once $\A'$ gets an oracle access to some
        function $h\in\st{F_{k_1},f}$, it simulates the execution of
        $\A$ with oracle access to $h$ and a random function $g$.
        Since $\A$ makes at most $q$ queries, $\A'$ can efficiently
        simulate a random oracle using $2q$-wise independent function
        (see~\cite[Theorem 6.1]{Zhandry15_ibe}).  This contradicts the
        quantum-security property of $F$.

To prove the case that $F: \K_1 \times \X_1 \to \Y_1$ is a \qprf~and $G: \K_2 \times \X_2 \to \X_2$ is a \qprp, we may assume as above that there exists a polynomial-time quantum oracle algorithm $\A$ who queries $q$ times such that
\begin{align*}
		\abs{
		\prob{k_1\gets\K_1, g\gets S_{\X_2}}{\A^{F_{k_1},g}\!\br{1^\lambda}=1}-\prob{f\gets \Y_1^{\X_1},g\gets S_{\X_2}}{\A^{f,g}\!\br{1^\lambda}=1}
		} \geq  \dfrac{1}{2\cdot \poly{\lambda}}\enspace.
\end{align*}
Then we define the following efficient algorithm $\A''$ to distinguish a \qprf~from a random function: 
\begin{enumerate}
	\item Given $h\in\st{F_{k_1},f}$,
		 it chooses a permutation $g$ uniformly at random from a $2q$-wise almost independent family of permutations to simulate a random permutation oracle.
		 This sampling procedure can be done in polynomial time
		 (see~\cite[Theorem 5.9]{KNR09}).
	\item It simulates $\A$ with oracle access to $h$ and $g$, and outputs what $\A$ returns.
\end{enumerate}
This breaks the quantum-security property of \qprf s.

\end{proof}

\subsection{Proof of ~\thm{mixing_time_w1_c}}
To prove \thm{mixing_time_w1_c}, we extend the the proportional coupling introduced in real case to complex case.
In the proportional coupling, the real case lets $(X_t[i], X_t[j])$ be collinear with $(Y_t[i], Y_t[j])$ so that the distance from $X_t$ to $Y_t$ is reduced by a constant factor in each step. However in complex case, $(X_t[i], X_t[j])$ is actually a two  dimensional complex vector and this makes it unsuitable to adopt the previous approach directly. To deal with this, in the complex case, we let
		$(\abs{X_t[i]}, \abs{X_t[j]})$ align collinearly with $(\abs{Y_t[i]}, \abs{Y_t[j]})$ in real two dimensional real plane, and make ${X_t[i]}$ and ${Y_t[i]}$, as well as ${X_t[j]}$ and ${Y_t[j]}$, share the same argument in complex plane in the meantime. We assume $n=2m$.
		
\subsubsection{Proportional Coupling}

\begin{definition}[Proportional Coupling]\label{def:prop_cpl_c}
We define a coupling of two copies
$\st{X_t}_{t\geq 0}, \st{Y_t}_{t\geq 0}$
of parallel Kac's walk on complex vectors
in the following way:
Fix $X_t$, $Y_t\in \C^{n}$.
\begin{enumerate}
	\item Choose a perfect matching of $[n]$, denoted by $P_t = \st{\br{i^{(t)}_1, j^{(t)}_1},\dots,\br{i^{(t)}_m, j^{(t)}_m}}$, uniformly at random.
	\item Let $X_{t,1} =X_t$ and $Y_{t,1}=Y_t$. For every $1\leq k\leq m$:
		\begin{enumerate}
		\item let $l^{(t)}_k = \sqrt{\abs{X_{t,k}[i^{(t)}_k]}^2+\abs{X_{t,k}[j^{(t)}_k]}^2}$ and ${l'}^{(t)}_k = \sqrt{\abs{Y_{t,k}[i^{(t)}_k]}^2+\abs{Y_{t,k}[j^{(t)}_k]}^2}$. Let $U_0$ and $U_0'$ be the unitary operators in $\su\br{2}$ which satisfy
		\begin{align*}
			U_0\br{\begin{matrix}
			X_{t,k}[i^{(t)}_k]\\
			X_{t,k}[j^{(t)}_k]
			\end{matrix}} = \br{\begin{matrix}
			l^{(t)}_k\\
			0
			\end{matrix}}
			\text{\quad and\quad}
			U_0'\br{\begin{matrix}
			Y_{t,k}[i^{(t)}_k]\\
			Y_{t,k}[j^{(t)}_k]
			\end{matrix}} = \br{\begin{matrix}
			{l'}^{(t)}_k\\
			0
			\end{matrix}}\enspace.
		\end{align*}
		\item pick $\alpha_k^{(t)},\beta_k^{(t)}\in[0,2\pi)$ and $\zeta_k^{(t)}\in[0,1)$ uniformly at random and set

\begin{align}\label{eq:theta}
\theta_k^{(t)} = \arcsin \sqrt{\zeta_k^{(t)}}\enspace.
\end{align}
		\item set
		\begin{align*}
			X_{t,k+1} = G_{\C}\!\br{i^{(t)}_k,j^{(t)}_k, \alpha^{(t)}_k,\beta^{(t)}_k,\theta^{(t)}_k, U_0X_{t,k}}\enspace,\\
			Y_{t,k+1} = G_{\C}\!\br{i^{(t)}_k,j^{(t)}_k, \alpha^{(t)}_k,\beta^{(t)}_k,\theta^{(t)}_k, U_0'Y_{t,k}}\enspace.
		\end{align*}
	\end{enumerate}
	
	\item Finally, set $X_{t+1}=X_{t,m+1}$ and $Y_{t+1}=Y_{t,m+1}$.
\end{enumerate}
\end{definition}

\begin{remark}
	In step 2(a), if $l^{(t)}_k\neq 0$, $X_{t,k}[i^{(t)}_k]=r_1e^{\i\gamma}$ and $X_{t,k}[j^{(t)}_k]=r_2e^{\i\delta}$ with $r_1,r_2\in[0,1]$ and $\gamma,\delta\in[0,2\pi)$, then $U_0$ is $U\!\br{\alpha_0,\beta_0,\theta_0}$ where
	 \begin{align*}
	 	\alpha_0 = -\gamma\enspace,\quad \beta_0 = \pi-\delta\enspace,\quad \theta_0 = \arccos\frac{r_1}{\sqrt{r_1^2+r_2^2}} \enspace .
	 \end{align*}
	 If $l^{(t)}_k = 0$, $U_0$ can be arbitrary matrix in $\su(2)$.
	 The same applies to $U_0'$.
\end{remark}

Due to the unitary invariance of Haar measures,
	 $UU_0$ is Haar distributed on $\su(2)$
	 for a random random $U$
	 sampled according to
	 Haar measure on $\su(2)$.
	 This guarantees that
	 the proportional coupling
	 is indeed a valid coupling.
	
\begin{remark}
	The proportional coupling forces $X_{t+1}[i], Y_{t+1}[i]$ and the original point to be collinear in the complex plane. In other words, $X_{t+1}[i]$ and $Y_{t+1}[i]$ have the same argument.  Specifically, we can write
	$$
		X_{t+1}[i] = e^{i\phi} l
		\text{\quad and \quad}
		Y_{t+1}[i] =  e^{i\phi}l'
	$$
	for some $l,l'\in[0,1],\phi\in[0,2\pi)$.
\end{remark}

Through proportional coupling ,
$X$ and $Y$ approach each other at an exponential rate.
Formally, we have

\begin{lemma} \label{lem:contraction_lemma_c}
	Let $X_0,Y_0\in \SC^{n}$. For $t\geq 0$, we couple $(X_{t+1},Y_{t+1})$ conditional on $(X_{t},Y_{t})$ according to the proportional coupling defined in \defi{prop_cpl_c}.
	We define
	$$
	A_{t}[i] = \abs{X_t[i]}^2\enspace,\quad B_{t}[i] = \abs{Y_t[i]}^2\enspace.
	$$
	Then for any $l\in\N$, we have
	$$
	\E\!\Br{\sum_{i=1}^n \br{A_l[i]-B_l[i]}^2}\leq 2\cdot\br{\frac{2}{3}}^l\enspace.
	$$
\end{lemma}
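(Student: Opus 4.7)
The plan is to mirror the proof of \cref{lem:contraction_lemma} with the adjustments required by the complex structure of the coupling. The key observation is that after the alignment unitaries $U_0,U_0'$ are applied, the vectors $(X_{t,k}[i_k^{(t)}], X_{t,k}[j_k^{(t)}])$ and $(Y_{t,k}[i_k^{(t)}], Y_{t,k}[j_k^{(t)}])$ become $(l_k^{(t)}, 0)^T$ and $({l'}_k^{(t)}, 0)^T$ respectively, and then the \emph{same} Haar-random element $U(\alpha_k^{(t)},\beta_k^{(t)},\theta_k^{(t)})$ of $\mathrm{SU}(2)$ is applied to both. Consequently, only the parameter $\theta_k^{(t)}$ affects the squared magnitudes, and one obtains the clean expressions
\[
A_{t+1}[i_k^{(t)}] = (A_t[i_k^{(t)}]+A_t[j_k^{(t)}])\cos^2\!\theta_k^{(t)}, \quad A_{t+1}[j_k^{(t)}] = (A_t[i_k^{(t)}]+A_t[j_k^{(t)}])\sin^2\!\theta_k^{(t)},
\]
and analogously for $B_{t+1}$. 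Thus each pairwise contribution $(A_{t+1}[i_k]-B_{t+1}[i_k])^2+(A_{t+1}[j_k]-B_{t+1}[j_k])^2$ reduces to $(\cos^4\!\theta_k^{(t)}+\sin^4\!\theta_k^{(t)})\,\Delta_k^2$, where $\Delta_k := (A_t[i_k]+A_t[j_k]) - (B_t[i_k]+B_t[j_k])$.

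Next I would compute the relevant moment. Since $\zeta_k^{(t)}$ is uniform on $[0,1)$ and $\theta_k^{(t)} = \arcsin\sqrt{\zeta_k^{(t)}}$ (see \cref{eq:theta}), we have $\sin^2\!\theta_k^{(t)} = \zeta_k^{(t)}$ uniform on $[0,1)$, so $\E[\sin^4\!\theta_k^{(t)}] = \E[\cos^4\!\theta_k^{(t)}] = 1/3$. This replaces the value $3/8$ that arose in the real case and is exactly the reason the contraction factor improves from $3/4$ to $2/3$. Taking expectation over the rotation parameters yields
\[
\E\!\Br{\sum_{i=1}^n (A_{t+1}[i]-B_{t+1}[i])^2 \,\middle|\, X_t,Y_t,P_t}
= \frac{2}{3}\sum_{k=1}^m \Delta_k^2.
\]

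Then I would average over the uniformly random perfect matching $P_t$ exactly as in the real proof (Eqs.~\eqref{eqn:t3}--\eqref{eqn:t5}). Expanding $\Delta_k^2$ splits into a diagonal part $\sum_i (A_t[i]-B_t[i])^2$ (using that $P_t$ is a perfect matching) and a cross term $2\sum_k (A_t[i_k]-B_t[i_k])(A_t[j_k]-B_t[j_k])$ whose expectation under a random matching equals $-\frac{1}{2(n-1)}\sum_i (A_t[i]-B_t[i])^2$, thanks to the identity $\sum_i (A_t[i]-B_t[i]) = 0$ (both $X_t, Y_t$ are unit vectors). Dropping the nonpositive cross contribution gives the clean one-step bound
\[
\E\!\Br{\sum_{i=1}^n (A_{t+1}[i]-B_{t+1}[i])^2 \,\middle|\, X_t,Y_t} \le \frac{2}{3}\sum_{i=1}^n (A_t[i]-B_t[i])^2.
\]
Iterating this inequality $l$ times and using $\sum_i (A_0[i]-B_0[i])^2 \le \sum_i (A_0[i]+B_0[i]) = 2$ as the base case yields the claim.

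The main obstacle is essentially bookkeeping: verifying that the alignment step in \defi{prop_cpl_c} indeed decouples the analysis of squared magnitudes from the phases $\alpha_k^{(t)},\beta_k^{(t)}$, so that the problem reduces to analyzing a single random rotation angle $\theta_k^{(t)}$. Once this reduction is justified (by the unitary invariance of Haar measure on $\mathrm{SU}(2)$, which also ensures the coupling is valid), the moment computation $\E[\sin^4\!\theta+\cos^4\!\theta] = 2/3$ and the combinatorial averaging over matchings go through verbatim from the real proof.
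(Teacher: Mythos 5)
Your proposal is correct and follows essentially the same route as the paper: align via $U_0,U_0'$, observe that only $\theta_k^{(t)}$ affects the squared magnitudes because the \emph{same} $U(\alpha_k^{(t)},\beta_k^{(t)},\theta_k^{(t)})$ is applied to $(l_k^{(t)},0)^T$ and $({l'}_k^{(t)},0)^T$, compute $\E[\cos^4\theta]=\E[\sin^4\theta]=1/3$ from $\sin^2\theta\sim\mathrm{Unif}[0,1)$, then reuse the real-case matching average verbatim. The only slip is cosmetic: the expectation of the cross term $2\sum_k(A_t[i_k^{(t)}]-B_t[i_k^{(t)}])(A_t[j_k^{(t)}]-B_t[j_k^{(t)}])$ over a uniform matching is $-\tfrac{1}{n-1}\sum_i(A_t[i]-B_t[i])^2$, not $-\tfrac{1}{2(n-1)}\sum_i(A_t[i]-B_t[i])^2$, but since you discard the nonpositive cross term anyway this does not affect the bound.
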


\begin{proof}[Proof of~\lem{contraction_lemma_c}]
 Fix $X_t,Y_t\in \SC^n$. Let $(X_{t+1},Y_{t+1})$ obtained from $(X_{t},Y_{t})$ by applying the coupling defined in \defi{prop_cpl_c}.
	Recall that $n=2m$.
	Let $N=\frac{n!}{2^{m}m!}$ be the number of perfect matchings for $[n]$.
	A perfect matching $\st{\br{i^{(t)}_1, j^{(t)}_1},\dots,\br{i^{(t)}_m, j^{(t)}_m}}$ of $[n]$ at step $t$ is denoted by $\br{\overrightarrow{i^{(t)}},\overrightarrow{j^{(t)}}}$.

We have
	\begin{align}
		&\E\!\Br{\sum_{i=1}^n \br{A_{t+1}[i]-B_{t+1}[i]}^2}\nonumber\\
		=&\frac{1}{N}\sum_{\br{\overrightarrow{i^{(t)}},\overrightarrow{j^{(t)}}}}\underbrace{\E\!\Br{\left.\sum_{i=1}^n \br{A_{t+1}[i]-B_{t+1}[i]}^2\right|P_t=\br{\overrightarrow{i^{(t)}},\overrightarrow{j^{(t)}}}}}_{(\star)} \enspace .\label{eqn:t1c}
	\end{align}

By the definition of the parallel Kac's walk on complex vectors, we have

\begin{align}
		(\star)=&\sum_{k=1}^m \E\!
		\Br{\br{\br{A_{t}[i^{(t)}_k]+A_{t}[j^{(t)}_k]}\cos(\theta_k^{(t)})^2-\br{B_{t}[i^{(t)}_k]+B_{t}[j^{(t)}_k]}\cos(\theta_k^{(t)})^2}^2}\nonumber\\
		&\quad\quad+\sum_{k=1}^m \E\!
		\Br{\br{\br{A_{t}[i^{(t)}_k]+A_{t}[j^{(t)}_k]}\sin(\theta_k^{(t)})^2-\br{B_{t}[i^{(t)}_k]+B_{t}[j^{(t)}_k]}\sin(\theta_k^{(t)})^2}^2}\nonumber\\
		=&\frac{2}{3}\sum_{k=1}^m
		\br{\br{A_{t}[i^{(t)}_k]+A_{t}[j^{(t)}_k]}-\br{B_{t}[i^{(t)}_k]+B_{t}[j^{(t)}_k]}}^2\nonumber\\
		=&\underbrace{\frac{2}{3}\sum_{k=1}^m
		\br{\br{A_{t}[i^{(t)}_k]-B_{t}[i^{(t)}_k]}^2+\br{A_{t}[j^{(t)}_k]-B_{t}[j^{(t)}_k]}^2}}_{(\star\star)} \nonumber\\	&\quad\quad+\underbrace{\frac{2}{3}\sum_{k=1}^m2\br{A_{t}[i^{(t)}_k]-B_{t}[i^{(t)}_k]}\br{A_{t}[j^{(t)}_k]-B_{t}[j^{(t)}_k]}}_{(\star\star\star)}\enspace,\label{eqn:t2c}
\end{align}
where the second equality is by $\E\!\Br{\cos(\theta_k^{(t)})^4}=\E\!\Br{\sin(\theta_k^{(t)})^4}=1/3.$

As $\st{\br{i^{(t)}_1, j^{(t)}_1},\dots,\br{i^{(t)}_m, j^{(t)}_m}}$ is a perfect matching, we have
\begin{equation}\label{eqn:t3c}
  (\star\star)=\frac{2}{3}\sum_{i=1}^{n}\br{A_t[i]-B_t[i]}^2 \enspace .
\end{equation}
Combining Eqs.~\eqref{eqn:t1c}\eqref{eqn:t2c}\eqref{eqn:t3c}, we obtain
\begin{align}\label{eqn:t4c}
  &\E\!\Br{\sum_{i=1}^n \br{A_{t+1}[i]-B_{t+1}[i]}^2}= \frac{2}{3}\sum_{i=1}^{n}\br{A_t[i]-B_t[i]}^2+\underbrace{\frac{1}{N}\sum_{\br{\overrightarrow{i^{(t)}},\overrightarrow{j^{(t)}}}}(\star\star\star)}_{(4\star)} \enspace .
\end{align}
Using the same calculation in \cref{eqn:t5}, we have
\begin{align}
(4\star)=-\frac{2}{3(n-1)}\sum_{i=1}^{n}\br{A_t[i]-B_t[i]}^2\enspace.\label{eqn:t5c}
	\end{align}

	Combining Eqs.~\eqref{eqn:t4c}\eqref{eqn:t5c}, we have
	\begin{align*}
		\E\!\Br{\sum_{i=1}^n \br{A_{l}[i]-B_{l}[i]}^2} &= \E\!\Br{\E\!\Br{\left.\sum_{i=1}^n \br{A_{l}[i]-B_{l}[i]}^2 \right|X_{l-1},Y_{l-1} } }\\
		&\leq \frac{2}{3}\E\!\Br{\sum_{i=1}^{n}\br{A_{l-1}[i]-B_{l-1}[i]}^2}\\
		&\leq \br{\frac{2}{3}}^l \sum_{i=1}^{n}\br{A_{0}[i]-B_{0}[i]}^2\leq 2\cdot\br{\frac{2}{3}}^l\enspace.
	\end{align*}
\end{proof}

\subsubsection{Proof of the Mixing Time}
\label{sec:mixing_time_w1_c}

\begin{proof}[Proof of \thm{mixing_time_w1_c}]
	Let $T=10(c+1)\log n$ for $c>0$. We couple two copies $\st{X_t}_{t\geq 0}$ and $\st{Y_t}_{t\geq 0}$ of the parallel Kac's walk with starting points $X_0=x\in \SC^n$ and $Y_0\sim \mu$, by applying the proportional coupling. We have
	\begin{equation*}
		\wone{\mathcal{L}\!\br{X_{T}}}{\mu} = \wone{\mathcal{L}\!\br{X_{T}}}{\mathcal{L}\!\br{Y_{T}}} \leq \expec{\norm{X_T-Y_T}_2} \leq \br{\expec{\norm{X_T-Y_T}_2^4}}^{1/4}\enspace.
	\end{equation*}
	Then by Cauthy-Schwarz inequality, we have 
	\begin{equation}\label{eq:w11_c}
		\wone{\mathcal{L}\!\br{X_{T}}}{\mu} \leq \br{n\expec{\norm{X_T-Y_T}_4^4}}^{1/4}\enspace.
	\end{equation}
	Note that the proportional coupling forces $X_T[i]$ and $Y_T[i]$ share the same argument for all $i\in[n]$. Therefore, for all $i\in[n]$
	$$
	\abs{X_T[i]-Y_T[i]} = \abs{\abs{X_T[i]}-\abs{Y_T[i]}} \leq \abs{X_T[i]}+\abs{Y_T[i]} \enspace .
	$$
	This gives us
	\begin{equation}\label{eq:w12_c}
		\norm{X_T-Y_T}_4^4 = \sum_{i=1}^n\abs{X_T[i]-Y_T[i]}^4 \leq \sum_{i=1}^n\br{\abs{X_T[i]}^2-\abs{Y_T[i]}^2}^2\enspace .
	\end{equation}
	Combining Eqs. \eq{w11_c} and \eq{w12_c}, we have
	\begin{align*}
		\wone{\mathcal{L}\!\br{X_{T}}}{\mu} 
		\leq &\br{n\expec{\sum_{i=1}^n\br{\abs{X_T[i]}^2-\abs{Y_T[i]}^2}^2}}^{1/4}\\
		(\text{\lem{contraction_lemma_c}})
		\leq &\br{2n\br{\frac{2}{3}}^{T}}^{1/4}
		\leq \frac{1}{2^{c\log n}} \enspace.
	\end{align*}
\end{proof}

\subsection{Proof of \thm{kactorssc}}\label{sec:proof_rssc}
To prove \thm{kactorssc}, we first introduce a new ensemble of (infinitely many) unitary operators
$\enscsrssconsc \coloneq \st{\csrssconsc }_{\secpar}$ with $\csrssconsc \coloneq $
	{\small$$\st{\csrssconsc_{\rep{\sigma}{T},\rep{\cf}{T},\rep{\cg}{T},\rep{\ch}{T}}}
	_{\rep{\sigma}{T} \subseteq S_{2^n},\rep{\cf}{T},\rep{\cg}{T},\rep{\ch}{T}\subseteq \{f: \bit{n-1}\to [0,1)\}}$$}
	and
	$$
		\csrssconsc_{\rep{\sigma}{T},\rep{\cf}{T},\rep{\cg}{T},\rep{\ch}{T}}
		= \widetilde{L}_{\tau_{T},\cf_{T},\cg_{T},\ch_{T}}\cdots
		  \widetilde{L}_{\tau_{2},\cf_{2},\cg_{2},\ch_{2}}
		  \widetilde{L}_{\tau_{1},\cf_{1},\cg_{1},\ch_{1}} \enspace
	$$
	where
	$\widetilde{L}_{\sigma,\cf,\cg,\ch}
	= U_{\sigma^{-1}} \widetilde{Q}_{\cf,\cg,\ch} U_{\sigma}$
	and
	$\widetilde{Q}_{\cf,\cg,\ch}$ is defined to be
	\begin{align}\label{eq:widetildeq}
		\widetilde{Q}_{\cf,\cg,\ch} =
		\sum_{y\in\bit{n-1}}
			U\!\br{\widetilde{\alpha}_y,\widetilde{\beta}_y,\widetilde{\theta}_y}
			\otimes\ketbra{y}\enspace,
	\end{align}
	in which $U(\alpha,\beta,\theta)$ is defined in \eq{parhaar} and for any $y\in\bit{n-1}$
	\[
	\widetilde{\theta}_y=\arcsin\br{\sqrt{\cf(y)}}\enspace, \enspace
	\widetilde{\alpha}_y=2\pi\cdot\cg(y)\enspace , \enspace
	\widetilde{\beta}_y=2\pi\cdot\ch(y)\enspace .
	\]	
	Similar to the real case,
	$\widetilde{L}_{\sigma,\cf,\cg,\ch}$ represents one step of parallel Kac's walk in complex space
	for independently and uniformly random $\sigma,\cf,\cg$ and $\ch$.
	
	\begin{lemma} \label{lem:ltildelclose}
	Let $\sigma\in S_{2^n}$
	and $\cf,\cg,\ch:\bit{n-1}\to[0,1)$.
	Let $f:\bit{n-1}\to \bit{d}$ be the function satisfying for any $y\in\bit{n-1}$,
	$f(y)$ is the $d$ digits after the binary point in $\widetilde{f}(y)$. The same applies to $g$ and $h$.
	Then
	$$
	{
		\norm{ L_{\sigma,f,g,h} - \widetilde{L}_{\sigma,\cf,\cg,\ch}
	}_\infty }
	\leq 2^{6-\frac{d}{2}} \enspace ,
	$$
	where $L_{\sigma,f,g,h}=U_{\sigma^{-1}} Q_{f,g,h} U_{\sigma}$
	is defined in \eq{normalq} and
	$\widetilde{L}_{\sigma,\cf,\cg,\ch}
	= U_{\sigma^{-1}} \widetilde{Q}_{\cf,\cg,\ch} U_{\sigma}$
	is defined in \eq{widetildeq}.
	\end{lemma}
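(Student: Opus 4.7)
The plan is to first reduce to bounding $\norm{Q_{f,g,h} - \widetilde{Q}_{\cf,\cg,\ch}}_\infty$, since $U_\sigma$ is unitary and $\norm{\cdot}_\infty$ is unitarily invariant. Next, since both $Q_{f,g,h}$ and $\widetilde{Q}_{\cf,\cg,\ch}$ are block-diagonal in the $y$-register labelled by $\bit{n-1}$, the spectral norm of the difference equals the maximum over $y\in\bit{n-1}$ of the spectral norm of the corresponding $2 \times 2$ blocks.

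For each fixed $y$, both blocks factor into a product of three matrices according to the decomposition \eq{decomp_u}: a diagonal phase $D_{+}$, a real rotation $R$, and another diagonal phase $D_{-}$. I will then apply \fct{uprod} to upper-bound the norm of the difference of the two products by the sum of three two-by-two norms. Each reduces to a one-parameter estimate using the elementary inequalities $\norm{\mathrm{diag}(e^{\i\phi_1}, e^{-\i\phi_1}) - \mathrm{diag}(e^{\i\phi_2}, e^{-\i\phi_2})}_\infty \leq |\phi_1-\phi_2|$ for the phase diagonals, and $\norm{R(\theta_1)-R(\theta_2)}_\infty \leq |\theta_1-\theta_2|$ for the rotation (the same calculation used in the proof of \lem{k_hk_close}).

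It then remains to bound each of the three angle discrepancies. For the two phases, combining $|\val{g(y)} - \cg(y)| \leq 2^{-d}$ (and the analogous bound for $h,\ch$) with the stated precision $|2\pi\gamma_y^{+} - \tfrac{\alpha_y + \beta_y}{2}| \leq 2^{1-d}\pi$ immediately yields $|2\pi\gamma_y^{+} - \tfrac{\widetilde{\alpha}_y + \widetilde{\beta}_y}{2}| \leq 2^{2-d}\pi$ by the triangle inequality, and likewise for $\gamma_y^{-}$. For the rotation parameter, inserting the auxiliary quantity $\theta_y = \arcsin\sqrt{\val{f(y)}}$ via the triangle inequality contributes $|\tfrac{\pi}{2}\xi_y - \theta_y| \leq 2^{-d-1}\pi$ from the stated discretization precision, leaving us to estimate $|\arcsin\sqrt{\val{f(y)}} - \arcsin\sqrt{\cf(y)}|$ under $|\val{f(y)} - \cf(y)| \leq 2^{-d}$.

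The main obstacle lies in this last step, because $\arcsin\sqrt{x}$ is only H\"older-$\tfrac12$ continuous on $[0,1]$: its derivative blows up near both endpoints $0$ and $1$, which is precisely why the final bound scales as $2^{-d/2}$ rather than $2^{-d}$. I plan to establish the uniform estimate $|\arcsin\sqrt{a} - \arcsin\sqrt{b}| \leq \tfrac{\pi}{2}\sqrt{|a-b|}$ for $a,b\in[0,1]$. This can be proved in two stages: first $|\sqrt{a}-\sqrt{b}| \leq \sqrt{|a-b|}$ by the identity $|\sqrt{a}-\sqrt{b}|(\sqrt{a}+\sqrt{b}) = |a-b|$ together with $\sqrt{a}+\sqrt{b} \geq \sqrt{|a-b|}$; then a H\"older-$\tfrac12$ bound $|\arcsin u - \arcsin v| \leq \tfrac{\pi}{2}\sqrt{|u-v|}$ on $[0,1]$, whose worst case occurs at $u=1,v=0$. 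Plugging in $|\val{f(y)} - \cf(y)| \leq 2^{-d}$ yields $|\arcsin\sqrt{\val{f(y)}} - \arcsin\sqrt{\cf(y)}| \leq \tfrac{\pi}{2}\cdot 2^{-d/2}$, so the rotation-angle error is at most $2^{-d-1}\pi + \tfrac{\pi}{2}\cdot 2^{-d/2}$, which dominates the two phase-angle errors of $2^{2-d}\pi$ for large $d$. Summing the three contributions and comfortably absorbing all constants into the prefactor $2^{6}$ gives the claimed bound $\norm{L_{\sigma,f,g,h} - \widetilde{L}_{\sigma,\cf,\cg,\ch}}_\infty \leq 2^{6-d/2}$.
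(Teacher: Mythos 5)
Your proof is correct, and the overall structure (unitary invariance to reduce to block form, then factoring each $2\times 2$ block into the three factors from \eq{decomp_u} and invoking \fct{uprod}) matches the paper's. The point of genuine divergence is the rotation block, where you bound the angle difference $|\theta_y-\widetilde{\theta}_y|=|\arcsin\sqrt{\val{f(y)}}-\arcsin\sqrt{\cf(y)}|$ by establishing the H\"older-$\tfrac12$ modulus $|\arcsin\sqrt{a}-\arcsin\sqrt{b}|\le\tfrac{\pi}{2}\sqrt{|a-b|}$ (your two-stage argument, $|\sqrt{a}-\sqrt{b}|\le\sqrt{|a-b|}$ followed by a H\"older bound on $\arcsin$ with extremal case $u=1,v=0$, is correct and the constant $\pi/2$ is in fact tight). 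The paper avoids $\arcsin$ altogether: it uses $\sin(\arcsin\sqrt{x})=\sqrt{x}$ and $\cos(\arcsin\sqrt{x})=\sqrt{1-x}$ to write the rotation-block difference with explicit square-root entries, then bounds its Frobenius norm directly using an elementary $|\sqrt{a}-\sqrt{b}|\le 2^{-d/2+1/2}$ fact — so the $2^{-d/2}$ shows up directly from the square-root structure of the parameterization $\theta=\arcsin\sqrt{\zeta}$ without any analysis of $\arcsin$ itself. A second, minor difference is organizational: the paper routes through an intermediate unitary $\widehat{L}_{\sigma,f,g,h}$ (discretized inputs but exact trig evaluation) and applies the triangle inequality $\norm{L-\widetilde L}\le\norm{L-\widehat L}+\norm{\widehat L-\widetilde L}$, whereas you merge both error sources in a single pass per factor; the resulting angle estimates, e.g.\ $|2\pi\gamma_y^+-\tfrac{\widetilde\alpha_y+\widetilde\beta_y}{2}|\le 2^{2-d}\pi$, are the same. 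Your route is self-contained and in fact yields a slightly tighter numerical constant; the paper's is a bit more elementary in that it sidesteps the $\arcsin$ regularity question entirely.
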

	
	\begin{proof}[Proof of \lem{ltildelclose}]
		Recall the unitary $\widehat{L}_{\sigma,f,g,h}=U_{\sigma^{-1}} \widehat{Q}_{f,g,h} U_{\sigma}$ defined in \eq{widehatq}.
		We will prove
		\begin{itemize}
			\item $\norm{L_{\sigma,f,g,h} - \widehat{L}_{\sigma,f,g,h}}_\infty \leq 2^{3-d}\pi \enspace.$
			\item $ \norm{\widehat{L}_{\sigma,f,g,h} - \widetilde{L}_{\sigma,\cf,\cg,\ch}}_\infty \leq 2^{4-\frac{d}{2}} \enspace.$
		\end{itemize}
		The claim then follows from the triangle inequality.
		
		\paragraph{Proof of the first bound}
		Fix a $y\in\bit{n-1}$, we have
\begin{align*}
	\norm{
	\begin{pmatrix}
		\cos\br{\theta_{y}}&-\sin\br{\theta_{y}}\\
		\sin\br{\theta_{y}}&\cos\br{\theta_{y}}
	\end{pmatrix} -
	\begin{pmatrix}
		\cos\br{\frac{\pi}{2}\xi_{y}}&-\sin\br{\frac{\pi}{2}\xi_{y}}\\
		\sin\br{\frac{\pi}{2}\xi_{y}}&\cos\br{\frac{\pi}{2}\xi_{y}}
	\end{pmatrix}
	}_\infty
	& \leq 2^{-d-1}\pi\enspace,\\
	\norm{\left(
		\begin{matrix}
			e^{\i\br{\frac{\alpha_y+\beta_y}{2}} } & 0\\
			0 & e^{-\i\br{\frac{\alpha_y+\beta_y}{2}}}
		\end{matrix}
	\right) -
	\left(
		\begin{matrix}
			e^{\i2\pi\gamma_y^{+}} & 0\\
			0 & e^{-\i2\pi\gamma_y^{+}}
		\end{matrix}
	\right)
	}_\infty
	&\leq  2^{1-d}\pi\enspace,\\
	\norm{\left(
		\begin{matrix}
			e^{\i\br{\frac{\alpha_y-\beta_y}{2}} } & 0\\
			0 & e^{-\i\br{\frac{\alpha_y-\beta_y}{2}}}
		\end{matrix}
	\right) -
	\left(
		\begin{matrix}
			e^{\i2\pi\gamma_y^{-}} & 0\\
			0 & e^{-\i2\pi\gamma_y^{-}}
		\end{matrix}
	\right)
	}_\infty
	& \leq 2^{1-d}\pi\enspace.
\end{align*}
Thus, by the triangle inequality and the decomposition for matrix $U\!\br{\alpha,\beta,\theta}$, we have for any $y\in\bit{n-1}$
$$
\norm{U\!\br{\alpha_y,\beta_y,\theta_y}-U\!\br{2\pi(\gamma_y^{+}+\gamma_y^{-}),2\pi(\gamma_y^{+}-\gamma_y^{-}),\frac{\pi}{2}\xi_y}}_\infty \leq 2^{3-d}\pi\enspace.
$$
Therefore we have
	\begin{align*}
		&\quad\norm{L_{\sigma,f,g,h} - \widehat{L}_{\sigma,f,g,h}}_\infty = \norm{Q_{f,g,h} - \widehat{Q}_{f,g,h}}_\infty \\
		& = \max_{y\in\bit{n-1}} \norm{U\!\br{2\pi(\gamma_y^{+}+\gamma_y^{-}),2\pi(\gamma_y^{+}-\gamma_y^{-}),\frac{\pi}{2}\xi_y} - U\!\br{\alpha_y,\beta_y,\theta_y}}_\infty\\
		& \leq 2^{3-d}\pi\enspace.
	\end{align*}
	
	\paragraph{Proof of the second bound}
	Fix a $y\in\bit{n-1}$, we have
	$\abs{\alpha_y-\widetilde{\alpha}_y}\leq 2^{1-d}\pi$, and
	$\abs{\beta_y-\widetilde{\beta}_y}\leq 2^{1-d}\pi$.
	Therefore,
	\begin{align}
	\norm{\left(
		\begin{matrix}
			e^{\i\br{\frac{\alpha_y+\beta_y}{2}} } & 0\\
			0 & e^{-\i\br{\frac{\alpha_y+\beta_y}{2}}}
		\end{matrix}
	\right) -
	\left(
		\begin{matrix}
			e^{\i\br{\frac{\widetilde{\alpha}_y+\widetilde{\beta}_y}{2}} } & 0\\
			0 & e^{-\i\br{\frac{\widetilde{\alpha}_y+\widetilde{\beta}_y}{2}}}
		\end{matrix}
	\right)
	}_\infty
	& \leq 2^{1-d}\pi\enspace, \label{eq:m1}\\
	\norm{\left(
		\begin{matrix}
			e^{\i\br{\frac{\alpha_y-\beta_y}{2}} } & 0\\
			0 & e^{-\i\br{\frac{\alpha_y-\beta_y}{2}}}
		\end{matrix}
	\right) -
	\left(
		\begin{matrix}
			e^{\i\br{\frac{\widetilde{\alpha}_y-\widetilde{\beta}_y}{2}} } & 0\\
			0 & e^{-\i\br{\frac{\widetilde{\alpha}_y-\widetilde{\beta}_y}{2}}}
		\end{matrix}
	\right)
	}_\infty
	& \leq 2^{1-d}\pi\enspace. \label{eq:m2}
	\end{align}
	Moreover, we have $\abs{\val{f(y)}-\widetilde{f}(y)}\leq 2^{-d}$ and thus
	\begin{align}
	&\norm{
	\begin{pmatrix}
		\cos\br{\theta_{y}}&-\sin\br{\theta_{y}}\\
		\sin\br{\theta_{y}}&\cos\br{\theta_{y}}
	\end{pmatrix} -
	\begin{pmatrix}
		\cos\br{\widetilde{\theta}_y}&-\sin\br{\widetilde{\theta}_y}\\
		\sin\br{\widetilde{\theta}_y}&\cos\br{\widetilde{\theta}_y}
	\end{pmatrix}
	}_\infty \nonumber \\
	= ~	&\norm{
	\begin{pmatrix}
		\sqrt{1-\val{f(y)}}-\sqrt{1-\widetilde{f}(y)} & -\sqrt{\val{f(y)}}+\sqrt{\widetilde{f}(y)}\\
		\sqrt{\val{f(y)}}-\sqrt{\widetilde{f}(y)} & \sqrt{1-\val{f(y)}}-\sqrt{1-\widetilde{f}(y)}
	\end{pmatrix}
	}_\infty \nonumber \\
	\leq ~&\norm{
	\begin{pmatrix}
		\sqrt{1-\val{f(y)}}-\sqrt{1-\widetilde{f}(y)} & -\sqrt{\val{f(y)}}+\sqrt{\widetilde{f}(y)}\\
		\sqrt{\val{f(y)}}-\sqrt{\widetilde{f}(y)} & \sqrt{1-\val{f(y)}}-\sqrt{1-\widetilde{f}(y)}
	\end{pmatrix}
	}_2 \nonumber \\
	= ~& \sqrt{2\br{\sqrt{1-\val{f(y)}}-\sqrt{1-\widetilde{f}(y)}}^2+2\br{\sqrt{\val{f(y)}}-\sqrt{\widetilde{f}(y)}}^2}
	\leq  2^{-\frac{d}{2}+\frac{3}{2}}\enspace \label{eq:m3},
	\end{align}
	where the last inequality is by the following fact.
	\begin{fact}
		For $a,b\in[0,1]$ and $d\in\N$, if $\abs{a-b}\leq 2^{-d}$, then $\abs{\sqrt{a}-\sqrt{b}}\leq 2^{-\frac{d}{2}+\frac{1}{2}}.$
	\end{fact}
	\begin{proof}
	Let $\delta = 2^{-d+1}$. If $a\leq \delta$ and $b\leq \delta$, we have $\abs{\sqrt{a}-\sqrt{b}} \leq \max\st{\sqrt{a},\sqrt{b}}\leq \sqrt{\delta} = 2^{-\frac{d}{2}+\frac{1}{2}}$. If $a>\delta$ or $b>\delta$, we will have $a>\delta/2$ and $b>\delta/2$, and therefore
	$$
	\abs{\sqrt{a}-\sqrt{b}} \leq \frac{1}{\sqrt{2\delta}}\cdot \abs{a-b} \leq 2^{-\frac{d}{2}-1} \enspace.
	$$\end{proof}
	Hence, we have by \eq{m1}, \eq{m2} and \eq{m3}, for any $y\in\bit{n-1}$,
	\begin{align*}
		\norm{U\!\br{\alpha_y,\beta_y,\theta_y}-U\!\br{\widetilde{\alpha}_y,\widetilde{\beta}_y,\widetilde{\theta}_y}}_\infty \leq 2^{4-\frac{d}{2}}\enspace.
	\end{align*}
	Therefore,
	\begin{align*}
		\norm{\widehat{L}_{\sigma,f,g,h} - \widetilde{L}_{\sigma,\cf,\cg,\ch}}_\infty
		~& =~ \norm{\widehat{Q}_{f,g,h} - \widetilde{Q}_{\cf,\cg,\ch}}_\infty \\
		& = ~\max_{y\in\bit{n-1}}
		\norm{U\!\br{\alpha_y,\beta_y,\theta_y}-U\!\br{\widetilde{\alpha}_y,\widetilde{\beta}_y,\widetilde{\theta}_y}}_\infty\\
		& \leq ~2^{4-\frac{d}{2}}\enspace .
	\end{align*}
	
	\end{proof}

\begin{proof}[Proof of \thm{kactorssc}]
	It is easy to see that the uniformity condition is satisfied. 
	Let $\kappa$ denote the key length.
	Quantum circuit $\srssconsc$ applies the operator $\srssconsc_{\rep{\sigma}{T},\rep{f}{T},\rep{g}{T},\rep{h}{T}}$ after reading $\rep{\sigma}{T},\rep{f}{T},\rep{g}{T}$ and $\rep{h}{T}$.
	To implement $\srssconsc_{\rep{\sigma}{T},\rep{f}{T},\rep{g}{T},\rep{h}{T}}$, we need to realize each of
	the $T = 10 (\secpar + 1) n$ unitary gates $L$.
	Since each gate $L$ can be implemented in $\poly{n,\secpar,\klen}$ time,
	the total construction time for $\srssconsc_{\rep{\sigma}{T},\rep{f}{T},\rep{g}{T},\rep{h}{T}}$
	is also $\poly{n,\secpar,\klen}$.
	
	Thus, it suffices to prove the requirement of \emph{Statistical Pseudorandomness} is satisfied.
	Fix $\ket{\eta}\in\S(\H)$. Define three distributions:
	\begin{itemize}
		\item $\nu$ be the distribution of
			$\srssconsc_{\rep{\sigma}{T},\rep{f}{T},\rep{g}{T},\rep{h}{T}}\!\!\ket{\eta}$
			with independent and uniformly random permutations $\rep{\sigma}{T}\subseteq S_{2^n}$,
			and random functions $\rep{f}{T},\rep{g}{T}$, $\rep{h}{T}\subseteq \{f: \bit{n-1}\to\bit{d}\}$.
		\item $\widetilde{\nu}$ be the distribution of
			$\csrssconsc_{\rep{\sigma}{T},\rep{\cf}{T},\rep{\cg}{T},\rep{\ch}{T}}\!\!\ket{\eta}$
			with independent and uniformly random permutations
			$\rep{\sigma}{T}\subseteq S_{2^n}$,
			and random functions
			$\rep{\cf}{T},\rep{\cg}{T}$, $\rep{\ch}{T} \subseteq\{f: \bit{n-1}\to[0,1)\}$.
		\item $\mu$ be the Haar measure on $\S(\H)$.
	\end{itemize}
	
	We first proof the trace distance between $\nu$ and $\widetilde{\nu}$ is negligible.
	To this end, we construct a coupling $\gamma_0$ of $\nu$ and $\widetilde{\nu}$ by using the same permutation $\sigma_t$ and letting $f_t$ be the function satisfying $f_t(y)$ is the $d$ digits after the binary point in $\widetilde{f}_t(y)$ for all $y\in\bit{n-1}$ (The same applies to $g_t$ and $h_t$).
	Therefore, for any $\br{\ket{\phi}, \ket{\varphi}}\sim \gamma_0$,
	we have
	\begin{align}
		&\norm{\ket{\phi} - \ket{\varphi}}_2 ~\nonumber\\
	 =	~& \norm{
			\srssconsc_{\rep{\sigma}{T},\rep{f}{T},\rep{g}{T},\rep{h}{T}}
		\!\!\ket{\eta}-
			\csrssconsc_{\rep{\sigma}{T},\rep{\cf}{T},\rep{\cg}{T},\rep{\ch}{T}}\!\!\ket{\eta}
		}_2 \nonumber\\
	\leq  ~& \norm{
			\srssconsc_{\rep{\sigma}{T},\rep{f}{T},\rep{g}{T},\rep{h}{T}}
			- \csrssconsc_{\rep{\sigma}{T},\rep{\cf}{T},\rep{\cg}{T},\rep{\ch}{T}}
		}_\infty \nonumber\\
	\leq ~& 2^{6-\frac{d}{2}} T = \frac{640 (\lambda+1)n}{\lambda^{\log \lambda}\cdot n^{\log n}}
	\enspace, \nonumber
	\end{align}
	where the last inequality is from \fct{uprod} and \lem{ltildelclose}.
	Thus, for any $l\in\poly{\secpar,n}$
	\begin{align} \label{eq:h1_h2_rssc}
		~&\norm{
			\E_{\ket{\phi}\sim \nu}\!\Br{\br{\ketbra{\phi}}^{\otimes l}}
			-
			\E_{\ket{\varphi}\sim \widetilde{\nu}}\!\Br{\br{\ketbra{\varphi}}^{\otimes l}}
		}_1 \nonumber\\
		\leq ~& \E_{\br{\ket{\phi}, \ket{\varphi}}\sim \gamma_0}\!\Br{
		\norm{
			\br{\ketbra{\phi}}^{\otimes l}
			-
			\br{\ketbra{\varphi}}^{\otimes l}
		}_1} \nonumber\\
		\leq ~& l \E_{\br{\ket{\phi}, \ket{\varphi}}\sim \gamma_0}\!\Br{
		\norm{
			\ketbra{\phi}
			-
			\ketbra{\varphi}
		}_1} \nonumber\\
		\leq ~& l
		\br{\E_{\br{\ket{\phi}, \ket{\varphi}}\sim \gamma_0}\!\Br{
		\norm{
			\ket{\phi}\br{ \bra{\phi} - \bra{\varphi}}
		}_1} + \E_{\br{\ket{\phi}, \ket{\varphi}}\sim \gamma_0}\!\Br{
		\norm{
			\br{ \ket{\phi} - \ket{\varphi}}\bra{\varphi}
		}_1}}\nonumber\\
		\leq ~& 2l
		\E_{\br{\ket{\phi}, \ket{\varphi}}\sim \gamma_0}\!\Br{
		\norm{ \ket{\phi} - \ket{\varphi} }_2 } \leq \frac{1280 (\lambda+1)nl}{\lambda^{\log \lambda}\cdot n^{\log n}} \enspace.
	\end{align}
	As for the trace distance between $\widetilde{\nu}$ and $\mu$, note that $\widetilde{\nu}$ is the output distribution of $T$-step parallel Kac's walk. Thus by \thm{mixing_time_w1_c}, we have
	\begin{align*}
		\wone{\widetilde{\nu}}{\mu} \leq \frac{1}{2^{\lambda n}} \enspace.
	\end{align*}
	So there exists a coupling of $\widetilde{v}$ and $\mu$, denoted by $\gamma_1$, that achieves
	\begin{align*}
		\E_{\br{\ket{\varphi}, \ket{\psi}}\sim \gamma_1}
		\!\Br{ \norm{\ket{\varphi}-\ket{\psi}}_2 }\leq \frac{3}{2^{\lambda n}} \enspace .
	\end{align*}
	Therefore, similar to Eq. \eq{h1_h2_rssc}, we have for any $l\in\poly{\secpar,n}$
	\begin{align} \label{eq:h2_h3_rssc}
		\norm{
			\E_{\ket{\varphi}\sim \widetilde{\nu}}\!\Br{\br{\ketbra{\varphi}}^{\otimes l}}
			-
			\E_{\ket{\psi}\sim \mu}\!\Br{\br{\ketbra{\psi}}^{\otimes l}}
		}_1 
		\leq 2l
		\E_{\br{\ket{\varphi}, \ket{\psi}}\sim \gamma_1}\!\Br{
		\norm{ \ket{\varphi} - \ket{\psi} }_2 } \leq \frac{6l}{2^{\lambda n}}\enspace .
	\end{align}
	Finally, by the triangle inequality, Eqs.~\eq{h1_h2_rssc} and \eq{h2_h3_rssc}, we have
	\begin{align*}
		\norm{
			\E_{\ket{\phi}\sim \nu}\!\Br{\br{\ketbra{\phi}}^{\otimes l}} -
			\E_{\ket{\psi}\sim \mu}\!\Br{\br{\ketbra{\psi}}^{\otimes l}} }_1 \leq
			\frac{1280 (\lambda+1)nl}{\lambda^{\log \lambda}\cdot n^{\log n}} +
			\frac{6l}{2^{\lambda n}} = \negl{\secpar} \enspace.
	\end{align*}
	This establishes the \emph{Statistical Pseudorandomness} property.

\end{proof}

\subsection{Proof of \thm{complexprss}}
\begin{proof}[Proof of \thm{complexprss}]
	The key length is bounded by $4T\cdot\poly{n,d} = \poly{n,\secpar}$
	since $\tau$ and $F$ are efficient. Thus the condition of polynomial-bounded key length is satisfied.
	To implement $\sgenc_k$, we need to realize each of
	the $T = 10 (\secpar + 1) n$ unitary gates $L$
	that compose $\sgenc_k$.
	Since each gate $L$ can be implemented in $\poly{n,\secpar}$ time
	due to the efficiency of $\tau$ and $F$,
	the total construction time for $\sgenc_k$
	is also $\poly{n,\secpar}$. Thus the uniformity is also satisfied.
	
	We now prove the pseudorandomness property.
	To this end, we consider three hybrids for an arbitrary $\ket{\phi}\in\S(\H)$ and $l\in\poly{\secpar,n}$:
	\begin{itemize}
		\item[H1:] \label{itm:h1c}
		$\ket{\phi_k}^{\otimes l}$ for
		$\ket{\phi_k} = \sgenc_k\!\ket{\phi}$
		where
		$k\leftarrow\br{\K_1\times\K_2\times\K_2\times\K_2}^T$
		is chosen uniformly at random.
		
		\item[H2:] \label{itm:h2c}
		$\ket{\varphi_{\rep{\sigma}{T},\rep{f}{T},\rep{g}{T},\rep{h}{T}}}^{\otimes l}$ for
		$$
		\ket{\varphi_{\rep{\sigma}{T},\rep{f}{T},\rep{g}{T},\rep{h}{T}}}
		= \srssconsc_{\rep{\sigma}{T},\rep{f}{T},\rep{g}{T},\rep{h}{T}}\!\!\ket{\phi}
		$$
		with independent and uniformly random permutations $\rep{\sigma}{T}\subseteq S_{2^n}$
		and random functions $\rep{f}{T},\rep{g}{T},\rep{h}{T}\subseteq \{f: \bit{n-1}\to\bit{d}\}$. Here the unitary 
		$\srssconsc_{\rep{\sigma}{T},\rep{f}{T},\rep{g}{T},\rep{h}{T}}$
			is defined in \defi{kactosrssc}.
		
		\item[H3:] \label{itm:h3c}
		$\ket{\psi}^{\otimes l}$ for $\ket{\psi}$ chosen according to the Haar measure $\mu$ on $\S(\H)$.
	\end{itemize}
	
	We first prove that \itm{h1c}{H1} and \itm{h2c}{H2}
	are computationally indistinguishable.
	By the quantum-secure property of $\tau$ and $F$,
	we know the following two situations are computationally indistinguishable for any polynomial-time quantum oracle algorithm $\A$ (see \lem{twokeys}):
	\begin{itemize}
		\item given oracle access to
		$\tau_{r_1},\cdots,\tau_{r_T}$ and $F_{u_1},\cdots,F_{u_T},F_{s_1},\cdots,F_{s_T},F_{t_1},\cdots,F_{t_T}$
		where
		$\rep{r}{T}\subseteq\K_1$ and $\rep{u}{T},\rep{s}{T},\rep{t}{T}\subseteq \K_2$
		are independent and uniformly random keys.
		
		\item given oracle access to
		independent and uniformly random permutations $\rep{\sigma}{T}$ $\subseteq S_{2^n}$
		and random functions
		$\rep{f}{T},\rep{g}{T},\rep{h}{T}\subseteq\{f: \bit{n-1}\to\bit{d}\}$.
	\end{itemize}
	Thus, we have for any
	polynomial-time quantum algorithm $\A$,
	$$
	\abs{
		\Pr\Br{\A\br{\ket{\phi_k}^{\otimes l}} = 1} -
		\Pr\Br{\A\br{\ket{\varphi_{\rep{\sigma}{T},\rep{f}{T},\rep{g}{T},\rep{h}{T}}}^{\otimes l}} = 1}
	} = \negl{\secpar} \enspace .
	$$
	
	For \itm{h2c}{H2} and \itm{h3c}{H3},
	they are statistically indistinguishable since $\enssrssconsc$ defined in \defi{kactosrssc}
	is an \rss~by \thm{kactorssc}.
	Finally, by the triangle inequality we establish \itm{h1c}{H1} and \itm{h3c}{H3}
	are computationally indistinguishable.
	This accomplishes the proof.
\end{proof}

\subsection{Proof of \lem{good_start}}
\begin{proof}[Proof of \lem{good_start}]
Let graph $G_0=(V=[n],E_0=\emptyset)$.
	We recursively define $G_1,\dots,G_l$ as follows:
	given $G_i=([n],E_i)$,
	choose a perfect matching $M_i$ of $[n]$
	uniformly at random,
	and set
	$$G_{i+1}=\br{[n],E_{i+1}=E_i\cup M_i}\enspace.$$
	
	Then
	\begin{align*}
		&\quad\Pr\!\Br{\PP_{T_0,1}\neq\st{[n]}}\\
		&=\Pr\!\Br{G_l\text{ is disconnected}}\\
		&=\Pr\!\Br{\exists ~S\subseteq [n]\text{ such that there is no edge between }S\text{ and }[n]\backslash S \text{ in } G_l}\\
		&\leq \sum_{i\in[m]\atop i\text{ is even}} \sum_{S\in \binom{[n]}{i}}\Pr\!\Br{\text{There is no edge between }S\text{ and }[n]\backslash S \text{ in } G_l}\\
		&\leq \sum_{i\in[m]\atop i\text{ is even}} \sum_{S\in \binom{[n]}{i}}
		\br{\frac{\frac{i!}{2^{i/2}(i/2)!}\cdot\frac{(n-i)!}{2^{(n-i)/2}((n-i)/2)!}}{\frac{n!}{2^{n/2}(n/2)!}}	}^l\\
		&=\sum_{i\in[m]\atop i\text{ is even}} \sum_{S\in \binom{[n]}{i}}
		\br{\frac{(n-i)(n-i-1)\cdots((n-i)/2+1)}{n(n-1)\cdots ((n+i)/2+1)}\cdot \frac{i(i-1)\cdots(i/2+1)}{((n+i)/2)\cdots(n/2+1)}}^l\\
		&\leq \sum_{i\in[m],~ i\text{ is even}} \sum_{S\in \binom{[n]}{i}}
		\br{\frac{i(i-1)\cdots(i/2+1)}{((n+i)/2)\cdots(n/2+1)}}^l\\
		&\leq \sum_{i\in[m]\atop i\text{ is even}} \sum_{S\in \binom{[n]}{i}}
		\br{\frac{2i}{n+i}}^{il/2}\\
		&\leq \sum_{i\in[m]\atop i\text{ is even}} \binom{n}{i}\br{\frac{2}{3}}^{il/2}
		\leq \br{1+\br{\frac{2}{3}}^{l/2}}^n-1
		\leq n\br{\frac{2}{3}}^{l/2}\br{1+\br{\frac{2}{3}}^{l/2}}^{n-1}\enspace.
		\end{align*}
	When $l=5(1+c)\log n$ and $n$ is sufficiently large,
	$$
	n\br{\frac{2}{3}}^{l/2}\leq n^{-c}
	\quad\text{and}\quad
	\br{1+\br{\frac{2}{3}}^{l/2}}^{n-1}\leq 2\enspace.
	$$
\end{proof}

\subsection{Proof of ~\thm{mixing_time_c}}
To prove \thm{mixing_time_c}, we extend the two-stage coupling introduced in real case to complex case. We have introduced the proportional coupling of complex space in \defi{prop_cpl_c}. We now extend the the non-Markovian coupling and then prove the mixing time. We assume $n=2m$.

\subsubsection{Non-Markovian Coupling}

\begin{definition}[Non-Markovian Coupling]\label{def:non_mark_cpl_c}
Fix $T_0\leq T\in\N$. We couple $\st{X_t}_{T_0\leq t\leq T}, \st{Y_t}_{T_0\leq t\leq T}$ in the following way:
\begin{enumerate}
	\item For each $T_0\leq t < T$, choose a perfect matching $$P_t = \st{\br{i^{(t)}_1, j^{(t)}_1},\dots,\br{i^{(t)}_m, j^{(t)}_m}}$$ uniformly at random.

	\item Set $\PP_{T,1} = \st{\st{1},\dots,\st{n}}$, and define a sequence of partitions $$\st{\PP_{t,k}}_{T_0\leq t<T,~ 1\leq k\leq m+1}$$ of $[n]$ in the same way as \defi{non_mark_cpl}.
		
	\item If $\PP_{T_0,1}=\st{\st{1, \dots, n}}$, we couple $\st{X_t}_{T_0\leq t\leq T}, \st{Y_t}_{T_0\leq t\leq T}$ in the following way:
	\begin{itemize}
			\item Define the set
			$$ H=\st{(t,k):T_0\leq t<T,~1\leq k\leq m,~\PP_{t,k}\neq\PP_{t,k+1}} \enspace.$$
			
			\item Fix $T_0\leq t<T$, $X_{t}$ and $Y_{t}$, and we couple $X_{t+1}$ and $Y_{t+1}$ in the following way:
				\begin{enumerate}
					\item Set $X_{t,1}=X_t$ and $Y_{t,1}=Y_t$.
					\item For $1\leq k\leq m$,
						\begin{enumerate}
							\item If $(t,k)\notin H$, we obtained $X_{t,k+1}$ and $Y_{t,k+1}$ in the same way as the proportional coupling defined in \defi{prop_cpl_c}.
							
							\item If $(t,k)\in H$, let $$l^{(t)}_k = \sqrt{\abs{X_{t,k}[i^{(t)}_k]}^2+\abs{X_{t,k}[j^{(t)}_k]}^2}$$ and $${l'}^{(t)}_k = \sqrt{\abs{Y_{t,k}[i^{(t)}_k]}^2+\abs{Y_{t,k}[j^{(t)}_k]}^2}.$$ Let $U_0$ and $U_0'$ be the unitary operators which satisfy
								\begin{align*}
									U_0\br{\begin{matrix}
									X_{t,k}[i^{(t)}_k]\\
									X_{t,k}[j^{(t)}_k]
									\end{matrix}} = \br{\begin{matrix}
									l^{(t)}_k\\
									0
									\end{matrix}}
									\text{\quad and\quad}
									U_0'\br{\begin{matrix}
									Y_{t,k}[i^{(t)}_k]\\
									Y_{t,k}[j^{(t)}_k]
									\end{matrix}} = \br{\begin{matrix}
									{l'}^{(t)}_k\\
									0
									\end{matrix}}\enspace.
							\end{align*}
							Then we choose the best distribution $\nu$ among all joint distributions on $[0,1)\times [0,1)$ with both marginal distributions uniformly distributed on $[0,1)$ which maximizes the probability of the following events when $(\zeta,\zeta')\sim\nu$ and $\alpha, \beta$ are uniformly sample from $[0,2\pi)$:
								\begin{align*}
									\sum_{i\in S_{r}(t,k+1)} \abs{ X_{t,k+1} [i]}^2 &= \sum_{i\in S_{r}(t,k+1)} \abs{Y_{t,k+1} [i]}^2, \enspace 1\leq r\leq l_{t,k+1}
								\end{align*}
								where
								\begin{align*}
								X_{t,k+1} = G_{\C}\!\br{i^{(t)}_k,j^{(t)}_k,\alpha,\beta,\arcsin \sqrt{\zeta}, U_0X_{t,k}}\enspace,\\
								Y_{t,k+1} = G_{\C}\!\br{i^{(t)}_k,j^{(t)}_k,\alpha,\beta,\arcsin \sqrt{\zeta'}, U_0'Y_{t,k}}\enspace.
								\end{align*}
								Then choose $(\zeta^{(t)}_k,{\zeta'}^{(t)}_k)\sim\nu$ and $\alpha^{(t)}_k, \beta^{(t)}_k$ uniformly from $[0,2\pi)$, and set
								\begin{align*}
								X_{t,k+1} = G_{\C}\!\br{i^{(t)}_k,j^{(t)}_k,\alpha^{(t)}_k,\beta^{(t)}_k,\arcsin \sqrt{\zeta^{(t)}_k}, U_0X_{t,k}}\enspace,\\
								Y_{t,k+1} = G_{\C}\!\br{i^{(t)}_k,j^{(t)}_k,\alpha^{(t)}_k,\beta^{(t)}_k,\arcsin \sqrt{{\zeta'}^{(t)}_k}, U_0'Y_{t,k}}\enspace.
								\end{align*}
						\end{enumerate}
					\item Set $X_{t+1} = X_{t,m+1}$ and $Y_{t+1} = Y_{t,m+1}$.
				\end{enumerate}
		\end{itemize}
	
	\item If $\PP_{T_0,1}\neq \st{\st{1, \dots, n}}$, for $T_0\leq t\leq T$, we couple $X_{t+1}$ and $Y_{t+1}$ in the following way:
	 	choose $2m$ independent angles
		$$\alpha_1^{(t)},\dots,\alpha_m^{(t)},\beta_1^{(t)},\dots,\beta_m^{(t)}\in[0,2\pi)$$
		uniformly at random.
		Additionally, $m$ independent real numbers $\zeta_1^{(t)}\dots,\zeta_m^{(t)}\in[0,1)$ are selected uniformly at random and compute
		$$
		\theta_k^{(t)}=\arcsin\br{\sqrt{\zeta_k^{(t)}}}
		$$
		for all $k\in\st{1,\dots,m}$.
		We set
		$$X_{t+1}=F_{\C}\!\br{P_t,\st{\alpha_k^{(t)}}_{k=1}^m,\st{\beta_k^{(t)}}_{k=1}^m,\st{\theta_k^{(t)}}_{k=1}^m,X_t} \enspace,$$
		$$Y_{t+1}=F_{\C}\!\br{P_t,\st{\alpha_k^{(t)}}_{k=1}^m,\st{\beta_k^{(t)}}_{k=1}^m,\st{\theta_k^{(t)}}_{k=1}^m,Y_t} \enspace.$$
		
\end{enumerate}
\end{definition}

For $T_0\leq t \leq T$, $1\leq k \leq m+1$ and $1\leq i\leq n$ we define
$$
A_{t,k}[i] = \abs{X_{t,k}[i]}^2\enspace,\quad B_{t,k}[i] = \abs{Y_{t,k}[i]}^2\enspace,
$$
and define the event $\A(t,k)$ by
$$
\A(t,k) = \st{\text{Eq. \eq{cond1_c} are satisfied for all $(t',k')\sqsubseteq (t,k)$ such that $T_0\leq t' \leq t$}}.
$$
\begin{align}
	\sum_{i\in S_{r}(t',k')}A_{t',k'}[i] &= \sum_{i\in S_{r}(t',k')} B_{t',k'}[i]\enspace, \quad 1\leq r\leq l_{t',k'}\enspace.\label{eq:cond1_c}	
\end{align}

Similarly, we have the following two lemmas as \lem{never_drift} and \lem{good_dist} before.

\begin{lemma}
\label{lem:never_drift_c}
	Fix $T_0<T$ and two chains $\st{X_t}_{T_0\leq t\leq T}, \st{Y_t}_{T_0\leq t\leq T}$ are coupled using the non-Markovian coupling defined in \defi{non_mark_cpl_c}.
	Fix $T_0\leq t \leq T$ and $1\leq k \leq m+1$.
	Then, on the event $\A(t,k)\cap\st{\PP_{T_0,1}=\st{1, \dots, n}}$, we have
	$$
	\norm{A_{t',k'}-B_{t',k'}}_{1,S}\leq \norm{A_{T_0}-B_{T_0}}_{1}
	$$
	for all $(t',k')\sqsubseteq(t,k)$ such that $T_0\leq t'\leq t$ and $S\in\PP_{t',k'}$. Moreover, for all $(t',k')\sqsubseteq(t,k)$ such that $T_0\leq t'\leq t$,
	$$
	\norm{A_{t',k'}-B_{t',k'}}_{1}\leq n\norm{A_{T_0}-B_{T_0}}_{1}\enspace.
	$$
\end{lemma}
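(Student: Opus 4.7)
The plan is to establish \lem{never_drift_c} by induction on the pair $(t',k')$ with respect to the partial order $\sqsubseteq$, mimicking the proof of the real analogue \lem{never_drift} (itself modeled after Lemma~4.4 of~\cite{PS17}). The base case is $(t',k')=(T_0,1)$: under the event $\PP_{T_0,1}=\st{[n]}$ the only admissible $S$ is $[n]$, and the first inequality reduces to the tautology $\norm{A_{T_0}-B_{T_0}}_1\le\norm{A_{T_0}-B_{T_0}}_1$. The inductive step advances $(t',k')$ to $(t',k'+1)$ (or, when $k'=m$, wraps to $(t'+1,1)$); in either case the only genuine change happens at a single pair $(i^{(t')}_{k'},j^{(t')}_{k'})$.

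The central algebraic observation that makes the complex case behave like the real one is this: after conjugating by the preliminary unitaries $U_0,U_0'$ chosen in \defi{non_mark_cpl_c}, the updated magnitudes only depend on $l^2=|X_{t',k'}[i]|^2+|X_{t',k'}[j]|^2$ (respectively $(l')^2$) and on the common angle $\theta$, giving
\[
A_{t',k'+1}[i]=l^2\cos^2\theta,\quad A_{t',k'+1}[j]=l^2\sin^2\theta,
\]
and analogously for $B$. Subtracting and applying the triangle inequality yields the key contraction
\[
\abs{A_{t',k'+1}[i]-B_{t',k'+1}[i]}+\abs{A_{t',k'+1}[j]-B_{t',k'+1}[j]}\leq \abs{A_{t',k'}[i]-B_{t',k'}[i]}+\abs{A_{t',k'}[j]-B_{t',k'}[j]}.
\]
This is valid whether the common angle $\theta$ was drawn via the proportional coupling or the special coupling for merge points, because both couplings reuse the same $\theta$ on the two chains once they have been aligned by $U_0,U_0'$.

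From here the two cases of the inductive step follow the pattern of the real proof. If $(t',k')\notin H$ then $\PP_{t',k'+1}=\PP_{t',k'}$, both $i,j$ lie in the same block $S$, and the inequality above gives $\norm{A_{t',k'+1}-B_{t',k'+1}}_{1,S}\le\norm{A_{t',k'}-B_{t',k'}}_{1,S}$; for every other block the restricted norm is unchanged. If $(t',k')\in H$ then $\PP_{t',k'+1}$ splits the block $S'\in\PP_{t',k'}$ containing $\{i,j\}$ into two blocks $S_{u},S_{v}$; the contraction inequality bounds $\norm{A_{t',k'+1}-B_{t',k'+1}}_{1,S_u}$ by $\norm{A_{t',k'}-B_{t',k'}}_{1,S'}$ (and similarly for $S_v$), which by the inductive hypothesis is at most $\norm{A_{T_0}-B_{T_0}}_1$. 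Finally, summing the first conclusion over all at most $n$ blocks of $\PP_{t',k'}$ gives the global bound $\norm{A_{t',k'}-B_{t',k'}}_1\le n\norm{A_{T_0}-B_{T_0}}_1$.

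The main obstacle I anticipate is bookkeeping rather than genuine difficulty: one must carefully separate the merge and non-merge cases, verify that in the merge case the two blocks $S_u,S_v$ of $\PP_{t',k'+1}$ jointly account for the block $S'$ of $\PP_{t',k'}$ without double-counting the coordinate $j$ that migrates to a different block, and confirm that the coupling definition indeed produces $\theta$ (or $\zeta=\sin^2\theta$) from a joint distribution whose marginals are uniform so that the conjugation-by-$U_0,U_0'$ trick supplies the \emph{same} scalar factor $\cos^2\theta$ (resp.\ $\sin^2\theta$) to $X$ and $Y$. Once these points are spelled out, the proof is a direct transcription of the real-case argument with $|{\cdot}|^2$ replacing the square of a real coordinate.
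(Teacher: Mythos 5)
There is a genuine gap at the merge points, and it is precisely the case that carries the content of the lemma. Your ``key contraction''
\[
\abs{A_{t',k'+1}[i]-B_{t',k'+1}[i]}+\abs{A_{t',k'+1}[j]-B_{t',k'+1}[j]}\leq \abs{A_{t',k'}[i]-B_{t',k'}[i]}+\abs{A_{t',k'}[j]-B_{t',k'}[j]}
\]
is valid only when the \emph{same} rotation parameter is applied to both chains after the alignment by $U_0,U_0'$. That is true at non-merge points (the proportional coupling of \defi{prop_cpl_c} reuses $\alpha,\beta,\theta$), but it is false at merge points $(t,k)\in H$: there \defi{non_mark_cpl_c} draws $(\zeta,\zeta')$ from the joint distribution of \lem{good_dist_c}, whose whole purpose is to force $A+B\zeta=C+D\zeta'$ with $A\ne C$ or $B\ne D$, which generically requires $\zeta\ne\zeta'$. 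With $\abs{X_{t,k+1}[i]}^2=l^2(1-\zeta)$ and $\abs{Y_{t,k+1}[i]}^2=(l')^2(1-\zeta')$ and $\zeta\ne\zeta'$, the difference $\abs{A_{t,k+1}[i]-B_{t,k+1}[i]}$ can be of order $1$ even when $l^2=(l')^2$ (take $\zeta=0$, $\zeta'=1$), so the contraction fails exactly where you invoke it to bound $\norm{A_{t',k'+1}-B_{t',k'+1}}_{1,S_u}$ by $\norm{A_{t',k'}-B_{t',k'}}_{1,S'}$. A telltale sign is that your argument never uses the conditioning event $\A(t,k)$, although the lemma is stated only on that event.

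The argument the paper is pointing to (Lemma 4.4 of \cite{PS17}) handles merge points without any pointwise contraction, using the block-sum equalities twice. When $S\in\PP_{t,k}$ splits into $S_u\ni i^{(t)}_k$ and $S_v\ni j^{(t)}_k$, the only coordinate of $S_u$ that changes is $i^{(t)}_k$, so the equality $\sum_{i'\in S_u}A_{t,k+1}[i']=\sum_{i'\in S_u}B_{t,k+1}[i']$ from $\A(t,k+1)$ gives
\[
A_{t,k+1}[i^{(t)}_k]-B_{t,k+1}[i^{(t)}_k]=-\sum_{i'\in S_u\setminus\{i^{(t)}_k\}}\br{A_{t,k}[i']-B_{t,k}[i']}\enspace,
\]
and the equality $\sum_{i'\in S}\br{A_{t,k}[i']-B_{t,k}[i']}=0$ from $\A(t,k)$ rewrites the right-hand side as $\br{A_{t,k}[i^{(t)}_k]-B_{t,k}[i^{(t)}_k]}+\sum_{i'\in S_v}\br{A_{t,k}[i']-B_{t,k}[i']}$ up to sign; the triangle inequality then yields $\norm{A_{t,k+1}-B_{t,k+1}}_{1,S_u}\leq\norm{A_{t,k}-B_{t,k}}_{1,S}$, which closes the induction against the global bound $\norm{A_{T_0}-B_{T_0}}_1$. (Using only $\A(t,k+1)$ gives a factor of $2$ per split, which does not telescope.) Your base case, your treatment of non-merge points, and the final summation over at most $n$ blocks are fine; the merge case needs to be replaced by this block-sum argument.
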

The proof of above lemma is the same as Lemma 4.4 in \cite{PS17}.

\begin{lemma} \label{lem:good_dist_c}
	Fix positive reals $1<p<q$. Let $\zeta,\zeta'\sim\mathrm{Unif}[0,1)$ and let
	$$
	S=A+B\zeta \quad\text{ and }
	\quad S'=C+D\zeta'
	$$
	for some $0\leq A,B,C,D\leq 1$ that satisfy
	$$
	\abs{A-C},\abs{B-D}\leq n^{-q}\quad \text {and}\quad B,D\geq n^{-p}\enspace.
	$$
	Then for sufficiently large $n$, there exists a coupling of $\zeta,\zeta'$ so that
	$$
	\Pr\!\Br{S=S'}\geq 1-3 n^{-(q-p)}\enspace.
	$$
\end{lemma}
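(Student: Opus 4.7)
The natural approach is to reduce this to a maximal coupling of two uniform distributions on nearly identical intervals. When $\zeta, \zeta' \sim \mathrm{Unif}[0,1)$, the pushforwards $S = A + B\zeta$ and $S' = C + D\zeta'$ are themselves uniform on $[A, A+B]$ and $[C, C+D]$, respectively, and the maps $\zeta \mapsto S$, $\zeta' \mapsto S'$ are bijections because $B, D \geq n^{-p} > 0$. Hence any coupling of $(S, S')$ with the correct marginals pulls back through $\zeta = (S-A)/B$ and $\zeta' = (S'-C)/D$ to a coupling of $(\zeta, \zeta')$ with uniform marginals on $[0,1)$, and the event $\{S = S'\}$ coincides with $\{A + B\zeta = C + D\zeta'\}$, which is exactly the event we want to make likely.

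By the standard maximal coupling theorem for real-valued random variables, there exists a coupling of $(S, S')$ satisfying $\Pr[S = S'] = 1 - \norm{\mathcal{L}(S) - \mathcal{L}(S')}_{\mathrm{TV}}$, so it suffices to prove $\norm{\mathcal{L}(S) - \mathcal{L}(S')}_{\mathrm{TV}} \leq 3 n^{-(q-p)}$. Writing the densities $f_S = \mathbf{1}_{[A,A+B]}/B$ and $f_{S'} = \mathbf{1}_{[C,C+D]}/D$, I would split $\tfrac{1}{2}\int|f_S - f_{S'}|$ into contributions from the symmetric difference and from the intersection of the two supporting intervals (note that the hypothesis $q>p$ ensures $|A-C| \leq n^{-q} < n^{-p} \leq B$ for large $n$, so the intervals do overlap). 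The symmetric difference has Lebesgue measure at most $2|A-C| + |B-D| \leq 3n^{-q}$, on which the integrand is bounded by $\max(1/B, 1/D) \leq n^p$, giving a contribution of at most $3n^{-(q-p)}$. On the intersection of length $L \leq \min(B, D)$, the integrand equals $|B-D|/(BD)$, so its contribution is at most $L \cdot |B-D|/(BD) \leq |B-D|/\max(B, D) \leq n^{-(q-p)}$. Halving and summing yields $\norm{\mathcal{L}(S) - \mathcal{L}(S')}_{\mathrm{TV}} \leq 2n^{-(q-p)} < 3n^{-(q-p)}$, as required.

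There is no serious obstacle here; the lemma is essentially a continuous analogue of the coupling lemma applied to two nearly identical uniform densities, and is considerably simpler than the real-case counterpart \cref{lem:good_dist} since there is no $\cos^2\theta$ nonlinearity to manage. The only balance requiring attention is between the symmetric-difference region (small length $n^{-q}$ integrated against large density $n^p$) and the intersection region (small density gap $|1/B - 1/D|$ integrated against length up to $1$); the hypothesis $q > p$ guarantees that both contributions decay at the same common rate $n^{-(q-p)}$, which is what makes the bound work.
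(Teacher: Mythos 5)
Your proposal is correct and follows essentially the same route as the paper: both pass to the uniform laws of $S$ and $S'$, invoke a maximal coupling, and bound the total variation distance between two uniform densities on nearly identical intervals (the paper computes $1-\int\min(f_S,f_{S'})$ over the intersection after assuming $B\ge D$, which is the same quantity as your $\tfrac12\int|f_S-f_{S'}|$ split into symmetric difference plus intersection). The numerical bounds match, so no further comment is needed.
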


\begin{proof}
	Without loss of generality, we assume $B\geq D$. The total variation distance between $S$ and $S'$ is
	\begin{align*}
		\norm{S-S'}_{\mathrm{TV}} &\leq 1 - \int_{\br{A,A+B}\cap \br{C,C+D}} \frac{1}{B}~ \de{x}\\
		& \leq 1 - \int_{A+n^{-q}} ^{A+B-2n^{-q}}  \frac{1}{B}~ \de{x}\\
		& = 1 - \br{B-3n^{-q}}\frac{1}{B}\\
		& = \frac{3n^{-q}}{B} \leq 3 n^{-(q-p)}\enspace.
	\end{align*}
	This implicitly defines a coupling of $\zeta,\zeta'$ that satisfies $
	\Pr\!\Br{S=S'}\geq 1-3 n^{-(q-p)}.
	$
\end{proof}

\subsubsection{Proof of \thm{mixing_time_c}}
Let $a=30,~b=24,~ T_0=500\log n, ~ T_1=15\log n,~ T=T_0+T_1 = 515\log n$. We construct a coupling of two copies $\st{X_t}_{t\geq 0}, \st{Y_t}_{t\geq 0}$ with starting point $X_0\in \SC^{n}$ and $Y_0\sim \mu_{\C}$. The coupling is as follows:
\begin{enumerate}
	\item Couple $\st{X_t}_{0\leq t\leq T_0}, \st{Y_t}_{0\leq t\leq T_0}$ by using the proportional coupling defined in \defi{prop_cpl_c}.
	\item Couple $\st{X_t}_{T_0\leq t\leq T}, \st{Y_t}_{T_0\leq t\leq T}$ by using the non-markovian coupling defined in \defi{non_mark_cpl_c}.
\end{enumerate}
Define the event
\begin{align*}
	\EE_1 &= \st{\norm{A_{T_0}-B_{T_0}}_1\geq n^{-a}}\enspace,\\
	\EE_2 &= \st{\PP_{T_0,1}\neq \st{\st{1,\dots,n}}}\enspace,\\
	\EE_3 &= \st{X_{T}\neq Y_{T}}\enspace.
\end{align*}
By \lem{coupling_lemma},
\begin{align}
	\sup_{X_0\in \SC^{n}} \norm{\mathcal{L}\br{X_{T}}-\mu_{\C}}_{\mathrm{TV}}
	&\leq \sup_{X_0\in \SC^{n}}\Pr\!\Br{\EE_3}\nonumber\\
	&\leq \sup_{X_0\in \SC^{n}}
	\br{\Pr\!\Br{\EE_1}+\Pr\!\Br{\EE_2}+\Pr\!\Br{\EE_3\cap \EE_1^c \cap \EE_2^c}}\enspace.\label{eq:tv_c}
\end{align}
By \lem{contraction_lemma_c} and Markov's inequality, we have
\begin{align}
	\Pr\!\Br{\EE_1} &= \Pr\!\Br{\norm{A_{T_0}-B_{T_0}}_1\geq n^{-a}}\nonumber\\
	&\leq \Pr\!\Br{\norm{A_{T_0}-B_{T_0}}_2\geq n^{-a-1/2}}\nonumber\\
	&\leq n^{2a+1}\cdot2\cdot\br{\frac{2}{3}}^{T_0}\leq \frac{1}{n^2}\enspace. \label{eq:event1_c}
\end{align}
Moreover, by \lem{good_start}, we have
\begin{align}
	\Pr\!\Br{\EE_2}\leq 2n^{-2}\enspace.\label{eq:event2_c}
\end{align}
In order to bound $\Pr\!\Br{\EE_3\cap \EE_1^c \cap \EE_2^c}$,
recall the definition of $\A(t,k)$ in \eq{cond1_c}.
If $\A(T,1)$ occurs, we have $\abs{X_t[i]}=\abs{Y_t[i]}$ for all $i\in\st{1,\dots,n}$.
Meanwhile, our coupling ensures $X_t[i]$ and $Y_t[i]$ have the same argument. This means $\A(T,1)$ implies $\EE_3^c$. As a result, $\EE_3\cap \EE_1^c \cap \EE_2^c$ implies $\A(T,1)^c \cap \EE_1^c \cap \EE_2^c$. So, we have
\begin{align}
&\quad\ \Pr\!\Br{\EE_3\cap \EE_1^c \cap \EE_2^c}\nonumber\\
&\leq
\Pr\!\Br{\A(T,1)^c \cap \EE_1^c \cap \EE_2^c} \nonumber\\
&\leq
\sum_{t=T_0}^{T-1} \Pr\!\Br{\A(t+1,1)^c \cap \A(t,1) \cap \EE_1^c \cap \EE_2^c} + \Pr\!\Br{\A(T_0,1)^c \cap \EE_1^c \cap \EE_2^c} \nonumber \\
&=
\sum_{t=T_0}^{T-1} \Pr\!\Br{\A(t,m+1)^c \cap \A(t,1) \cap \EE_1^c \cap \EE_2^c} + \Pr\!\Br{\A(T_0,1)^c \cap \EE_1^c \cap \EE_2^c} \nonumber \\
&\leq
\sum_{t=T_0}^{T-1} \sum_{k=1}^{m} \Pr\!\Br{\A(t,k+1)^c \cap \A(t,k) \cap \EE_1^c \cap \EE_2^c} + \Pr\!\Br{\A(T_0,1)^c \cap \EE_1^c \cap \EE_2^c}\enspace.\label{eq:e3_c}
\end{align}
Notice that if $\EE_2^c$ happens, we have
$$
\sum_{i\in \st{1, \dots, n}} \abs{X_{T_0,1} [i]}^2 = \sum_{i\in \st{1, \dots, n}} \abs{Y_{T_0,1} [i]}^2 = 1\enspace .\\
$$
Therefore,
\begin{align}
	\Pr\!\Br{\A(T_0,1)^c \cap \EE_1^c \cap \EE_2^c}=0\enspace.\label{eq:imps_c}
\end{align}
Combining \eq{e3_c} and \eq{imps_c}, we have
\begin{equation}
	\Pr\!\Br{\EE_3\cap \EE_1^c \cap \EE_2^c}\leq \sum_{t=T_0}^{T-1} \sum_{k=1}^{m} \Pr\!\Br{\A(t,k+1)^c \cap \A(t,k) \cap \EE_1^c \cap \EE_2^c}\enspace.\label{eq:e3final_c}
\end{equation}

We are now left to find a upper bound for $\Pr\!\Br{\A(t,k+1)^c \cap \A(t,k) \cap \EE_1^c \cap \EE_2^c}$ when $T_0\leq t\leq T-1$ and $1\leq k\leq m$. To this end, we define
$$
\B(t,k) = \st{\min_{(t',k')\sqsubseteq (t,k):t'\leq t}\min_{1\leq i\leq n} \abs{Y_{t',k'}[i]}^2\geq (2n)^{-b}}\enspace.
$$
Note that
\begin{align}
&\Pr\!\Br{\A(t,k+1)^c \cap \A(t,k) \cap \EE_1^c \cap \EE_2^c}\nonumber\\
&\hspace{2cm}\leq \Pr\!\Br{\A(t,k+1)^c \cap \A(t,k) \cap \B(t,k) \cap \EE_1^c \cap \EE_2^c} + \Pr\!\Br{\B(t,k)^c}\enspace.	\label{eq:e3single_c}
\end{align}
By \lem{marg_of_haar_c} and a union bound over all $(t',k')$ such that $(t',k')\sqsubseteq (t,k)$ and $t'\leq t$, we have for sufficiently large $n$,
\begin{equation}\label{eq:e3single2_c}
	\Pr\!\Br{\B(t,k)^c} \leq 15\cdot 2^{1-\frac{b}{3}}\cdot n^{3-\frac{b}{3}}\log(n)\leq \frac{1}{n^4}\enspace.
\end{equation}

Next, we consider two cases of the term $$\Pr\!\Br{\A(t,k+1)^c \cap \A(t,k) \cap \B(t,k) \cap \EE_1^c \cap \EE_2^c}$$ in \eq{e3single_c}: $(t,k)\notin H$ and $(t,k)\in H$,
where $H$ is defined in \defi{non_mark_cpl_c}.
In the case that $(t,k)\notin H$, we have $\PP_{t,k} = \PP_{t,k+1}$ and we apply the proportional coupling. Thus
$\A(t,k)$ implies $\A(t,k+1)$ which means
\begin{align}\label{eq:e3single11_c}
	\Pr\!\Br{\A(t,k+1)^c \cap \A(t,k) \cap \B(t,k) \cap \EE_1^c \cap \EE_2^c}=0\enspace.
\end{align}

In the other case that $(t,k)\in H$,
let
$$
A=\sum_{i\in S_{v_{t,k}}(t,k+1)\backslash j^{(t)}_k} \abs{X_{t,k}[i]}^2\enspace,
\quad \quad
B= \abs{X_{t,k} [i^{(t)}_k]}^2+\abs{X_{t,k} [j^{(t)}_k]}^2\enspace,
$$

$$
C=\sum_{i\in S_{v_{t,k}}(t,k+1)\backslash j^{(t)}_k} \abs{Y_{t,k}[i]}^2
\enspace,\quad \quad
D= \abs{Y_{t,k}[i^{(t)}_k]}^2+\abs{Y_{t,k}[j^{(t)}_k]}^2\enspace,
$$

$$
S = A + B\zeta^{(t)}_k
\enspace,\quad \quad
S'= C + D{\zeta'}^{(t)}_k\enspace.
$$
On the event $\A(t,k) \cap \B(t,k) \cap \EE_1^c \cap \EE_2^c$, we have by \lem{never_drift}
\begin{align*}
	\abs{A-C} \leq \norm{A_{t,k}-B_{t,k}}_1\leq n\norm{A_{T_0}-B_{T_0}}_1 \leq n^{1-a}\enspace.
\end{align*}
Similarly,
\begin{align*}
	\abs{B-D} \leq \norm{A_{t,k}-B_{t,k}}_1\leq n\norm{A_{T_0}-B_{T_0}}_1 \leq n^{1-a}\enspace.
\end{align*}
Moreover, $D\geq n^{-(b+1)}$ and $B\geq D-\abs{B-D}\geq n^{-(b+1)}$ for sufficiently large $n$.
Then apply \lem{good_dist_c} with $p=b+1,~ q=a-1$, we know there exists a distribution $\nu_0$ such that when $\br{\zeta^{(t)}_k,{\zeta'}^{(t)}_k}\sim \nu_0$, we have
$$
\Pr_{\br{\zeta^{(t)}_k,{\zeta'}^{(t)}_k}\sim \nu_0}\!
\Br{ S\neq S' | \A(t,k) \cap \B(t,k) \cap \EE_1^c \cap \EE_2^c} \leq 3 n^{-(a-b-2)}\enspace.
$$
We choose the best distribution which maximizes the probability of event described in \eq{cond1_c} , so
\begin{align} \label{eq:e3single12_c}
	&\Pr\!\Br{\A(t,k+1)^c \cap \A(t,k) \cap \B(t,k) \cap \EE_1^c \cap \EE_2^c} \nonumber\\
	\leq  &\Pr\Br{\left.\A(t,k+1)^c \right| \A(t,k) \cap \B(t,k) \cap \EE_1^c \cap \EE_2^c} \nonumber \\
	\leq &\Pr_{\br{\zeta^{(t)}_k,{\zeta'}^{(t)}_k}\sim \nu_0}\Br{\left.\A(t,k+1)^c \right|\A(t,k) \cap \B(t,k) \cap \EE_1^c \cap \EE_2^c} \nonumber \\
	= &\Pr_{\br{\zeta^{(t)}_k,{\zeta'}^{(t)}_k}\sim \nu_0}\Br{\left.S\neq S' \right|\A(t,k) \cap \B(t,k) \cap \EE_1^c \cap \EE_2^c} \leq 3 n^{-(a-b-2)}\enspace.
\end{align}

Combining \eq{e3final_c}, \eq{e3single_c}, \eq{e3single11_c}, \eq{e3single12_c} and \eq{e3single2_c}, we have for sufficiently large $n$
\begin{align} \label{eq:event3_c}
	\Pr\!\Br{\EE_3\cap \EE_1^c \cap \EE_2^c}\leq \sum_{t=T_0}^{T-1} \sum_{k=1}^{m} 3 n^{-(a-b-2)} + n^{-4} \leq  \frac{1}{n^2}\enspace.
\end{align}

By \eq{tv_c}, \eq{event1_c}, \eq{event2_c} and \eq{event3_c}, we have for sufficiently large $n$ and $T=515\log n$
\begin{align*}
	\sup_{X_0\in \SC^{n}} \norm{\mathcal{L}\br{X_{T}}-\mu_{\C}}_{\mathrm{TV}} \leq {\frac{1}{2n}}\enspace.
\end{align*}

As for $T\geq 515\log n$,
by \lem{fcts_tv}
we have
\begin{align*}
	\sup_{X_0\in \SC^{n}} \norm{\mathcal{L}\br{X_{T}}-\mu_{\C}}_{\mathrm{TV}}\leq 2\br{{\frac{1}{n}}}^{\lint{\frac{T}{515\log n}}} \leq \frac{1}{2^{\br{c/515-1}\log n-1}}\enspace.	
\end{align*}

\subsection{Proof of \lem{realNNfullmap}}\label{sec:proofofrealnnfullmap}
\begin{proof}[Proof of \lem{realNNfullmap}]
To see why $\widetilde{\NN}= \SR^{2^n}$, we prove $\SR^{2^n}\subseteq \widetilde{\NN}$ since $\widetilde{\NN} \subseteq \SR^{2^n}$ is trivial.
	Given $\ket{\xi}\in\SR^{2^n}$, we prove
	$\ket{\xi}\in \widetilde{\NN} $
	by constructing a series of
	$(\sigma_t,\cf_t)$ for $t\leq n$ such that
	$$\KK_{\sigma_n,\cf_n}\cdots\KK_{\sigma_1,\cf_1}\ket{\eta}=\ket{\xi}$$
	(we let $\KK_{\sigma_t,\cf_t} = \id$ for $t>n$).
	
	The idea is to bisect $\KK_{\sigma_t,\cf_t}\cdots\KK_{\sigma_1,\cf_1}\ket{\eta}$ and $\ket{\xi}$ accordingly and recursively, always keeping the 2-norm of each part of the two vectors equal.
	
 To illustrate the process in detail, we begin with indexing the entries of a vector in $\SR^{2^n}$ by a bit string of length $n$. In step $t\in[n]$, we divide the index set $\bit{n}$ into $2^t$ sets based on the first $t$ bits. For $y\in\bit{t}$, we define $S_y$ to be the set of all elements in $\bit{n}$ with prefix $y$. We then split $\ket{\eta}$ and $\ket{\xi}$ into $2^t$ sub-vectors, according to $S_y$ where $y\in\bit{t}$. Let $L_y$ be the function from $\SR^{2^n}$ to $\R$ that gives the length of the  sub-vector corresponding to $S_y$. Let
	$$
\ket{\eta_0}=\ket{\eta},\quad\ket{\eta_t}=\KK_{\sigma_t,\cf_t}\ket{\eta_{t-1}}\quad\text{for}\enspace t\in[n]\enspace.
$$
Our goal is to construct $(\sigma_t,\cf_t)$ at each step $t\in[n]$, such that $L_y\!\br{\ket{\eta_t}}=
L_y\!\br{\ket{\xi}}$ for every $t\in[n-1],y\in\bit{t}$ and that $\ket{\eta_n}=\ket{\xi}$.

To accomplish this, we first define $\sigma_t \in S_{2^n}$ to be
$$\sigma_t(x)= {x_tx_1\dots x_{t-1}x_{t+1}\dots x_n}\quad \text{for all }x\in\bit{n}\enspace.$$
For any $y\in\bit{t-1}$,
$\sigma_t$ matches every index in $S_{ {y0}}$
with another index in $S_{ {y1}}$
that shares a common suffix of length $n-t$.

Next, we move to construct $\cf_t$.
Define $\alpha_y$ for each $y\in\bit{t-1}$ as
$$\alpha_y=\begin{cases}\arccos\frac{L_{ {y0}}\br{\ket{\xi}}}{L_{y}\br{\ket{\xi}}}&\text{if }L_{y}\br{\ket{\xi}}\ne0\text{ and }t<n\enspace,\\
0&\text{if }L_{y}\br{\ket{\xi}}=0\text{ and }t<n\enspace.
\end{cases}$$
When $t=n$, we define $\alpha_y$ for $y\in\bit{n-1}$ to be any angle satisfying
\begin{align*}
\br{\ket{\xi}}_{ {y0}}&=L_y\br{\ket{\xi}}\cos\alpha_y \enspace,\\
\br{\ket{\xi}}_{ {y1}}&=L_y\br{\ket{\xi}}\sin\alpha_y \enspace.
\end{align*}
We want to design $\cf_{t}$ which controls the rotation of each index pair
to let each pair of indices between $S_{ {y0}}$ and $S_{ {y1}}$ (induced by $\sigma_t$) form an angle $\alpha_y$
with the $x$-axis in a two-dimensional Cartesian coordinate system.
To this end, for each $y\in\bit{t-1}$ and $z\in\bit{n-t}$,
we define $\beta_y(z)$ to be any angle satisfying
\begin{align*}
\br{\ket{\eta_{t-1}}}_{ {y0z}}&=\sqrt{\br{\ket{\eta_{t-1}}}_{ {y0z}}^2+\br{\ket{\eta_{t-1}}}_{ {y1z}}^2}\cos\beta_y(z) ,\\
\br{\ket{\eta_{t-1}}}_{ {y1z}}&=\sqrt{\br{\ket{\eta_{t-1}}}_{ {y0z}}^2+\br{\ket{\eta_{t-1}}}_{ {y1z}}^2}\sin\beta_y(z) ,
\end{align*}
and we define $\cf_t$ to be
$$
\cf_t( {yz})=\br{\alpha_y-\beta_y(z)}/(2\pi)
$$
for all $y\in\bit{t-1}$ and $z\in\bit{n-t}$. It can be easily verified that
$$L_y\br{\ket{\eta_t}}=L_y\br{\ket{\xi}}\enspace\text{ for }\enspace t\in[n-1] \enspace\text{and} \enspace y\in\bit{t} ,$$
and $\ket{\eta_n}=\ket{\xi}$.
\end{proof}

\subsection{Proof of \lem{realNNepsnet}}
\begin{proof}[Proof of \lem{realNNepsnet}]
 	Let $\ket{u} = \csrsscons_{\rep{\sigma}{T},\rep{\cf}{T}}\!\!\ket{\eta}\in\widetilde{\NN}$ for some $\rep{\sigma}{T} \subseteq S_{2^n}$ and $\rep{\cf}{T}\subseteq\{f: \bit{n-1}\to [0,1)\}$.
	For every $t\in[T]$, we define $f_t$ by letting $f_t(y)$ be the $d$ digits after the binary point in $\widetilde{f}_t(y)$ for all $y\in\bit{n-1}$.
	It is evident that $\ket{v} = \srsscons_{\rep{\sigma}{T},\rep{f}{T}}\!\!\ket{\eta}\in\NN$.
	And
	\begin{align}
		\norm{\ket{u} - \ket{v}}_2 ~
	 =	~& \norm{
			\csrsscons_{\rep{\sigma}{T},\rep{\cf}{T}}\!\!\ket{\eta} -
			\srsscons_{\rep{\sigma}{T},\rep{f}{T}}\!\!\ket{\eta}
		}_2 \nonumber\\
	\leq  ~& \norm{
			\csrsscons_{\rep{\sigma}{T},\rep{\cf}{T}} -
			\srsscons_{\rep{\sigma}{T},\rep{f}{T}}
		}_\infty \leq  2^{1-d}\pi T = \frac{1030\pi (\secpar+1)n}{\secpar^{\log \secpar} n^{\log n}} \enspace, \label{eq:uvclose}
	\end{align}
	where the last inequality is from \fct{uprod} and \lem{k_hk_close}.
	This proves that $\NN$ is indeed an $\epsilon$-net
	for $\widetilde{\NN}$ where $\epsilon  = \frac{1030\pi (\secpar+1)n}{\secpar^{\log \secpar} n^{\log n}} =  \negl{\secpar}$. Combining with \lem{realNNfullmap}, we conclude the result.
\end{proof}

\subsection{Proof of \thm{kactosrssc}}
To prove \thm{kactosrssc}, recall the ensemble of (infinitely many) unitary operators
$\enscsrssconsc \coloneq \st{\csrssconsc }_{\secpar}$ defined in \sect{proof_rssc}.
	
	\begin{proposition}
	For $T = 515 (\secpar + 1) n $,
	the ensemble of unitary operator $\enscsrssconsc$
	is a \csrss.
	\end{proposition}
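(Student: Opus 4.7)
The plan is to observe that when the inputs $(\sigma_t, \cf_t, \cg_t, \ch_t)$ are drawn independently and uniformly at random for each $t \in [T]$, the unitary $\widetilde{L}_{\sigma_t, \cf_t, \cg_t, \ch_t}$ realizes exactly one step of the parallel Kac's walk on $\SC^{2^n}$. Consequently, applying $\csrssconsc_{\rep{\sigma}{T}, \rep{\cf}{T}, \rep{\cg}{T}, \rep{\ch}{T}}$ to an arbitrary input state $\ket{\phi}$ produces precisely the distribution of the $T$-step parallel Kac's walk on complex vectors started at $\ket{\phi}$. Since \defi{ctrss} requires only the total-variation approximation of Haar randomness, the proof reduces to this walk identification plus a parameter substitution into \thm{mixing_time_c}.

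The first step is to verify that a uniformly random $\sigma \in S_{2^n}$ induces a uniformly random perfect matching on $[2^n]$ through the pairs $\{\{\sigma^{-1}(\overline{0y}), \sigma^{-1}(\overline{1y})\}\}_{y \in \bit{n-1}}$. This follows from a counting argument: every perfect matching $M$ on $[2^n]$ arises from exactly $2^{n-1}! \cdot 2^{2^{n-1}}$ permutations (counting orderings of the $2^{n-1}$ pairs in $M$ and the two orientations within each pair), a quantity independent of $M$. Next, for uniformly random $\cf, \cg, \ch: \bit{n-1} \to [0,1)$, the quantities $\widetilde{\alpha}_y = 2\pi\cdot\cg(y)$ and $\widetilde{\beta}_y = 2\pi\cdot\ch(y)$ are independent uniform angles in $[0, 2\pi)$, and $\cf(y)$ is uniform in $[0,1)$ so that $\widetilde{\theta}_y = \arcsin(\sqrt{\cf(y)})$ is sampled exactly as prescribed by the definition of the parallel Kac's walk on complex vectors. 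Combining these observations, one application of $\widetilde{L}$ with uniform inputs realizes one step of the walk, and composition over $t \in [T]$ yields $T$ independent steps.

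Finally, I would invoke \thm{mixing_time_c} on $\SC^{2^n}$ with $c = 515(\secpar+1) > 515$, so that $T = cn = 515(\secpar+1)n = c \log(2^n)$. This yields
\[
\sup_{\ket{\phi} \in \SC^{2^n}} \tv{\nu - \mu} \leq \frac{1}{2^{(c/515 - 1)\log(2^n) - 1}} = \frac{1}{2^{\secpar n - 1}},
\]
which is negligible in $\secpar$, establishing the required total-variation approximation of Haar randomness.

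The main obstacle I anticipate is ensuring that the joint distribution produced by uniform $(\sigma, \cf, \cg, \ch)$ truly matches the joint distribution underlying one step of the walk: the matching and the per-pair angles must be jointly independent with the correct marginals, not merely have the right marginals individually. This is combinatorially routine but must be stated cleanly so that the walk identification is exact; once that is in place, the total-variation bound is an immediate consequence of \thm{mixing_time_c} under the stated parameter choice.
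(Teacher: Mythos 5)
Your proposal is correct and follows exactly the paper's route: the paper's proof is the one-line observation that a uniformly random $\csrssconsc_{\rep{\sigma}{T},\rep{\cf}{T},\rep{\cg}{T},\rep{\ch}{T}}$ realizes a $T$-step parallel Kac's walk on $\SC^{2^n}$, followed by an appeal to \thm{mixing_time_c} with the same parameter substitution $c=515(\secpar+1)$. Your extra details (the count of $(2^{n-1})!\cdot 2^{2^{n-1}}$ permutations per matching and the verification of the angle distributions) correctly substantiate what the paper leaves implicit.
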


	\begin{proof}
	Note that a uniformly random
	$\csrssconsc_{\rep{\sigma}{T},\rep{\cf}{T},\rep{\cg}{T},\rep{\ch}{T}}$
	corresponds to a $T$-step parallel Kac's walk on $\SC^{2^n}$.
	The proposition then follows from \thm{mixing_time_c} and the definition of \csrss.
	\end{proof}
	
	Let $\eta\in\S(H)$ be an arbitrary state. Denote
	$$
	\NN = \st{
	\srssconsc_{\rep{\sigma}{T},\rep{f}{T},\rep{g}{T},\rep{h}{T}}
		\!\!\ket{\eta} }
	\enspace \text{and}\enspace
	\widetilde{\NN} = \st{\csrssconsc_{\rep{\sigma}{T},\rep{\cf}{T},\rep{\cg}{T},\rep{\ch}{T}}\!\!\ket{\eta}}\enspace.
	$$
	We prove the following two lemmas.
	
	\begin{lemma}\label{lem:complexNNfullmap}
		$\widetilde{\NN}= \S(\H)$ .
	\end{lemma}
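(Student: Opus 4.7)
The plan is to mirror the constructive proof of \cref{lem:realNNfullmap}, extending the bisection argument from $\so(2)$ to $\su(2)$. For an arbitrary target $\ket{\xi} \in \S(\H)$ I will explicitly construct permutations $\sigma_t$ and angle functions $\cf_t, \cg_t, \ch_t$ for $t \in [n]$ such that $\widetilde{L}_{\sigma_n,\cf_n,\cg_n,\ch_n} \cdots \widetilde{L}_{\sigma_1,\cf_1,\cg_1,\ch_1}\ket{\eta} = \ket{\xi}$, and set $\cf_t = \cg_t = \ch_t \equiv 0$ for $t > n$ so that the trailing iterations are identity. I take $\sigma_t(x) = \overline{x_t x_1 \ldots x_{t-1} x_{t+1} \ldots x_n}$ exactly as in the real case, so that after $U_{\sigma_t}$ the pair at label $\overline{yz}$ (with $y \in \bit{t-1}$, $z \in \bit{n-t}$) corresponds to the original indices $(\overline{y0z}, \overline{y1z})$, and $\widetilde{Q}_{\cf_t,\cg_t,\ch_t}$ applies a chosen $\su(2)$ rotation on that pair.

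Let $L_y(\ket{v})$ denote the $\ell_2$-norm of the entries of $\ket{v}$ on the indices with prefix $y$, and write $\ket{\eta_t} = \widetilde{L}_{\sigma_t, \cf_t, \cg_t, \ch_t}\ket{\eta_{t-1}}$. I proceed by induction on $t$ with invariant $L_y(\ket{\eta_t}) = L_y(\ket{\xi})$ for every $y \in \bit{t}$. For $t \in [n-1]$, only magnitudes matter: for each $y,z$ I pick any unit vector $(c,d) \in \C^2$ with $|c|^2 = L_{\overline{y0}}(\ket{\xi})^2 / L_y(\ket{\xi})^2$ (the phase being free) and take the pair-level unitary to be any $\su(2)$ element sending the normalized current pair to $(c,d)$; such an element exists because $\su(2)$ acts transitively on the unit sphere of $\C^2$. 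For the final step $t = n$, each prefix class contains exactly two indices, and by the induction hypothesis the current 2D subvector and the target $(\xi[\overline{y0}], \xi[\overline{y1}])$ share the same norm, so I pick the $\su(2)$ matrix linking them to match both magnitude and phase exactly.

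The main obstacle is the parameter-range restriction. Because $\cf_t$ takes values in $[0,1)$, the rotation angle $\widetilde{\theta} = \arcsin\sqrt{\cf_t(\cdot)}$ lies in $[0,\pi/2)$, so the parameterization $U(\widetilde{\alpha},\widetilde{\beta},\widetilde{\theta})$ covers all of $\su(2)$ \emph{except} the anti-diagonal matrices (those with $U_{1,1}=0$). Such matrices are required precisely when a current pair has the form $(a,0)$ but must be mapped to $(0,d)$, or vice versa. I would resolve this by splitting any such offending iteration into two consecutive iterations using the \emph{same} permutation $\sigma_t$: a first iteration whose pair rotations, chosen with some $\widetilde{\theta}\in(0,\pi/2)$, make both entries of every affected pair nonzero, followed by a second iteration realizing the intended composite transformation with each pair-level $U_{1,1}\neq 0$. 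This at most doubles the number of effective steps to $2n$, which is well within the budget $T \ge 2n$; the remaining iterations are padded to the identity as described.
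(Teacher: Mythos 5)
Your approach is correct in spirit and follows the same overall bisection strategy inherited from \cref{lem:realNNfullmap}, but it diverges from the paper's proof in how the parameter-range obstruction is handled. The paper uses $n+1$ iterations: for $t\in[n-1]$ it matches the prefix-block norms $L_y$, at $t=n$ it maps each final pair $(\eta_{n-1}[\overline{y0}],\eta_{n-1}[\overline{y1}])$ to the canonical form $(L_y,0)$, and at $t=n+1$ it rotates $(L_y,0)$ to the target pair $(\xi[\overline{y0}],\xi[\overline{y1}])$ using $\sigma_{n+1}=\sigma_n$. You instead keep $n$ nominal iterations and dynamically split any iteration whose required pair-level rotation falls outside the chart $\theta\in[0,\pi/2)$ into two consecutive iterations with the same $\sigma_t$, exploiting the fact that $\widetilde{L}_{\sigma,\cdot}\widetilde{L}_{\sigma,\cdot}=U_{\sigma^{-1}}\widetilde{Q}\widetilde{Q}'U_\sigma$ composes the $2\times 2$ blocks. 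Both yield a construction within the step budget $T$, and your approach has the slight advantage of explicitly confronting the boundary case $\cos\widetilde\theta=0$, which the paper's formula $\cf_n(y)=(r^\eta_{y,1})^2/((r^\eta_{y,0})^2+(r^\eta_{y,1})^2)$ quietly pushes to $\cf_n(y)=1\notin[0,1)$ when $r^\eta_{y,0}=0$.

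One imprecision worth flagging: your characterization of when an anti-diagonal rotation is ``required'' --- ``precisely when a current pair has the form $(a,0)$ but must be mapped to $(0,d)$, or vice versa'' --- is incomplete. The unique $U\in\su(2)$ sending the unit pair $(a,b)$ to $(c,d)$ satisfies $U_{1,1}=c\bar a+\bar d b$, which vanishes whenever $c\bar a=-\bar d b$; e.g.\ $(1,1)/\sqrt2\mapsto(1,-1)/\sqrt2$ requires $U_{1,1}=0$ even though no entry is zero. For the intermediate steps $t\in[n-1]$ this is harmless because the target phases are free and can be chosen to dodge the condition, but at the final step $t=n$ the target is pinned by $\ket{\xi}$ and the condition can hold nontrivially. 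Your splitting fix still works in this general case --- there are enough free phases in the intermediate pair $(c_1',c_2')$ to make both sub-iterations avoid $U_{1,1}=0$ --- but the justification should be phrased in terms of the criterion $c\bar a+\bar d b\ne 0$ for each sub-step rather than the two special-looking cases you list.
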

	\begin{proof}
	We prove $\SC^{2^n}\subseteq \widetilde{\NN}$ since $\widetilde{\NN} \subseteq \SC^{2^n}$ is trivial.
	Given $\ket{\xi}\in\SC^{2^n}$, we prove
	$\ket{\xi}\in \widetilde{\NN} $
	by constructing a series of
	$(\sigma_t,\cf_t,\cg_t,\ch_t)$ for $t\leq n+1$ such that
	$$\LL_{\sigma_n,\cf_{n+1},\cg_{n+1},\ch_{n+1}}\cdots\LL_{\sigma_1,\cf_1,\cg_1,\ch_1}\ket{\eta}=\ket{\xi}$$
	(we let $\LL_{\sigma_t,\cf_t,\cg_t,\ch_t} = \id$ for $t>n+1$).
	
	The proof idea is similar to \app{proofofrealnnfullmap}. For any $t\in[n]$ and $y\in\bit{t}$, we use the definition of $S_y$, $L_y$ (change the domain to $\SC^{2^n}$), and $\ket{\eta_t}$ (change $\KK_{\sigma_t,\cf_t}$ to $\LL_{\sigma_t,\cf_t,\cg_t,\ch_t}$) there. Our goal is to construct $(\sigma_t,\cf_t,\cg_t,\ch_t)$ at each step $t\in[n+1]$, such that
$L_y\!\br{\ket{\eta_t}}=
L_y\!\br{\ket{\xi}}$ for every $t\in[n-1],y\in\bit{t}$ and that $\ket{\eta_{n+1}}=\ket{\xi}$.
That is, after $n-1$ steps, for every $y\in\bit{n-1}$, the two-dimensional sub-vectors of $\ket{\eta_n}$ and $\ket{\xi}$ induced by $S_y$ have the same length. In the final two steps, we adjust the two sub-vectors to be equal.

For any $t\in[n]$, let $\sigma_t$ be defined as in \app{proofofrealnnfullmap}. We now construct $\cf_t,\cg_t,\ch_t$.

	For any $t\in[n],y\in\bit{t-1},z\in\bit{n-t}$, suppose that
	$$\br{\ket{\eta_{t-1}}}_{ {y0z}}=e^{\i\theta^\eta_{y,0,z}}r^\eta_{y,0,z}\quad\text{and}\quad\br{\ket{\eta_{t-1}}}_{ {y1z}}=e^{\i\theta^\eta_{y,1,z}}r^\eta_{y,1,z}\enspace,$$
	$$\br{\ket{\xi}}_{ {y0z}}=e^{\i\theta^\xi_{y,0,z}}r^\xi_{y,0,z}\quad\text{and}\quad\br{\ket{\xi}}_{ {y1z}}=e^{\i\theta^\xi_{y,1,z}}r^\xi_{y,1,z}\enspace,$$
	where $\theta^\eta_{y,0,z},\theta^\eta_{y,1,z},\theta^\xi_{y,0,z},\theta^\xi_{y,1,z}\in[0,2\pi),r^\eta_{y,0,z},r^\eta_{y,1,z},r^\xi_{y,0,z},r^\xi_{y,1,z}\in[0,1].$
	
	For $t\in[n-1]$, define
	$$\alpha_y=\begin{cases}\arccos\frac{L_{ {y0}}\br{\ket{\xi}}}{L_{y}\br{\ket{\xi}}}&\text{if }L_{y}\br{\ket{\xi}}\ne0\enspace,\\
0&\text{otherwise.}
\end{cases}$$
and
	$$\rho_y(z)=\begin{cases}\arccos\frac{r^\eta_{y,0,z}}{\sqrt{\br{r^\eta_{y,0,z}}^2+\br{r^\eta_{y,1,z}}^2}}&\text{if }\br{r^\eta_{y,0,z}}^2+\br{r^\eta_{y,1,z}}^2>0\enspace,\\
0&\text{otherwise.}
\end{cases}$$

	We then define $\cf_t,\cg_t,\ch_t$ to be
	\begin{align*}
\cf_t( {yz})&=\sin^2\br{\alpha_y-\rho_y(z)},\\
\cg_t( {yz})&=\theta^\eta_{y,1,z}/(2\pi),\\
\ch_t( {yz})&=\theta^\eta_{y,0,z}/(2\pi),
\end{align*}
for all $y\in\bit{t-1}$ and $z\in\bit{n-t}$. It can be easily verified that
$$L_y\br{\ket{\eta_t}}=L_y\br{\ket{\xi}}\enspace\text{ for }\enspace t\in[n-1] \enspace\text{and} \enspace y\in\bit{t} .$$
For $t=n$, we set the second entry of the sub-vector of $\ket{\eta_{n-1}}$ induced by $S_y$ to zero for all $y\in\bit{n-1}$. That is, define
\begin{align*}
\cf_n(y)&=\begin{cases}
\frac{\br{r^\eta_{y,1}}^2}{\br{r^\eta_{y,0}}^2+\br{r^\eta_{y,1}}^2}&\text{if }\br{r^\eta_{y,0}}^2+\br{r^\eta_{y,1}}^2>0,\\
0&\text{otherwise,}
\end{cases}\\
\cg_n(y)&=-\theta^\eta_{y,0}/(2\pi),\\
\ch_n(y)&=\br{\pi-\theta^\eta_{y,1}}/(2\pi),
\end{align*}
for all $y\in\bit{n-1}$. We then have $$L_y\br{\ket{\eta_n}}=L_y\br{\ket{\eta_{n-1}}}=L_y\br{\ket{\xi}}\text{ for all }y\in\bit{n-1}.$$ For the final step, we let $\sigma_{n+1}=\sigma_n$, and define
\begin{align*}
\cf_{n+1}(y)&=\begin{cases}
\frac{\br{r^\xi_{y,1}}^2}{\br{r^\xi_{y,0}}^2+\br{r^\xi_{y,1}}^2}&\text{if }\br{r^\xi_{y,0}}^2+\br{r^\xi_{y,1}}^2>0,\\
0&\text{otherwise,}
\end{cases}\\
\cg_{n+1}(y)&=\theta^\xi_{y,0}/(2\pi),\\
\ch_{n+1}(y)&=-\theta^\xi_{y,1}/(2\pi),
\end{align*}
for all $y\in\bit{n-1}$. It can be easily verified that $\ket{\eta_{n+1}}=\ket{\xi}$.
\end{proof}

	\begin{lemma}\label{lem:complexNNepsnet}
		There exists an $\epsilon  = \negl{\secpar}$ such that
		$\NN$
		is an $\epsilon$-net for
		$\S(\H)$ .
	\end{lemma}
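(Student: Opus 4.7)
The plan is to mirror the proof of \lem{realNNepsnet} in the complex setting, leveraging \lem{complexNNfullmap} in place of \lem{realNNfullmap} and \lem{ltildelclose} in place of \lem{k_hk_close}. Concretely, for any target state $\ket{u} \in \widetilde{\NN}$, write $\ket{u} = \csrssconsc_{\rep{\sigma}{T},\rep{\cf}{T},\rep{\cg}{T},\rep{\ch}{T}}\!\!\ket{\eta}$ for some choice of permutations and continuous functions. I will construct a corresponding element $\ket{v} \in \NN$ by discretizing each continuous function: for every $t\in[T]$ and $y \in \bit{n-1}$, let $f_t(y), g_t(y), h_t(y) \in \bit{d}$ be the $d$ bits following the binary point in $\cf_t(y), \cg_t(y), \ch_t(y)$ respectively. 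Then set $\ket{v} = \srssconsc_{\rep{\sigma}{T},\rep{f}{T},\rep{g}{T},\rep{h}{T}}\!\!\ket{\eta} \in \NN$, keeping the same permutations $\rep{\sigma}{T}$.

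The key estimate is then to bound $\norm{\ket{u}-\ket{v}}_2$ by the operator-norm distance between the two products of unitaries. By \fct{uprod} applied to the $T$ layers of the construction, followed by \lem{ltildelclose} applied to each layer,
\begin{align*}
\norm{\ket{u} - \ket{v}}_2
&\leq \norm{\csrssconsc_{\rep{\sigma}{T},\rep{\cf}{T},\rep{\cg}{T},\rep{\ch}{T}} - \srssconsc_{\rep{\sigma}{T},\rep{f}{T},\rep{g}{T},\rep{h}{T}}}_\infty \\
&\leq \sum_{t=1}^{T} \norm{\widetilde{L}_{\sigma_t,\cf_t,\cg_t,\ch_t} - L_{\sigma_t,f_t,g_t,h_t}}_\infty
\leq T \cdot 2^{6-\frac{d}{2}}.
\end{align*}

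Plugging in $T = 515(\secpar+1)n$ and $d = 2(\log^2\secpar + \log^2 n)$, this yields $\norm{\ket{u}-\ket{v}}_2 \leq \frac{515\cdot 2^{6}(\secpar+1)n}{\secpar^{\log \secpar}\cdot n^{\log n}} = \negl{\secpar}$. Combining with \lem{complexNNfullmap}, which guarantees that $\widetilde{\NN} = \S(\H)$, we conclude that every $\ket{u}\in \S(\H)$ admits some $\ket{v} \in \NN$ within this negligible distance, establishing that $\NN$ is an $\epsilon$-net on $\S(\H)$ with $\epsilon = \negl{\secpar}$.

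No step here is expected to be a real obstacle: the construction of $\ket{v}$ is entirely mechanical once the discretization rule is fixed, and the bound $\norm{\widetilde{L}_{\sigma,\cf,\cg,\ch} - L_{\sigma,f,g,h}}_\infty \le 2^{6-d/2}$ has already been done for us in \lem{ltildelclose}. The only thing to double-check is that the choice $d = 2(\log^2\secpar + \log^2 n)$ is large enough so that $T \cdot 2^{6-d/2}$ is negligible in $\secpar$, which is straightforward arithmetic. Hence the proof will be a short, routine adaptation of the real-case argument.
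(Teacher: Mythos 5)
Your proof is correct and follows the same approach as the paper: discretize the continuous angle-functions to $d$ bits to pass from $\widetilde{\NN}$ to $\NN$, then bound $\norm{\ket{u}-\ket{v}}_2$ via \fct{uprod} and \lem{ltildelclose}, and finally invoke \lem{complexNNfullmap} to upgrade the $\epsilon$-net for $\widetilde{\NN}$ to one for $\S(\H)$. Your final bound $515\cdot 2^6 (\secpar+1)n / (\secpar^{\log\secpar} n^{\log n})$ is exactly the $32960(\secpar+1)n/(\secpar^{\log\secpar}n^{\log n})$ that the paper reports.
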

	
	\begin{proof}
	By \lem{complexNNfullmap}, it suffices to prove that
	there exists an $\epsilon  = \negl{\secpar}$ such that
	$\NN$ is an $\epsilon$-net for $\widetilde{\NN}$.
	
	Let $\ket{u} = \csrssconsc_{\rep{\sigma}{T},\rep{\cf}{T},\rep{\cg}{T},\rep{\ch}{T}}\!\!\ket{\eta}\in\widetilde{\NN}$ for some $\rep{\sigma}{T} \subseteq S_{2^n}$ and $\rep{\cf}{T}$,
		$\rep{\cg}{T}$,
		$\rep{\ch}{T}\subseteq\{f: \bit{n-1}\to [0,1)\}$.
	For every $t\in[T]$, we define $f_t$ by letting $f_t(y)$ be the $d$ digits after the binary point in $\widetilde{f}_t(y)$ for all $y\in\bit{n-1}$.
	We define $g_t$ and $h_t$ for $t\in[T]$ in the same way.
	It is evident that $\ket{v} = \srssconsc_{\rep{\sigma}{T},\rep{f}{T},\rep{g}{T},\rep{h}{T}}
		\!\!\ket{\eta}\in\NN$.
	And
	\begin{align}
		&\norm{\ket{u} - \ket{v}}_2 \nonumber\\
	 =	~& \norm{
			\srssconsc_{\rep{\sigma}{T},\rep{f}{T},\rep{g}{T},\rep{h}{T}}
		\!\!\ket{\eta}-
			\csrssconsc_{\rep{\sigma}{T},\rep{\cf}{T},\rep{\cg}{T},\rep{\ch}{T}}\!\!\ket{\eta}
		}_2 \nonumber\\
	\leq  ~& \norm{
			\srssconsc_{\rep{\sigma}{T},\rep{f}{T},\rep{g}{T},\rep{h}{T}}
			- \csrssconsc_{\rep{\sigma}{T},\rep{\cf}{T},\rep{\cg}{T},\rep{\ch}{T}}
		}_\infty \nonumber\\
	\leq ~& 2^{6-\frac{d}{2}} T = \frac{32960(\secpar+1)n}{\secpar^{\log \secpar} n^{\log n}}
	\enspace, \label{eq:uvclosec}
	\end{align}
	where the last inequality is from \fct{uprod} and \lem{ltildelclose}.
	This proves that $\NN$ is indeed an $\epsilon$-net
	for $\widetilde{\NN}$ where $\epsilon  = \frac{32960(\secpar+1)n}{\secpar^{\log \secpar} n^{\log n}} = \negl{\secpar}$.
\end{proof}

	\begin{proof}[Proof of \thm{kactosrssc}]
	It is easy to see that the uniformity condition is satisfied. 
	Let $\kappa$ denote the key length.
	Quantum circuit $\srssconsc$ applies $\srssconsc_{\rep{\sigma}{T},\rep{f}{T},\rep{g}{T},\rep{h}{T}}$ after reading $\rep{\sigma}{T},\rep{f}{T},\rep{g}{T}$ and $\rep{h}{T}$.
	To implement $\srssconsc_{\rep{\sigma}{T},\rep{f}{T},\rep{g}{T},\rep{h}{T}}$, we need to realize each of
	the $T = 515 (\secpar + 1) n$ unitary gates $L$.
	Since each gate $L$ can be implemented in $\poly{n,\secpar,\klen}$ time,
	the total construction time for $\srssconsc_{\rep{\sigma}{T},\rep{f}{T},\rep{g}{T},\rep{h}{T}}$
	is also $\poly{n,\secpar,\klen}$.

	In conjunction with \lem{complexNNepsnet}, it suffices to prove the existence of a good distribution $\widetilde{\nu}$ meeting the requirement in \defi{trssb}. Fix $\ket{\eta}\in\S(\H)$. 
	Define three distributions:
	\begin{itemize}
		\item $\nu$ be the distribution of
			$\srssconsc_{\rep{\sigma}{T},\rep{f}{T},\rep{g}{T},\rep{h}{T}}\!\!\ket{\eta}$
			with independent and uniformly random permutations $\rep{\sigma}{T}\subseteq S_{2^n}$,
			and random functions $\rep{f}{T},\rep{g}{T}$, $\rep{h}{T}\subseteq\{f: \bit{n-1}\to\bit{d}\}$.
		\item $\widetilde{\nu}$ be the distribution of
			$\csrssconsc_{\rep{\sigma}{T},\rep{\cf}{T},\rep{\cg}{T},\rep{\ch}{T}}\!\!\ket{\eta}$
			with independent and uniformly random permutations
			$\rep{\sigma}{T}\subseteq S_{2^n}$,
			and random functions
			$\rep{\cf}{T},\rep{\cg}{T}$, $\rep{\ch}{T}\subseteq\{f:\bit{n-1}\to[0,1)\}$.
		\item $\mu$ be the Haar measure on $\S(\H)$.
	\end{itemize}
	Note that $\widetilde{\nu}$ is the output distribution of $T$-step parallel Kac's walk in $\SC^{2^n}$. Thus by \thm{mixing_time_c}, we have
	\begin{align}\label{eq:closetvc}
		\tv{\widetilde{\nu} - \mu} \leq \frac{1}{2^{\secpar n - 1}} = \negl{\secpar}\enspace.
	\end{align}
	We are left to show the Wasserstein $\infty$-distance between $\nu$ and $\widetilde{\nu}$ is negligible.
	To this end, we construct a coupling $\gamma_0$ of $\nu$ and $\widetilde{\nu}$ by using the same permutation $\sigma_t$ and letting $f_t$ be the function satisfying $f_t(y)$ is the $d$ digits after the binary point in $\widetilde{f}_t(y)$ for all $y\in\bit{n-1}$ (The same applies to $g_t$ and $h_t$).
	Therefore
	{\small\begin{align}
		W_{\infty} ( \nu, \widetilde{\nu}) =
		~& \lim_{p\to\infty}\br{ \inf_{\gamma\in\Gamma(\nu,\widetilde{\nu})} \E_{(\ket{v},\ket{u})\sim \gamma}\!\Br{\norm{\ket{v}-\ket{u}}_2^p} }^{1/p} \nonumber\\
		\leq ~& \lim_{p\to\infty}\br{ \E_{(\ket{v},\ket{u})\sim \gamma_0}\!\Br{\norm{\ket{v}-\ket{u}}_2^p} }^{1/p}
		~\overset{\text{(Eq.~\eq{uvclosec})}}{\leq} ~
		\frac{32960(\secpar+1)n}{\secpar^{\log \secpar} n^{\log n}}= \negl{\secpar} \label{eq:closewinftyc}\enspace .
	\end{align}}
	Thus we conclude the result.
\end{proof}

\end{document}